\DeclareMathOperator{\Diag}{Diag}
\newcommand{\prob}{\mathbb{P}}
\newcommand{\RN}[1]{%
  \textup{\uppercase\expandafter{\romannumeral#1}}%
}
\newcommand{\barT}{\widebar \cT}
\newcommand{\barPhi}{\widebar\Phi}
\newcommand{\barG}{\overline{G}}
\newcommand{\dikinwalk}{{Dikin walk}\xspace}
\newcommand{\ballwalk}{{ball walk}\xspace}
\newcommand{\hitandrun}{{hit-and-run}\xspace}
\newcommand{\geodesicwalk}{{geodesic walk}\xspace}
\newcommand{\RHMC}{{RHMC}\xspace}
\newcommand{\couplingprob}{\gamma} 
\newcommand{\acceptbeforemin}[2]{\Lambda(#1, #2)}
\newcommand{\distdiff}[2]{\Gamma(#1,#2)}
\newcommand{\Glog}{G}
\newcommand{\rinf}{R_o}
\newcommand{\rinfbd}{\widehat{R_o}}
\newcommand{\tcT}{\widetilde\cT}
\newcommand{\crossbd}{C_K}  
\newcommand{\euclidbd}{C_E}
\newcommand{\inr}{\widetilde{r}}
\newcommand{\outr}{\widetilde{R}}
\newcommand{\rasc}{r_{\text{ASC}}}
\newcommand{\dH}{d^{\cH}}
\newcommand{\dn}{\mathfrak{d}}
\newcommand{\spectr}{\eta}
\newcommand{\Hlogmetric}{unregularized logarithmic metric\xspace}
\newcommand{\soft}{soft-threshold\xspace}
\newcommand{\softdw}{soft-threshold Dikin walk\xspace}
\newcommand{\regdw}{regularized Dikin walk\xspace}
\newcommand{\softlog}{soft-threshold  metric\xspace}
\newcommand{\reglewis}{regularized Lewis metric\xspace}
\newcommand{\Ggamma}{G^{(\gamma)}(x)}
\newcommand{\Hgamma}{H^{(\gamma)}(x)}
\newcommand{\cMR}{\mathcal{M}_R^{\delta}}
\newcommand{\deri}{\mathcal{D}}
\newcommand{\cB}{\mathcal{B}}
\newcommand{\cH}{\mathcal{H}}
\newcommand{\cT}{\mathcal{T}}
\newcommand{\cX}{\mathcal{X}}
\newcommand{\cE}{\mathcal{E}}
\newcommand{\cA}{\mathcal{A}}
\newcommand{\cM}{\mathcal{M}}
\newcommand{\cP}{\mathcal{P}}
\newcommand{\TT}{\mathcal{T}}
\newcommand{\defn}{:=}
\newtheoremstyle{named}{}{}{\itshape}{}{\bfseries}{.}{.5em}{\thmnote{#3's }#1}
\theoremstyle{named}
\theoremstyle{plain}
\newtheorem{theorem}{Theorem}
\newtheorem{lemma}{Lemma}
\newtheorem{corollary}{Corollary}
\newtheorem{definition}{Definition}
\newtheorem{fact}{Fact}
\newlength{\widebarargwidth}
\newlength{\widebarargheight}
\newlength{\widebarargdepth}
\DeclareRobustCommand{\widebar}[1]{%
  \settowidth{\widebarargwidth}{\ensuremath{#1}}%
  \settoheight{\widebarargheight}{\ensuremath{#1}}%
  \settodepth{\widebarargdepth}{\ensuremath{#1}}%
  \addtolength{\widebarargwidth}{-0.3\widebarargheight}%
  \addtolength{\widebarargwidth}{-0.3\widebarargdepth}%
  \makebox[0pt][l]{\hspace{0.3\widebarargheight}%
    \hspace{0.3\widebarargdepth}%
    \addtolength{\widebarargheight}{0.3ex}%
    \rule[\widebarargheight]{0.95\widebarargwidth}{0.1ex}}%
  {#1}}
\long\def\@makecaption#1#2{
        \vskip 0.8ex
        \setbox\@tempboxa\hbox{\small {\bf #1:} #2}
        \parindent 1.5em  
        \dimen0=\hsize
        \advance\dimen0 by -3em
        \ifdim \wd\@tempboxa >\dimen0
                \hbox to \hsize{
                        \parindent 0em
                        \hfil
                        \parbox{\dimen0}{\def\baselinestretch{0.96}\small
                                {\bf #1.} #2
                                }
                        \hfil}
        \else \hbox to \hsize{\hfil \box\@tempboxa \hfil}
        \fi
        }
\long\def\comment#1{}
\definecolor{battleshipgrey}{rgb}{0.52, 0.52, 0.51}
\definecolor{darkgray}{rgb}{0.66, 0.66, 0.66}
\definecolor{darkgreen}{rgb}{0.0, 0.2, 0.13}
\definecolor{darkspringgreen}{rgb}{0.09, 0.45, 0.27}
\definecolor{dukeblue}{rgb}{0.0, 0.0, 0.61}
\definecolor{olivedrab7}{rgb}{0.24, 0.2, 0.12}
\definecolor{darkblue}{rgb}{0.0, 0.0, 0.55}
\definecolor{darkscarlet}{rgb}{0.34, 0.01, 0.1}
\definecolor{candyapplered}{rgb}{1.0, 0.03, 0.0}
\definecolor{ao(english)}{rgb}{0.0, 0.5, 0.0}
\definecolor{applegreen}{rgb}{0.55, 0.71, 0.0}
\newcommand{\todo}[1]{{\bf{{\textcolor{green}{{TODO}}}}}}
\DeclareMathOperator{\diag}{diag}
\DeclareMathOperator{\var}{Var}
\DeclareMathOperator{\trace}{Tr}
\DeclareMathOperator{\vol}{vol}
\newcommand{\real}{\ensuremath{\mathbb{R}}}
\newcommand{\Ind}[1]{\ensuremath{\mathbb{I}_{\left\{ #1 \right \}}}} 
\newcommand{\ball}{\ensuremath{\mathbb{B}}}
\newcommand{\Exs}{\ensuremath{{\mathbb{E}}}}
\newcommand{\Normal}{\ensuremath{\mathcal{N}}}
\DeclareMathOperator{\Var}{Var}
\DeclarePairedDelimiterX{\infdivx}[2]{(}{)}{%
  #1\;\delimsize\|\;#2%
}
\newcommand{\kldiv}{\text{KL}\infdivx}
\newcommand{\Ot}{\widetilde{O}}
\newcommand{\brackets}[1]{\left[ #1 \right]}
\newcommand{\parenth}[1]{\left( #1 \right)}
\newcommand{\braces}[1]{\left\{ #1 \right \}}
\newcommand{\abss}[1]{\left| #1 \right |}
\newcommand{\angles}[1]{\left\langle #1 \right \rangle}
\newcommand{\floors}[1]{\left\lfloor #1 \right \rfloor}
\newcommand{\tp}{^\top}
\newcommand{\vecnorm}[2]{\left\| #1\right\|_{#2}}
\title{Regularized Dikin Walks for Sampling Truncated Logconcave Measures, Mixed Isoperimetry and Beyond Worst-Case Analysis}
\author{Minhui Jiang}
\author{Yuansi Chen}
\affil{ETH Z\"urich}
\date{}
\begin{document}

\maketitle

\begin{abstract}

We study the problem of drawing samples from a logconcave distribution truncated on a polytope, motivated by computational challenges in Bayesian statistical models with indicator variables, such as probit regression. Building on interior point methods and the Dikin walk for sampling from uniform distributions, we analyze the mixing time of regularized Dikin walks. Our contributions are threefold. First, for a logconcave and log-smooth distribution with condition number $\kappa$, truncated on a polytope in $\real^n$ defined with $m$ linear constraints, we prove that the soft-threshold Dikin walk mixes in $\Ot((m+\kappa)n)$ iterations from a warm initialization. It improves upon prior work which required the polytope to be bounded and involved a bound dependent on the radius of the bounded region. Moreover, we introduce the \regdw using Lewis weights for approximating the John ellipsoid. We show that it mixes in $\Ot((n^{2.5}+\kappa n)$. Second, we extend the mixing time guarantees mentioned above to weakly log-concave distributions truncated on polytopes, provided that they have a finite covariance matrix. Third, going beyond worst-case mixing time analysis, we demonstrate that \softdw can mix significantly faster when only a limited number of constraints intersect the high-probability mass of the distribution, improving the $\Ot((m+\kappa)n)$ upper bound to $\Ot(m + \kappa n)$. Additionally, per-iteration complexity of \regdw and ways to generate a warm initialization are discussed to facilitate practical implementation. 

\end{abstract}

\tableofcontents

\section{Introduction}\label{sec_intro}
Sampling from high-dimensional distributions under constraints is an important computational challenge in fields such as Bayesian statistics, computer science, and systems biology. In Bayesian statistics, posterior sampling is crucial for estimation and uncertainty quantification. There, posterior distributions are often truncated on a subset of $\real^n$, rendering standard sampling algorithms for smooth distributions such as Langevin algorithms or Hamiltonian Monte Carlo ineffective. Examples of such posteriors include ordered exponential family models, censored data models, and ordered linear models~\cite{gelfand_bayesian_1992}. In addition, Bayesian probit regression models~\cite{albert_bayesian_1993}, isotonic regression~\cite{neelon_bayesian_2004}, spatial probit models~\cite{lesage_new_2011}, and tobit models~\cite{anceschi2023bayesian} all require sampling from truncated normal distributions as a subroutine. In computer science, efficient volume computation of convex bodies relies heavily on sampling from distributions truncated on these bodies \cite{lovasz1993random,Lov06hrcorner,lee_geodesic_2017,cousins_gaussian_2018,pmlr-v247-kook24b}. In systems biology, constraint-based modeling of human metabolic networks uses polytopes to define solution spaces, making high-dimensional polytope-constrained sampling essential for these biological tasks~\cite{wiback_monte_2004,lewis_constraining_2012,saa_ll-achrb_2016,haraldsdottir_chrr_2017,heirendt_creation_2019}. To address these needs, many sampling algorithms are developed and analyzed. Although uniform sampling on convex bodies or polytopes is well understood for algorithms such as Ball walk, hit-and-run \cite{Vempala2005Survey}, and Dikin walks \cite{sachdeva2016mixing}, significant challenges remain to develop and understand new algorithms that leverage additional structure in non-uniform distributions.

One important motivating family of distributions in Bayesian statistics is the SUN (unified skew normal) distribution~\cite{arellanovalle_unification_2006,anceschi2023bayesian}. It can serve as conjugate priors/posteriors for various generalized linear models, including probit models~\cite{durante_conjugate_2019}, dynamic multivariate probit~\cite{fasano_closed-form_2021}, and tobit models \cite{anceschi2023bayesian}. Efficient sampling from a SUN distribution accelerates Bayesian inference for these models. Since any SUN distribution can be represented as the sum of a Gaussian vector and a truncated Gaussian vector \cite{anceschi2023bayesian}, and sampling Gaussian is easy, sampling from SUN distributions reduces to the problem of sampling from truncated Gaussian vectors. This way of sampling SUN distributions is called ``i.i.d. sampling'' in~\cite{anceschi2023bayesian}, which has gained increasing popularity when compared to the data augmentation method combined with Gibbs sampling in~\cite{albert_bayesian_1993}. While the latter enjoys a small per iteration cost, the convergence of Gibbs sampling can be slow and is in general difficult to quantify~\cite{johndrow_mcmc_2019,qin_convergence_2019}. However, the convergence of i.i.d. sampling is fully determined by the convergence in sampling from truncated Gaussian vectors, which leads us to study the problem of efficient sampling from truncated Gaussian distributions.

Previous experiments show that sampling from SUN distributions through the ``i.i.d. sampling'' significantly outperforms directly sampling via Hamiltonian Monte Carlo~\cite{anceschi2023bayesian}. However, while the proposed algorithm works well for small sample size $N$ and relatively large feature dimension $p$, it becomes computationally prohibitive for datasets with large sample size $N$, even with state-of-the-art methods~\cite{anceschi2023bayesian}. In these scenarios, deterministic approximations of the exact posterior were considered, such as variational Bayes (VB)~\cite{blei2017variational,fasano_class_2022},  expectation-propagation~\cite{minka2013expectation}, and Laplace approximation. However, these methods can struggle with approximation quality. For example, VB in \cite{fasano_class_2022} only performs well when $p > N$. This leaves a gap for determining efficient sampling algorithms in the moderate or large $N$ and moderate $p$ regime.

The challenge in sampling from SUN distributions with moderate or large $N$ stems from the fact that the truncated Gaussian vector has both its dimension and the number of inequality constraints linearly scales in $N$. Specifically, for a Gaussian with mean vector $\mu$ and covariance matrix $\Sigma$, the truncated distribution in SUN is given by
\begin{equation}\label{eq_motivation_truncated}
    \pi(x)\propto \Ind{x\geq 0}\exp\brackets{-\frac12{(x-\mu)\tp\Sigma^{-1}(x-\mu)}}.
\end{equation}
When $N$ reaches the scale of thousands, given that each iteration costs at least $N^2$ additions or multiplications, it is an open question whether sampling from a truncated Gaussian distribution in $\real^N$ can be done in less than $N^3$ total number of additions or multiplications. 

With the above motivation in mind, in this paper, we study mixing time guarantees of regularized Dikin walks for sampling strongly logconcave and log-smooth distributions truncated on polytopes. Dikin walks form a family of Markov chain sampling algorithms inspired by interior point methods in optimization, which has already shown promising performance for sampling truncated uniform distributions~\cite{kannan_random_2012}. Regularized Dikin walks are their variants to tackle non-uniform distributions. However, the existing mixing time guarantees such as those in~\cite{NEURIPS2023_mangoubi} match neither those in truncated uniform sampling nor those in unconstrained sampling when these scenarios are treated as special cases. As we explain at the end of this section after introducing appropriate notations, we improve the existing mixing time analysis and compare its computational performance to previous works. Through theoretical understanding of the mixing time, we also develop a new isoperimetric inequality that combines the Euclidean metric and the Hilbert metric on a convex set, which enables us with a better mixing time analysis.

\subsection{Problem Setup and Notations}\label{sec_setup_and_notations}

We start by defining a target distribution of interest. Let $K\subseteq \real^n$ denote a (possibly unbounded) convex set. Let $\Pi$ denote a target distribution over $\real^n$, we assume that $\Pi$ has a density $\pi(x)\propto \mathbf{1}_K(x) e^{-f(x)}$ with respect to the Lebesgue measure. Throughout this paper, we assume the function $f:\real^n\to\real$ to be twice differentiable. 
\paragraph{Truncated strongly logconcave log-smooth target distribution} Specifically, the target distribution is expressed as
\begin{equation}\label{eq_distri}
\begin{aligned}
    &\pi(x)\propto \mathbf{1}_{K}(x)\exp(- f(x)),\\ 
    \text{where }&K \text{ is open and convex}, \alpha I\preceq \nabla^2 f(x) \preceq \beta I, \ \forall x \in \real^n.
\end{aligned}
\end{equation}
We use \textit{$\beta$-log-smooth} to refer to the condition $\nabla^2 f(x) \preceq \beta I, \forall x \in K$. Similarly, \textit{$\alpha$-strongly logconcave} refers to the condition $\nabla^2 f(x) \succeq \alpha I, \forall x \in K$ and $\alpha > 0$. Furthermore, we define the \textit{condition number} $\kappa$ to be $\kappa\defn \beta/\alpha$. 

When $K$ is a polytope with $m$ linear constraints, we let 
\begin{equation*}
K\defn\braces{x\bigg|a_i\tp x-b_i>0 \text{ for }i\in [m]},
\end{equation*}
where $a_i\in\real^n$ and $b_i\in\real$ for $i\in [m]$, we let $A\in\real^{m\times n}$ and $b\in\real^m$ with $A\tp\defn (a_1\ldots a_m)$ and $b=(b_1\ldots b_m)\tp$. As a shorthand, we write $K$ as $K = \braces{x|Ax>b}$. 

In the special case of sampling a Gaussian distribution truncated on a polytope $K = \braces{x|Ax>b}$, the negative log-density $f(x) \defn (x-\mu) \tp \Sigma (x-\mu)$, where $\mu$, $\Sigma$ are the mean and covariance of the untruncated Gaussian. Note that without loss of generality, we can assume that the condition number $\kappa$ is $1$. This is because one can sample an affine transformed distribution and recover samples of the original distribution by an inverse affine transform. More precisely, consider the affine transformation $\cA :x \mapsto \Sigma^{-\frac12}(x-\mu)$, which transforms a random variable $X \sim \pi$ to $Y \defn \cA(X)$. The density of $Y$ satisfies
\begin{equation}\label{eq_affine}
    \pi_Y(y)\propto \pi_X(\cA^{-1}(y))\propto \exp\parenth{-\frac12 y\tp y}\Ind{A\Sigma^{\frac12}y>b-A\mu}.
\end{equation}
Let $\widetilde{A}=A\Sigma^{\frac12}$ and $\widetilde{b}=b-A\mu$, we can first sample $\exp\parenth{-\frac12 y\tp y}$ truncated on $\braces{\widetilde{A} x > \widetilde{b}}$ and apply an inverse affine transform to obtain samples from $\pi$. 

\paragraph{Truncated weakly logconcave log-smooth target distribution} Later, we loose the assumption of $\alpha$-strongly logconcaveness, and instead assume that the target distribution $\Pi$ has a bounded covariance matrix. To be precise, we assume 
\begin{equation}\label{eq_distri_weakly}
\begin{aligned}
&\pi(x)\propto \mathbf{1}_K(x)\exp(-f(x)),\\
\text{where }&K\subseteq \real^n \text{ is open \& convex, }0\preceq \nabla^2 f(x)\preceq \beta I, \forall x\in \real^n,\\
\text{ and } \Pi& \text{ has a bounded covariance matrix }\Sigma_{\pi}.
\end{aligned}
\end{equation}


We say a function $q:\real^n\to [0,\infty)$ is \textit{logconcave} if  $\log q(x)$ is a concave function over $\real^n$ (note that we allow $\log q(x)=-\infty$ when $q(x)=0$). We say a probability distribution $\Pi$ on $\real^n$ is \textit{logconcave} if it admits a density function $\pi$ with respect to the Lebesgue measure and $\pi$ is a logconcave function.

A probability distribution $\Pi$ on $\real^n$ is said to be \textit{more logconcave than Gaussian with covariance $\frac1\alpha I_n$} if its density $\pi$ satisfies
\begin{equation}\label{eq_more_log_concave}
    \pi(x)\propto \exp\parenth{-\frac\alpha2\vecnorm{x}{2}^2}\cdot q(x),
\end{equation}
where $q$ is a logconcave function over $\real^n$. Note that $q$ is not required to be continuous. As a special case, our target distribution in Eq.~\eqref{eq_distri} is more logconcave than Gaussian with covariance $\frac{1}{\alpha}I_n$ where we take $q(x)\defn \mathbf{1}_K(x)\exp(-f(x)+\frac{\alpha}{2}\vecnorm{x}{2}^2)$.

\paragraph{Markov chain basics}
Next, we introduce Markov chains notations. Assume $\cX$ is a Borel measurable subset of $\real^n$, and $\mathcal{B}(\cX)$ denotes the Borel $\sigma$-algebra over $\cX$. Then a Markov chain on $\cX$ is characterized by a \textit{transition kernel} $\cT$, which is a function $\cT:\cX\times \mathcal{B}(\cX)\to \real^+$ satisfying
\begin{itemize}
    \item For each $x\in \cX$, $\cT(x,\cdot)$ is a probability measure over $(\cX,\mathcal{B}(\cX))$.
    \item For each $B\in\mathcal{B}(\cX)$, $\cT(\cdot,B)$ is a $\mathcal{B}(\cX)$-measurable function over $\cX$. 
\end{itemize}
In this paper, we write $\cT_x(B)\defn \cT(x,B)$ and use $\cT_x$ to denote the probability measure at $x$.  A transition kernel $\cT$ can also be seen as an operator on probability measures. For a probability measure $\mu_0$ over $\cX$, we define a new probability measure $\cT(\mu_0)$ after applying one step of Markov chain by:
\begin{equation*}
\cT(\mu_0) (B)\defn \int_{x\in\cX}\cT_x(B) \mu_0(dx),
\end{equation*}
for any $B \in \mathcal{B}(\cX)$, and it is easy to verify that $\cT(\mu_0)$ is a probability measure. Applying $\cT$ recursively on $\mu_0$ gives us the measure after applying $k$ steps of Markov chain $\cT^{k}(\mu_0)\defn \cT\parenth{\cT^{k-1}(\mu_0)}$, denoted as $\mu_k$ for short. $\Pi$ is called a \textit{stationary distribution} of $\cT$ if $\cT(\Pi) = \Pi$. 


To quantify how fast $\cT^k(\mu_0)$ converges to the target distribution $\Pi$, we need a distance between two probability measures. Given two probability measures $\prob$ and $\mathbb{Q}$ over $\real^n$, the \textit{total variation distance}, or TV-distance for short, between them is defined as:
\begin{equation*}
\vecnorm{\prob-\mathbb{Q}}{TV}\defn \underset{B\in \cB(\real^n)}\sup \abss{\prob(B)-\mathbb{Q}(B)}.
\end{equation*}
For an error tolerance $\epsilon>0$, given a Markov chain with transition kernel $\cT$ and stationary distribution $\Pi$, we define its \textit{mixing time} as the number of iterations it takes to be $\epsilon$-close to its stationary distribution. More precisely, $T_\text{mix}(\epsilon;\mu_0)$ is the smallest integer $k$ such that $\vecnorm{\cT^{k}(\mu_0)-\Pi}{TV}\leq \epsilon$:
\begin{equation*}
T_\text{mix}(\epsilon;\mu_0)\defn \inf\braces{k\in \mathbb{Z}^+\big|\vecnorm{\cT^{k}(\mu_0)-\Pi}{TV}\leq \epsilon}.
\end{equation*}
We wish to prove an upper bound of $T_\text{mix}(\epsilon;\mu_0)$, as a function of inputs to our algorithm, such as the dimension of the distribution $n$, the number of constraints of the polytope $m$, the initial distribution $\mu_0$ and the error tolerance $\epsilon$.

We say an initial distribution $\mu_0$ is $M$-\textit{warm} with respect to the target distribution $\Pi$ if
\begin{equation*}
\underset{B\in\cB(\real^n)}{\sup}\brackets{\frac{\mu_0(B)}{\Pi(B)}}\leq M.
\end{equation*}

\paragraph{Other notation} We use $I$ to denote the identity matrix, and $I_m$ or $I_n$ to put an emphasis on the dimension of the identity matrix. Given a vector $v\in\real^k$, we use $\Diag(v)$ to denote the diagonal matrix in $\real^{k\times k}$ that has $v$ as its diagonal elements. Similarly, given a matrix $P\in\real^{k\times k}$, we use $\diag(P)$ to denote the vector in $\real^k$ that contains diagonal elements of $P$.  

Given a differentiable matrix function $F:\real^n \to \real^{k\times l}$, fix $x\in \real^n$ and $h\in\real^n$, we use $\deri F(x)[h]$ to denote the derivative of $F$ at $x$ in the direction $h$:
\begin{equation*}
    \deri F(x)[h]\defn \lim_{t\to 0}\frac{F(x+th)-F(x)}{t}.
\end{equation*}
Given a point $x\in K = \braces{x \mid A x > b}$, we define the slackness at $x$ to be $s_x\defn Ax-b$, and $S_x\defn \Diag(s_x)$. We also define $A_x\in\real^{m\times n}$ to be $A_x\defn S_x^{-1}A$. We  use $\ball(x,r)$ as a shorthand for a ball centered at $x\in\real^n$ with radius $r>0$. In other words,
\begin{equation*}
    \ball(x,r)\defn \braces{y\in\real^n\bigg|(y-x)\tp (y-x)\leq r^2}.
\end{equation*}
Without ambiguity, $\ball_r$ denotes the ball centered at $0$ in $\real^n$ with radius $r$. For $\mu\in\real^n$ and $\Sigma \in \real^{n \times n}$ positive definite (PD) matrix, we use the notation $\Normal(\mu,\Sigma)$ to denote the multivariate Gaussian distribution in $\real^n$ with mean $\mu$ and covariance $\Sigma$. We also use $\Normal(z;\mu,\Sigma)$ to denote the probability density of $\Normal(\mu,\Sigma)$ computed at $z$. In other words,
\begin{equation*}
    \Normal(z;\mu,\Sigma)\defn \frac{1}{(2\pi)^{n/2}\sqrt{\det\parenth{\Sigma}}}\exp\brackets{-\frac12(z-\mu)\tp\Sigma^{-1}(z-\mu)}
\end{equation*}

\subsection{Regularized Dikin Walks}\label{sec_algorithm}

We introduce {\regdw}s, as our Markov chain of interest, which is an extension of the \softdw proposed in \cite{NEURIPS2023_mangoubi}. Algorithm~\ref{algo_main} shows its pseudocode. After formally introducing the transition kernels before and after the Metropolis step in the \regdw respectively, we discuss two choices local metrics in the setting where the truncation happens on a polytope.

\begin{algorithm}
\SetKwData{Left}{left}\SetKwData{This}{this}\SetKwData{Up}{up}
\SetKwFunction{Union}{Union}\SetKwFunction{FindCompress}{FindCompress}
\SetKwInOut{Input}{input}\SetKwInOut{Output}{output}
\SetKwComment{Comment}{ //}{ }
\Input{local metric $H$, step-size $r$, regularization size $\lambda>0$,\\
initial distribution $\mu_0$, number of iterations $T$} 
\Output{$x_T$}
draw $x_0\sim \mu_0$\;
\For{$t\leftarrow 1$ \KwTo $T$}{
    $x_t\leftarrow x_{t-1}$\;
    \text{draw }$U\sim \mathrm{Unif}(0,1)$ \Comment{lazification step} 
    \If{$U<\frac12$}{
        \text{draw }$z\sim\Normal(x_t,\frac{r^2}{n}\parenth{H(x_t)+\lambda I}^{-1})$\;
        $x_t\leftarrow z$ \text{ with probability }$\min\braces{1,\frac{\mathbf{1}_K(z)\exp(-f(z))}{\exp(-f(x_t))}\cdot 
        \frac{\Normal\parenth{x_t;z,\frac{r^2}{n} \parenth{H(z)+\lambda I}^{-1}}}{\Normal\parenth{z;x_t,\frac{r^2}{n} \parenth{H(x_t)+\lambda I}^{-1}}}}$\;
    }
}
\caption{\regdw for Logconcave Distributions Truncated on $K$}
\label{algo_main}
\end{algorithm}

We define a local metric as a matrix function $H:K\to \mathbb{S}_{+}^{n}$, which maps any $x\in K$ to a positive semi-definite (PSD) matrix $H(x)$. For convenience, we use $G$ to denote the regularized version of $H$ for some fixed regularization size $\lambda$,  $G(x)\defn H(x)+\lambda I$. Moreover, we let $E(x,G(x),r)$ denote the following ellipsoid:
\begin{equation}
E(x,G(x),r)\defn \{z| (z-x)\tp G(x)(z-x)\leq r^2\}.
\end{equation}

Instead of drawing proposals from a fixed Gaussian distribution as in Random Walk Metropolis, we draw proposals from the Gaussian distribution $\Normal\parenth{x,\frac{r^2}{n}G(x)^{-1}}$ in \dikinwalk. Thus in the Metropolis-Hasting step in Algorithm \ref{algo_main}, we need to multiply the ratio of proposal densities to ensure the stationary distribution of our Markov chain is the target distribution specified in Eq.~\eqref{eq_distri}.

In order to upper bound the mixing times of \regdw, we formally introduce the transition kernels before and after the Metropolis step respectively. We use $\mathcal{P}_x$ to denote the Gaussian distribution $\Normal(x,\frac{r^2}{n}G(x)^{-1})$ centered at $x$ with covariance inverse proportional to $G$, and $\TT_x$ to denote the probability distribution after one step of Metropolis filter. Specifically, for any Borel measurable set $B\subseteq \real^n$, 

\begin{equation}\label{eq_Transition_kernel_def}
    \TT_x(B)= \brackets{1-\Exs_{z\sim \cP_x}\parenth{\alpha(x,z)} }\delta_{x}(B)+\int_{z\in B}{\alpha(x,z)}\Normal\parenth{z;x,\frac{r^2}{n}G(x)^{-1}}dz,
\end{equation}
where $\delta_x$ is the Dirac measure at $x$, and $\alpha(x,z)$ is the acceptance rate defined in Algorithm \ref{algo_main}
\begin{equation}\label{eq_accept_rate_def}
\alpha(x,z)\defn\min\braces{1,\frac{\mathbf{1}_K(z)\exp(-f(z))}{\exp(-f(x_t))}\cdot 
        \frac{\Normal\parenth{x_t;z,\frac{r^2}{n} G(z)^{-1}}}{\Normal\parenth{z;x_t,\frac{r^2}{n} G(x_t)^{-1}}}}.
\end{equation}

It is worth noting that the actual transition kernel for each step in Algorithm \ref{algo_main} is the lazification of $\TT_x$, and we denote the lazification of $\cT$ to be $\barT$. In other words, for any Borel measurable set $B\subseteq \real^n$, we define:
\begin{equation*}
\widebar \TT_x(B)\defn\frac12 \delta_x(B)+\frac12\TT_x(B),
\end{equation*}

For sampling logconcave distributions truncated on polytopes specifically, we introduce two tailored local metrics for \regdw. The first is the \softdw introduced in \cite{NEURIPS2023_mangoubi}, and the second is the \reglewis designed by us, which is directly motivated by the properties of Lewis weights, see \cite{lee2019solving,laddha2020strong,pmlr-v247-kook24b}. 

\begin{definition}[\soft \cite{NEURIPS2023_mangoubi}]\label{def_logarithmic}
Given a polytope $K\defn \braces{x|Ax>b}$, for $x\in K$,  the soft-threshold metric $G(x)$ is defined as
\begin{equation*}
    G(x)\defn  H(x)+\lambda I_n, \text{ with }H(x)\defn A_x\tp A_x.
\end{equation*}
\end{definition}
\begin{definition}[\reglewis]\label{def_Lewis}
Given a polytope $K$, for $x\in K$, we define the \reglewis $G(x)$ to be
\begin{equation*}
G(x)\defn H(x)+\lambda I_n, \text{ with } H(x)\defn c_1\sqrt{n} (\log m)^{c_2} A_x\tp W_x A_x,
\end{equation*}    
where $c_1, c_2$ are some absolute constant, $W_x\defn\Diag(w_x)$ is the Lewis weights of the matrix $A_x$, which is defined by the following optimization problem:
\begin{equation}\label{eq_Lewis_weights_def}
w_x\defn \underset{w\in\real^m_{+}}{\arg\max}\brackets{\log\det\parenth{A_x\tp W^{c_q}A_x}-c_q\sum_{i=1}^m w_i},
\end{equation}
where $c_q$ is a shorthand for $c_q\defn 1-\frac2q$, and we choose $q=\Theta(\log m)$.
\end{definition}

\subsection{Main Contributions}
We investigate \softdw and our extension to more general \regdw{s} for sampling from a logconcave and log-smooth distribution truncated on a convex set.  Our contributions are three-fold. 

First, we improve the analysis on the \softdw proposed in \cite{NEURIPS2023_mangoubi}, upper bounding the mixing times for a wider range of shape constraints under the strongly logconcave and strongly log-smooth regime. For polytope constraints specifically, we show it mixes in $\Ot\parenth{(m+\kappa)n}$ iterations from a warm initialization. One improvement to~\cite{NEURIPS2023_mangoubi} is that we no longer require the polytope to be bounded. For truncated Gaussian sampling, where an affine transformation ensures $\kappa=1$, we reduce the mixing time upper bound from $\Ot(n^2)$ in \cite{pmlr-v247-kook24b} to $\Ot(n)$ when $m=o(n)$, matching the state-of-the-art mixing time analysis in unconstrained logconcave sampling using Random Walk Metropolis. To reduce the dependency on $m$ when $m\gg n$, we  use \regdw with \reglewis and show the mixing time is $\Ot(n^{2.5}+\kappa n)$.  

Second, we extend our mixing time analysis on the \regdw to the weakly logconcave distributions, where we only require the target distribution $\Pi$ to have a finite covariance matrix $\Sigma_{\pi}$. For polytope constraints specifically, we show that \softdw mixes in $\Ot\parenth{mn+\spectr\beta n }$, where $\spectr$ is the largest eigenvalue of the covariance matrix $\Sigma_{\pi}$. Using \regdw with \reglewis, we achieve a mixing time of $\Ot(n^{2.5}+\spectr\beta n)$ when $m > n$. 

Meanwhile, our regularized Dikin walk runs with only zeroth-order information of the negative log-density $f$. This is because we add a fixed regularization term to the local metric, which does not depend on first- or second-order information of $f$. In contrast, \cite{pmlr-v247-kook24b} relies on constructing a barrier function depending on $f$ and add the corresponding Hessian to the local metric, which restricts the target distributions that their algorithm can be applied~to. 

Third, for truncated distributions on a polytope, going beyond worst-case analysis, we demonstrate that \softdw achieves a mixing time dependent on the number of constraints that a high-probability ball intersects, rather than the total number of constraints $m$, improving over the above $O(mn)$ upper bound. 

\subsection{Paper Organization}

The rest of the paper is structured as follows. Section \ref{sec_related_works} reviews the literature on Markov chains for sampling distributions truncated on a convex set. In Section \ref{sec_main_results}, we summarize our main results, stated as theorems and corollaries, as well as main lemmas and the proof ideas. In Section \ref{sec_new_iso}, we prove the a new isoperimetry under a combination of Euclidean and Hilbert metric, which is a core innovation behind our mixing time results. In Section \ref{sec_conductance_proofs}, we give the specific proofs of our results based on conductance/$s$-conductance. Finally, in Section \ref{sec_disc}, we explore potential directions to extend our work.

\section{Related Works}\label{sec_related_works}
The problem of sampling from distributions truncated on convex sets has been widely studied. The simplest case of uniform sampling on a convex body has seen significant progress in algorithm design and mixing time analysis in the last few decades. In this section, we first discuss this progress on uniform sampling in Section~\ref{sec_unif_sampling}. Then we dive into extensions of these algorithms to non-uniform sampling in Section~\ref{sec_nonunif_sampling}.

\subsection{Uniform Sampling over Convex Bodies}\label{sec_unif_sampling}


Markov chain Monte Carlo (MCMC) algorithms are by far the most popular sampling algorithms for uniform sampling on convex bodies. We first introduce the so-called ``general-purpose'' samplers, using the terminology from \cite{pmlr-v247-kook24b}, which are designed for sampling from a general convex body $K\subseteq \real^n$. It assumes having access to the convex body only through a membership oracle which checks whether $x \in K$. Thus, the convergence of general-purpose samplers is measured by how many times the oracle is called to achieve certain error tolerance. Two famous examples of general-purpose samplers are \ballwalk and \hitandrun. 

For uniform sampling over a convex body $K\subseteq \real^n$, assume that the convex body is in the isotropic position, i.e., $\Exs_K(x)=0$ and $\Exs_K(xx\tp )=I$, then \cite{Kannan1997RandomWA} shows that the mixing time of \ballwalk  to be $\Ot(n^2/\psi_n^2)$ under warmness, where $\psi_n$ is the smallest isoperimetric constant for any isotropic logconcave distributions over $\real^n$, also known as Kannan-Lovasz-Simonovits (KLS) constant. Hit-and-run has also been studied for decades (see \cite{lovasz_hit-and-run_1999},\cite{Lov06hrcorner},\cite{chen2022hitandrun}). For \hitandrun on isotropic $K$, \cite{chen2022hitandrun} proved a mixing time of $\Ot(n^2/\psi_n^2)$ under warmness. Using the current lower bound for the KLS constant $\psi_n\gtrsim \log^{-\frac12}(n)$ (see \cite{chen_almost_2021}, \cite{klartag_bourgains_2022}, \cite{jambulapati_slightly_2022}, \cite{klartag_logarithmic_2023}), the mixing time for both \ballwalk and \hitandrun is $\Ot(n^2)$ for isotropic convex bodies.

One potential drawback for general-purpose samplers like \ballwalk or \hitandrun is that, when the convex body is not isotropic, an affine transformation needs to be computed to bring the convex body to near-isotropic positions. This preprocessing step is called rounding. For a convex body $K\subseteq \real^n$, the state-of-art rounding algorithm from \cite{jia_reducing_2021} has an oracle complexity of $\Ot(n^3)$.  So for a general non-isotropic convex body, the oracle complexity to get the first approximate sample is $\Ot(n^3)$. 

General-purpose samplers like \ballwalk and \hitandrun only use the membership oracle to access $K$. In practice, we often have richer information of the target distribution. For example, when sampling uniformly from a polytope $K=\braces{x|Ax-b>0}$. To exploit the structures of the linear constraints, a class of structured samplers called \dikinwalk was proposed. Motivated by interior point methods in convex optimization, in \dikinwalk, a local ellipsoid (often defined according to the Hessian of a barrier function) is used as the proposal distribution. The local ellipsoid can automatically adjust its radius based on its closeness to the boundary of $K$, making it possible to circumvent the rounding procedure for non-isotropic target distributions. 

Using \dikinwalk to sample uniformly from a convex body, \cite{kannan_random_2012} proved a mixing time of $\Ot(mn)$ using logarithmic barrier. In the regime where $m\gg n$, the linear dependency of mixing times upon $m$ is often undesirable. A solution to this problem is to put weights on different constraints, thus tuning the mixing time dependency upon $m$ and $n$: \cite{chen2018fast} proved mixing times of $\Ot(m^{0.5}n^{1.5})$ and $\Ot(n^{2.5})$ using Vaidya weights and approximate John weights respectively. 


To better exploit the geometry induced by the polytope structure, instead of \dikinwalk where each proposal is based on an ellipsoid in the Euclidean space, \cite{lee_geodesic_2017} designed \geodesicwalk, where a Riemannian metric is defined by the Hessian of the logarithmic barrier, and the proposal is determined by solving a differential equation to follow the geodesic on the manifold. This allows the sampler to take larger step while maintaining a low rejection rate. \cite{lee_geodesic_2017} achieved a mixing time of $\Ot\parenth{mn^{3/4}}$, which improved the $\Ot(mn)$ mixing time of plain \dikinwalk in \cite{kannan_random_2012}. Later, \cite{lee_RHMC_2018} further improved the mixing time to $\Ot(mn^{2/3})$ by using \RHMC (Riemannian Hamiltonian Monte Carlo), where a larger step can be taken since the Metropolis filter is eliminated while the stationary distribution is maintained throughout following the differential equations. To deal with the case $m\gg n$, \cite{gatmiry_sampling_2024} recently redesigned \RHMC where the Riemannian metric is based on a hybrid barrier  (of Lewis-weights barrier and logarithmic barrier), and the mixing time is proved to be $\Ot(m^{1/3}n^{4/3})$ for uniform sampling.

\begin{table}
\centering
\begin{tabular}{|c|c|c|}
\hline
Weights in Dikin Walks &  Assumptions on $f$$^\S$       & Mixing Time$^{\#}$  \\
 \hline
 logarithmic \cite{kannan_random_2012}  &  $f\equiv 0$ & $mn$\\ 
 Vaidya \cite{chen2018fast} & $f\equiv 0$ & $m^{1/2}n^{3/2}$\\
 logarithmic+$\ell_2$-reg \cite{NEURIPS2023_mangoubi} & $\beta$-smooth $f$  & $mn+n\beta R^{2} \,^\dag$ \\
 logarithmic+Gaussian \cite{pmlr-v247-kook24b} & quadratic $f$ & $(m+n)n$\\
 Vaidya+Gaussian\cite{pmlr-v247-kook24b} & quadratic $f$ & $m^{1/2}n^{3/2}$\\
 Lewis+Gaussian \cite{pmlr-v247-kook24b} & quadratic $f$ & $n^{2.5}$\\
 \textcolor{red}{logarithmic+$\ell_2$-reg (this paper)}& $\alpha$-convex \& $\beta$-smooth $f$ & $\textcolor{red}{(m+\kappa)n}$ \\
 \textcolor{red}{ logarithmic+$\ell_2$-reg (this paper)}& quadratic $f$ & $\textcolor{red}{mn} \,^\ddag$ \\ 
\textcolor{red}{logarithmic+$\ell_2$-reg (this paper)}& convex \& $\beta$-smooth $f$ & $\textcolor{red}{(m+\spectr\beta)n}^\star$ \\
 \textcolor{red}{Lewis+$\ell_2$-reg (this paper)}& $\alpha$-convex \& $\beta$-smooth $f$ & $\textcolor{red}{(n^{3/2}+\kappa)n}$ \\
\textcolor{red}{Lewis+$\ell_2$-reg (this paper)}& convex \& $\beta$-smooth $f$ & $\textcolor{red}{(n^{3/2}+\spectr\beta)n}$ \\
 \hline
\end{tabular}
\caption{ Mixing time upper bounds of Dikin walks with different weight choices in their local metrics for sampling a logconcave distribution truncated on a polytope from a warm start. $^\S f$ is the negative log-density of  the target distribution $\pi(x)\propto e^{-f(x)}\mathbf{1}_K(x)dx$. If $f$ is both $\alpha$-convex and $\beta$-smooth, we define $\kappa\defn \beta/\alpha$ to be its condition number.
$^\#$Logarithmic factors are omitted  in the mixing time upper bounds.
$^\dag$\cite{NEURIPS2023_mangoubi} assumed the polytope $K$ is bounded in a ball of radius $R$.
$^\ddag$For the special case where $f$ is quadratic, one can always do an affine transformation so that $\kappa=1$.
$^\star$$\spectr$ denotes the spectral norm of the covariance matrix of the target distribution}
\label{tab_related_works}
\end{table}

\subsection{Non-Uniform Sampling over Convex Sets}\label{sec_nonunif_sampling}

Besides uniform sampling on convex bodies, non-uniform sampling truncated on convex sets has also attracted a lot of attention as we discussed in Section \ref{sec_intro}. Moreover, volume computation of an arbitrary convex body has been an important topic in computer science community. \cite{cousins_gaussian_2018} showed that sampling from a Gaussian distribution truncated on a convex body is an important subroutine for Gaussian cooling procedure in volume computation.

General-purpose samplers like \ballwalk and \hitandrun have been extended to sampling from  general logconcave distributions. For well-rounded\footnote{The mixing time is $\Ot(n^2\frac{R^2}{r^2})$ where $\Exs_\pi(\vecnorm{x-z_f}2^2)\leq R^2$ and the level set of $\Pi$ with measure $1/8$ contains a ball of radius $r$. The well-rounded logconcave distribution is defined when $\frac{R}{r}\approx \sqrt{n}$.} logconcave distributions, both \ballwalk \cite{lovasz2007geometry} and \hitandrun \cite{lovasz_fast_2006, Lov06hrcorner} mixes in time $\Ot(n^3)$ given a warm start. 

Similar to the issue of non-isotropic target distributions in uniform sampling, for a general logconcave distribution, one needs to compute an affine transformation to bring the distribution to near-isotropic position.  The state-of-art rounding algorithm in \cite{lovasz_fast_2006} has a cost of $\Ot(n^4)$ assuming the knowledge of maximum of the function $f$.

The \dikinwalk has been extended to non-uniform distributions truncated on polytopes as well. \cite{narayanan2017efficient} defined the Dikin ellipsoid by rescaling the logarithmic metric by some constant according to the non-uniform distribution. However, their result applied to the uniform case implies a mixing time of $\Ot(m^2n^3)$, not matching the $\Ot(mn)$ bound proved in \cite{kannan_random_2012}. Later \cite{NEURIPS2023_mangoubi} proposed a soft-threshold version of \dikinwalk by adding a Euclidean metric (as a regularizer) to the local metric, so that the local ellipsoid is automatically shaped by both the polytope $K$ and the non-uniform distribution $e^{-f}$, and the mixing time is $\Ot\parenth{(mn+n\beta R^2)}$ for a polytope bounded in a ball of radius $R$ and $f$ to be $\beta$-smooth. However, the polytopes are often unbounded for some applications, e.g., the truncated Gaussian sampling problem in application in SUN distributions as in Eq.~\eqref{eq_motivation_truncated}. Later \cite{pmlr-v247-kook24b} provided a general framework for combining different penalties induced by various constraints for designing the local metric in \dikinwalk. By lifting up the state-space, adding the local metric induced by the polytope and the Hessian of a Gaussian barrier function, \cite{pmlr-v247-kook24b} proved a mixing time of $\Ot((m+n)n)$.

Apart from imposing an extra barrier function induced by the constraints, other approaches to deal with the truncation structures are brought up. For instance, reflected Hamiltonian Monte Carlo where the trajectory is reflected after the sampler hits the constraints (see \cite{pakman_exact_2014}, \cite{afshar_reflection_2015} for algorithmic designs and numerical experiments), and \cite{chalkis_truncated_2023} proved a mixing time of $\Ot(\kappa n^2l^2)$ under a warm start for a bounded polytope, where $\kappa$ is the condition number of the negative logdensity and $l$ is an upper bound on the number of reflections. Another approach to deal with the non-smoothness introduced by the truncation is to use proximal-based proposal distributions in the Metropolis-Hasting algorithms, where \cite{lee_structured_2021} proved a mixing time of $\Ot(\kappa n)$. However, each step of proximal-sampling requires calling a proximal sampling oracle, an efficient subroutine to draw from the proximal distributions. The designs of proximal sampling oracles are often highly problem-specific (see \cite{mou_efficient_2022} for examples of designs). Moreover, Langevin Monte Carlo has been extended to truncated distributions via projection operations  \cite{bubeck_sampling_2018,brosse_sampling_2017}. For example, \cite{bubeck_sampling_2018} proved a mixing time of $\Ot\parenth{\frac{R^6\max(n,R\beta)^{12}}{\epsilon^{12}}}$ for convex body $K$ bounded in a ball of radius $R$ and $\beta$-smooth negative logdensity $f$. 


To facilitate algorithm comparison, we summarize related existing mixing time analysis on \dikinwalk with various weight choices in Table \ref{tab_related_works}.

\section{Main Results}\label{sec_main_results}
We are ready to present the main results of this paper. Section \ref{sec_preliminary} introduces some preliminary properties that are needed to state our results. Section \ref{sec_results_strongly} establishes mixing time upper bounds for regularized Dikin walks applied to $\alpha$-strongly logconcave and $\beta$-log-smooth distributions truncated on a convex set $K$, as detailed in Theorem~\ref{th_main} and Corollaries \ref{cor_logarithmic}, \ref{cor_m<n}, and \ref{cor_Lewis} with quantitative rates when $K$ is a polytope. Section \ref{sec_results_weakly} extends these results to weakly logconcave distributions, as summarized in Theorem \ref{th_weakly} and Corollaries \ref{cor_logarithmic_weakly} and \ref{cor_Lewis_weakly}. In Section~\ref{sec_beyond_worst_result}, when $K$ is a polytope, we improve the above worst-case analysis, showing that the mixing time upper bound of \softdw is determined by the number of constraints that intersect the high-probability mass of the distribution, rather than by the total number of constraints. Section \ref{sec_practical_matters} provides an analysis of per-step complexity and discusses strategies for obtaining warm starts. Section~\ref{sec_main_lemmas} concludes this section by outlining the main lemmas and proof ideas.

\subsection{Properties of local metrics}\label{sec_preliminary}
We define a few properties that a local metric $G$ can satisfy. Since we do not restrict the convex set upon which the target distribution is truncated to be bounded/compact, we first need to extend the cross-ratio distances defined on convex bodies to unbounded convex sets. 

\begin{definition}\label{def_cross_ratio_unbounded}
Assume $K\subseteq \real^n$ is an open and convex set. Fix $x,y\in K$, we define the cross-ratio distance $d_K(x,y)$ under three scenarios:
\begin{itemize}
    \item If the line $\widebar{xy}$ intersects $\partial K$ (the boundary of $K$) with $p,q$, and in the order $p,x,y,q$, then
    \begin{equation*}
    d_K(x,y)\defn \frac{\vecnorm{p-q}2\vecnorm{x-y}2}{\vecnorm{p-x}2\vecnorm{q-y}2}.
    \end{equation*}
    \item If one of $\{p,q\}$ is at infinity, then we delete the two distances involving that point. In other words, if $p=\infty$, then $d_K(x,y)\defn \frac{\vecnorm{x-y}2}{\vecnorm{q-y}2}$; If $q=\infty$, then $d_K(x,y)\defn \frac{\vecnorm{x-y}2}{\vecnorm{p-x}2}$.
    \item If both $\{p,q\}$ are at infinity, then $d_K(x,y)\defn 0$.
\end{itemize}
\end{definition}
It is worth noting that the above three scenarios can be summarized in an equivalent definition: For (possibly unbounded) convex set $K$, fix $x,y\in K$, the cross-ratio distance is defined by $d_K(x,y)=\underset{R\to\infty}{\lim}d_{K\cap \ball(0,R)}(x,y)$. It is worth noting that for all $x\in \partial K$, $y\in K$ and $y\neq x$, we have $d_K(x,y)=+\infty$. We just avoid this technical issue by requiring $K$ to be an open subset of $\real^n$ in this paper.  Our mixing time result can be extended to non-open $K$ easily, because the Lebesgue measure of the boundary $\partial K$ of a convex set $K$ is $0$. 

In theorem statements and proofs, we introduce a few properties of a local metric $G$, following~\cite{pmlr-v247-kook24b}.
\begin{definition}[\cite{laddha2020strong, pmlr-v247-kook24b}]\label{def_sc}
A matrix function $G:K\to\mathbb{S}_+^n$ is said to satisfy
\begin{itemize}
\item Strong self-concordance (SSC) if $G$ is positive-definite on $K$ and 
\begin{equation*}
    \vecnorm{G(x)^{-\frac12}\deri G(x)[h]G(x)^{-\frac12}}F\leq 2\vecnorm{h}{G(x)}, \quad \forall x \in K, h\in\real^n.
\end{equation*}
\item Lower trace self-concordance(LTSC) if $G$ is positive-definite on $K$ and 
\begin{align*}
    \trace\braces{G(x)^{-1}\deri^2 G(x)[h,h]}\geq -\vecnorm{h}{G(x)}^2, \quad \forall x \in K, h\in\real^n.
\end{align*}
We say it satisfies strongly lower trace self-concordant (SLTSC) if for any PSD matrix function $\barG$ on $K$ it holds that
\begin{equation*}
\trace\braces{\parenth{\barG(x)+G(x)}^{-1}\deri^2 G(x)[h,h]}\geq -\vecnorm{h}{G(x)}^2, \quad \forall x \in K, h\in\real^n.
\end{equation*}
\item Average self-concordance (ASC) if for any $\epsilon>0$ there exists $r_\epsilon>0$ such that for $r\leq r_{\epsilon}$,
\begin{equation*}
    \prob_{z\sim \Normal(x,\frac{r^2}{d}G(x)^{-1})}\parenth{\abss{\vecnorm{z-x}{G(z)}^2-\vecnorm{z-x}{G(x)}^2}\leq \frac{2\epsilon r^2}{n}}\geq 1-\epsilon.
\end{equation*}
We say it satisfies strongly average self-concordant (SASC) if for any $\epsilon>0$ and any PSD matrix function $\barG$ on $K$ it holds that 
\begin{equation*}
 \prob_{z\sim\Normal(x,\frac{r^2}{d}[G(x)+\barG(x)]^{-1})}\parenth{\abss{\vecnorm{z-x}{G(z)}^2-\vecnorm{z-x}{G(x)}^2}\leq \frac{2\epsilon r^2}{n}}\geq 1-\epsilon.
\end{equation*}
\end{itemize}
\end{definition}

Note that we make the ASC definition stricter than it was in \cite{pmlr-v247-kook24b}, by requiring the difference in local norms to be bounded on both sides. This modification does not have a major impact on the results, because the verification of ASC is based on Taylor approximation as in \cite{pmlr-v247-kook24b}, which automatically ensures that both sides are controlled.

\subsection{Sampling Strongly Logconcave Distributions}\label{sec_results_strongly}
Theorem \ref{th_main} establishes sufficient conditions for a regularized Dikin walk with a local metric to mix fast on $\alpha$-strongly logconcave and $\beta$-log-smooth distributions. Corollaries \ref{cor_logarithmic}, \ref{cor_m<n}, and \ref{cor_Lewis} apply Theorem \ref{th_main} to obtain quantitative mixing time upper bounds, when $K$ is a polytope and $G$ is soft-threshold logarithmic metric in Definition \ref{def_logarithmic} or soft-threshold Lewis metric in Definition \ref{def_Lewis}.

\begin{theorem}\label{th_main}
Let $\Pi$ be a target distribution with density  $\pi(x)\propto \mathbf{1}_K(x)e^{-f(x)}$, where $K\subseteq\real^n$ is an open and convex set, $\alpha I_n\preceq \nabla^2 f\preceq \beta I_n$ as in Eq.~\eqref{eq_distri}. If in Algorithm \ref{algo_main} we provide the local metric $H$ and regularization size $\lambda>0$ such that $G\defn H+\lambda I$ satisfies
\begin{itemize}
    \item SSC, LTSC and ASC in Definition~\ref{def_sc},
    \item $G(x)\succeq \beta I$  and the ellipsoid $E(x,G(x),1)\subseteq K$ for all $x\in K$,
    \item there exists $\crossbd\geq 1$ and $\euclidbd\geq \beta$, such that 
    \begin{align*}
        \min\braces{\vecnorm{y-x}{G(x)}^2,\vecnorm{y-x}{G(y)}^2}\leq \crossbd\cdot d_K(x,y)^2+\euclidbd\cdot\vecnorm{y-x}{2}^2, \forall x, y \in K,
    \end{align*}
\end{itemize}
then there exists a step size $r>0$ and a universal constant $C>0$ such that for any error tolerance $\epsilon>0$ and any $M$-warm initial distribution $\mu_0$, as long as
\begin{equation*}
T\geq C\parenth{\crossbd+\frac{\euclidbd}{\alpha}}n\log\parenth{\frac{\sqrt{M}}{\epsilon}},
\end{equation*}
the output distribution satisfies $\vecnorm{\mu_T-\Pi}{TV}\leq \epsilon$. 
\end{theorem}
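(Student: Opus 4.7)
I follow the classical conductance approach for the lazy Metropolis chain, with two non-standard ingredients: a one-step overlap bound tailored to the regularized Dikin proposal, and the new mixed Hilbert/Euclidean isoperimetric inequality of Section~\ref{sec_new_iso}. I will fix a step size $r>0$ of order one and a threshold $\delta\asymp 1/\sqrt n$, and aim to show that (i) whenever $\min\{\vecnorm{y-x}{G(x)},\vecnorm{y-x}{G(y)}\}\le\delta$, the lazy transitions $\widebar{\TT}_x,\widebar{\TT}_y$ overlap in total variation by a universal constant, and (ii) the hypothesis
\begin{align*}
\min\{\vecnorm{y-x}{G(x)}^2,\vecnorm{y-x}{G(y)}^2\}\le \crossbd\,d_K(x,y)^2+\euclidbd\vecnorm{y-x}{2}^2
\end{align*}
combined with the mixed isoperimetry yields a conductance lower bound $\Phi\gtrsim 1/\sqrt{(\crossbd+\euclidbd/\alpha)n}$. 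The standard warm-start bound $\vecnorm{\mu_t-\Pi}{TV}\le\sqrt M(1-\Phi^2/2)^t$ then closes the loop.

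\textbf{Step 1 (one-step overlap).} I first bound the Metropolis acceptance rate as $\Exs_{z\sim\cP_x}\alpha(x,z)\ge c_0$ for a universal $c_0>0$: the containment $E(x,G(x),1)\subseteq K$ keeps the proposal in $K$ with constant probability; the bound $G(x)\succeq\beta I$ together with $\beta$-log-smoothness controls $|f(z)-f(x)|$; and ASC handles the proposal-density ratio inside the Metropolis filter through a Taylor argument. I then show that when both $\vecnorm{y-x}{G(x)},\vecnorm{y-x}{G(y)}\le\delta$, SSC compares $G(x)$ and $G(y)$ as operators, and LTSC (together with SSC) compares their determinants, giving $\mathrm{TV}(\cP_x,\cP_y)\le 1-c_1$; combining these yields $\mathrm{TV}(\widebar{\TT}_x,\widebar{\TT}_y)\le 1-c_1'$ for a universal $c_1'>0$. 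A small subtlety is that the hypothesis bounds $\min$ rather than $\max$; but once one of the two local norms is small, SSC makes $G(x)\approx G(y)$ and hence the other is also small, so the $\min$-version is effectively a $\max$-version in the regime that matters.

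\textbf{Step 2 (conductance via mixed isoperimetry).} Given measurable $S\subseteq K$ with $\Pi(S)\le 1/2$, partition $K=S_1\sqcup S_2\sqcup S_3$ with $S_1\defn\{x\in S:\widebar{\TT}_x(S^c)<c_1'/4\}$, $S_2\defn\{x\in K\setminus S:\widebar{\TT}_x(S)<c_1'/4\}$, and $S_3$ the remainder. The contrapositive of Step~1 forces every $x\in S_1,y\in S_2$ to satisfy $\min\{\vecnorm{y-x}{G(x)}^2,\vecnorm{y-x}{G(y)}^2\}>\delta^2\asymp 1/n$, and the hypothesis then gives $\crossbd\,d_K(x,y)^2+\euclidbd\vecnorm{x-y}{2}^2\gtrsim 1/n$. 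In particular $x$ and $y$ are separated by $\gtrsim 1/\sqrt{\crossbd n}$ in cross-ratio distance or by $\gtrsim 1/\sqrt{\euclidbd n}$ in Euclidean distance. Applying the mixed isoperimetric inequality of Section~\ref{sec_new_iso}, whose schematic form for $\alpha$-strongly logconcave $\Pi$ truncated on $K$ is
\begin{align*}
\Pi(S_3)\;\gtrsim\;\min\{d_K(S_1,S_2),\,\sqrt\alpha\cdot d_2(S_1,S_2)\}\cdot\min\{\Pi(S_1),\Pi(S_2)\},
\end{align*}
then yields $\Pi(S_3)\gtrsim \min\{\Pi(S_1),\Pi(S_2)\}/\sqrt{(\crossbd+\euclidbd/\alpha)n}$. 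Standard conductance bookkeeping converts this into $\Phi\gtrsim 1/\sqrt{(\crossbd+\euclidbd/\alpha)n}$, and the warm-start bound above delivers the claimed $T=O((\crossbd+\euclidbd/\alpha)n\log(\sqrt M/\epsilon))$.

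\textbf{Main obstacle.} The one-step overlap in Step~1 is an adaptation of the SSC/LTSC/ASC toolkit in \cite{laddha2020strong,NEURIPS2023_mangoubi,pmlr-v247-kook24b}; the real difficulty is the mixed isoperimetric inequality of Step~2. Neither the Lov\'asz--Simonovits needle argument (sharp in the Hilbert direction but degenerate when $K$ is unbounded) nor Gaussian isoperimetry (sharp in the Euclidean direction but blind to $K$) suffices on its own, and a naive interpolation loses the correct scaling. One needs a single inequality that simultaneously recovers the Lov\'asz--Simonovits rate in $\crossbd$ and the Gaussian rate in $\euclidbd/\alpha$; constructing it via a localisation that tracks both a needle's cross-ratio length and its Euclidean thickness is the central technical novelty, handled in Section~\ref{sec_new_iso}.
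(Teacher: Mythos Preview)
Your proposal follows the paper's approach essentially line for line: your Step~1 is Lemma~\ref{lem_TV_control} (proved in Appendix~\ref{appendix_transit_overlap} from SSC/LTSC/ASC, $G\succeq\beta I$, and the Dikin-ellipsoid containment), and your Step~2 is the conductance argument in Section~\ref{sec_mixing_strongly}, with the bad-set partition, the metric hypothesis, and Lemma~\ref{lem_isoperimetric} combined via Lemma~\ref{lem_Lovas}.

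One slip is worth correcting because it is exactly where the mixed isoperimetry earns its keep. You wrote the schematic inequality as
\begin{align*}
\Pi(S_3)\;\gtrsim\;\min\{d_K(S_1,S_2),\sqrt\alpha\,d_2(S_1,S_2)\}\cdot\min\{\Pi(S_1),\Pi(S_2)\},
\end{align*}
but Lemma~\ref{lem_isoperimetric} is stated for the \emph{pointwise maximum} metric $d'(x,y)=\max\{d_K(x,y),\log 2\,\sqrt\alpha\,\vecnorm{x-y}{2}\}$, and the set distance is $d'(S_1,S_2)=\inf_{x,y}d'(x,y)$. This is not cosmetic: from $\crossbd\,d_K(x,y)^2+\euclidbd\vecnorm{x-y}{2}^2\gtrsim 1/n$ you only obtain a \emph{pointwise} ``or'' (for each pair at least one distance is large), which lower-bounds $\inf_{x,y}\max\{\cdot,\cdot\}$ but gives no control on $\min\{d_K(S_1,S_2),\sqrt\alpha\,d_2(S_1,S_2)\}$---both set-distances could be zero, witnessed by different pairs. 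The paper makes the correct conversion explicit:
\begin{align*}
\crossbd\,d_K(x,y)^2+\euclidbd\vecnorm{x-y}{2}^2\;\le\;\Bigl(\crossbd+\tfrac{\euclidbd}{(\log 2)^2\alpha}\Bigr)\,d'(x,y)^2,
\end{align*}
whence $d'(S_1,S_2)\gtrsim 1/\sqrt{(\crossbd+\euclidbd/\alpha)n}$. Since you land on the right final scaling, this is likely a notational slip rather than a conceptual one, but it should be stated precisely.
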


Setting $K$ to be a polytope and $G$ to be the soft-threshold logarithmic metric in Definition~\ref{def_logarithmic} immediately leads to the following corollary. 
\begin{corollary}[\softlog, strongly logconcave target]\label{cor_logarithmic}
Under the same assumptions on the target distribution $\Pi$ in Theorem \ref{th_main}, we further assume that $K = \braces{x|Ax>b}$ is a (possibly unbounded) polytope for $A\in\real^{m\times n}$ and $b\in\real^m$. If in Algorithm \ref{algo_main} we provide the local metric $H$ and regularization size $\lambda=\beta$ in Definition \ref{def_logarithmic}, then there exists step-size $r>0$ and a universal constant $C>0$ such that for any error tolerance $\epsilon>0$ and any $M$-warm initial distribution $\mu_0$, as long as
\begin{equation*}
T\geq C\parenth{m+\kappa}n\log\parenth{\frac{\sqrt{M}}{\epsilon}}, 
\end{equation*}
the output distribution satisfies $\vecnorm{\mu_T-\Pi}{TV}\leq \epsilon$. 
\end{corollary}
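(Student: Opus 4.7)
The plan is to apply Theorem~\ref{th_main} to the specific local metric $G(x)=A_x^\top A_x+\beta I$ with $\lambda=\beta$, and verify each of its three hypotheses, thereby reading off the constants $\crossbd$ and $\euclidbd$. Once the hypotheses are checked, the bound $T\ge C(\crossbd+\euclidbd/\alpha)n\log(\sqrt M/\epsilon)$ specializes to $T\ge C(m+\kappa)n\log(\sqrt M/\epsilon)$, which is exactly the claimed mixing time.

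First, I would invoke known structural properties of the (unregularized) logarithmic metric $H(x)=A_x^\top A_x$, which is the Hessian of the standard log-barrier for the polytope $K=\{Ax>b\}$. The log-barrier is known to be strongly self-concordant (SSC) and lower trace self-concordant (LTSC), with parameters that are inherited by the regularized version $G=H+\beta I$ up to constant factors (adding a PSD multiple of identity can only improve the bounds in Definition~\ref{def_sc}, since $G(x)\succeq H(x)$ makes the relevant operator norms and traces smaller). The ASC property follows from a standard second-order Taylor expansion argument applied to the matrix entries of $G(x)$ along a Gaussian proposal, using the explicit form $\deri H(x)[h]=-A_x^\top S_x^{-1}\Diag(Ah)A_x-A_x^\top\Diag(Ah)S_x^{-1}A_x$. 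These verifications are routine extensions of the self-concordance analyses already present in the Dikin-walk literature.

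Second, the regularization hypothesis $G(x)\succeq\beta I$ is immediate from $G=A_x^\top A_x+\beta I\succeq\beta I$. The containment $E(x,G(x),1)\subseteq K$ is the classical Dikin ellipsoid property: if $(y-x)^\top G(x)(y-x)\le 1$ then in particular $(y-x)^\top A_x^\top A_x(y-x)=\sum_i (a_i^\top(y-x)/s_{x,i})^2\le 1$, which forces $a_i^\top y > b_i$ for every $i$ (because each slack $a_i^\top x-b_i=s_{x,i}$ cannot be fully cancelled by a perturbation of norm at most $1$ in the corresponding coordinate), so $y\in K$.

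The heart of the argument, and the step I expect to be the main obstacle, is the mixed cross-ratio/Euclidean comparison inequality
\[
\min\!\bigl\{\vecnorm{y-x}{G(x)}^2,\vecnorm{y-x}{G(y)}^2\bigr\}\le \crossbd\cdot d_K(x,y)^2+\euclidbd\cdot\vecnorm{y-x}{2}^2
\]
with $\crossbd\lesssim m$ and $\euclidbd=\beta$. Here I would split the local norm according to the definition of $G$: the contribution from the $\beta I$ part gives exactly $\beta\vecnorm{y-x}{2}^2$, hence $\euclidbd=\beta$. For the contribution from $A_x^\top A_x$, I would use the classical bound for the log-barrier relating the Hilbert (cross-ratio) metric to the Dikin local norm: chasing along the segment from $x$ to $y$, one shows that $\vecnorm{y-x}{A_x^\top A_x}^2\le m\cdot d_K(x,y)^2$ (and symmetrically in $y$), which is a well-known consequence of the fact that for each constraint $i$, $|a_i^\top(y-x)|/s_{x,i}$ is controlled by the one-dimensional cross-ratio along the chord, summed over $m$ constraints; this is the same inequality that drives the $\Ot(mn)$ mixing bound of Kannan--Narayanan for uniform sampling on polytopes. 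Plugging $\crossbd=O(m)$ and $\euclidbd=\beta$ into Theorem~\ref{th_main} yields the mixing time $\Ot((m+\kappa)n)$, completing the proof.
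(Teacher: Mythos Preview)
Your approach matches the paper's: reduce to Theorem~\ref{th_main} by verifying SSC/LTSC/ASC for $G$, the Dikin ellipsoid containment $E(x,G(x),1)\subseteq K$, and the mixed metric inequality with $\crossbd=m$, $\euclidbd=\beta$. The paper packages these as Lemma~\ref{lem_logarithmic_sc} and Lemma~\ref{lem_logarithmic_metric_conversion}; for the latter it goes through the notion of $\bar\nu$-symmetry (Definition~\ref{def_bar_nu}) together with Fact~\ref{fact_cross_ratio_by_local}, whereas you sketch a direct per-constraint argument. Both routes yield $\crossbd=m$, and your splitting $\vecnorm{y-x}{G}^2=\vecnorm{y-x}{H}^2+\beta\vecnorm{y-x}2^2$ is exactly what the paper does.

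There is, however, a genuine gap in your SSC/LTSC step. You assert that the log-barrier Hessian $H(x)=A_x^\top A_x$ is ``known'' to be SSC and LTSC and that these properties are inherited by $G=H+\beta I$ because $G\succeq H$. But the corollary explicitly allows $K$ to be unbounded and even $m<n$, in which case $H(x)$ need not be invertible, and then SSC/LTSC for $H$ are not even well-defined (Definition~\ref{def_sc} requires positive-definiteness). Your monotonicity argument ``$G\succeq H$ only makes the relevant norms smaller'' presupposes that the bounds hold for $H$ in the first place. The paper flags precisely this issue and resolves it by a limit argument: it introduces auxiliary invertible log-barrier Hessians $H^{(\gamma)}(x)=H(x)+\sum_j e_je_j^\top/(e_j^\top x-\gamma)^2$ (adding artificial box constraints), applies the classical SSC/SLTSC results to $H^{(\gamma)}$, transfers them to $G^{(\gamma)}=H^{(\gamma)}+\beta I$ via additivity and the monotonicity you describe, and then lets $\gamma\to\infty$. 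Without this or an equivalent device, your first bullet does not go through in the stated generality.
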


Corollary \ref{cor_logarithmic} shows that \softdw mixes in $\Ot\parenth{(m+\kappa)n}$ for truncated logconcave sampling, ignoring constants and logarithmic factors. Since the uniform distribution on a polytope can be approximated as the limit of Gaussian distributions truncated on the same polytope with increasing covariance and $\kappa=1$, we recover the $O(mn)$ mixing time from a warm start for the standard Dikin walk~\cite{kannan_random_2012} in this limit. Compared to previous work on sampling truncated log-concave distributions~\cite{NEURIPS2023_mangoubi} and \cite{pmlr-v247-kook24b}, Corollary \ref{cor_logarithmic} has a few improvements.  In~\cite{NEURIPS2023_mangoubi}, \softdw is introduced with a mixing time bound of $\Ot(mn+n\beta R^2)$, where $R$ is the radius of a ball containing the polytope. Using our new isoperimetric inequality, we establish a mixing time of $\Ot(mn+\kappa n)$, removing the dependence on $R$ which can be as large as $\sqrt{n}$ in general. 

Furthermore, the regularized Dikin walk with the \softlog only needs to evaluate $f$ when computing the acceptance rates, and no first or second order derivatives of $f$ are required. Thus, Algorithm \ref{algo_main} does not impose any other restrictions on the specific forms of $f$ other than it can be evaluated. This flexibility enables wide statistical applications. For example, $f$ has the form $f(\theta)=\sum_{i} l(\theta;x_i)$, in Bayesian Lasso logistic regression \cite{NEURIPS2023_mangoubi,tian_efficient_2008} and differentially private optimization \cite{mangoubi2022faster}, where $l$ denotes a loss function and $\{x_i\}$ denote data points.  In contrast, the approach to perform non-uniform sampling truncated on $K$ in \cite{pmlr-v247-kook24b} is to construct a barrier function induced by $f$ and to add the Hessian of the barrier function into the local metric. This approach requires to evaluate the Hessian of $f$ at each step, which may be costly in above applications. 

Additionally, for truncated Gaussian sampling specifically, \cite{pmlr-v247-kook24b} introduces a new barrier walk with a mixing time of $\Ot(mn+n^2)$ by combining barriers from the polytope and the Gaussian distribution. Here, we prove a mixing time of $\Ot(mn)$, which is smaller when $m < n$, after reducing the condition number $\kappa$ to $1$ through an affine transformation. Concretely, this bound allows us to improve the mixing time to be $O(n)$ when $m=o(n)$, as highlighted in Corollary \ref{cor_m<n}.

\begin{corollary}[\softlog, Gaussian target]\label{cor_m<n}
Under the same assumption for target distribution $\Pi$ as in Theorem \ref{th_main}. We assume $K\defn \braces{x|Ax>b}$ is a (possibly unbounded) polytope for some $A\in\real^{m\times n}$ and $b\in\real^m$. If in Algorithm \ref{algo_main} we provide the local metric $H$ and regularization size $\lambda=\beta$ in Definition \ref{def_logarithmic}. If we further assume $m=o(n)$, then there exists step-size $r>0$ and a universal constant $C>0$ such that for any error tolerance $\epsilon>0$ and any $M$-warm initial distribution $\mu_0$, as long as
\begin{equation*}
    T\geq C\kappa n\log\parenth{\frac{\sqrt{M}}{\epsilon}},
\end{equation*}
the output distribution satisfies $\vecnorm{\mu_T-\Pi}{TV}\leq \epsilon$.  
\end{corollary}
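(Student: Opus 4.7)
The plan is to derive Corollary~\ref{cor_m<n} directly from Corollary~\ref{cor_logarithmic}, specialized to the low-constraint regime $m = o(n)$. The hypotheses on the target $\Pi$ (strongly logconcave and log-smooth with condition number $\kappa$), on the polytope $K = \braces{x \mid Ax > b}$ (possibly unbounded), on the algorithmic choice (\softdw with regularization $\lambda = \beta$ as in Definition~\ref{def_logarithmic}), and on the warm initial distribution $\mu_0$ are literally identical across the two statements. So the first step is simply to apply Corollary~\ref{cor_logarithmic} verbatim, yielding, for a suitable step-size $r > 0$ and universal constant $C_0 > 0$, the mixing-time guarantee $T \geq C_0 (m + \kappa)n\log\parenth{\sqrt{M}/\epsilon}$ that produces $\vecnorm{\mu_T - \Pi}{TV} \leq \epsilon$.

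The remaining step is to simplify the prefactor $m + \kappa$ to $\kappa$ under the sparsity assumption $m = o(n)$. Since $\kappa \geq 1$ always holds, and the regime of interest (most notably the affine-transformed Gaussian target of Eq.~\eqref{eq_affine}, for which $\kappa$ becomes $1$) has $m$ much smaller than the ambient dimension, the $m$ contribution is absorbed into $\kappa n$ up to constants. Readjusting $C_0$ into a slightly larger universal constant $C$ then produces the advertised bound $T \geq C\kappa n\log\parenth{\sqrt{M}/\epsilon}$. The step-size $r$ itself requires no modification, because Corollary~\ref{cor_logarithmic} selects it as a function of the self-concordance parameters of the \softlog together with $\alpha$ and $\beta$, and not of the ratio $m/n$.

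I do not foresee a hard obstacle, since the heavy technical machinery --- the self-concordance (SSC, LTSC, ASC) properties of the soft-threshold logarithmic metric on unbounded polytopes, the new Euclidean--Hilbert isoperimetric inequality of Section~\ref{sec_new_iso}, and the conductance estimates of Section~\ref{sec_conductance_proofs} --- is already packaged inside Corollary~\ref{cor_logarithmic}. The only sanity check is a compatibility one: when Theorem~\ref{th_main} is instantiated with $\crossbd = O(m)$ and $\euclidbd = O(\beta)$ for the soft-threshold logarithmic metric, the rank deficiency of $A_x\tp A_x$ (whose rank is at most $m < n$) must be corrected by the PSD regularization $\beta I$ so that $G(x) \succeq \beta I$ and $E(x, G(x), 1) \subseteq K$; both conditions follow directly from the choice $\lambda = \beta$, so no new analysis is needed beyond the invocation of Corollary~\ref{cor_logarithmic}.
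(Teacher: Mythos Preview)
Your approach matches the paper's exactly: the paper's own proof is the single sentence ``Corollary~\ref{cor_m<n} is directly implied by Corollary~\ref{cor_logarithmic} by letting $m=o(n)$,'' and you have simply fleshed that out.

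One caveat worth flagging (shared by the paper): the simplification $(m+\kappa)n \leq C\kappa n$ for a \emph{universal} constant $C$ does not follow from $m=o(n)$ alone. With $\kappa=1$ and $m=\sqrt{n}$ one has $m=o(n)$ yet $(m+\kappa)n \asymp n^{3/2}$, which is not $O(\kappa n)=O(n)$. What is really needed is $m=O(\kappa)$; since $\kappa\ge 1$, the regime $m=O(1)$ suffices, and indeed the paper's own discussion immediately after the corollary retreats to ``the mixing time scales linearly in $n$ when $m=O(1)$.'' Your sentence ``the $m$ contribution is absorbed into $\kappa n$ up to constants'' glosses over exactly this point, but so does the paper, so you are not missing anything relative to the source.
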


In uniform sampling of polytopes, it is typically assumed that $m \geq n$ because otherwise the polytope is unbounded, and the resulting distribution becomes improper. In truncated strongly logconcave sampling, however, the target distribution remains well-defined even when $m$ is as small as $0$. Therefore, in cases where $m \ll n$, we expect a better mixing time than $O(n^2)$. This small-$m$ scenario arises in many statistical  applications, such as Bayesian linear models with Gaussian priors under linear inequality constraints  \cite{geweke_bayesian_1996,ghosal_bayesian_2022}, where $m$ reflects prior information and can be arbitrarily small. 

When $m=0$, Corollary \ref{cor_m<n} provides a mixing time upper bound which matches that of Random Walk Metropolis (RWM) for unconstrained logconcave sampling \cite{dwivedi2019log, chen_fast_2020, andrieu_explicit_2024}. Specifically, with $m=0$, the polytope part $A_x\tp A_x$ in the \softlog $G$ in Definition \ref{def_logarithmic} vanishes, making \softdw exactly the same algorithm as RWM, so a similar mixing time bound is expected. Our result of $\Ot(\kappa n)$ indeed matches the state-of-the-art RWM bound from \cite{andrieu_explicit_2024}, improving the results of \cite{NEURIPS2023_mangoubi}.

Furthermore, Corollary \ref{cor_m<n} demonstrates that for truncated Gaussian sampling, the mixing time scales linearly in $n$ when $m= O(1)$. In comparison, the upper bound in \cite{pmlr-v247-kook24b} is $\Ot(n^2)$ under the same conditions. This extra $n$-factor arises because \cite{pmlr-v247-kook24b} introduces
$(\nu,\bar\nu)$-Dikin amenability as an extension of strong self-concordance and $\bar\nu$-symmetry~\cite{laddha2020strong}. However, ensuring $(\nu,\bar\nu)$-Dikin amenability requires them to multiply an additional $n$ scaling to the Gaussian barrier, which forces them to choose a smaller step size. In contrast, we add an $\ell_2$-regularization to our local metric with a dimension-independent scaling. While this adjustment may appear simple, circumventing the need for $(\nu,\bar\nu)$-Dikin amenability in~\cite{pmlr-v247-kook24b} requires a new proof technique, which we outline below. 

Our proof deviates from the existing ones in two key ways. First, when we control the acceptance rate, we extend the coupling argument in \cite{andrieu_explicit_2024} to accommodate asymmetric proposals, which improves the radius $R$ dependent control in~\cite{NEURIPS2023_mangoubi}. Second, we develop a novel isoperimetric inequality under a mixed metric that combines Euclidean and cross-ratio distances, which allows us to have a direct analysis of regularized Dikin walks. 

Theorem \ref{th_main} can be applied beyond logarithmic metrics. It is well-known in truncated uniform sampling that the standard Dikin walk can be improved using Lewis-weighted Dikin walk in the case $m \gg n$~\cite{chen2018fast,pmlr-v247-kook24b}. Following this idea, in the case $m \gg n$, we define the regularized Dikin walk using Lewis metric in Definition \ref{def_Lewis}. Using the SSC, LSC, ASC properties established in \cite{pmlr-v247-kook24b}, we obtain the following Corollary \ref{cor_Lewis}.

\begin{corollary}[\reglewis, strongly logconcave target]\label{cor_Lewis}
Under the same assumption for target distribution $\Pi$ as in Theorem \ref{th_main}. We further assume $K\defn \braces{x|Ax>b}$ is a  polytope for some $A\in\real^{m\times n}$ and $b\in\real^m$. If in Algorithm~\ref{algo_main} we provide the local metric $H$ and regularization size $\lambda=\beta$ as in Definition \ref{def_Lewis}, then there exists step-size $r>0$ and a universal constant $C_1>0,C_2>0$ such that for any error tolerance $\epsilon>0$ and any $M$-warm initial distribution $\mu_0$, as long as
\begin{equation*}
T\geq C_1(\log m)^{C_2}\parenth{n^{3/2}+\kappa}n\log\parenth{\frac{\sqrt{M}}{\epsilon}}, 
\end{equation*}
the output distribution satisfies $\vecnorm{\mu_T-\Pi}{TV}\leq \epsilon$. 
\end{corollary}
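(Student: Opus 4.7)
The plan is to verify the hypotheses of Theorem~\ref{th_main} for the ridge-Lewis metric $G(x) = H(x) + \beta I$ with $H(x) = c_1 \sqrt{n} (\log m)^{c_2} A_x^\top W_x A_x$, and then read off the mixing-time bound with $\crossbd = O(n^{3/2} (\log m)^{c_2})$ and $\euclidbd = O(\beta)$.

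First, the choice $\lambda = \beta$ immediately gives $G(x) \succeq \beta I$ since $H(x) \succeq 0$. For the three self-concordance conditions on $G$, I would inherit them from the corresponding properties of the unregularized Lewis metric $H$ established in \cite{pmlr-v247-kook24b}. Since $\lambda I$ is constant in $x$, one has $\deri G = \deri H$ and $\deri^2 G = \deri^2 H$, so the SSC and LTSC Frobenius/trace bounds for $H$ translate to $G$ after replacing $H(x)^{-1/2}$ by $G(x)^{-1/2}$ (which is smaller in the Loewner order and can only shrink the left-hand sides). For ASC, because the proposal covariance $\frac{r^2}{n}G^{-1}$ is a \emph{shrinking} of the unregularized Gaussian and because $\|z-x\|_{G(z)}^2 - \|z-x\|_{G(x)}^2 = \|z-x\|_{H(z)}^2 - \|z-x\|_{H(x)}^2$, the concentration statement for $G$ reduces to the SASC property for $H$ from \cite{pmlr-v247-kook24b} applied with $\barG = \beta I$. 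The Dikin-ellipsoid containment $E(x, G(x), 1) \subseteq K$ follows from the standard fact that, with the $c_1 \sqrt{n} (\log m)^{c_2}$ prefactor built into $H$, the scaled Lewis-weight ellipsoid $E(x, H(x), 1)$ already sits inside $K$ (this is exactly what fixes the constants $c_1, c_2$; see \cite{lee2019solving,pmlr-v247-kook24b}), and $G(x) \succeq H(x)$ preserves the containment.

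The main calculation is the mixed-metric inequality. Expanding the definition of $H$ and using $\sum_{i=1}^m w_{x,i} \leq n$ (the normalization of $\ell_q$ Lewis weights for $q = \Theta(\log m)$), I would write
\begin{equation*}
\|y - x\|_{H(x)}^2 \;=\; c_1 \sqrt{n} (\log m)^{c_2} \sum_{i=1}^m w_{x,i} \left(\frac{a_i^\top (y-x)}{s_{x,i}}\right)^2.
\end{equation*}
The elementary cross-ratio bound $|a_i^\top(y-x)/s_{x,i}| \leq d_K(x,y)$ for each $i$ — obtained by comparing the distance from $x$ or $y$ to the intersection of the line through $x,y$ with the hyperplane $a_i^\top z = b_i$ against the boundary distances $|px|, |qy|$ in Definition~\ref{def_cross_ratio_unbounded} — then yields
\begin{equation*}
\|y - x\|_{H(x)}^2 \;\leq\; c_1 n^{3/2} (\log m)^{c_2} \, d_K(x,y)^2.
\end{equation*}
Combined with $\|y-x\|_{\beta I}^2 = \beta \|y-x\|_2^2$, this verifies the mixed-metric hypothesis with $\crossbd = O(n^{3/2} (\log m)^{c_2})$ and $\euclidbd = \beta$. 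Plugging into Theorem~\ref{th_main} gives $T \gtrsim (\crossbd + \euclidbd/\alpha)\, n \log(\sqrt{M}/\epsilon) = O\bigl((\log m)^{c_2} (n^{3/2} + \kappa)\, n \log(\sqrt{M}/\epsilon)\bigr)$, which is the claimed rate.

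The main obstacle is the inheritance of SSC, LTSC and ASC from the unregularized to the regularized Lewis metric: although the bookkeeping above suggests that adding $\beta I$ only helps, the precise reduction actually invokes the \emph{strong} variants SLTSC and SASC from Definition~\ref{def_sc}, and one must confirm that the analysis in \cite{pmlr-v247-kook24b} for $\ell_q$ Lewis weights in fact delivers these stronger forms (and not merely the non-strong ones). The remaining pieces — the ellipsoid containment and the cross-ratio calculation — are essentially bookkeeping combining established properties of Lewis weights with the direct cross-ratio inequality.
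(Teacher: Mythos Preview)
Your proposal is correct and the overall architecture matches the paper: verify the hypotheses of Theorem~\ref{th_main} for $G=H+\beta I$ by inheriting SSC/LTSC/ASC from the Lewis metric $H$ (via its strong variants SLTSC and SASC, which the paper cites as Fact~\ref{fact_unreg_Lewis_properties} from \cite{pmlr-v247-kook24b}), check $E(x,G(x),1)\subseteq K$, establish the mixed-metric bound, and plug into Theorem~\ref{th_main}. Your identification of the ``main obstacle'' --- that one genuinely needs SLTSC and SASC of $H$, not merely LTSC and ASC --- is exactly what the paper invokes via the additivity rule (Fact~\ref{fact_addition}).

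The one genuine difference is in how you obtain the cross-ratio part of the mixed-metric inequality. The paper argues by black-boxing the $\bar\nu$-symmetry of the Lewis metric with $\bar\nu = O(n^{3/2}(\log m)^{c_2})$ (Fact~\ref{fact_unreg_Lewis_properties}) and then applying Fact~\ref{fact_cross_ratio_by_local} to get $\min\{\|y-x\|_{H(x)}^2,\|y-x\|_{H(y)}^2\}\leq \bar\nu\, d_K(x,y)^2$. You instead give a direct coordinate-wise argument: the elementary bound $\bigl|a_i^\top(y-x)/s_{x,i}\bigr|\leq d_K(x,y)$ for every facet $i$ (which does hold for every $i$ at the \emph{same} point $x$, as one can check by comparing $d_K$ with $1/|t_p|$ or $1/t_q$ along the chord), combined with $\sum_i w_{x,i}=n$. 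This yields $\|y-x\|_{H(x)}^2\leq c_1 n^{3/2}(\log m)^{c_2}\,d_K(x,y)^2$ directly, which is slightly stronger than the paper's min-version and avoids the John-ellipsoid half of $\bar\nu$-symmetry. The trade-off is that the paper's route is modular --- it applies verbatim to any $\bar\nu$-symmetric metric --- whereas your argument exploits the specific weighted-log-barrier structure of the Lewis metric. Both land on the same $\crossbd=O(n^{3/2}(\log m)^{c_2})$ and hence the same mixing-time bound.
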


The proofs of Theorem \ref{th_main}, Corollary \ref{cor_logarithmic}, \ref{cor_m<n} and \ref{cor_Lewis} are postponed to Section \ref{sec_mixing_strongly}. 

\subsection{Extension to Weakly Logconcave Distributions}\label{sec_results_weakly}

We extend the mixing time upper bound (Theorem \ref{th_main}) to distributions that are not necessarily $\alpha$-strongly logconcave, and instead only has a finite covariance matrix. This extension is summarized as Theorem~\ref{th_weakly}. Corollaries \ref{cor_logarithmic_weakly} and \ref{cor_Lewis_weakly} are the applications of Theorem~\ref{th_weakly} to polytopes, and  $G$ being soft-threshold logarithmic metric as in Definition \ref{def_logarithmic} or soft-threshold Lewis metric as in Definition \ref{def_Lewis} respectively. 

Here we emphasize that the upper bounds for weakly logconcave distributions loses a factor of the Kannan-Lovász-Simonovits (KLS) constant $\psi_n$, which appears in a new isoperimetric inequality proved by us under a mixed metric for weakly logconave measures. For a distribution $\mu$ over $\real^n$, its isoperimetric constant is
\begin{equation}\label{eq_KLS_def}
    \frac{1}{\psi_\mu}\defn \inf_{A\subseteq\real^n}\braces{\frac{\mu^+(A)}{\min\braces{\mu(A),1-\mu(A)}}},
\end{equation}
where $A$ is taken over all Borel sets of $\real^n$, and $\mu^+$ is defined to be the boundary measure of $\mu$. Given any Borel set $B\subset \real^n$, the boundary measure $\Pi^+$ of $B$ is defined to be
\begin{equation}\label{eq_def_boundary_euclid}
\begin{aligned}
    \Pi^+(B)=\underset{h\to 0^+}{\lim\inf} \frac{\Pi(B^h)-\Pi(B)}{h}, 
    \text{ where } B^h\defn \braces{x\in X|\exists a\in B,\vecnorm{x-a}{2}<h}.
\end{aligned}
\end{equation}
The KLS constant $\psi_n$ is defined by taking supremum over all isotropic and logconcave measures in $\real^n$,
\begin{equation*}
    \psi_n=\underset{\mu}{\sup}\,\psi_\mu.
\end{equation*}

\begin{theorem}\label{th_weakly}
Let $\Pi$ be a target distribution with density  $\pi(x)\propto \mathbf{1}_K(x)e^{-f(x)}$, where $K\subseteq \real^n$ is an open and convex set, $0\preceq\nabla^2 f\preceq \beta I_n$ as in Eq.~\eqref{eq_distri_weakly}. Suppose the covariance matrix of $\Pi$ is bounded as $\Sigma_\pi \preceq \eta I_n$. If in Algorithm \ref{algo_main} we provide the local metric $H$ and regularization size $\lambda>0$ such that $G\defn H+\lambda I$ satisfies
\begin{itemize}
    \item $G$ is SSC, LTSC, ASC
    \item $G(x)\succeq \beta I$  and the ellipsoid $E(x,G(x),1)\subseteq K$ for all $x\in K$.
    \item There exists $\crossbd\geq 1$ and $\euclidbd\geq \beta$, such that 
    \begin{align*}
        \min\braces{\vecnorm{y-x}{G(x)}^2,\vecnorm{y-x}{G(y)}^2}\leq \crossbd\cdot d_K(x,y)^2+\euclidbd\cdot\vecnorm{y-x}{2}^2, \forall x, y \in K,
    \end{align*}
\end{itemize}
Then there exists a step-size $r>0$ and a universal constant $C>0$ such that for any $M$-warm initial distribution $\mu_0$ and any error tolerance $\epsilon>0$, as long as
\begin{equation*}
    T\geq C\psi_n^2(\crossbd + {\euclidbd}{\spectr})n\log\parenth{\frac{\sqrt{M}}{\epsilon}},
\end{equation*}
the output distribution satisfies $\vecnorm{\mu_T-\Pi}{TV}\leq \epsilon$.
\end{theorem}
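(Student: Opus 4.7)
The plan is to follow the same three-stage conductance argument used to prove Theorem~\ref{th_main}, namely (i) bound the one-step total-variation overlap of $\cT_x$ between nearby points, (ii) establish an isoperimetric inequality for $\Pi$ in an appropriate mixed metric, and (iii) combine them to lower bound the $s$-conductance of the lazy chain $\barT$, then invoke the Lov\'asz-Simonovits mixing bound. The essential novelty compared to the strongly logconcave case is that the isoperimetric inequality must now rely on the covariance bound $\Sigma_\pi \preceq \spectr I_n$ rather than on the strong-logconcavity parameter $\alpha$, which is what produces the factor $\psi_n^2$ and substitutes $1/\alpha$ with $\spectr$ in the mixing time bound.

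First I would import the one-step analysis from Theorem~\ref{th_main} essentially unchanged. The assumptions SSC, LTSC, ASC of $G$, together with $G(x) \succeq \beta I$ and $E(x,G(x),1) \subseteq K$, are purely local conditions that control both the Metropolis acceptance probability and the overlap between $\cP_x$ and $\cP_y$ for two points at mixed-metric distance $\crossbd\, d_K(x,y)^2 + \euclidbd\,\vecnorm{x-y}{2}^2 \leq c$. Nowhere does that argument use $\alpha$-strong logconcavity of $f$; only the upper bound $\nabla^2 f \preceq \beta I \preceq G(x)$ is required to cancel the $f$-ratio in the acceptance probability. Consequently the same step-size $r$ of order $1$ guarantees that $\cT_x$ and $\cT_y$ share a constant amount of total variation whenever their mixed distance is below a universal threshold.

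The heart of the proof is a new mixed-metric isoperimetric inequality for weakly logconcave measures with bounded covariance, which would be established in Section~\ref{sec_new_iso}. Concretely, I aim to show that for every Borel partition $K = S_1 \cup S_2 \cup S_3$ satisfying
\begin{equation*}
\crossbd\, d_K(x,y)^2 + \euclidbd\,\spectr\,\vecnorm{x-y}{2}^2 \;\geq\; c \quad \text{for all } x \in S_1,\, y \in S_2,
\end{equation*}
one has $\Pi(S_3) \gtrsim \psi_n^{-1}(\crossbd + \euclidbd\,\spectr)^{-1/2}\min\{\Pi(S_1),\Pi(S_2)\}$. The proof blends two ingredients: the cross-ratio isoperimetric inequality for logconcave measures on convex sets, and the Euclidean KLS inequality applied to $\Pi$ after rescaling to bring the covariance $\Sigma_\pi$ to isotropic position. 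In the strongly logconcave setting the KLS constant for $\Pi$ is bounded by a universal constant, so only $(\crossbd + \euclidbd/\alpha)$ appeared in Theorem~\ref{th_main}; dropping strong logconcavity introduces an unavoidable $\psi_n\sqrt{\spectr}$ factor via KLS, and a localization/needle-decomposition argument along lines $\widebar{xy}$ passes $\psi_n$ out uniformly in front of both the cross-ratio and Euclidean contributions.

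Combining the one-step overlap (uniform in mixed distance) with the mixed isoperimetric inequality through the standard three-set $s$-conductance argument, and absorbing the step size $r/\sqrt{n}$, yields an $s$-conductance of order $\bigl(\psi_n\sqrt{n(\crossbd + \euclidbd\,\spectr)}\bigr)^{-1}$, giving the claimed mixing time after squaring and multiplying by $\log(\sqrt{M}/\epsilon)$. Because $\Pi$ is only weakly logconcave and need not have a finite $L^\infty$ tail, I would work with $s$-conductance at $s$ a small polynomial in $\epsilon$; the hypothesis $\Sigma_\pi \preceq \spectr I_n$ lets Chebyshev's inequality identify a ball of radius $O(\sqrt{n\spectr/s})$ carrying mass $1-s$ on which the isoperimetric argument operates. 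The main obstacle, and the step most sensitive to technique, is the mixed isoperimetric inequality itself: the cross-ratio distance is insensitive to the global shape of $\Pi$ while the Euclidean distance is not, and ensuring that a single factor $\psi_n$ (rather than separate factors on each piece) multiplies the whole $\sqrt{\crossbd + \euclidbd\,\spectr}$ requires a careful mixed-metric localization adapted from the strongly logconcave case.
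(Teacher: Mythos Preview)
Your first stage is right: the transition-overlap lemma (Lemma~\ref{lem_TV_control}) uses only $\nabla^2 f \preceq \beta I \preceq G(x)$ and the self-concordance properties of $G$, so it carries over verbatim from the strongly logconcave case. But the remainder of your plan diverges from the paper in two ways, one of which is a genuine gap.

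The minor point is that the paper does \emph{not} use $s$-conductance here. Because the transition overlap of Lemma~\ref{lem_TV_control} holds for \emph{all} $x\in K$ (the acceptance rate is bounded globally, not just on a high-probability ball), the full conductance $\Phi$ can be lower-bounded directly and Lemma~\ref{lem_Lovas} applied with Eq.~\eqref{eq_Lovas}. Your Chebyshev-ball restriction is unnecessary and complicates the argument.

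The substantive gap is in the isoperimetric inequality. You propose to obtain it by needle decomposition, ``blending'' the cross-ratio isoperimetry with the Euclidean KLS inequality so that a single $\psi_n$ factor emerges. This is the step that does not go through as described: the localization lemma reduces to one-dimensional logconcave integrals, and there is no mechanism by which the global KLS constant $\psi_n$---which encodes worst-case behaviour over \emph{all} isotropic logconcave measures in $\real^n$---can be read off from one-dimensional needles. The paper instead proves Lemma~\ref{lem_isoperimetric_weakly} by \emph{stochastic localization}: it runs the Eldan process up to time $t=\psi_n^{-2}$, so that the evolved measure $\Pi_t$ is $t$-strongly logconcave and the already-established Lemma~\ref{lem_isoperimetric} applies to it; the approximate-conservation-of-variance bound of Klartag (Lemma~\ref{lem_evolution_function}) then transfers the inequality back to $\Pi$, and this is precisely where the factor $\psi_n$ enters. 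To make this work the paper passes through a \emph{boundary-measure} version of the isoperimetry (Lemma~\ref{lem_isoperimetric_boundary}) and recovers the three-set form via the co-area formula (Lemma~\ref{lem_coarea}); that in turn forces the switch from the cross-ratio distance $d_K$ to the Hilbert metric $\dH_K$, because the co-area argument requires a genuine metric satisfying the triangle inequality. None of these ingredients---stochastic localization, boundary measures, co-area, Hilbert metric---appear in your sketch, and they are what makes the weakly logconcave extension work.
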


The famous KLS conjecture~\cite{kannan1995isoperimetric} conjectures that $\psi_n$ is upper bounded by a universal constant for any dimension $n$. A series of works  \cite{lee_eldans_2019,chen_almost_2021,klartag_bourgains_2022,klartag_logarithmic_2023} successfully proved $\psi_n=O(\sqrt{\log n})$, thus we lose at most by a log-factor in $n$ in the mixing times in Theorem \ref{th_weakly}.

\begin{corollary}[\softlog, weakly logconcave target]\label{cor_logarithmic_weakly}
Under the same assumption for target distribution $\Pi$ as in Theorem \ref{th_weakly}. We further assume $K\defn \braces{x|Ax>b}$ is a (possibly unbounded) polytope for some $A\in\real^{m\times n}$ and $b\in\real^m$. If in Algorithm \ref{algo_main} we set  $G$ to be the \softlog in Definition~\ref{def_logarithmic} with regularization size $\lambda\defn\beta$, then there exists step-size $r>0$ and a universal constant $C>0$ such that for any error tolerance $\epsilon>0$ and any $M$-warm initial distribution $\mu_0$, as long as
\begin{equation*}
T\geq C\psi_n^2\parenth{m+\beta{\spectr}}n\log\parenth{\frac{\sqrt{M}}{\epsilon}}, 
\end{equation*}
the output distribution satisfies $\vecnorm{\mu_T-\Pi}{TV}\leq \epsilon$. 
\end{corollary}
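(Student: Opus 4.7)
The plan is to reduce Corollary~\ref{cor_logarithmic_weakly} to Theorem~\ref{th_weakly} by verifying each of its hypotheses for the soft-threshold logarithmic metric $G(x) = A_x^\top A_x + \beta I$ on the polytope $K = \{x \mid Ax > b\}$, with regularization $\lambda = \beta$, and then plugging in the resulting constants $\crossbd$ and $\euclidbd$.

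First, I would dispatch the local properties of $G$. The SSC, LTSC, and ASC properties of the bare logarithmic metric $H(x) = A_x^\top A_x$ are standard and proved in prior work on Dikin-type walks (for example in~\cite{laddha2020strong,NEURIPS2023_mangoubi,pmlr-v247-kook24b}); adding the fixed PSD regularizer $\beta I$ only enlarges $G$ and preserves each of these inequalities, since all three conditions involve quantities of the form $G(x)^{-1}\deri^{(k)}G(x)$ where the derivatives are unchanged by a constant additive term, while the inverse gets uniformly smaller. The containment $E(x, G(x), 1) \subseteq K$ is immediate because the unit Dikin ellipsoid of the logarithmic barrier, namely $\{z \mid \|z-x\|_{A_x^\top A_x} \leq 1\}$, is already contained in $K$ (a standard fact for log-barriers), and the regularization makes $G(x) \succeq A_x^\top A_x$, so the ellipsoid $E(x,G(x),1)$ is only smaller. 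Finally, $G(x) \succeq \beta I$ is trivial from the choice $\lambda = \beta$.

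Second, the key step is to verify the mixed-metric bound with explicit constants. I would invoke the classical cross-ratio bound for the logarithmic barrier,
\begin{equation*}
\min\braces{\vecnorm{y-x}{A_x^\top A_x}^2, \vecnorm{y-x}{A_y^\top A_y}^2} \leq m \cdot d_K(x,y)^2, \quad \forall x, y \in K,
\end{equation*}
which follows from a direct computation: along the line through $x$ and $y$, each active constraint contributes at most $d_K(x,y)^2$ to the local norm, and there are at most $m$ such terms. Combining this with the trivial identity $\vecnorm{y-x}{\beta I}^2 = \beta \vecnorm{y-x}{2}^2$ gives
\begin{equation*}
\min\braces{\vecnorm{y-x}{G(x)}^2, \vecnorm{y-x}{G(y)}^2} \leq m \cdot d_K(x,y)^2 + \beta \cdot \vecnorm{y-x}{2}^2,
\end{equation*}
so we may take $\crossbd = m$ and $\euclidbd = \beta$ in Theorem~\ref{th_weakly}. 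Substituting these values into the mixing time bound of Theorem~\ref{th_weakly} yields
\begin{equation*}
T \geq C \psi_n^2 (m + \beta \spectr) n \log\parenth{\frac{\sqrt{M}}{\epsilon}},
\end{equation*}
which is exactly the claimed bound.

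The main obstacle is the unbounded-polytope case: in directions where the line through $x, y$ exits $K$ only at infinity on at least one side, one has $d_K(x,y) = 0$ or is degenerate, and the cross-ratio inequality for the log-barrier has to be re-derived in this limit. The resolution is that in such directions the matrix $A_x^\top A_x$ also vanishes in that direction (no constraint restricts movement there), so both sides of the cross-ratio bound collapse consistently. Formally, I would write $d_K(x,y) = \lim_{R \to \infty} d_{K \cap \ball(0, R)}(x,y)$ as in Definition~\ref{def_cross_ratio_unbounded}, apply the bounded-case bound on $K \cap \ball(0, R)$ (whose log-barrier Hessian dominates $A_x^\top A_x$), and pass to the limit, noting that the sphere constraints from $\ball(0, R)$ drop out in the limit because their contribution to $A_x^\top A_x$ vanishes as $R \to \infty$.
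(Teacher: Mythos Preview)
Your proposal is correct and mirrors the paper's approach exactly: the paper observes that the hypotheses of Theorem~\ref{th_weakly} on $G$ are identical to those of Theorem~\ref{th_main}, so the verifications already carried out for Corollary~\ref{cor_logarithmic} (packaged as Lemmas~\ref{lem_logarithmic_sc} and~\ref{lem_logarithmic_metric_conversion}) apply verbatim, giving $\crossbd=m$ and $\euclidbd=\beta$, which plugged into Theorem~\ref{th_weakly} yields the stated bound. One technical point the paper handles more carefully than your sketch: when $K$ is unbounded (or $m<n$), $H(x)=A_x^\top A_x$ can be singular, so SSC/LTSC/ASC are not even defined for $H$ alone and the ``adding $\beta I$ only shrinks $G^{-1}$'' argument cannot start from properties of $H$; the paper resolves this (in the proof of Lemma~\ref{lem_logarithmic_sc}) by precisely the kind of limit argument you propose for the cross-ratio bound---adding artificial axis-aligned constraints to make $H^{(\gamma)}$ invertible, invoking the known SSC/SLTSC of the invertible log-barrier, and letting the artificial constraints recede to infinity.
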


\begin{corollary}[\reglewis, weakly logconcave target]\label{cor_Lewis_weakly}
Under the same assumption for target distribution $\Pi$ as in Theorem \ref{th_weakly}. We further assume $K\defn \braces{x|Ax>b}$ is a (possibly unbounded) polytope for some $A\in\real^{m\times n}$ and $b\in\real^m$.  If in Algorithm \ref{algo_main} we provide the local metric $H$ and regularization size $\lambda=\beta$ in Definition \ref{def_Lewis}, then there exists step-size $r>0$ and a universal constant $C_1>0,C_2>0$ such that for any error tolerance $\epsilon>0$ and any $M$-warm initial distribution $\mu_0$, as long as
\begin{equation*}
T\geq C_1(\log m)^{C_2}\psi_n^2\parenth{n^{3/2}+\beta\spectr}n\log\parenth{\frac{\sqrt{M}}{\epsilon}}, 
\end{equation*}
the output distribution satisfies $\vecnorm{\mu_T-\Pi}{TV}\leq \epsilon$. 
\end{corollary}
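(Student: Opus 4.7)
The plan is to invoke Theorem \ref{th_weakly} with the local metric $G(x) \defn H(x) + \beta I$, where $H$ is the Lewis-weighted matrix $H(x) = c_1\sqrt{n}(\log m)^{c_2} A_x^\top W_x A_x$ from Definition \ref{def_Lewis}. Once I verify the three bulleted hypotheses with explicit constants $\crossbd = \tilde O(n^{3/2})$ (absorbing a $(\log m)^{C_2}$ factor) and $\euclidbd = \beta$, the conclusion of Theorem \ref{th_weakly} reads
\begin{equation*}
T \geq C\psi_n^2\bigl(\tilde O(n^{3/2}) + \beta\spectr\bigr)\, n\log\parenth{\frac{\sqrt{M}}{\epsilon}},
\end{equation*}
which matches the stated bound after folding polylogarithmic terms into the $C_1(\log m)^{C_2}$ prefactor.

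For the self-concordance-type conditions (SSC, LTSC, ASC), I would reuse the estimates already proved for $H$ in the analyses of Lewis-weighted Dikin walks \cite{laddha2020strong,pmlr-v247-kook24b}. Adding the constant matrix $\beta I$ leaves the derivatives $\deri G[h]$ and $\deri^2 G[h,h]$ unchanged, while inflating the reference norm $\vecnorm{h}{G(x)}^2 = \vecnorm{h}{H(x)}^2 + \beta\vecnorm{h}{2}^2$, so every inequality in Definition \ref{def_sc} becomes at least as easy to verify for $G$ as for $H$; this is precisely the regime anticipated by the SLTSC/SASC strengthenings. The choice $\lambda = \beta$ simultaneously yields $G(x) \succeq \beta I$, and the containment $E(x,G(x),1) \subseteq E(x,H(x),1) \subseteq K$ follows from the $c_1\sqrt{n}(\log m)^{c_2}$ scaling underlying Definition \ref{def_Lewis}.

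The crux of the proof is the mixed cross-ratio/Euclidean inequality. Using the additive decomposition
\begin{equation*}
\vecnorm{y-x}{G(x)}^2 = \vecnorm{y-x}{H(x)}^2 + \beta\vecnorm{y-x}{2}^2,
\end{equation*}
the task reduces to upper bounding $\vecnorm{y-x}{H(x)}^2$ by a constant multiple of $d_K(x,y)^2$. Plugging in the cross-ratio estimate for Lewis-weighted Dikin ellipsoids from \cite{lee2019solving,pmlr-v247-kook24b} of the form $\vecnorm{y-x}{A_x^\top W_x A_x}^2 \lesssim \sqrt{n}(\log m)^{c_2}\, d_K(x,y)^2$, and then multiplying by the $c_1\sqrt{n}(\log m)^{c_2}$ scaling built into $H$, yields $\vecnorm{y-x}{H(x)}^2 \lesssim n^{3/2}(\log m)^{2c_2}\, d_K(x,y)^2$, i.e.\ $\crossbd = \tilde O(n^{3/2})$, with $\euclidbd = \beta$ supplied directly by the ridge term. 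I expect this quantitative bookkeeping of the polylogarithmic factors through the Lewis-weight estimates to be the main technical obstacle; once it is in hand, a direct application of Theorem \ref{th_weakly} concludes the argument.
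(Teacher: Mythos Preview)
Your proposal is correct and follows essentially the same route as the paper: the paper observes that the hypotheses of Theorem~\ref{th_weakly} coincide with those of Theorem~\ref{th_main}, so the verification of SSC/LTSC/ASC, $G(x)\succeq\beta I$, $E(x,G(x),1)\subseteq K$, and the mixed metric inequality for the regularized Lewis metric is literally reused from the proof of Corollary~\ref{cor_Lewis} (via Fact~\ref{fact_unreg_Lewis_properties}, Fact~\ref{fact_addition}, and the $\bar\nu$-symmetry bound in Fact~\ref{fact_cross_ratio_by_local}), after which Theorem~\ref{th_weakly} is applied with $\crossbd=O((\log m)^{c_2}n^{3/2})$ and $\euclidbd=\beta$. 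One small point: the cross-ratio bound from $\bar\nu$-symmetry controls only $\min\{\vecnorm{y-x}{H(x)},\vecnorm{y-x}{H(y)}\}$, not $\vecnorm{y-x}{H(x)}$ alone, which is why the hypothesis in Theorem~\ref{th_weakly} is stated with a minimum; your decomposition should carry the $\min$ through, but this does not affect the argument.
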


The proofs for Theorem \ref{th_weakly}, Corollary \ref{cor_logarithmic_weakly} and Corollary \ref{cor_Lewis_weakly} are postponed to Section \ref{sec_mixing_weakly}.

\subsection{Beyond Worst-Case Analysis}\label{sec_beyond_worst_result}
In Section \ref{sec_beyond_worst_result}, we focus on sampling from an $\alpha$-strongly logconcave and $\beta$-log-smooth distribution as in Eq.~\eqref{eq_distri}, where we also confine $K$ to be a polytope $K\defn \braces{x|Ax>b}$ and let the local metric in Algorithm \ref{algo_main} be the \softlog in Definition \ref{def_logarithmic}.

Intuitively, when a target logconcave distribution is well concentrated inside the polytope, even if $m$ is large, the \softdw should behave similar to RMW and should share the same mixing time of $\Ot(\kappa n)$ rather than $\Ot((m+\kappa)n)$. Based on this intuition, in this section, we seek sufficient conditions on the target distribution for the \softdw to achieve a mixing time smaller than that in Corollary~\ref{cor_logarithmic}. More precisely, we demonstrate  that the mixing time of the \softdw depends only on a subset of constraints, those intersect a high probability region of the target distribution, rather than all the constraints. To formalize this phenomenon, we first have to define what we mean by a high probability region. 

\begin{definition}\label{def_rinf}
Let $\Pi$ be a target distribution with density $\pi(x) \propto \mathbf{1}_K(x)e^{-f(x)}$, where $f$ is twice differentiable and $\alpha$-convex. For all $s\in (0,1)$ we define $\rinf$ as the smallest radius (scaled by $\sqrt{n/\alpha}$) such that the ball centered at the mode contains at least $1-s$ probability mass of $\Pi$:
\begin{equation*}
\rinf(s)\defn \inf\braces{R\geq 0\bigg|\Pi\parenth{B\parenth{x^\star,R\sqrt{\frac{n}{\alpha}}}}\geq 1-s},
\end{equation*}
where $x^\star \defn \underset{x\in K}{\arg\min} f(x) $ is the mode of $\Pi$. 
\end{definition}
In particular, for $\alpha$-convex $f$, $\rinf$ is always upper bounded by the simpler function $\rinfbd$, defined as follows (see Lemma 1 in~\cite{dwivedi2019log} for a proof). 
\begin{equation}\label{eq_rinf(s)_upper_bound}
\rinf(s)\leq \rinfbd(s):= 2+2\max\braces{\frac{1}{n^{0.25}}\log^{0.25}\parenth{\frac{1}{s}},\frac{1}{n^{0.5}}\log^{0.5}\parenth{\frac1s}}.
\end{equation}
For $R,\delta>0$, we use $\cB_R$ and $\cB_R^{\delta}$ to denote the following balls,
\begin{equation}\label{eq_B_R_delta_def}
\cB_R \defn \ball\parenth{x^\star,R\sqrt{\frac{n}{\alpha}}},\quad 
\cB_{R}^\delta\defn \ball\parenth{x^\star,(R+\delta)\sqrt{\frac{n}{\alpha}}}.
\end{equation}
Additionally, given the polytope in Eq.~\eqref{eq_distri}, we define $\cM_{R}^\delta$ to be the number of linear constraints violated inside the ball $\cB_{R}^\delta$ as follows,
\begin{equation*}
\cM_{R}^{\delta}\defn \mathrm{Card}\{i\in [m]\,|\,\exists\, x\in \cB_{R}^\delta \text{ s.t. }a_i\tp x-b_i\leq 0\}.
\end{equation*}
Now we are ready to state Theorem \ref{th_prob_ball_intersection} which establishes the mixing time of the \softdw as a function of how $\cB_R^{\delta}$ intersects the polytope. 

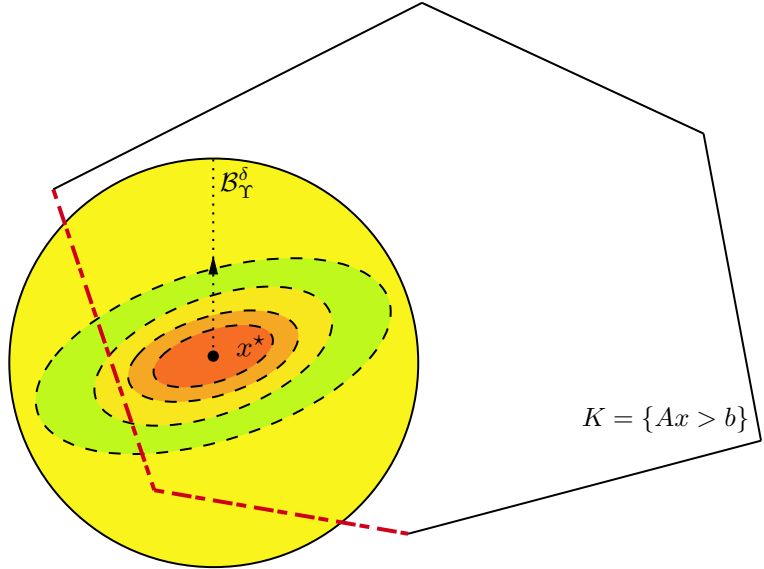
\begin{figure}
    \centering
    \tikzset{every picture/.style={line width=0.75pt}} 
\begin{tikzpicture}[x=0.75pt,y=0.75pt,yscale=-1,xscale=1]
\draw  [fill={rgb, 255:red, 248; green, 244; blue, 28 }  ,fill opacity=1 ] (77.83,193.85) .. controls (77.68,136.9) and (123.73,90.61) .. (180.68,90.46) .. controls (237.63,90.31) and (283.92,136.36) .. (284.07,193.31) .. controls (284.21,250.26) and (238.17,296.54) .. (181.22,296.69) .. controls (124.27,296.84) and (77.98,250.8) .. (77.83,193.85) -- cycle ;
\draw  [fill={rgb, 255:red, 190; green, 248; blue, 28 }  ,fill opacity=1 ][dash pattern={on 4.5pt off 4.5pt}] (100.83,190.2) .. controls (123.14,162.9) and (176.98,140.77) .. (221.08,140.77) .. controls (265.17,140.77) and (282.84,162.9) .. (260.53,190.2) .. controls (238.22,217.5) and (184.38,239.62) .. (140.28,239.62) .. controls (96.18,239.62) and (78.52,217.5) .. (100.83,190.2) -- cycle ;
\draw  [fill={rgb, 255:red, 248; green, 231; blue, 28 }  ,fill opacity=1 ][dash pattern={on 4.5pt off 4.5pt}] (128.05,190.2) .. controls (143.89,170.81) and (180.3,155.1) .. (209.37,155.1) .. controls (238.44,155.1) and (249.16,170.81) .. (233.31,190.2) .. controls (217.47,209.58) and (181.06,225.3) .. (151.99,225.3) .. controls (122.92,225.3) and (112.2,209.58) .. (128.05,190.2) -- cycle ;
\draw  [fill={rgb, 255:red, 245; green, 168; blue, 35 }  ,fill opacity=1 ][dash pattern={on 4.5pt off 4.5pt}] (142.11,190.2) .. controls (152.52,177.46) and (178.23,167.13) .. (199.53,167.13) .. controls (220.84,167.13) and (229.67,177.46) .. (219.25,190.2) .. controls (208.84,202.94) and (183.13,213.27) .. (161.83,213.27) .. controls (140.52,213.27) and (131.69,202.94) .. (142.11,190.2) -- cycle ;
\draw  [fill={rgb, 255:red, 245; green, 110; blue, 35 }  ,fill opacity=1 ][dash pattern={on 4.5pt off 4.5pt}] (154.64,190.2) .. controls (162.98,181.53) and (181.41,174.5) .. (195.79,174.5) .. controls (210.18,174.5) and (215.07,181.53) .. (206.72,190.2) .. controls (198.38,198.87) and (179.95,205.9) .. (165.57,205.9) .. controls (151.18,205.9) and (146.29,198.87) .. (154.64,190.2) -- cycle ;
\draw [color={rgb, 255:red, 208; green, 2; blue, 27 }  ,draw opacity=1 ][line width=1.5]  [dash pattern={on 3.75pt off 3pt on 7.5pt off 1.5pt}]  (100,106) -- (150.96,257.86) ;
\draw [fill={rgb, 255:red, 248; green, 231; blue, 28 }  ,fill opacity=1 ]   (100,106) -- (285.96,11.86) ;
\draw [fill={rgb, 255:red, 160; green, 51; blue, 51 }  ,fill opacity=1 ]   (285.96,11.86) -- (427.96,77.86) ;
\draw    (427.96,77.86) -- (456.96,232.86) ;
\draw [color={rgb, 255:red, 208; green, 2; blue, 27 }  ,draw opacity=1 ][line width=1.5]  [dash pattern={on 3.75pt off 3pt on 7.5pt off 1.5pt}]  (150.96,257.86) -- (257.04,276.09) -- (278.96,279.86) ;
\draw    (278.96,279.86) -- (456.96,232.86) ;
\draw  [dash pattern={on 0.84pt off 2.51pt}]  (180.68,190.2) -- (180.68,90.46) ;
\draw [shift={(180.68,140.33)}, rotate = 90] [fill={rgb, 255:red, 0; green, 0; blue, 0 }  ][line width=0.08]  [draw opacity=0] (8.4,-2.1) -- (0,0) -- (8.4,2.1) -- cycle    ;
\draw [shift={(180.68,190.2)}, rotate = 270] [color={rgb, 255:red, 0; green, 0; blue, 0 }  ][fill={rgb, 255:red, 0; green, 0; blue, 0 }  ][line width=0.75]      (0, 0) circle [x radius= 2.34, y radius= 2.34]   ;

\draw (184.57,175.71) node [anchor=north west][inner sep=0.75pt]  [rotate=-1.71] [align=left] {$\displaystyle  \begin{array}{{>{\displaystyle}l}}
x^{\star }\\
\end{array}$};
\draw (122,131) node [anchor=north west][inner sep=0.75pt]   [align=left] {$ $};
\draw (182.68,93.46) node [anchor=north west][inner sep=0.75pt]  [font=\small] [align=left] {$\displaystyle \mathcal{B}_{\Upsilon }^{\delta }$};
\draw (365,214) node [anchor=north west][inner sep=0.75pt]  [font=\small] [align=left] {$\displaystyle K=\{Ax >b\}$};
\end{tikzpicture}
\caption{An example of Theorem \ref{th_prob_ball_intersection} in $\real^2$ ($n=2$), where $K$ is the polytope and the ball $\cB_\Upsilon^\delta$ refers to the high-probability ball of the truncated distribution $\Pi$. The dashed ellipsoids are contours for the potential $f(x)$ of the distribution. The dashed segments are the constraints that are violated, the solid segments are untouched constraints, so $\cM_\Upsilon^\delta=2$, $m=6$.}
\label{fig_intersection_ball}
\end{figure}

\begin{theorem}\label{th_prob_ball_intersection}

Under the same assumptions on the target distribution as Theorem \ref{th_main}, if we further assume that $K\defn\braces{x|Ax>b}$ for some $A\in\real^{m\times n}$ and $b\in\real^m$. If in Algorithm \ref{algo_main} we provide the local metric $H$ and regularization size $\lambda=\beta$ in Definition \ref{def_logarithmic}, then there exists step-size $r>0$ and a universal constant $C>0$ such that for any error tolerance $\epsilon>0$ and any $M$-warm initial distribution $\mu_0$, As long as
\begin{equation}\label{eq_high_prob_mixing_time}
T\geq C \underset{\delta>0}{\inf}\brackets{\kappa n+\frac{m}{\delta^2}+n\cdot\cM^\delta_{\Upsilon}} \log\parenth{\frac{2M}{\epsilon}},
\end{equation}
where $\Upsilon\defn \rinf(\frac{\epsilon}{2M})$ denotes the radius function $\rinf$ valued at $\frac{\epsilon}{2M}$ (see Definition \ref{def_rinf}), the output distribution satisfies $\vecnorm{\mu_T-\Pi}{TV}\leq \epsilon$. 

To ease comparison, remark that the result also holds if we replace $\Upsilon$ in Eq. \eqref{eq_high_prob_mixing_time} with $\widehat \Upsilon\defn \rinfbd(\frac{\epsilon}{2M})$, defined in Eq.~\eqref{eq_rinf(s)_upper_bound}. 
\end{theorem}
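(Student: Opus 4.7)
The plan is to combine a refined cross-ratio inequality that only needs to hold on the high-probability ball $\cB_\Upsilon$ with the $s$-conductance framework. Fix any $\delta > 0$ and set $s = \epsilon/(2M)$. By Definition \ref{def_rinf} and the $M$-warmness hypothesis, $\Pi(\cB_\Upsilon^c) \leq s$ and $\mu_0(\cB_\Upsilon^c) \leq Ms = \epsilon/2$. The $s$-conductance framework therefore lets us absorb the $\epsilon/2$ mass that may live outside $\cB_\Upsilon$ and reduces the problem to lower bounding the conductance of the chain as seen from within $\cB_\Upsilon$, up to a factor of $\log(M/\epsilon)$.

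The core technical step I would prove is a sharpened cross-ratio inequality valid for $x, y \in \cB_\Upsilon \cap K$. Assuming $\|a_i\|_2 = 1$ without loss of generality (the logarithmic metric $\sum_i a_ia_i^\top/s_i^2$ is invariant under row rescaling), split the constraints into the \emph{near} subset $I_{\text{near}}$, whose hyperplanes intersect $\cB_\Upsilon^\delta$ and thus number $\cM_\Upsilon^\delta$, and the remaining \emph{far} subset $I_{\text{far}}$. For any $x \in \cB_\Upsilon$ and $i \in I_{\text{far}}$, the distance from $x^\star$ to $\{a_i^\top z = b_i\}$ exceeds $(\Upsilon + \delta)\sqrt{n/\alpha}$ while $\|x-x^\star\|_2 \leq \Upsilon\sqrt{n/\alpha}$, giving $s_i(x) \geq \delta\sqrt{n/\alpha}$. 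Hence the far part satisfies $\sum_{i \in I_{\text{far}}} a_ia_i^\top/s_i(x)^2 \preceq (\alpha m/(\delta^2 n))\, I_n$. For the near part, $H_{\text{near}}(x)$ is exactly the logarithmic metric of the reduced polytope $K_{\text{near}} = \{z: a_i^\top z > b_i,\ \forall i \in I_{\text{near}}\}$; since $K \subseteq K_{\text{near}}$, the standard log-barrier cross-ratio bound combined with the monotonicity $d_{K_{\text{near}}}(x,y) \leq d_K(x,y)$ (verified in the unbounded regime via the limiting definition $d_K = \lim_{R\to\infty} d_{K \cap \ball(0,R)}$) yields $\vecnorm{y-x}{H_{\text{near}}(x)}^2 \leq \cM_\Upsilon^\delta\, d_K(x,y)^2$. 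Adding the regularizer $\beta I_n$ produces
\begin{align*}
\vecnorm{y-x}{G(x)}^2 \leq \cM_\Upsilon^\delta \cdot d_K(x,y)^2 + \parenth{\beta + \frac{\alpha m}{\delta^2 n}} \vecnorm{y-x}{2}^2, \quad \forall x, y \in \cB_\Upsilon \cap K.
\end{align*}

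With the localized effective constants $\crossbd = \cM_\Upsilon^\delta$ and $\euclidbd = \beta + \alpha m/(\delta^2 n)$, I would replay the conductance argument behind Theorem \ref{th_main} (the SSC, LTSC, and ASC properties of the soft-threshold metric hold globally, so no adjustment is needed there) restricted to $\cB_\Upsilon$. The resulting $s$-conductance lower bound translates into a mixing time of order
\begin{align*}
\parenth{\crossbd + \frac{\euclidbd}{\alpha}} n \log\frac{M}{\epsilon} = \parenth{\cM_\Upsilon^\delta\, n + \kappa n + \frac{m}{\delta^2}} \log\frac{M}{\epsilon}.
\end{align*}
Since $\delta > 0$ was arbitrary, taking the infimum recovers the claimed bound; replacing $\Upsilon$ by $\widehat\Upsilon = \rinfbd(\epsilon/(2M))$ is valid because $\rinf \leq \rinfbd$ implies $\cB_\Upsilon \subseteq \cB_{\widehat\Upsilon}$ and hence $\cM_\Upsilon^\delta \leq \cM_{\widehat\Upsilon}^\delta$.

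The main obstacle I expect is the careful integration of the localized cross-ratio inequality with the $s$-conductance machinery: a proposal from $x \in \cB_\Upsilon$ may leak to $\cB_\Upsilon^c$ with small but nonzero probability, and one must verify that such leakage is absorbed by the $s$ budget without contaminating the conductance lower bound inside $\cB_\Upsilon$. A secondary, more routine difficulty is handling the interplay between the $\|a_i\|_2 = 1$ rescaling and the scale-invariant geometric quantity $\cM_\Upsilon^\delta$, together with the limit argument needed to extend the cross-ratio monotonicity to the unbounded polytope setting of Definition \ref{def_cross_ratio_unbounded}.
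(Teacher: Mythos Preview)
Your proposal is correct and matches the paper's approach: the same near/far constraint splitting on $\cB_\Upsilon$, the same effective constants $\crossbd = \cM_\Upsilon^\delta$ and $\euclidbd = \beta + \alpha m/(\delta^2 n)$, and the $s$-conductance framework with $s = \epsilon/(2M)$. The obstacle you flag is resolved in the paper not by tracking proposal leakage but by applying Lemma~\ref{lem_isoperimetric} to the \emph{restricted} measure $\Pi|_{K \cap \cB_\Upsilon}$ (still $\alpha$-strongly logconcave since $\cB_\Upsilon$ is convex) with cross-ratio distance on $K \cap \cB_\Upsilon$, then using the monotonicity $d_{K\cap\cB_\Upsilon} \geq d_K$; the bad sets $A_i'$ are defined inside $A_i \cap \cB_\Upsilon$, so the transition-overlap lemma and the localized metric bound are only ever invoked for pairs $x,y\in\cB_\Upsilon$.
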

Its proof is provided in Section~\ref{sec_prob_ball_intersection}.
In Theorem \ref{th_prob_ball_intersection}, note that $\delta$ can be tuned to obtain a more interpretable mixing time bound. If we take $\delta\to \infty$, then $m/\delta^2\to 0$ and $\cM_\Upsilon^\delta\to m$, we recover the $\Ot((m+\kappa) n)$ bound in Corollary \ref{cor_logarithmic}. 

Figure \ref{fig_intersection_ball} shows an example where applying Theorem~\ref{th_prob_ball_intersection} can be beneficial than directly applying Corollary~\ref{cor_logarithmic}. In general, for a distribution that is well-concentrated inside the polytope, where its high probability mass does not intersect many linear constraints of the polytope, choosing a suitable gap $\delta$ could make the $\cM_\Upsilon^\delta$ increase slower than $m$, as well as ensuring $m/\delta^2$ not increase too much.  For the special case where $m=O(n)$, and the number of violated constraints $\cM_\Upsilon^\delta=O(1)$  for some absolute constant $\delta>0$, we achieved a mixing time in $\Ot(\kappa n)$, which matches RMW mixing time in unconstrained sampling. 

\subsection{Practical Matters}\label{sec_practical_matters}
Finally, since our main results rely on the existence of warm starts and focus on the mixing time analysis which ignores per-iteration cost, we devote the rest of this section to discuss per-iteration complexity and how to obtain a warm initialization. 

\paragraph{Per-Iteration Complexity}
In our main results (Theorem \ref{th_main}, \ref{th_weakly} and \ref{th_prob_ball_intersection}), we assume that in each step, we can compute the local metric $G(x)$, its inverse $G(x)^{-1}$ and its determinant $\det G(x)$ exactly. The mixing times in Theorem \ref{th_main}, \ref{th_weakly} and \ref{th_prob_ball_intersection} are given without detailing per-step complexity. Under a computational model where arithmetic operations $\{+,-,\times,\div,\sqrt{\quad}\}$ over real numbers can be done exactly, the number of arithmetic operations needed for \softlog is $O(\max\braces{m,n}n^{\omega-1})$. For \regdw with \reglewis, we also need to compute the Lewis weights defined in Eq.~\eqref{eq_Lewis_weights_def}, which has to be done approximately. The high-accuracy solver for Lewis-weights in \cite{2022Fazel-high-precision}  can be applied with complexity $\Ot(\max\braces{m,n}n^{\omega-1})$.  Detailed per-step complexity analysis is provided in Appendix \ref{appendix_per_step_complexity}.

\paragraph{Feasible $M$-Warm Start}
Our main results (Theorem \ref{th_main}, \ref{th_weakly} and \ref{th_prob_ball_intersection}) assume an $M$-warm initial distribution $\mu_0$. For the truncated target distribution in Eq.~\eqref{eq_distri}, finding a suitable $\mu_0$ with small $M$-warmness is non-trivial. Here we proposed one initialization approach. Suppose that $K$ is contained within a ball of radius $\outr$ and $K$ contains a ball of radius $\inr$, there exists a uniform distribution over a certain ball $\ball(x_0,r_0)$ to be $M$-warm where $M$ satisfies
\begin{equation}\label{eq_warmness_bound}
\log M \leq 1 +n\log\frac{3\outr}{\inr}+n\cdot\max\braces{\frac12\log\parenth{{\beta\outr^2}},\log\parenth{{2\beta\outr\vecnorm{x^\dag-x^\star}2}}},
\end{equation}
where  $f$ is $\beta$-smooth, $x^\star$ denotes the global mode and $x^\dag$ denotes the mode of $\pi(x)\propto e^{-f(x)}$ within the polytope $K$,
\begin{equation}\label{eq_optimz_global_local}
x^\star \defn\underset{x\in\real^n}{\arg\min} f(x), \quad x^\dag\defn \underset{x\in K}{\arg\min} f(x).
\end{equation}
Both $x^\star$ and $x^\dag$ are unique due to the $\alpha$-convexity assumption on $f$. To compute $x_0$ and $r_0$, we need to solve the optimization problem in Eq.~\eqref{eq_optimz_global_local}. A $\delta$-approximation of $x^\star$ and $x^\dag$ can be computed in $\Ot(\kappa\log\frac{1}{\delta})$ steps of gradient descents/ projected gradient descents respectively  \cite{bubeck_convex_2015}. 
The proof of Eq.~\eqref{eq_warmness_bound} is left to Appendix \ref{sec_disc_warm_start}. Although our mixing time results do not rely on the $\outr$-boundedness and $\inr$-inclusion of $K$, these assumptions often arise in sampling literature  \cite{Kannan1997RandomWA,Lov06hrcorner, lovasz_simulated_2006,NEURIPS2023_mangoubi}.


Note that a factor of $n$ is lost in our mixing time upper bounds in Theorem \ref{th_main}, \ref{th_weakly}, and  \ref{th_prob_ball_intersection} if we plug in the warmness bound in Eq.~\eqref{eq_warmness_bound}. However, we argue that this warmness bound is still reasonable in several aspects. In unconstrained logconcave sampling, a feasible start concentrated around the mode often has a warmness $M=O(\kappa^n)$ \cite{dwivedi2019log}, which is also exponential in $n$. As a result, a factor of $n$ is also lost if the $M$-warmness is plugged into the $\log M$ term in the mixing time (see~\cite{dwivedi2019log}).   Similarly, it is also common to lose a factor of $n$ in general-purpose samplers like \hitandrun due to non-warmness \cite{Lov06hrcorner}. Nonetheless, due to the high accuracy nature of Dikin walks, the mixing time from our initialization depends logarithmically on $\frac{\outr}{\inr}$, $\beta\outr^2$ and $\beta\outr\vecnorm{x^\dag-x^\star}{2}$. Hence, as long as these terms scale polynomially in $n$, the additional complexity from them remains  logarithmic. Reducing the extra $n$-factor in mixing time bounds while using feasible initializations remains an area for future exploration, with techniques such as Gaussian cooling~\cite{pmlr-v247-kook24b} and conductance profile~\cite{chen_fast_2020} presenting promising possibilities.


\subsection{Main Lemmas and Proof Outline}\label{sec_main_lemmas}
Now we provide the outline and the necessary lemmas for establishing mixing times of \softdw for sampling truncated logconcave distributions.

We bound the mixing time by finding a lower bound for the conductance of the transition kernel $\barT$, which is the transition kernel $\cT$ after lazification, and the connection between the mixing time and conductance can be found in \cite{lovasz1993random}. For the sake of completeness, we list it as Lemma \ref{lem_Lovas}. Since the lazification $\barT$ at most shrinks the conductance by a factor of $2$, we only need to bound the conductance of the transition kernel $\cT$ before lazification. 

To bound the conductance, we first bound the overlap between transition kernels $\cT_x$ and $\cT_y$ (measured in total-variation distance) for close enough (measured in the local metric $G(x)$) $x$ and $y$ in $K$. This transition overlap result is summarized as Lemma \ref{lem_TV_control}.

Lemma \ref{lem_TV_control} shows that two subsets of $K$ with bad conductance $A_1'$ and $A_2'$ are far away in local metric $G$. To continue to get a lower bound of conductance, we need to show that $K\setminus (A_1'\cup A_2')$ is also large if both $A_1'$ and $A_2'$ are large, so we need a certain type of isoperimetric inequality. Since we need to consider both the logconcave distribution and the polytope which it is truncated on, we design a mixed metric of Euclidean and cross-ratio distances, and the isoperimetric inequality under this mixed metric is listed as Lemma~\ref{lem_isoperimetric}.

We first introduce  the important concepts of ($s$-)conductance of Markov chains. Given the transition kernel $\cT$ of a Markov chain, and $\Pi$ be its stationary distribution, we  define its conductance $\Phi$ and $s$-conductance $\Phi_s$ to be (we require $0<s<1/2$):

\begin{equation}\label{eq_conductance_def}
 \Phi\defn \underset{0<\Pi(A)\leq \frac12 }\inf\frac{\int_{A}{\cT}_u(A^c)\Pi(du)}{\Pi(A)}, \quad     
\Phi_s\defn \underset{s<\Pi(A)\leq \frac12}{\inf}\frac{\int_A\cT_u(A^c)\Pi(du)}{\Pi(A)-s}.
\end{equation}

\begin{lemma}[\cite{lovasz1993random},\cite{lovasz_hit-and-run_1999}]\label{lem_Lovas}
For a reversible lazy Markov chain defined by the transition kernel $\widetilde{\cT}$, let $\Pi$ denote its unique stationary distribution and $\widetilde\Phi$, $\widetilde\Phi_s$ denote its conductance, $s$-conductance respectively. Given an $M$-warm start $\mu_0$, the convergence to $\Pi$ can be controlled using the conductance: 
\begin{equation}\label{eq_Lovas}
    \vecnorm{\widetilde{\cT}^k(\mu_0)-\Pi}{TV}\leq \sqrt{M}\parenth{1-\frac12\widetilde\Phi^2}^k.
\end{equation}
One can achieve a similar result using the notion of $s$-conductance: 
\begin{equation}\label{eq_s_Lovas}
\vecnorm{\tcT^k(\mu_0)-\Pi}{TV}\leq Ms+M\parenth{1-\frac{\widetilde\Phi_s^2}{2}}^k\leq Ms+M\exp\parenth{-\frac{k\widetilde\Phi_s^2}{2}}.
\end{equation}
\end{lemma}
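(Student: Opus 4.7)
Both inequalities are standard corollaries of the Lovász--Simonovits conductance machinery, so I would only sketch the main steps. For \eqref{eq_Lovas}, let $h_0 \defn d\mu_0/d\Pi$ denote the Radon--Nikodym derivative; the $M$-warmness assumption gives $h_0 \leq M$ almost surely, so $\int h_0^2 \, d\Pi \leq M \int h_0 \, d\Pi = M$, i.e.\ $\chi^2(\mu_0 \,\|\, \Pi) \leq M$. Since $\widetilde{\cT}$ is reversible with respect to $\Pi$ and lazy, the associated operator $P$ on $L^2(\Pi)$ is self-adjoint with spectrum in $[0,1]$; the continuous-space Cheeger inequality, which I would establish through a flow-versus-cut argument on the Dirichlet form as in~\cite{lovasz1993random}, then yields $\|P\|_{\{1\}^\perp} \leq 1 - \widetilde{\Phi}^2/2$, where $\{1\}^\perp$ denotes the orthogonal complement of constants. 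Iterating $P$ on the zero-mean function $h_0 - 1$ and using $\vecnorm{\mu_k - \Pi}{TV} \leq \frac{1}{2}\|h_k - 1\|_{L^2(\Pi)}$ delivers the stated bound (with a factor of $\frac{1}{2}$ to spare).

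For \eqref{eq_s_Lovas}, when $\widetilde{\Phi}$ degenerates due to thin cuts of $\Pi$-measure below $s$, the remedy is to track the Lovász--Simonovits lifting
\[
\bar{h}_t(u) \defn \sup_{A:\,\Pi(A) = u} \parenth{\mu_t(A) - \Pi(A)}, \quad u \in [0,1],
\]
which satisfies $\sup_{A:\,s \leq \Pi(A) \leq 1 - s} (\mu_t(A) - \Pi(A)) = \max_{u \in [s, 1-s]} \bar{h}_t(u)$. The key step is to prove a one-step contraction $\bar{h}_{t+1}(u) \leq (1 - \widetilde{\Phi}_s^2/2)\,\bar{h}_t(u)$ for $u \in [s, 1-s]$, using the definition of $s$-conductance: the $\Pi$-weighted flow across any set $A$ with $\Pi(A) \in [s, 1/2]$ is at least $\widetilde{\Phi}_s\,(\Pi(A) - s)$, which, combined with reversibility and the concavity of $\bar{h}_t$, yields the contraction. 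Together with the base estimate $\bar{h}_0(u) \leq (M-1)u \leq M$ coming from warmness, iteration gives $\sup_{A:\,\Pi(A) \geq s}|\mu_k(A) - \Pi(A)| \leq M(1 - \widetilde{\Phi}_s^2/2)^k$, to which one adds the residual of at most $Ms$ from sets of $\Pi$-measure below $s$ (also controlled by warmness) to conclude.

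The routine part is the Cheeger inequality in the first bound. The harder step, in the second bound, is the one-step contraction of $\bar{h}_t$, which requires a careful combination of the concavity of the lifting, a level-set decomposition of the density $h_t$ adapted to the $s$-threshold, and the definition of $s$-conductance applied on the correct scale. Both arguments are carried out in detail in~\cite{lovasz1993random,lovasz_hit-and-run_1999}, which I would follow.
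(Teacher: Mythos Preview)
The paper omits this proof entirely, simply citing~\cite{lovasz1993random,lovasz_hit-and-run_1999} and noting the result is well-known, so there is no paper proof to compare against. Your sketch is a reasonable outline: the $L^2$/Cheeger route you describe for~\eqref{eq_Lovas} is a valid (and nowadays common) alternative to the original Lov\'asz--Simonovits concave-lifting argument, and for~\eqref{eq_s_Lovas} you correctly point to the lifting $\bar h_t$. One caveat: the inductive step in the $s$-conductance argument is not literally a pointwise contraction $\bar h_{t+1}(u)\le (1-\widetilde\Phi_s^2/2)\bar h_t(u)$; the actual recursion compares $\bar h_{t+1}(u)$ to an average of $\bar h_t$ at shifted arguments, and the geometric decay emerges only after bounding $\bar h_t$ above by a suitable concave envelope. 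Since you defer the details to the cited references anyway, this does not affect the validity of your plan.
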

Since Lemma \ref{lem_Lovas} is well-known and its proof can be found in \cite{lovasz1993random}, we omit its proof in this paper. 

In controlling the transition overlap in Lemma \ref{lem_TV_control}, the acceptance rate is controlled around $1/2$ globally (we extend the close coupling argument in \cite{andrieu_explicit_2024} to our case of assymetrical proposal distributions), we do not need to cut off small probability regions, and using the conductance $\Phi$ is adequate for Theorem \ref{th_main} and \ref{th_weakly}. However, when showing the mixing time depends on a fraction of linear constraints in Theorem \ref{th_prob_ball_intersection}, we need to cut off the part of $K$ that is too close to the remaining linear constraints, thus the $s$-conductance and Eq.~\eqref{eq_s_Lovas} are needed. 

\begin{lemma}[Transition Overlap]\label{lem_TV_control}
Let $G$ be a SSC, ASC, LTSC matrix function defined on $K$, and we further assume that $G(x)\succeq \beta I_n$ for all $x\in K$. Let $\cT$ be the transition overlap of \softdw as defined in Eq.~\ref{eq_Transition_kernel_def}, then we can set a step-size $r>0$ such that for all $x,y$ such that $\vecnorm{x-y}{G(x)}\leq \frac{r}{10\sqrt{n}}$, we have  
\begin{equation*}
\vecnorm{\cT_x-\cT_y}{TV}\leq \frac{4}{5}     
\end{equation*}
\end{lemma}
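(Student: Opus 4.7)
The plan is a maximal-coupling reduction combined with the standard single-point acceptance analysis for regularized Dikin walks, plus a transfer step to handle the fact that the proposals $\cP_x$ and $\cP_y$ have different covariances (the ``asymmetric proposal'' point mentioned after Corollary~\ref{cor_m<n}). Since the Dirac components of $\cT_x$ and $\cT_y$ sit at distinct points when $x\neq y$, the maximal-coupling identity gives
\begin{equation*}
\vecnorm{\cT_x-\cT_y}{TV}=1-\int \min\bigl(\alpha(x,z)p_x(z),\,\alpha(y,z)p_y(z)\bigr)\,dz,
\end{equation*}
where $p_x,p_y$ are the Gaussian densities of $\cP_x,\cP_y$. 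Using $\min(ab,cd)\geq \min(a,c)\min(b,d)$ for nonnegative reals, the identity $\int \min(p_x,p_y)\,dz=1-\vecnorm{\cP_x-\cP_y}{TV}$, and $\min(\alpha,\alpha')\geq 1-(1-\alpha)-(1-\alpha')$ for $\alpha,\alpha'\in[0,1]$, then transferring the $\alpha(y,\cdot)$ expectation from $\cP_x$ to $\cP_y$ at the cost of one more $\vecnorm{\cP_x-\cP_y}{TV}$ (since $1-\alpha\in[0,1]$), I would reduce the statement to
\begin{equation*}
\vecnorm{\cT_x-\cT_y}{TV}\leq \Exs_{z\sim\cP_x}[1-\alpha(x,z)]+\Exs_{z\sim\cP_y}[1-\alpha(y,z)]+2\,\vecnorm{\cP_x-\cP_y}{TV}.
\end{equation*}
It then suffices to make each summand at most $1/5$.

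For the proposal TV term, I would work in the rescaled geometry in which $\cP_x$ has unit covariance. The mean shift is $\sqrt{n/r^2}\,\vecnorm{x-y}{G(x)}\leq 1/10$ standard deviations by hypothesis, while SSC controls the Frobenius norm of $G(x)^{-1/2}(G(y)-G(x))G(x)^{-1/2}$ by $2\vecnorm{x-y}{G(x)}$, so the rescaled covariance mismatch is $O(r/\sqrt{n})$ in operator norm. Plugging both into the closed-form Gaussian KL and applying Pinsker's inequality gives $\vecnorm{\cP_x-\cP_y}{TV}=O(r)$, which is at most $1/5$ for a small universal $r$. For the one-step rejection $\Exs_{z\sim\cP_x}[1-\alpha(x,z)]$, I would use $1-\alpha(x,z)\leq (-\log R(x,z))_+$ and decompose $\log R$ into three pieces: the external potential $-(f(z)-f(x))$, bounded by $\beta$-smoothness together with $G(x)\succeq \beta I$ once the linear term vanishes under $\Exs[z-x]=0$; the log-determinant ratio $\tfrac12\log\det(G(z)G(x)^{-1})$, bounded by LTSC via a Taylor expansion of the log-determinant along $z-x$; and the local-norm difference $-\tfrac{n}{2r^2}(\vecnorm{z-x}{G(z)}^2-\vecnorm{z-x}{G(x)}^2)$, bounded by ASC. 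The truncation indicator $\mathbf{1}_K(z)$ is $1$ with high probability under the proposal since $G(x)\succeq\beta I$ keeps $\vecnorm{z-x}{2}$ small and the ellipsoid-containment hypothesis of Theorem~\ref{th_main} places the whole unit $G(x)$-ellipsoid inside $K$. Taken together these make both rejection expectations at most $1/5$ for small universal $r$, following the template of \cite{laddha2020strong,pmlr-v247-kook24b}.

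The main obstacle is not the single-point acceptance analysis — that is largely routine for SSC/ASC/LTSC metrics — but quantifying how close $\cP_x$ and $\cP_y$ really are for nearby $x,y$. In the RWM coupling of \cite{andrieu_explicit_2024} both proposal Gaussians share the same covariance, so only the mean-shift direction contributes and the coupling is essentially one-dimensional; here SSC is precisely what makes the covariance mismatch $G(y)^{-1}$ versus $G(x)^{-1}$ quantitatively small, and what lets me pick a universal step size $r$ independent of $n$ and of the problem-specific constants $\crossbd,\euclidbd,\kappa$ appearing in Theorem~\ref{th_main}.
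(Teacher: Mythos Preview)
Your decomposition
\[
\vecnorm{\cT_x-\cT_y}{TV}\leq \Exs_{z\sim\cP_x}[1-\alpha(x,z)]+\Exs_{z\sim\cP_y}[1-\alpha(y,z)]+2\,\vecnorm{\cP_x-\cP_y}{TV}
\]
is correct, and your control of the proposal-TV term via SSC and Pinsker is fine. The gap is in the acceptance analysis: you cannot make $\Exs_{z\sim\cP_x}[1-\alpha(x,z)]\leq 1/5$ uniformly in $x\in K$. The term $f(z)-f(x)$ contains the linear piece $\nabla f(x)^\top(z-x)$, and your remark that it ``vanishes under $\Exs[z-x]=0$'' would only help for $\Exs[\log R]$, not for $\Exs[(-\log R)_+]$ or $\Exs[(1-R)_+]$. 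Under $\cP_x$ the linear piece is a mean-zero Gaussian with standard deviation $\tfrac{r}{\sqrt n}\vecnorm{G(x)^{-1/2}\nabla f(x)}{2}$, and nothing in the hypotheses bounds $\nabla f(x)$ on an unbounded $K$. When this standard deviation is large, $R(x,z)\approx e^{-\nabla f(x)^\top(z-x)}$, and $\Exs[(1-R)_+]\to\Pr\bigl(\nabla f(x)^\top(z-x)>0\bigr)=1/2$. So each rejection expectation can be arbitrarily close to $1/2$, and with two of them your bound is vacuous. This is exactly the obstruction that forced prior work~\cite{NEURIPS2023_mangoubi} to pay a $\beta R^2$ factor through $\|\nabla f\|$ on a bounded polytope.

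The paper addresses this with a sharper coupling (its Lemma~\ref{lem_coupling}) that, under the harmless WLOG $e^{-f(x)}\geq e^{-f(y)}$, yields
\[
\vecnorm{\cT_x-\cT_y}{TV}\leq \gamma+\vecnorm{\cT_x-\cP_x}{TV}+\vecnorm{\cP_x-\cP_y}{TV},
\]
with a \emph{single} rejection term rather than two. For that single term the paper only needs (and only proves, in Lemma~\ref{lem_accept}) the bound $\vecnorm{\cT_x-\cP_x}{TV}\leq 0.6$, obtained precisely by the symmetry observation $\Pr\bigl(\nabla f(x)^\top(x-z)\geq 0\bigr)=1/2$; the small extra error $\gamma$ comes from comparing $\alpha(x,\cdot)$ to $\alpha(y,\cdot)$ on the coupled set, where the drift terms cancel and ASC/SSC control the remainder. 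In short, the missing idea in your proposal is that the coupling must be arranged to pay the gradient-driven rejection only once; the symmetric two-term bound you wrote down is too weak for this lemma.
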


Note that Lemma \ref{lem_TV_control} is very similar to the one-step coupling argument in \cite{pmlr-v247-kook24b}, and we can directly apply Lemma B.3 in \cite{pmlr-v247-kook24b} and get the transition overlap we desire. However, we were motivated by the close coupling argument in \cite{andrieu_explicit_2024}, and we extend it to our assymetrical proposal distributions, which is  a different angle from the direct computation appeared in \cite{pmlr-v247-kook24b}. We also provide a modular and streamlined format of proof, making it convenient for readers and facilitates potential applications. Namely, we separate the proof of Lemma~\ref{lem_TV_control} into three parts: bounding the acceptance rate to be around $1/2$ globally,  extending the close coupling argument in \cite{andrieu_explicit_2024}, and controlling the TV-distance between two proposal distributions. So we still provide our proof of Lemma \ref{lem_TV_control} in Appendix~\ref{appendix_transit_overlap}.

Now we can give the isoperimetric inequality under a combination of Euclidean distance and cross-ratio distance in this extended sense. 

\begin{lemma}\label{lem_isoperimetric}
    Suppose $\Pi$ is a probability distribution supported on a (possibly unbounded) convex set $K\subseteq \real^n$ and  is more logconcave than Gaussian with covariance $\frac{1}{\alpha}I_n$ (see Eq.~\eqref{eq_more_log_concave}). Assume $K$ is partitioned into three measurable sets $K=S_1\sqcup S_2\sqcup S_3$, then we have 
    \begin{equation}
        \Pi(S_3)\geq d'(S_1,S_2)\Pi(S_1)\Pi(S_2),
    \end{equation}
where $d'$ is a mixed distance defined by $d'(x,y)=\max\braces{d_K(x,y),\log(2)\sqrt{\alpha}\|x-y\|_2}$ and the corresponding distance between two sets $S_1$ and $S_2$ is defined as: $d'(S_1,S_2)\defn \underset{(x,y)\in S_1\times S_2}{\inf}d'(x,y)$. Note that here $d_K$ refers to the cross-ratio distance defined over $K$ in the extended sense as in Definition \ref{def_cross_ratio_unbounded}. 
\end{lemma}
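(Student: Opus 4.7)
The plan is to use the Lovász-Simonovits localization lemma to reduce the $n$-dimensional three-set isoperimetric inequality to a 1D statement along a needle, and then to prove the 1D version by combining the classical 1D Lovász-Simonovits inequality with a strong-log-concavity argument, taking the maximum of the two resulting bounds.

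\textbf{Step 1: Localization to a needle.} Suppose for contradiction that the inequality fails for some partition $K=S_1\sqcup S_2\sqcup S_3$, i.e., $\Pi(S_3)<d'(S_1,S_2)\Pi(S_1)\Pi(S_2)$. Applying the Lovász-Simonovits localization lemma, one obtains a needle $[a,b]\subseteq K$ together with a log-affine weight $e^{\gamma t}$ on $[0,1]$ and an induced 1D probability measure $d\mu(t)\propto\pi(a+t(b-a))\,e^{\gamma t}\,dt$ on which the same inequality is violated. Reparameterize by arc length $s\in[0,L]$ with $L=\|b-a\|_2$; using the assumption $\pi(x)\propto e^{-\alpha\|x\|_2^2/2}q(x)$ with $q$ logconcave, the potential $V=-\log h$ of the induced density $h(s)$ satisfies $V''\geq\alpha$ wherever $h>0$. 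Hence $\mu$ on $[0,L]$ is more logconcave than a 1D Gaussian with variance $1/\alpha$. Write the induced partition as $I_1\sqcup I_3\sqcup I_2$ with $I_3$ strictly between $I_1$ and $I_2$, and set $x^*=\sup I_1$, $y^*=\inf I_2$.

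\textbf{Step 2: Two 1D isoperimetric bounds.} I will establish two 1D inequalities:
\[
\mu(I_3)\geq d_{\mathrm{line}}(x^*,y^*)\,\mu(I_1)\mu(I_2)\qquad\text{and}\qquad \mu(I_3)\geq\log(2)\sqrt{\alpha}\,(y^*-x^*)\,\mu(I_1)\mu(I_2),
\]
where $d_{\mathrm{line}}$ is the 1D cross-ratio distance on the line containing the needle intersected with $K$. The first is the classical 1D Lovász-Simonovits inequality for a logconcave density, using only log-concavity; unbounded needles are handled by monotone limits consistent with Definition~\ref{def_cross_ratio_unbounded}. The second uses the $\alpha$-strong convexity of $V$, which yields the three-point estimate
\[
h(s)\geq h(x^*)^{1-\theta}h(y^*)^{\theta}\exp\!\parenth{\tfrac{\alpha}{2}\theta(1-\theta)(y^*-x^*)^2},\qquad s=(1-\theta)x^*+\theta y^*,\ \theta\in[0,1].
\]
Integrating over $s\in[x^*,y^*]$ and separately bounding $\mu(I_1)$ and $\mu(I_2)$ via the decay of $h$ outside $[x^*,y^*]$ (itself a consequence of strong convexity of $V$), a worst-case reduction to a 1D Gaussian extremizer produces the Euclidean bound with the stated constant $\log 2$, mirroring the Lovász-Simonovits-style sharp 1D comparison.

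\textbf{Step 3: Combining and concluding.} Since $x^*\in \overline{I_1}\subseteq\overline{S_1}$, $y^*\in \overline{I_2}\subseteq\overline{S_2}$, and the needle line passes through both, we have $d_{\mathrm{line}}(x^*,y^*)=d_K(x^*,y^*)$ and $y^*-x^*=\|x^*-y^*\|_2$. Taking the maximum of the two 1D bounds gives
\[
\mu(I_3)\geq\max\braces{d_K(x^*,y^*),\log(2)\sqrt{\alpha}\|x^*-y^*\|_2}\mu(I_1)\mu(I_2)\geq d'(S_1,S_2)\mu(I_1)\mu(I_2),
\]
where the last step uses $(x^*,y^*)\in\overline{S_1}\times\overline{S_2}$ and the definition of $d'(S_1,S_2)$ as the infimum of the max metric. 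This contradicts the localized violation and thereby proves the lemma.

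\textbf{Main obstacle.} The delicate part is extracting the sharp constant $\log 2$ in the 1D strong-log-concavity inequality of Step~2. A direct integration of the three-point comparison typically produces a Gaussian error-function-type constant, and obtaining the clean $\log 2$ requires reducing to the worst-case 1D exponential/Gaussian extremizer and carefully matching the two tails $\mu(I_1),\mu(I_2)$ against the middle integral $\mu(I_3)$. The localization step and the unbounded-domain limit for the cross-ratio bound are standard.
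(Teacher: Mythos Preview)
Your overall strategy---localization to a needle, then proving two separate 1D bounds and taking their maximum---is exactly the paper's approach. However, you are making the Euclidean half of Step~2 much harder than necessary, and this is precisely the step you flag as the ``main obstacle.''

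The paper does not re-derive the $\log 2$ constant from a three-point strong-convexity inequality. Instead, it observes that the 1D needle density
\[
\pi_{1d}(t)\propto \pi(v(t))\,\ell(v(t))^{n-1}
\]
is itself more logconcave than a 1D Gaussian with variance $(\alpha\|a-b\|_2^2)^{-1}$, because $\pi$ factors as a Gaussian times a logconcave function and all the other factors ($q\circ v$, $\ell^{n-1}$, the linear exponential) are logconcave in $t$. Then Fact~\ref{fact_iso_eu} (the Euclidean isoperimetry of Cousins--Vempala, which holds in every dimension including $n=1$) applied to $\pi_{1d}$ immediately gives
\[
\Pi_{1d}(I_3)\ \geq\ (\log 2)\sqrt{\alpha}\,\|a-b\|_2\cdot\Pi_{1d}(I_1)\,\Pi_{1d}(I_2),
\]
and since $\|c_1-c_2\|_2\leq\|a-b\|_2$ this yields the Euclidean bound with the exact constant $\log 2$ for free. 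Your proposed route via integrating the three-point estimate and extremizing over Gaussians would at best reprove this known fact; there is no need to do it.

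Two smaller points. First, on the cross-ratio side you should be explicit that the classical 1D inequality yields the cross-ratio \emph{on the needle support} $[a,b]$, which dominates $d_K(c_1,c_2)$ because $[a,b]\subseteq K$ along that line; you conflate these slightly when asserting $d_{\mathrm{line}}=d_K$. Second, the localization step delivers a needle on which the \emph{induced} sets $I_1,I_2,I_3$ need not be intervals; the paper first proves the interval case and then invokes the standard combinatorial reduction of Kannan--Lov\'asz--Simonovits to pass to general measurable $I_i$. You skip this reduction entirely.
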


The reason we need Lemma \ref{lem_isoperimetric} is that, to use Lemma \ref{lem_TV_control} to bound the conductance, we need a suitable isoperimetric inequality for the local metric $G$. The conventional wisdom is to use the isoperimetric inequality for cross-ratio distance $d_K$ over the polytope $K$ \cite{NEURIPS2023_mangoubi,pmlr-v247-kook24b}, and furthermore an upper bound of $G$ by local metric $d_K$ is needed. This upper bound is often found by the $\bar\nu$-symmetry of a local metric introduced in \cite{laddha2020strong}. However, such an upper bound does not exist in our case, since the Euclidean term $\lambda I$ simply can not be bounded by cross-ratio distance. In fact, for unbounded convex set $K$, we can always find some $x,y$ such that the Euclidean distance $\beta \vecnorm{y-x}{2}$ is of higher order of magnitude than $d_K(x,y)$.

We defer the proof of Lemma \ref{lem_isoperimetric} to Section \ref{sec_new_iso_strongly}. Isoperimetric inequality under either metric (cross-ratio distance/ Euclidean distance) is already well-known, while their combination is not trivial in that the new distance $d'(x,y)\defn \max\braces{d_K(x,y),\sqrt{\alpha}\log2\vecnorm{x-y}2}$ for two sets $d'(S_1,S_2)$ may not be achieved when either $d_K(S_1,S_2)$ or $d_\text{euclid}(S_1,S_2)$ is achieved. In the proof, we use a localization argument proposed in \cite{kannan1995isoperimetric} to reduce the $n$-dimensional integral to $1$-dimensional integral. 

In order to extend the mixing time results to weakly logconcave distributions as in Theorem~\ref{th_weakly}, we need to extend this new isoperimetry to weakly logconcave distributions. However, given a convex body $K$, it is well-known that the cross-ratio distance $d_K$ in Definition~\ref{def_cross_ratio_unbounded} does not satisfy triangle inequality, thus $d_K$ is not a metric. To circumvent this inconvenience, we use the Hilbert metric $\dH_K$ over a convex set $K$:
\begin{equation}
\dH_K(x,y)\defn \log\parenth{1+d_K(x,y)},
\end{equation}
where $d_K$ is the cross-ratio distance  defined in the extended sense for possibily unbounded convex sets as in Definition $\ref{def_cross_ratio_unbounded}$. Unlike cross-ratio distances, Hilbert metric satisfies the triangle inequality. As we will see, since $\dH_K(x,y)\approx d_K(x,y)$ for close $x,y$ by definition, thus changing $d_K$ to $\dH_K$ would not impact the application of the isoperimetric inequality to bound mixing times. Now we give the new isoperimetry for weakly logconcave distributions. 

\begin{lemma}\label{lem_isoperimetric_weakly}
Suppose $\Pi$ is a logconcave distribution supported on a open convex set $K\subseteq \real^n$, assume that $\Pi$ has a bounded covariance matrix $\Sigma_\pi$, and let $\spectr=\vecnorm{\Sigma_\pi}{2}$ denote its spectral norm. Then for any Borel measurable decomposition $K=S_1\sqcup S_2\sqcup S_3$, we have
\begin{equation}
\Pi(S_3)\geq \frac{1}{6\max\braces{1,\psi_n}}\cdot\dn(S_1,S_2)\Pi(S_1)\Pi(S_2),
\end{equation}
where $\dn$ is a mixed metric defined by $\dn(x,y)=\max\braces{\frac{(\log2)}{\sqrt{\spectr}}\vecnorm{x-y}{2},\dH_K(x,y)}$, and $\psi_n$ is the KLS constant. 
\end{lemma}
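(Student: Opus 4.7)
The plan is to adapt the localization-based proof of Lemma \ref{lem_isoperimetric} to the weakly logconcave setting, with the KLS constant $\psi_n$ replacing the role played by the strong convexity parameter $\alpha$. Since $\Pi$ is no longer more logconcave than a Gaussian, the Gaussian comparison argument fails; instead, I would rely on the covariance bound $\Sigma_\pi \preceq \spectr I_n$ together with an isoperimetric inequality of KLS type to control the Euclidean component of the mixed metric $\dn$.

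First, I would reduce to one dimension via the Kannan-Lov\'asz-Simonovits localization lemma, exactly as in Lemma \ref{lem_isoperimetric}. Negating the target inequality and applying localization produces a 1D needle $\ell \subseteq K$ carrying a logconcave probability density $\mu_\ell$ on an interval $I = (a,b)$ (possibly unbounded), with the partition $S_1,S_2,S_3$ inducing a 1D partition $I = I_1 \sqcup I_2 \sqcup I_3$ that also violates the corresponding 1D inequality. The KLS constant enters here because passing from the global covariance bound $\spectr$ on $\Pi$ to a variance bound on the needle measure $\mu_\ell$ loses a factor tied to the standard tight relationship between Cheeger constants of logconcave measures and the KLS constant $\psi_n$.

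Second, I would establish the 1D inequality: for a logconcave measure $\mu$ on $I = (a,b)$ with $\var_\mu \leq \psi_n^2 \spectr$ and any measurable partition $I = I_1 \sqcup I_2 \sqcup I_3$ with $I_3$ separating $I_1$ from $I_2$,
\begin{equation*}
\mu(I_3) \geq c \cdot \max\braces{\dH_I(I_1,I_2),\; \tfrac{\log 2}{\sqrt{\spectr}}\, d_E(I_1,I_2)} \cdot \mu(I_1)\mu(I_2),
\end{equation*}
where $d_E$ is the 1D Euclidean distance and $\dH_I$ is the 1D Hilbert metric on $I$, which admits the closed form $\dH_I(x,y) = \abss{\log\frac{(y-a)(b-x)}{(x-a)(b-y)}}$. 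When the Hilbert term dominates, I would use the standard logconcave cross-ratio isoperimetry on intervals: the density on $I_3$ is comparable to that on $I_1,I_2$ up to the Hilbert-metric factor, so integrating logconcavity along the chord yields the bound. When the Euclidean term dominates (including unbounded cases in which $\dH_I$ degenerates in one direction), I would apply the 1D Cheeger inequality for logconcave measures with bounded variance, giving isoperimetric constant $\gtrsim 1/\sqrt{\spectr}$. Taking the maximum of the two lower bounds recovers the mixed metric $\dn_I$.

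The main obstacle is gluing the Hilbert and Euclidean bounds into the max-combined form, especially in regimes where neither metric dominates or when $I$ is unbounded so that $\dH_I$ may remain small while Euclidean distances grow. Careful bookkeeping---the absorption of $\log 2$ into the rescaled Euclidean metric, the multiplicative-to-Cheeger conversion, and the $\psi_n$ loss from the needle-variance step---yields the stated constant $\tfrac{1}{6\max\{1,\psi_n\}}$.
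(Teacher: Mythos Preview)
Your proposal has a genuine gap at the very first step. Classical Kannan--Lov\'asz--Simonovits localization reduces a failed integral inequality to a 1D needle $\ell$ carrying the density $\pi(v(t))\,l(v(t))^{n-1}$, but this needle measure does \emph{not} inherit the covariance bound of $\Pi$ in any controlled way. Your claim that ``passing from the global covariance bound $\spectr$ on $\Pi$ to a variance bound on the needle measure $\mu_\ell$ loses a factor tied to $\psi_n$'' is precisely the step that is unavailable: there is no known relation of the form $\var(\mu_\ell) \lesssim \psi_n^2 \spectr$ for the needle produced by bisection localization, and establishing such a relation would amount to re-proving the KLS bound itself via a route that is not known to work. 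The localization argument in Lemma~\ref{lem_isoperimetric} succeeds only because strong logconcavity \emph{is} preserved along needles (a Gaussian factor restricts to a Gaussian factor on any line); a spectral bound on the covariance is a global moment condition and does not restrict this way.

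The paper takes a completely different route, and it is instructive to see why. It first reduces to the isotropic case by affine transformation, then uses Eldan's \emph{stochastic} localization: one runs the SDE~\eqref{eq_stochastic_theta} up to time $t=\psi_n^{-2}$, obtaining a random measure $\Pi_t$ that is $t$-strongly logconcave, so Lemma~\ref{lem_isoperimetric} applies to each $\Pi_t$. The loss incurred is then controlled not on needles but in expectation, via the approximate-conservation-of-variance estimate (Lemma~\ref{lem_evolution_function}), which bounds $\var_{\Pi}(\mathbf{1}_B)$ by $(2+t/\lambda_0)\,\Exs\var_{\Pi_t}(\mathbf{1}_B)$. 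This is exactly where the factor $\psi_n$ enters, through $C_P(\Pi)\le 4\psi_n^2$. Because stochastic localization mixes poorly with set decompositions of arbitrary size, the paper proves a \emph{boundary-measure} version first (Lemma~\ref{lem_isoperimetric_boundary}) and then recovers the three-set inequality via the co-area formula on the metric space $(K,\dn)$, after checking that $\dn$ induces the Euclidean topology (Lemma~\ref{lem_homeomorphism}). Your proposal misses all of this machinery; the bisection-localization strategy cannot be repaired without an ingredient equivalent to what stochastic localization provides.
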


\cite{klartag_logarithmic_2023} proved the KLS constant $\psi_n$ satisfies $\psi_n=O(\sqrt{\log n})$, thus we lost at most a logarithmic factor in the mixing time for transitioning to weakly logconcave distributions. 

The proof of Lemma \ref{lem_isoperimetric_weakly} is left to Section \ref{sec_new_iso_weakly}. The key technique behind this extension to weakly logconcave distributions is the stochastic localization scheme introduced in \cite{eldan_thin_2013}. The main idea of stochastic localization is to smoothly  multiply the weaky logconcave distribution with a Gaussian part, and the modified distribution is $\alpha$-strongly logconcave. As a result, we can apply our new isoperimetry (Lemma~\ref{lem_isoperimetric}) on $\alpha$-strongly logconcave distributions to the modified distribution. While stochastic localization has been studied extensively to upper bound the KLS constant $\psi_n$ \cite{lee_eldans_2019,chen_almost_2021,klartag_bourgains_2022,klartag_logarithmic_2023}, there are still several difficulties to apply it to our new isoperimetry. The evolution of the measures of sets of arbitrary initial size (for $S_1,S_2,S_3$ as  Lemma \ref{lem_isoperimetric_weakly}) is unclear \cite{chen2022hitandrun}. We circumvent this issue by instead proving a boundary version of the isoperimetric inequality. To do that, we introduce the boundary measures \cite{bobkov_isoperimetric_1997} for a general metric space, and  also the co-area formula \cite{bobkov_isoperimetric_1997} for a general metric spaces that allows us to convert back to Lemma \ref{lem_isoperimetric_weakly}. To prove the boundary version of isoperimetry, we used an bound on the evolution of $L^2$-functions from \cite{klartag_logarithmic_2023} over the distribution $\Pi$. Finally, we show that the target distribution is compatible with the topology under $\dn$ by proving the open convex set $K$ under the mixed metric $\dn$ has the usual Euclidean topology, so we can apply the co-area formula correctly. 

\section{Combining Euclidean \& Hilbert Isoperimetries}\label{sec_new_iso}
In this section, we prove a new isoperimetric inequality which uses a metric that combines the Euclidean metric and the cross-ratio distance/Hilbert metric over the convex set $K$. In Section \ref{sec_new_iso_strongly} we prove Lemma \ref{lem_isoperimetric}, the isoperimetric inequality for $\alpha$-strongly logconcave distributions truncated on a convex set $K$. In Section \ref{sec_new_iso_weakly}, using stochastic localization, we extend the isoperimetric inequality to weakly log-concave distributions truncated on $K$ with finite covariance, and prove Lemma \ref{lem_isoperimetric_weakly}. 

\subsection{Isoperimetry for Strongly Logconcave Distributions}\label{sec_new_iso_strongly}

For unconstrained logconcave sampling on Euclidean space, an isoperimetric inequality for $\alpha-$strongly logconcave distributions using Euclidean metric is well-studied~\cite{cousins_gaussian_2018} (listed as Fact~\ref{fact_iso_eu}). While in constrained sampling, an isoperimetric inequality for logconcave distributions using cross-ratio distance is often used instead~\cite{lovasz2007geometry} (listed as Fact~\ref{fact_iso_cross}). This is because one can often use cross-ratio distance to bound the metric induced by the polytope using the concept of $\bar\nu-$symmetry introduced in \cite{laddha2020strong}. 

Both our local metrics in Definition~\ref{def_logarithmic} and~\ref{def_Lewis} can be viewed as the summation of a Hessian metric and a Euclidean metric. Since one metric cannot be bounded by the other with dimension-independent factors, the two isoperimetric inequalities above are not useful for obtaining tight mixing time bounds. We propose a new isoperimetric inequality that combines Fact~\ref{fact_iso_eu} and Fact~\ref{fact_iso_cross}. Following the localization lemma for proving isoperimetric inequalities \cite{lovasz2007geometry,cousins_gaussian_2018}, we transform the inequality over $n$-dimensional integrals into an inequality over $1$-dimensional integrals. Then we check the 1-dimensional isoperimetric inequality that uses the maximum of two metrics.


\begin{fact}[Theorem 5.4 from \cite{cousins_gaussian_2018}]\label{fact_iso_eu}
Assume $\Pi$ is a probability distribution on $\real^n$ and is more logconcave than Gaussian with covariance $\frac{1}{\alpha}I_n$ (see Eq.~\eqref{eq_more_log_concave})  then we have 
\begin{equation*}
\Pi(S_3)\geq {(\log2)\sqrt{\alpha}\cdot d_{\mathrm{euclid}}(S_1,S_2)}\Pi(S_1)\Pi(S_2).
\end{equation*}
\end{fact}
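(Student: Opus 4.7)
\textbf{Proof proposal for Fact~\ref{fact_iso_eu}.}

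The plan is to follow the Lov\'asz--Simonovits localization strategy, reducing the $n$-dimensional inequality to a $1$-dimensional one and then verifying the 1-d bound with the sharp constant $(\log 2)\sqrt{\alpha}$.

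First, I would reformulate the claim in functional form. Writing $t := d_{\mathrm{euclid}}(S_1,S_2)$ and approximating indicators by compactly supported continuous functions (using inner/outer regularity of $\Pi$), the statement becomes: for any nonnegative $f,g,h$ on $\real^n$ with $\supp f \cup \supp h$ contained in the disjoint sets and $g \equiv 1$ on the gap of width $\geq t$ between them,
\begin{equation*}
\int g\, d\Pi \;\geq\; (\log 2)\sqrt{\alpha}\cdot t \int f\, d\Pi \cdot \int h\, d\Pi.
\end{equation*}
The localization lemma says that, in order to verify such a ``product'' inequality for all triples $(f,g,h)$ with separated supports, it suffices to check it on every $1$-dimensional needle, i.e., on line segments equipped with an arbitrary logconcave density (in arclength parametrization) inherited from $\Pi$.

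Second, I would check that the Gaussian-domination hypothesis passes to needles. Since $\pi(x)\propto e^{-\alpha\vecnorm{x}{2}^2/2}\,q(x)$ with $q$ logconcave, the restriction of $-\log\pi$ to any unit-speed affine line has second derivative bounded below by $\alpha$: the quadratic factor contributes exactly $\alpha$, and the restriction of $-\log q$ is convex. Hence each needle measure is itself ``more logconcave than $\Normal(0,1/\alpha)$'' in one dimension.

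Third, I would prove the 1-d inequality: for any $\mu$ on $\real$ whose log-density has second derivative $\geq \alpha$, and any measurable partition $\real = S_1\sqcup S_2\sqcup S_3$ with $d_{\mathrm{euclid}}(S_1,S_2)\geq t$,
\begin{equation*}
\mu(S_3)\;\geq\;(\log 2)\sqrt{\alpha}\cdot t\cdot\mu(S_1)\,\mu(S_2).
\end{equation*}
A standard sliding/rearrangement argument lets me assume $S_1=(-\infty,a]$, $S_2=[b,\infty)$, $S_3=[a,b]$ with $b-a=t$. Setting $u^\star$ to be a point in $[a,b]$ maximizing $\pi$ and using $\alpha$-strong logconcavity to dominate $-\log\pi$ above by its tangent parabola at $u^\star$, the two tail integrals $\mu(S_1),\mu(S_2)$ and the middle integral $\mu(S_3)$ can be bounded by Gaussian tail/strip integrals. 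Optimizing the resulting elementary bound over $u^\star\in[a,b]$ and the relative masses $p=\mu(S_1)$, $q=\mu(S_2)$ yields the constant $\log 2$, with the extremal configuration being a symmetric truncated Gaussian.

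The main obstacle is extracting the sharp constant $(\log 2)\sqrt{\alpha}$ rather than a weaker Cheeger-type bound; this requires either the explicit Gaussian-comparison computation sketched above (tight in the limit $p=q=\tfrac12$, $t\to 0$) or an appeal to Bobkov's 1-d isoperimetric inequality for strongly logconcave measures. Everything else---the localization reduction and the sliding argument---is a standard and modular application of well-established tools.
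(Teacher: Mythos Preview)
The paper does not prove this statement at all: it is quoted verbatim as Theorem~5.4 of \cite{cousins_gaussian_2018} and used as a black box (in particular, inside the proof of Lemma~\ref{lem_isoperimetric} the authors invoke it directly in the one-dimensional case rather than re-deriving it). So there is no ``paper's own proof'' to compare against.

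Your sketch is the standard route taken in \cite{cousins_gaussian_2018}: localization reduces to needles, strong logconcavity is inherited along needles, and the sharp $1$-d bound supplies the constant $\log 2$. Two remarks. First, the localization lemma hands you not just the restriction of $\pi$ to a segment but that restriction multiplied by a weight $l(t)^{n-1}$ with $l$ affine and nonnegative; you should say explicitly that this extra factor is logconcave, so the needle density remains more logconcave than $\Normal(0,1/\alpha)$ (the paper handles exactly this point in its proof of Lemma~\ref{lem_isoperimetric}). Second, your step~3 is where all the content lives, and the argument you outline---tangent-parabola domination at a maximizer plus optimization over the split point---does not by itself yield the sharp constant $\log 2$; the actual proof in \cite{cousins_gaussian_2018} (and the $1$-d statement recorded here as the second half of Fact~\ref{fact_1d_integrals_bound}) goes through a more careful comparison with the isotropic Gaussian isoperimetric profile. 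Your identification of this as the main obstacle is accurate; the rest is routine.
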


\begin{fact}[Theorem 2.5 from\cite{lovasz2007geometry}]\label{fact_iso_cross}
Let $\Pi$ be a logconcave distribution whose support $K\subseteq \real^n$ is partitioned into three measurable sets: $K=S_1 \sqcup S_2\sqcup S_3$, then we have 
\begin{equation*}
    \Pi(S_3)\geq d_K(S_1,S_2) \Pi(S_1)\Pi(S_2) .
\end{equation*}
\end{fact}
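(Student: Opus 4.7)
The plan is to invoke the Kannan-Lov\'asz-Simonovits (KLS) localization lemma to reduce this $n$-dimensional three-set inequality to a one-dimensional inequality on an affine needle, and then to verify the 1D version by a further localization to exponential densities.

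For the reduction, suppose toward contradiction that $\Pi(S_3) < d_K(S_1, S_2)\,\Pi(S_1)\Pi(S_2)$. Applying the standard KLS localization~\cite{kannan1995isoperimetric, lovasz2007geometry} to the four integrands determining this inequality produces an affine needle $\gamma:[0, L] \to K$ (parameterized by arc length, with $L \in (0, \infty]$) and a log-affine weight $\ell(t)$ such that the induced 1D measure $d\mu_\gamma(t) = \pi(\gamma(t))\,\ell(t)\,dt$ witnesses the same strict failure on the needle. A standard rearrangement/shifting argument further reduces to the case where the pullback partition $U_i \defn \gamma^{-1}(S_i)$ is of interval type: $U_1 = [0, s_1]$, $U_3 = (s_1, s_2)$, $U_2 = [s_2, L]$. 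Since $\pi$ is log-concave and $\ell$ is log-affine, the needle density $h(t) \defn \pi(\gamma(t))\,\ell(t)$ is log-concave on $[0, L]$.

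For the 1D verification, the goal becomes: for any log-concave $h$ on $[0, L]$ and $0 \leq s_1 < s_2 \leq L$,
\[
\int_{s_1}^{s_2} h(t)\,dt \;\geq\; d_{[0,L]}(s_1, s_2)\,\int_0^{s_1} h(t)\,dt\,\int_{s_2}^L h(t)\,dt,
\]
where $d_{[0,L]}(s_1, s_2)$ is the 1D cross-ratio on $[0, L]$ (equal to $L(s_2 - s_1)/(s_1(L - s_2))$ when $L < \infty$, and otherwise understood via Definition~\ref{def_cross_ratio_unbounded}). A clean path is a second, 1D-to-0D, KLS-style localization: inside 1D it reduces the problem to the case where $h$ is a log-affine density $h(t) \propto e^{at}$, for which the inequality becomes an algebraic identity verifiable by direct integration of the closed-form exponentials.

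The main technical obstacle is handling the case $L = \infty$ where the needle is unbounded. Under Definition~\ref{def_cross_ratio_unbounded} the distance factor corresponding to an infinite endpoint is dropped, and in the fully unbounded case $d_{[-\infty,\infty]} = 0$ makes the target inequality trivial. For half-infinite needles the exponential-density reduction still applies, using the finite mass and tail control of integrable log-concave densities on half-lines. A minor auxiliary care is that one may first apply the localization after intersecting $K$ with a large ball $\ball(0, R)$, verify the inequality there, and then pass to the limit $R \to \infty$ by monotone convergence, which is justified since $\Pi$ is a probability measure.
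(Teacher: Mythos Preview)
The paper does not give its own proof of this fact; it is quoted directly from \cite{lovasz2007geometry}. Your localization strategy is exactly the method used there, and it is also the method the paper uses to prove the stronger Lemma~\ref{lem_isoperimetric} (the cross-ratio half of that argument specializes to this fact). So at the level of approach you are aligned with both the cited source and the paper.

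Two points in your write-up need repair. First, the one-dimensional inequality you display is not scale-invariant: multiplying $h$ by a constant $c$ scales the left side by $c$ and the right side by $c^2$. The correct needle inequality carries a total-mass factor,
\[
\Bigl(\int_{s_1}^{s_2} h\Bigr)\Bigl(\int_0^L h\Bigr) \;\geq\; d_{[0,L]}(s_1,s_2)\,\Bigl(\int_0^{s_1} h\Bigr)\Bigl(\int_{s_2}^L h\Bigr),
\]
which is exactly Fact~\ref{fact_1d_integrals_bound} in the paper. Second, after localization the constant multiplying $J_1J_2$ is still the $n$-dimensional quantity $d_K(S_1,S_2)$, not $d_{[0,L]}(s_1,s_2)$; you need the one-line observation that the needle endpoints $\gamma(0),\gamma(L)$ lie in $K$, so the chord through $\gamma(s_1),\gamma(s_2)$ in $K$ is at least as long as the needle segment, whence $d_K(S_1,S_2)\le d_K(\gamma(s_1),\gamma(s_2))\le d_{[0,L]}(s_1,s_2)$. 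With those two fixes your argument goes through. The paper (following \cite{lovasz2007geometry}) simply invokes Fact~\ref{fact_1d_integrals_bound} for the one-dimensional step rather than re-deriving it via a second localization to exponential densities, but your route to that lemma is also standard and valid.
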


\begin{fact}[Bounds for Integrals over $\real^1$, see \cite{lovasz2007geometry,kannan1995isoperimetric}]\label{fact_1d_integrals_bound}
For any $a<b<c<d$ in $\real$, and let $g:\real\to \real_{++}$ be a logconcave function
, we have  
\begin{equation*}
\frac{\int_a^d g(t)dt\int_b^c g(t)dt}{\int_a^b g(t)dt\int_c^d g(t)dt}\geq\frac{(d-a)(c-b)}{(b-a)(d-c)}.    
\end{equation*}
Moreover, for any one-dimensional isotropic function $f$, and any partition $S_1,S_2,S_3$ of the real line, 
    \begin{equation*}
    \Pi_f(S_3)\geq \log(2) d_{\mathrm{euclid}}(S_1,S_2)\Pi_f(S_1)\Pi_f(S_2).
    \end{equation*}

\end{fact}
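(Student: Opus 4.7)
The plan is to handle the two inequalities in sequence: the first is a dimension-free ratio inequality for logconcave functions on $\real$, and the second applies it, combined with isotropy, to obtain the one-dimensional isoperimetric bound with the specific constant $\log 2$.

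For the ratio inequality $\int_a^d g \int_b^c g \geq \frac{(d-a)(c-b)}{(b-a)(d-c)} \int_a^b g \int_c^d g$, my approach is a direct convex-combination argument. After an affine rescaling reducing to $a = 0, d = 1$, parametrize points in $[0, b]$ by $x(s) = sb$ and points in $[c, 1]$ by $z(t) = c + t(1-c)$ for $s, t \in [0, 1]$. For each pair $(s, t)$, I pick mixing weights $\lambda_1(s, t), \lambda_2(s, t) \in [0, 1]$ with $\lambda_1 + \lambda_2 = 1$ such that $\lambda_1 x(s) + (1 - \lambda_1) z(t) \in [b, c]$ and $\lambda_2 x(s) + (1 - \lambda_2) z(t) \in [0, 1]$; such weights exist because $[b, c]$ and $[0, 1]$ both lie in the convex hull of $[0, b] \cup [c, 1]$. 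Applying logconcavity $g(\lambda u + (1-\lambda) v) \geq g(u)^\lambda g(v)^{1-\lambda}$ to both combinations and multiplying yields $g(\mathrm{comb}_1)\, g(\mathrm{comb}_2) \geq g(x(s))\, g(z(t))$. Integrating in $(s, t)$ and accounting for the Jacobians of the two linear changes of variable produces the ratio bound; as a sanity check, equality at constant $g$ matches the scale-invariance of the RHS. An equivalent route is to reduce by a log-affine approximation to the explicit exponential case $g(t) = e^{\alpha t}$ and verify it directly.

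For the isoperimetric statement, first reduce to the canonical partition: since $\delta := d_{\mathrm{euclid}}(S_1, S_2) > 0$, a standard one-dimensional rearrangement replaces $S_1, S_2$ by the half-lines $(-\infty, u]$ and $[u + \delta, \infty)$, preserving $\Pi_f(S_1), \Pi_f(S_2)$ and not increasing $\Pi_f(S_3)$. Writing $p = \Pi_f(S_1)$, $q = \Pi_f(S_2)$, the elementary inequality $p + q \leq 1 \Rightarrow \min(p, q) \geq pq$ lets me replace the product $pq$ on the RHS by $\min(p, q)$. I then apply the ratio inequality on a finite window $[a, d]$ with $a < u$, $d > u + \delta$, choosing $a, d$ so that $\int_a^u f$ and $\int_{u+\delta}^d f$ closely approximate $p$ and $q$ respectively, and pass to the limit $a \to -\infty, d \to +\infty$. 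The resulting bound takes the form $\Pi_f(S_3) \geq C \cdot \delta \cdot pq$ for a constant $C$ that depends on the density's supremum.

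The main obstacle is pinning down the specific constant $\log 2$: the ratio inequality alone is scale-invariant and produces no absolute constant, so $\log 2$ must come entirely from a quantitative use of isotropy (variance equal to one). I would address this via the classical pointwise bound $\|f\|_\infty \leq 1$ for one-dimensional isotropic logconcave densities (tight for the unit-rate exponential), which converts the density height on the gap $(u, u + \delta)$ into the tail masses $p, q$, and by optimizing the choice of window $[a, d]$ in the ratio inequality so that exactly $\log 2$ is extracted. I expect the final step to reduce to a short convexity calculation on the log-CDF $t \mapsto -\log \int_t^\infty f$, which is convex by logconcavity of $f$, giving the numerical constant via a tangent-line bound at the endpoints $u, u + \delta$.
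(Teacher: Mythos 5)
The paper does not prove this statement at all: it is quoted as a known fact with citations to Lov\'asz--Vempala and Kannan--Lov\'asz--Simonovits, so your attempt has to stand on its own, and as written it has genuine gaps in both halves. For the ratio inequality, the convex-combination scheme with $\lambda_1+\lambda_2=1$ forces a sum-preserving coupling, $c_1+c_2=x+z\in[a+c,\,b+d]$. For the ``integrate and account for Jacobians'' step to yield the constant $\frac{(d-a)(c-b)}{(b-a)(d-c)}$, the pushforward of the parameter measure would have to be dominated by $\frac{(b-a)(d-c)}{(c-b)(d-a)}$ times Lebesgue measure on the full box $[b,c]\times[a,d]$; comparing total masses shows it would then have to equal that uniform measure exactly, which is impossible because the corner regions of the box with $c_1+c_2<a+c$ (near $(b,a)$) or $c_1+c_2>b+d$ (near $(c,d)$) are unreachable. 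In particular your sanity check fails: the argument cannot be tight for constant $g$, and a linear choice of $\lambda_1$ only delivers the weaker constant $\frac{c-b}{b-a}$ (or $\frac{c-b}{d-c}$). The route you mention in one line --- reduce to $g=e^{\gamma t}$ --- is the actual classical proof, but the reduction is the whole content: one chooses the log-affine $h$ with $\int_a^b h=\int_a^b g$ and $\int_c^d h=\int_c^d g$, uses concavity of $\log g-\log h$ to show $\{g\ge h\}$ is an interval crossing $(a,b)$ and $(c,d)$, hence $\int_b^c g\ge\int_b^c h$ and $\int_a^d g\ge\int_a^d h$, and then verifies the exponential case by a direct computation; none of this is supplied.

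The second half has a more structural problem. Applying the ratio inequality on a window $[a,d]\supseteq[u,u+\delta]$ with $b=u$, $c=u+\delta$ gives the factor $\frac{(d-a)\,\delta}{(u-a)(d-u-\delta)}$, which tends to $0$ as $a\to-\infty$, $d\to+\infty$ (and for bounded support gives only a diameter-dependent constant of order $\delta/\mathrm{diam}$). So no ``optimization of the window'' can extract the variance-based constant $\log 2$; this is exactly the gap between the Lov\'asz--Simonovits diameter bound and the KLS-type average/variance bound, and it requires a different argument, not a sharper bookkeeping of the same one. Moreover the ingredient you propose, $\|f\|_\infty\le 1$ for isotropic one-dimensional logconcave densities, is an \emph{upper} bound on the density and cannot by itself lower-bound $\int_u^{u+\delta}f$; the classical proofs lower-bound the density (or the mass) on the gap in terms of the tail masses by comparison with exponential functions, using the variance normalization essentially. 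Finally, the reduction of an arbitrary measurable partition to two half-lines separated by a gap of length $\delta$, ``preserving $\Pi_f(S_1),\Pi_f(S_2)$ and not increasing $\Pi_f(S_3)$,'' is asserted without justification and is not immediate, since $S_1$ and $S_2$ may interleave along the line; the cited sources handle general partitions by a separate argument. As it stands, the proposal establishes neither inequality with the stated constants.
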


Now we are prepared to give the proof of the new isoperimetric inequality (Lemma \ref{lem_isoperimetric}), and the proof is similar to Theorem 2.5 in \cite{lovasz2007geometry}.

\begin{proof}[Proof of Lemma \ref{lem_isoperimetric}]
Let $\pi$ be the (unnormalized) density function of $\Pi$ supported on $K$, then $f$ is continuous over $K$ . Let $h_i$ be the indicator function of $S_i$ for $i=1,2,3$, and let $h_4$ be the indicator function of $K$. It suffices to prove that:
\begin{equation}\label{eq_isoperimetric_n}
    d'(S_1,S_2)(\int_{\real^n} \pi h_1)(\int_{\real^n} \pi h_2)\leq (\int_{\real^n} \pi h_3)(\int_{\real^n} \pi h_4).
\end{equation}

For any $a,b\in K$, any non-negative linear function $l:[a,b]\to \real_{++}$. Set $v(t)=ta+(1-t)b$, and we define the following integral over the needle $N=([a,b],l)$:
\begin{equation*}
    J_i=\int_0^1h_i(v(t))\pi(v(t))l(v(t))^{n-1}dt.
\end{equation*}

According to the localization lemma (the detailed discription and proof appeared as Corollary 2.2 in \cite{kannan1995isoperimetric}), we can reduce the problem to $n$-dimensional integral inequality to $1$-dimensinoal integrals. It suffices to prove that:
\begin{equation}\label{eq_localization_1d}
    d'(S_1,S_2)J_1\cdot J_2\leq J_3\cdot J_4.
\end{equation}
For convenience, we define $I_i\defn\{t|h_i(v(t))>0\}$ for $i=1,2,3,4$ to be subsets of $[0,1]$, and $I_4=[0,1]=I_1\sqcup I_2\sqcup I_3$. We first prove the special case that $I_1,I_2,I_3$ are intervals $[0,u_1],[u_2,1]$ and $(u_1,u_2)$ respectively.

If $u_1=0$ or $u_2=1$, then $J_1=0$ or $J_2=0$, thus Eq.~\eqref{eq_localization_1d} is trivially true. If $u_1=u_2$, this implies $v(u_1)=v(u_2)$, since we also have $v(u_1)\in S_1, v(u_2)\in S_2$, thus $d_K(v(u_1),v(u_2))=0$ and $\vecnorm{v(u_1)-v(u_2)}2=0$, this implies $d'(S_1,S_2)\leq d'(v(u_1),v(u_2))=0$, again the Inequality \eqref{eq_localization_1d} is trivially true. So we can assume that $0<u_1<u_2<1$.

Set $c_i=u_ia+(1-u_i)b$ for $i=1,2$. It is clear that $c_i\in S_i$, thus $d'(c_1,c_2)\geq d'(S_1,S_2)$, that is
\begin{equation*}
\max\braces{d_K(c_1,c_2),(\log2)\sqrt{\alpha}\|c_1-c_2\|_2}\geq d'(S_1,S_2).
\end{equation*}
So we only need to prove the two following inequalities:
\begin{equation}\label{eq_reduced_to_1d}
d_K(c_1,c_2){J_1\cdot J_2}\leq {J_3\cdot J_4}, \quad (\log2)\sqrt{\alpha}\vecnorm{c_1-c_2}{2}J_1J_2\leq J_3J_4.    
\end{equation}

We can prove the  first inequality in Eq.~\eqref{eq_reduced_to_1d}. we have 
\begin{equation*}
\quad d_K(c_1,c_2)\overset{(i)}\leq \frac{\vecnorm{b-a}2\vecnorm{c_2-c_1}2}{\vecnorm{c_1-a}2\vecnorm{b-c_2}2}\overset{(ii)}\leq \frac{J_3\cdot J_4}{J_1\cdot J_2},
\end{equation*}
where inequality $(i)$ holds because $a,c_1,c_2,b\in K$ are on the same line, and the intersection $p,q$ of $\widebar{c_1c_2}$ (in the order $p,c_1,c_2,q$) with $K$ always satisfies $\vecnorm{p-c_1}2\geq \vecnorm{c_1-a}2$ and  $\vecnorm{q-c_2}2\geq\vecnorm{b-c_2}2$ including the case when $p$ or $q$ are at the infinity. Inequality $(ii)$ holds since $\pi(v(t))l(v(t))^{n-1}$ is a logconcave function in $t$ and we use Fact~\ref{fact_1d_integrals_bound}.

Then we prove the second inequality in Eq.~\eqref{eq_reduced_to_1d}, for convenience, we can directly employ Fact~\ref{fact_iso_eu} over the $1$-dimensional special case. We observe the integrand of $J_i$ ($i=1,2,3,4$) satisfies  
\begin{equation*}
\begin{aligned}
\pi_{1d}(t)=&h_i(v(t))\pi(v(t))l(v(t))^{n-1}\\
\overset{(i)}\propto &h_i(v(t))q(v(t))\exp\brackets{-\frac\alpha2{v(t)\tp v(t)}}l(v(t))^{n-1}\\
\propto &{ h_i(v(t))q(v(t))l(v(t))^{n-1}} \exp\parenth{-{\alpha}(a-b)\tp bt}\exp\parenth{-\frac{\alpha}{2}\vecnorm{a-b}2^2t^2},
\end{aligned}    
\end{equation*}
where in proportion $(i)$, $q$ is a logconcave function since we assumed $\pi$ to be more logconcave than Gaussian with covariance $\frac{1}{\alpha}I_n$.
We notice that if $g:\real^n\to\real^+$ is a logconcave function and $v:\real\to \real^n$ is linear, then $g\circ v$ is logconcave. Since $h_i, f, l$ are logconcave, thus $h_i(v(t)),q(v(t)),l(v(t))^{n-1}$ are logconcave in $t$, and $\exp(-\alpha (a-b)\tp bt)$ is clearly logconcave. Together with the fact that the product of logconcave functions is logconcave, $\pi_{1d}$ is more logconcave than Gaussian with variance $\alpha^{-1}\vecnorm{a-b}{2}^{-2}$. Viewing $\pi_{1d}$ as a probability density of $\Pi_{1d}$ over $[0,1]$, we apply Fact~\ref{fact_iso_eu}:
\begin{equation*}
\Pi_{1d}(I_3)\geq (\log2) \sqrt{\alpha}\vecnorm{a-b}{2} \Pi_{1d}(I_1)\Pi_{1d}(I_2),
\end{equation*}
here $\Pi_{1d}(I_i)$ satisfies $\Pi_{1d}(I_i)=\frac{J_i}{J_4}$ for $i=1,2,3$. Thus we have 
\begin{equation*}
(\log2)\sqrt{\alpha}\vecnorm{c_1-c_2}{2} \overset{(i)}\leq (\log2)\sqrt{\alpha}\vecnorm{a-b}{2}\leq \frac{J_3J_4}{J_1J_2},
\end{equation*}
where inequality $(i)$ satisfies since $c_1,c_2$ are inside the line segment $\widebar{ab}$. So both inequalities in Eq.~\eqref{eq_reduced_to_1d} are proved, thus we proved that $d'(S_1,S_2)\cdot J_1\cdot J_2\leq J_3\cdot J_4$. 

Following the same combinatorial argument in Theorem 5.2 from \cite{kannan1995isoperimetric}, we can prove the Eq.~\eqref{eq_localization_1d} for general $1$-dimensional measurable sets $I_1,I_2,I_3$, the details are omitted here. As a result of the localization lemma, we have proved the Eq.~\eqref{eq_isoperimetric_n} in $\real^n$ and thus the isoperimetric inequality under $d'$.

\end{proof}

\subsection{Extension to Weakly Logconcave Distributions}\label{sec_new_iso_weakly}
In this section, we prove Lemma \ref{lem_isoperimetric_weakly}, which extends the new isoperimetry in Lemma \ref{lem_isoperimetric} to weakly logconcave distributions. Instead of strong logconcavity, we only assume that the distribution $\Pi$ is supported on $K\subseteq \real^n$ an open and convex set, and $\Pi$ has a bounded covariance matrix $\Sigma_\pi$. 

Without loss of generality, we can further assume that  $\Pi$ is an isotropic logconcave distribution. In other words, $\Sigma_\pi=I_n$. This is because one can apply an affine transformation to $\Pi$ so that $\Sigma_\pi=I_n$, and the effect of this transformation on the Euclidean and Hilbert metrics can be quantified. To be rigorous, we provide a proof in Appendix \ref{appendix_reduction_to_isotropy}. 

To prove Lemma \ref{lem_isoperimetric_weakly}, it is easier to first prove an isoperimetric inequality with boundary measures, and then use the co-area formula in a metric space to recover the isoperimetric inequality on three sets $K = S_1\sqcup S_2\sqcup S_3$.  To do this, we first extend the notion of boundary measures to a more general metric  than Euclidean metric as in  Eq.~\eqref{eq_def_boundary_euclid}, and we also introduce gradient modulus in a general metric space. These two definitions are listed in Definition~\ref{def_boundary_measure}. We also list the co-area formula for a general metric space in Lemma \ref{lem_coarea}.

\begin{definition}[boundary measure \cite{bobkov_isoperimetric_1997}]\label{def_boundary_measure}
Let $(X,d)$ be a separable metric space, and let $\cB(d)$ denotes the Borel $\sigma$-algebra induced by the topology of $(X,d)$. Assume $\Pi$ is a probability measure defined over $\cB(d)$.  If $B\in \cB(d)$, then the boundary measure $\Pi^+$ of $B$ is defined to be
\begin{equation*}
    \Pi^+(B)=\underset{h\to 0^+}{\lim\inf} \frac{\Pi(B^h)-\Pi(B)}{h},
\end{equation*}
where $B^h\defn \braces{x\in X|\exists a\in B,d(x,a)<h}$ is the open $h$-neighborhood of $A$. For any function $f:X\to \real$, we define the modulus of its gradient $\abss{\nabla f(x)}$\footnote{$\abss{\nabla f(x)}$ is $\cB(d)$-measurable if $f$ is continuous} to be:
\begin{equation}\label{eq_def_modulus_grad}
    \abss{\nabla f(x)}=\underset{d(x,y)\to 0^+}{\lim\sup}\frac{\abss{f(x)-f(y)}}{d(x,y)},
\end{equation}
and we set $\abss{\nabla f(x)}=0$ if $x$ is an isolated point in $X$ 
\end{definition}

\begin{lemma}[co-area inequality \cite{bobkov_isoperimetric_1997}]\label{lem_coarea}
Let $(X,d)$ be a separable metric space, and let $\Pi$ denotes a probability measure defined on $\cB(d)$. Assume $\rho$ is a function on $X$ with a finite Lipschitz constant, then:
\begin{equation}
    \int_{X}\abss{\nabla \rho(x)}\Pi(dx)\geq \int_{-\infty}^{+\infty}\Pi^+\braces{x\in X|\rho(x)>t}dt.
\end{equation}
\end{lemma}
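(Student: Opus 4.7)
The plan is to use a layer-cake decomposition together with Fubini's theorem to rewrite both sides as integrals over the super-level sets $A_t \defn \{x \in X : \rho(x) > t\}$, then close up with Fatou's lemma. Denote $M(x,h) \defn \sup_{y: d(x,y) < h} \rho(y)$. The strategy proceeds in three steps: (i) apply Fatou in the $t$-variable to pull the $\liminf_{h \to 0^+}$ outside the $t$-integral of $\Pi^+(A_t)$; (ii) identify the cross-section $A_t^h \setminus A_t$ as $\{\rho(x) \le t < M(x,h)\}$ and swap the order of integration to convert the $t$-integral into a space integral of $M(x,h) - \rho(x)$; (iii) divide by $h$ and send $h \to 0^+$, using the metric-gradient modulus to recover the integral of $|\nabla \rho|$.

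Concretely, from the definition $\Pi^+(A_t) = \liminf_{h \to 0^+} h^{-1}(\Pi(A_t^h) - \Pi(A_t))$ and non-negativity of the integrand, Fatou gives
\begin{equation*}
\int_{-\infty}^{\infty} \Pi^+(A_t)\,dt \;\le\; \liminf_{h\to 0^+} \frac{1}{h}\int_{-\infty}^{\infty} \bigl(\Pi(A_t^h) - \Pi(A_t)\bigr)\,dt.
\end{equation*}
Since $x \in A_t^h \setminus A_t$ is equivalent to $\rho(x) \le t < M(x,h)$ (the forward direction uses the existence of $y \in A_t$ with $d(x,y) < h$, the reverse direction uses the supremum being witnessed by such a $y$), Fubini yields
\begin{equation*}
\int_{-\infty}^{\infty} \bigl(\Pi(A_t^h) - \Pi(A_t)\bigr)\,dt \;=\; \int_X \bigl(M(x,h) - \rho(x)\bigr)\,d\Pi(x).
\end{equation*}
Now $h^{-1}(M(x,h) - \rho(x)) = \sup_{y : d(x,y) < h}(\rho(y)-\rho(x))/h \le \sup_{0 < d(x,y) < h}|\rho(y)-\rho(x)|/d(x,y)$, whose $\limsup$ as $h \to 0^+$ is exactly $|\nabla \rho(x)|$ by the definition in Eq.~\eqref{eq_def_modulus_grad}. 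The Lipschitz hypothesis gives a uniform upper bound by the Lipschitz constant $L$, so reverse Fatou applies and the combined chain of inequalities closes the argument.

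The main obstacle is measure-theoretic bookkeeping rather than any genuine geometric subtlety: one must verify that $x \mapsto M(x,h)$ is $\cB(d)$-measurable so that Fubini applies (separability of $X$ lets the supremum be realized along a countable dense set inside the open ball, which suffices for measurability), and one must justify the use of reverse Fatou via the uniform $L$-bound. A secondary care is that $\Pi^+(A_t)$ is \emph{a priori} only a $\liminf$ over $h \to 0^+$, so the outer Fatou must be applied along an arbitrary sequence $h_n \to 0^+$ and the conclusion is independent of the sequence chosen. Once these technicalities are dispatched, everything reduces to the pointwise comparison $\limsup_{h \to 0^+} h^{-1}(M(x,h) - \rho(x)) \le |\nabla \rho(x)|$, which is immediate from the definitions.
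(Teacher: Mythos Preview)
The paper does not prove this lemma: it simply cites \cite{bobkov_isoperimetric_1997} and omits the argument. Your proposal is correct and is essentially the standard proof from Bobkov's paper---the layer-cake identification $A_t^h\setminus A_t=\{\rho\le t<M(\cdot,h)\}$, Tonelli to convert the $t$-integral into $\int_X(M(x,h)-\rho(x))\,d\Pi$, then Fatou in both directions using the uniform Lipschitz bound. The technical points you flag (measurability of $M(\cdot,h)$ via separability, handling the continuous $\liminf$ by passing to a well-chosen sequence) are the right ones and are dispatched as you indicate.
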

The proof of Lemma \ref{lem_coarea} can be found in \cite{bobkov_isoperimetric_1997}. One issue to apply the lemma is that our target probability distribution $\Pi$ in Equation \eqref{eq_distri_weakly} is only defined over the Borel $\sigma$-algebra generated by the Euclidean topology, so to use boundary measure and co-area inequality for the new metric, we need to show $\Pi$ is also measurable over the Borel $\sigma$-algebra generated by our new metric $\dn$. Formally, we show that the open convex set $K$ equipped with the mixed metric  $\dn(x,y)\defn \max\braces{(\log2)\vecnorm{x-y}2,\dH_K(x,y)}$ satisfies the condition to be a metric space, and the topology induced by $\dn$ is the usual Euclidean topology, and we summarize this in the following Lemma \ref{lem_homeomorphism}.

\begin{lemma}\label{lem_homeomorphism}
Let $K$ denotes an open convex subset of $\real^n$,  we define $\dn$ to be the following binary function over $K$:
\begin{equation}
\dn(x,y)\defn \max\braces{(\log2)\vecnorm{x-y}{2},\dH_K(x,y)},
\end{equation}
then  $(K,\dn)$ is a metric space and $\dn$ induces the  Euclidean topology over $K$.
\end{lemma}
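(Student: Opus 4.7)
The plan is to verify the metric axioms for $\dn$ and then establish that the topology it induces agrees with the Euclidean topology on $K$. For the metric axioms, non-negativity is immediate, symmetry is inherited from both summands (the cross-ratio $d_K(x,y)$, and hence $\dH_K(x,y)$, is symmetric under an implicit relabeling of $p, q$ when $x, y$ are swapped), and positive definiteness follows from $\dn(x,y) \geq (\log 2)\vecnorm{x-y}{2}$. The only nontrivial axiom is the triangle inequality; since the maximum of two functions each satisfying the triangle inequality still does, it suffices to verify it for $\dH_K$. For bounded open convex $K$, one checks the identity $\log(1 + d_K(x,y)) = \log\frac{\vecnorm{p-y}{2}\vecnorm{q-x}{2}}{\vecnorm{p-x}{2}\vecnorm{q-y}{2}}$ in the order $p, x, y, q$, which is the classical Hilbert metric known to satisfy the triangle inequality. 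For unbounded $K$, the extended definition $d_K(x,y) = \lim_{R\to\infty} d_{K \cap \ball(0,R)}(x,y)$ lets me pass the inequality through the limit, since each truncation is bounded and monotone convergence preserves the inequality.

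For the topological claim, I show that a sequence converges under $\dn$ if and only if it converges in the Euclidean norm; since both are metrics on the same space, this is enough to conclude they induce the same topology. One direction is immediate from $\dn(x_k, x) \geq (\log 2)\vecnorm{x_k - x}{2}$. For the converse, fix $x \in K$; since $K$ is open, choose $\rho > 0$ such that $\ball(x, \rho) \subseteq K$, and take $x_k$ with $\vecnorm{x_k - x}{2} < \rho/2$. Consider the line through $x$ and $x_k$, and let $p_k, q_k$ be its intersections with $\partial K$ (possibly at infinity). Parameterize the line so that $x$ sits at $0$, $x_k$ at position $b := \vecnorm{x_k - x}{2}$, $p_k$ at $-a$ and $q_k$ at $+c$, with $a, c \in (0, \infty]$ and both $a, c \geq \rho$ by the ball inclusion. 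When both $a, c$ are finite, the key estimate is
\begin{equation*}
d_K(x_k, x) = \frac{(a+c)b}{a(c-b)} \leq \frac{2b(a+c)}{ac} = 2b\Bigl(\tfrac{1}{a} + \tfrac{1}{c}\Bigr) \leq \frac{4b}{\rho},
\end{equation*}
where the first inequality uses $c - b \geq c/2$, which holds since $b < \rho/2 \leq c/2$. When exactly one endpoint is at infinity, the dropped-factor formula in Definition~\ref{def_cross_ratio_unbounded} gives an even simpler bound $d_K(x_k, x) \leq 2b/\rho$; when both endpoints are at infinity, $d_K(x_k, x) = 0$. In all cases $\dH_K(x_k, x) = \log(1 + d_K(x_k, x)) \leq d_K(x_k, x) \to 0$, and together with $(\log 2)\vecnorm{x_k - x}{2} \to 0$ this yields $\dn(x_k, x) \to 0$.

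The main obstacle is obtaining the uniform linear bound $d_K(x_k, x) = O(\vecnorm{x_k - x}{2})$ in the unbounded setting, where the chord length $a + c$ can be arbitrarily large depending on the direction of $x_k - x$. The algebraic rewriting $(a+c)/(ac) = 1/a + 1/c$ is what makes this go through, cancelling the potentially long chord against the product $ac$, which is uniformly controlled by the Euclidean distance $\rho$ from $x$ to $\partial K$. Combined with the verification of the metric axioms, this shows that $(K, \vecnorm{\cdot}{2})$ and $(K, \dn)$ share the same convergent sequences on $K$, so they induce the same topology, completing the proof.
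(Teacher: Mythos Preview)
Your proof is correct and follows essentially the same approach as the paper's: verify the triangle inequality for $\dH_K$ via the classical Hilbert metric in the bounded case plus a limit argument for unbounded $K$, then establish the topological equivalence by bounding $d_K(x,y)$ linearly in $\vecnorm{x-y}{2}$ using a ball $\ball(x,\rho)\subseteq K$. Your estimate $d_K(x_k,x)\leq 4b/\rho$ via the identity $(a+c)/(ac)=1/a+1/c$ is a bit cleaner than the paper's three-term expansion of the cross-ratio, but the underlying idea is the same.
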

The proof of Lemma \ref{lem_homeomorphism} is left to Appendix \ref{appendix_homeomorphism}, where we check the definitions of metric spaces, and verify that the identity map is a homeomorphism between $(K,\dn)$ and $(K,\vecnorm{\cdot}{2})$. 
Next, we prove a new isoperimetric inequality with boundary measures for isotropic logconcave distributions.

\begin{lemma}\label{lem_isoperimetric_boundary}
Assume $\Pi$ is an isotropic logconcave distribution supported on an open convex set $K\subseteq \real^n$. Then for any Borel measurable set $B\subseteq K$, we have 
\begin{equation}
\Pi^+(B)\geq \frac{1}{6\max\braces{1,\psi_n}}\Pi(B)\brackets{1-\Pi(B)},
\end{equation}
where $\psi_n$ denotes the KLS constant, and the metric $\dn$ in defining boundary measure $\Pi^+$ is the mixed metric given by:
\begin{equation*}
\dn(x,y)\defn\max\braces{(\log2)\vecnorm{x-y}{2},\dH_K(x,y)}
\end{equation*}
\end{lemma}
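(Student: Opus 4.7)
The plan is to use stochastic localization in the tradition of \cite{eldan_thin_2013,chen_almost_2021,klartag_logarithmic_2023} to reduce the weakly log-concave case to the $\alpha$-strongly log-concave case handled by Lemma~\ref{lem_isoperimetric}. After reducing to isotropic $\Pi$ (Appendix~\ref{appendix_reduction_to_isotropy}, so $\spectr=1$) and noting that by Lemma~\ref{lem_homeomorphism} the mixed-metric topology coincides with the Euclidean one, so all measures and boundary measures live on the same Borel $\sigma$-algebra, I would introduce a random martingale of probability measures $(\Pi_t)_{t\ge 0}$ with $\Pi_0=\Pi$ and
\begin{equation*}
\frac{d\Pi_t}{d\Pi}(x) \;=\; \frac{1}{Z_t}\exp\!\Bigl(c_t^{\top}x-\tfrac{t}{2}\|x\|_2^2\Bigr),
\end{equation*}
where $(c_t)$ is the It\^o diffusion chosen so that $(\Pi_t(A))_{t\ge 0}$ is a martingale for every Borel $A\subseteq K$. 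The key structural fact is that each $\Pi_t$ is $t$-strongly log-concave.

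First, I would apply Lemma~\ref{lem_isoperimetric} to each realization of $\Pi_t$ with $\alpha=t$, using the three-set partition $S_1=B$, $S_2=K\setminus B^h$, $S_3=B^h\setminus B$, where $B^h$ is the $\dn$-neighborhood of $B$. For $x\in S_1$, $y\in S_2$, the condition $\dn(x,y)\ge h$ forces either $(\log 2)\|x-y\|_2\ge h$ (giving $(\log 2)\sqrt{t}\|x-y\|_2 \ge \sqrt{t}\,h$) or $\dH_K(x,y)\ge h$ (giving $d_K(x,y)=e^{\dH_K(x,y)}-1\ge h$). Hence the mixed distance $d'$ of Lemma~\ref{lem_isoperimetric} satisfies $d'(S_1,S_2)\ge h\min\{1,\sqrt{t}\}$, and letting $h\to 0^+$ yields the boundary-measure bound
\begin{equation*}
\Pi_t^+(B)\;\ge\;\min\{1,\sqrt{t}\}\cdot \Pi_t(B)\bigl[1-\Pi_t(B)\bigr].
\end{equation*}
Next, I would invoke the $L^2$-function evolution bound of \cite{klartag_logarithmic_2023} applied to the centered indicator $\mathbf{1}_B-\Pi_t(B)$: together with the logarithmic KLS estimate $\psi_\Pi\le\psi_n$, this implies that for $t^\star := \min\{1,\,1/(4\psi_n^2)\}$,
\begin{equation*}
\mathbb{E}\bigl[\Pi_{t^\star}(B)(1-\Pi_{t^\star}(B))\bigr]\;\ge\;\tfrac{1}{2}\,\Pi(B)\bigl[1-\Pi(B)\bigr].
\end{equation*}

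Finally, I would transfer back to $\Pi$ using the martingale property: since $\mathbb{E}[\Pi_{t^\star}(A)]=\Pi(A)$ for every Borel $A$, dividing the identity $\Pi(B^h)-\Pi(B)=\mathbb{E}[\Pi_{t^\star}(B^h)-\Pi_{t^\star}(B)]$ by $h$ and applying Fatou's lemma to the non-negative integrands yields $\Pi^+(B)\ge\mathbb{E}[\Pi_{t^\star}^+(B)]$. Chaining this with the two displayed bounds above gives
\begin{equation*}
\Pi^+(B)\;\ge\;\min\{1,\sqrt{t^\star}\}\cdot \mathbb{E}\bigl[\Pi_{t^\star}(B)(1-\Pi_{t^\star}(B))\bigr]\;\ge\;\tfrac{1}{4\max\{1,\psi_n\}}\,\Pi(B)\bigl[1-\Pi(B)\bigr],
\end{equation*}
which is stronger than the stated constant $\tfrac{1}{6\max\{1,\psi_n\}}$. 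The main technical obstacle I expect is the second step: rigorously controlling the evolution of $\Pi_t(B)(1-\Pi_t(B))$ along the localization requires invoking and adapting the $L^2$-argument of \cite{klartag_logarithmic_2023}, and this is precisely what forces the $\psi_n$ factor in the final constant, while Steps~1 and~3 are comparatively routine applications of Lemma~\ref{lem_isoperimetric} and Fatou's lemma.
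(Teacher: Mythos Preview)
Your proposal is correct and follows essentially the same route as the paper: stochastic localization to obtain a $t$-strongly logconcave $\Pi_t$, Lemma~\ref{lem_isoperimetric} applied to $\Pi_t$ (via the three-set partition $B,\,K\setminus B^h,\,B^h\setminus B$) to get a boundary isoperimetry for $\Pi_t$, Fatou's lemma to transfer back to $\Pi^+(B)$, and the approximate-conservation-of-variance lemma from \cite{klartag_logarithmic_2023} to control $\mathbb{E}[\Pi_t(B)(1-\Pi_t(B))]$. The only slip is in Step~3: with $t^\star=1/(4\psi_n^2)$ and $C_P(\Pi)\le 4\psi_n^2$ the conservation lemma yields $(2+t^\star C_P(\Pi))^{-1}\ge 1/3$ rather than $1/2$, so your final constant becomes $\tfrac{1}{6\max\{1,\psi_n\}}$ --- exactly the one stated --- not $\tfrac{1}{4\max\{1,\psi_n\}}$.
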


It is worth mention that here the KLS constant $\psi_n$ is still under the usual definition by Euclidean metric as in Eq.~\eqref{eq_KLS_def} and Eq.~\eqref{eq_def_boundary_euclid}. So we still have  $\psi_n=O(\sqrt{\log n})$ by \cite{klartag_logarithmic_2023}. However, the boundary measure $\Pi^+$ in Lemma \ref{lem_isoperimetric_boundary} is defined using our mixed metric $\dn$. 

The idea to prove Lemma \ref{lem_isoperimetric_boundary} is to use stochastic localization, where a family of random measures $\parenth{\Pi_t}_t$ is defined for $t\geq 0$ by stochastic differential equations. At time $t > 0$,  the measure $\Pi_t$ is $t$-strongly logconcave, thus we can apply the mixed isoperimetry for strongly-logconcave distributions in Lemma~\ref{lem_isoperimetric}. Meanwhile, Taking $t \leq \psi_n^{-2}$ allows us to control the right-hand side via approximate conservation of variance in Lemma~\ref{lem_evolution_function}. 

\paragraph{Stochastic localization} Now we introduce stochastic localization, following the notation in~\cite{chen_almost_2021}. Assume $\Pi$ is a logconcave distribution. For $t\geq 0$ and $\theta\in\real^n$, let $\pi_{t,\theta}$ denote the following probability density function
\begin{equation*}
\pi_{t,\theta}(x)=\frac{1}{Z(t,\theta)}\exp\parenth{\theta\tp x-\frac{t}2{x\tp x}}\pi(x).
\end{equation*}
The mean of the probability density $\pi_{t,\theta}$ is denoted by
\begin{equation*}
a(t,\theta)=\int_{\real^n} x\,\pi_{t,\theta}(x)dx
\end{equation*}
We consider the stochastic evolution of target distribution $\Pi$ by defining the following stochastic differential equation in terms of $\theta$
\begin{equation}\label{eq_stochastic_theta}
    d\theta_t=dW_t+a(t,\theta_t)dt,\quad \theta_0=0,
\end{equation}
where $(W_t)$ is standard Brownian motion. 
Then the distribution at time $t$ is defined as $\pi_t(x)\defn \pi_{t,\theta_t}(x)$. The existence and uniqueness of the solution in time $[0,t]$ for Equation \eqref{eq_stochastic_theta} can be shown by verifying that $a(t,\theta)$ is a bounded function that is Lipschitz with respect to $t$ and $\theta$. Using It\^{o} formula, we obtain that for $x\in\real^n$:
\begin{equation*}
    d\pi_t(x)=\pi_t(x)\angles{x-a_t,dW_t}. 
\end{equation*}
Thus the process $(\pi_t(x))_{t}$ is a martingale with respect to the filtration induced by the Brownian motion, and we have for $t\geq0$ and $x\in\real^n$:
\begin{equation*}
\Exs \pi_t(x)=\pi_0(x)=\pi(x).
\end{equation*}

The stochastic localization technique developed in \cite{eldan_thin_2013} has been successful in upper bounding the KLS constant $\psi_n$. A series of works  \cite{lee_eldans_2019,chen_almost_2021,klartag_bourgains_2022,klartag_logarithmic_2023} use stochastic localization to provide tighter upper bounds for the KLS constant $\psi_n$, with the state-of-the-art bound $\psi_n=O(\sqrt{\log n})$ obtained by~\cite{klartag_logarithmic_2023}. While it has an elaborate proof, for the purpose of our result, we only need the approximate conservation of variance along the stochastic process in Lemma $3.1$ in \cite{klartag_logarithmic_2023} which we restate below.

\begin{lemma}[Approximate conservation of variance, Lemma $3.1$ in \cite{klartag_logarithmic_2023}]\label{lem_evolution_function} For any $t\geq 0$ and $f\in L^2(\Pi)$:
\begin{equation*}
    \Exs\Var_{\pi_t}\parenth{f}\leq \Var_{\pi_0}(f)\leq \parenth{2+\frac{t}{\lambda_0}}\Exs\Var_{\pi_t}(f),
\end{equation*}
where $\lambda_0=\frac{1}{C_P(\Pi)}$ is the reciprocal of the Poincar\'e constant of the distribution $\Pi$. 
\end{lemma}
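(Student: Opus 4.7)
The plan is to exploit the martingale structure of stochastic localization. Define the random scalar $M_t \defn \int f\, d\pi_t$. Applying It\^o's formula to the given SDE $d\pi_t(x) = \pi_t(x)\angles{x-a(t,\theta_t),dW_t}$ shows that $M_t$ is itself a martingale with differential
\begin{equation*}
dM_t \;=\; \angles{\Cov_{\pi_t}(f,x),\,dW_t}, \qquad \Cov_{\pi_t}(f,x)\defn \Exs_{\pi_t}[(f-\Exs_{\pi_t}f)(x-a(t,\theta_t))].
\end{equation*}
For the first inequality, Fubini together with the martingale property $\Exs \pi_t(x)=\pi_0(x)$ gives $\Exs\int f^2\, d\pi_t=\int f^2\, d\pi_0$ and $\Exs M_t=M_0$, so
\begin{equation*}
\Exs\Var_{\pi_t}(f)\;=\;\Exs\int f^2\,d\pi_t-\Exs M_t^2\;=\;\Var_{\pi_0}(f)-\Var(M_t)\;\leq\;\Var_{\pi_0}(f).
\end{equation*}
This also records the identity $\Var(M_t)=\Var_{\pi_0}(f)-\phi(t)$, where $\phi(t)\defn \Exs\Var_{\pi_t}(f)$, which I will use for the reverse direction.

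For the second inequality, It\^o isometry applied to the expression above for $dM_t$ turns the identity into the evolution equation $-\phi'(t) \;=\; \Exs\vecnorm{\Cov_{\pi_t}(f,x)}{2}^{2}$. Pointwise Cauchy--Schwarz on the random measure $\pi_t$ gives $\vecnorm{\Cov_{\pi_t}(f,x)}{2}^{2}\leq \Var_{\pi_t}(f)\cdot\vecnorm{\Cov(\pi_t)}{\mathrm{op}}$, so to close the loop I need a deterministic ceiling on $\vecnorm{\Cov(\pi_t)}{\mathrm{op}}$ in terms of $t$ and $\lambda_0$. Two complementary estimates enter: (a) Brascamp--Lieb applied to the $t$-strongly logconcave tilted density $\pi_t\propto\exp(\theta_t\cdot x-\tfrac{t}{2}\vecnorm{x}{2}^{2})\pi_0$ gives $\Cov(\pi_t)\preceq t^{-1}I_n$ almost surely, which is sharp for large $t$; (b) the Poincar\'e hypothesis on $\pi_0$ gives $\Cov(\pi_0)\preceq\lambda_0^{-1}I_n$, and the law-of-total-covariance identity $\Cov(\pi_0)=\Exs\,\Cov(\pi_t)+\Cov(a(t,\theta_t))$ propagates this bound to $\pi_t$ in expectation, which is the right scale for small $t$. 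Combining them into $-\phi'(t)\leq \min(t^{-1},\lambda_0^{-1})\phi(t)$ and integrating via Gr\"onwall gives $\phi(0)/\phi(t)\leq \exp\bigl(\int_0^{t}\min(s^{-1},\lambda_0^{-1})\,ds\bigr)$; splitting the integral at $s=\lambda_0$ bounds the exponent by $\log 2 + (t-\lambda_0)^{+}/\lambda_0$, which after an elementary estimate is at most $\log(2+t/\lambda_0)$. Rearranging yields $\Var_{\pi_0}(f)\leq (2+t/\lambda_0)\Exs\Var_{\pi_t}(f)$, as required.

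The main obstacle is turning the random operator-norm bound on $\Cov(\pi_t)$ into a usable estimate inside the expectation: $\vecnorm{\cdot}{\mathrm{op}}$ is neither linear nor concave, $\Cov(\pi_t)$ is random through $\theta_t$, the Brascamp--Lieb bound $t^{-1}$ is pointwise but fails to be integrable at $0$, and the Poincar\'e-derived bound only holds in expectation. Reconciling the two regimes into the interpolated rate $\min(t^{-1},\lambda_0^{-1})$ that produces precisely the constant $2+t/\lambda_0$ is the delicate core of the argument and requires separate handling of $t\leq \lambda_0$ (where the martingale-averaged Poincar\'e bound is invoked) and $t>\lambda_0$ (where Brascamp--Lieb is sharp).
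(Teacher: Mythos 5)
Your first half is correct and is essentially the standard opening of the argument: the martingale property $\Exs\pi_t=\pi_0$ plus the law of total variance gives $\Exs\Var_{\pi_t}(f)=\Var_{\pi_0}(f)-\Var(M_t)$ (hence the left inequality), the It\^o-isometry identity $-\phi'(t)=\Exs\vecnorm{\Cov_{\pi_t}(f,x)}{2}^2$ is right, and so are the Cauchy--Schwarz bound $\vecnorm{\Cov_{\pi_t}(f,x)}{2}^2\leq \Var_{\pi_t}(f)\,\vecnorm{\Cov(\pi_t)}{\mathrm{op}}$ and the pointwise Brascamp--Lieb bound $\Cov(\pi_t)\preceq t^{-1}I_n$. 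The right inequality, however, is not proved, and the step you yourself call ``the delicate core'' is exactly where it breaks. The differential inequality $-\phi'(t)\leq \lambda_0^{-1}\phi(t)$ in the regime $t\leq\lambda_0$ does not follow from what you have: the law of total covariance gives $\Exs\Cov(\pi_t)\preceq C_P(\Pi)I_n$ only in the PSD order, i.e.\ it controls the operator norm of $\Exs\Cov(\pi_t)$, whereas you need to bound $\Exs\brackets{\Var_{\pi_t}(f)\,\vecnorm{\Cov(\pi_t)}{\mathrm{op}}}$, a product of two positively correlated random variables. The random tilt $e^{\theta_t\cdot x}$ can inflate $\vecnorm{\Cov(\pi_t)}{\mathrm{op}}$ far beyond $C_P(\Pi)$ with positive probability (think of tilting a Laplace-type logconcave density by $e^{\theta\cdot x}$ with $\theta$ near the slope of the tail), and precisely on such events $\Var_{\pi_t}(f)$ also tends to be large, so the factorization you need is not available from the stated facts.

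Moreover, even if one grants $-\phi'(t)\leq\min(t^{-1},\lambda_0^{-1})\phi(t)$, the Gr\"onwall endgame cannot deliver the stated constant: the exponent $\int_0^t\min(s^{-1},\lambda_0^{-1})ds$ equals $t/\lambda_0$ for $t\leq\lambda_0$ and $1+\log(t/\lambda_0)$ for $t>\lambda_0$, so you only get $\phi(0)\leq e\max\braces{1,t/\lambda_0}\phi(t)$; your claim that the exponent is at most $\log(2+t/\lambda_0)$ already fails at $t=2\lambda_0$, where $1+\log 2>\log 4$, and for large $t/\lambda_0$ the multiplicative bound is worse than $2+t/\lambda_0$ by a factor of $e$. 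The proof in the cited source avoids both problems by staying with the additive identity you derived, $\Var_{\pi_0}(f)=\phi(t)+\Var(M_t)$, and bounding $\Var(M_t)$ directly: write $M_t=g(\theta_t)$ with $g(\theta)\defn\Exs_{\pi_{t,\theta}}f$, so that $\nabla g(\theta)=\Cov_{\pi_{t,\theta}}(f,x)$ and Brascamp--Lieb gives the almost-sure bound $\vecnorm{\nabla g(\theta_t)}{2}^2\leq t^{-1}\Var_{\pi_t}(f)$ (here the $t^{-1}$ bound is pointwise, so no correlation issue arises); since $\theta_t$ has the law of $tX+B_t$ with $X\sim\Pi$ and $B$ an independent Brownian motion, its Poincar\'e constant is at most $t^2C_P(\Pi)+t$ by scaling and subadditivity under convolution, whence $\Var(M_t)\leq\parenth{t^2C_P(\Pi)+t}\cdot t^{-1}\phi(t)=\parenth{1+t/\lambda_0}\phi(t)$ and therefore $\Var_{\pi_0}(f)\leq\parenth{2+t/\lambda_0}\Exs\Var_{\pi_t}(f)$. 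In short: the Poincar\'e inequality must be applied to the conditional-expectation function of $\theta_t$, not converted into an averaged operator-norm bound on $\Cov(\pi_t)$, and the additive (not Gr\"onwall) structure is what produces $2+t/\lambda_0$.
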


The proof of Lemma \ref{lem_evolution_function} can be found in \cite{klartag_logarithmic_2023} and is omitted. Now we are ready to prove Lemma \ref{lem_isoperimetric_boundary}.
\begin{proof}[Proof of Lemma \ref{lem_isoperimetric_boundary}]
Given $\Pi$ an isotropic logconcave distribution, then we define the following mixed metric $\dn$  by
\begin{equation}\label{eq_dn_isotropy}
    \dn(x,y)=\max\braces{\dH_K(x,y),(\log2)\vecnorm{x-y}{2}}.
\end{equation}
For any Borel set $B\subseteq K$, $\Pi^+(B), \Pi(B)$ are well-defined quantities according to the homeomorphism between $\dn$ and $\vecnorm{\cdot}{2}$ (see Lemma \ref{lem_homeomorphism} and Definition \ref{def_boundary_measure}). We run stochastic localization as defined in Equation \eqref{eq_stochastic_theta}  until $t=\psi_n^{-2}$, and we have 
\begin{equation}\label{eq_boundary_measure_I}
\begin{aligned}
\Pi^+( B) &= \underset{h\to 0+}{\lim\inf}\frac{\Pi(B^h)-\Pi(B)}{h}\overset{(i)}=\underset{h\to 0+}{\lim\inf}\frac{\Exs\braces{\Pi_t(B^h)-\Pi_t(B)}}{h}\\
&\overset{(ii)}\geq \Exs\braces{\underset{h\to 0+}{\lim\inf}\frac{\Pi_t(B^h)-\Pi_t(B)}{h}} =\Exs\braces{\Pi_t^+(B)},
\end{aligned}
\end{equation}
where equality $(i)$ used the martingale property of $(\Pi_t)_t$, and inequality $(ii)$ applies Fatou's lemma. In order to lower bound $\Pi_t^+(B)$, we can apply Lemma~\ref{lem_isoperimetric} to the distribution $\Pi_t$ since it is $t$-strongly logconcave. The distance $d'$ in Lemma \ref{lem_isoperimetric} can be lower bounded by $\frac{1}{\max\braces{1,\psi_n}}\dn$ by definition in Eq.~\eqref{eq_dn_isotropy}. To be explicit, for any measurable decomposition $K\defn S_1\sqcup S_2\sqcup S_3$, we have
\begin{equation*}
\begin{aligned}
\Pi(S_3)&\overset{(i)}\geq \underset{x\in S_1,y\in S_2}\inf \max\braces{d_K(x,y),(\log2)\sqrt{t}\vecnorm{x-y}{2}}\Pi(S_1)\Pi(S_2)\\
&\overset{(ii)}\geq \underset{x\in S_1,y\in S_2}\inf \max\braces{\dH_K(x,y),(\log2)\sqrt{t}\vecnorm{x-y}{2}}\Pi(S_1)\Pi(S_2)\\
&\overset{(iii)}\geq \underset{x\in S_1,y\in S_2}\inf \frac{1}{\max\braces{1,\psi_n}}\max\braces{\dH_K(x,y),(\log2)\vecnorm{x-y}{2}}\Pi(S_1)\Pi(S_2)\\
&=\frac{1}{\max\braces{1,\psi_n}}\dn(S_1,S_2)\Pi(S_1)\Pi(S_2),
\end{aligned}
\end{equation*}
where inequality $(i)$ follows from Lemma \ref{lem_isoperimetric}, inequality $(ii)$ follows from the fact $\log(1+x)\leq x$ for all $x>0$, and inequality $(iii)$ holds  since $1\leq \max\braces{1,\psi_n}$. 
Then the boundary version of the above isoperimetric inequality is easily implied if we take the limit $S_1\to B$, $S_2\to K\setminus B$ and $S_3\to \partial B$. So at $t=\psi_n^{-2}$ we have 
\begin{equation}
    \Pi_t^+(B)\geq \frac{1}{\max\braces{1,\psi_n}}\Pi_t(B)\parenth{1-\Pi_t(B)}.
\end{equation}
As a result, we can continue to bound Eq.~\eqref{eq_boundary_measure_I} by 
\begin{equation}
\begin{aligned}
\Pi^+(B)&\geq \Exs\braces{\Pi^+_{t}(B)}\geq \frac{1}{\max\braces{1,\psi_n}}\Exs\braces{\Pi_t(B)\parenth{1-\Pi_t(B)}}\\
&=\frac{1}{\max\braces{1,\psi_n}}\Exs\braces{\var_{\pi_t}\parenth{\mathbf{1}_B}}\\
&\overset{(i)}\geq \frac{1}{\max\braces{1,\psi_n}}\frac{1}{\brackets{2+C_P(\Pi)\cdot\psi_n^{-2}}}\var_{\Pi}(\mathbf{1}_B)\\
&\overset{(ii)}\geq \frac1{6\max\braces{1,\psi_n}} \Pi(B)\parenth{1-\Pi(B)},
\end{aligned}
\end{equation}
where inequality $(i)$ is an application of Lemma \ref{lem_evolution_function} with $t=\psi_n^{-2}$, and inequality $(ii)$ follows from the fact that Poincar\'e constant and the isoperimetric constant are closely related due to Cheeger's inequality~\cite{cheeger1970lower} and Buser and Ledoux~\cite{buser1982note,ledoux2004spectral}. Namely, for any probability measure $\mu$ on $\real^n$, $\frac{1}{4}\leq \psi_\mu^2 / C_P(\mu) \leq \pi$.
Then $C_P(\Pi)\leq 4\psi_n^2$ since $\Pi$ is isotropic and logconcave.
\end{proof}

With Lemma \ref{lem_homeomorphism} we know that $(K,\dn)$ induces the usual Euclidean topology, and this indicates that the $\sigma$-algebra over which the probability measure $\Pi$ is defined is the same as the Borel $\sigma$-algebra induced by $\dn$-topology.  Let $\cB(K)$ denotes Borel $\sigma$-algebra induced by Euclidean topology over $K$, assume $\Pi$ is a probability measure supported on $K$.

Having proved the boundary version of the isoperimetric inequality (Lemma \ref{lem_isoperimetric_boundary}), we attempt to prove the corresponding isoperimetric inequality under the decomposition $K=S_1\sqcup S_2\sqcup S_3$ as Lemma \ref{lem_isoperimetric_weakly}. For reasons talked above, we only need to prove Lemma \ref{lem_isoperimetric_weakly} assuming that the distribution $\Pi$ is isotropic.

\begin{proof}[Proof of Lemma \ref{lem_isoperimetric_weakly}]
Based on the discussion in Appendix \ref{appendix_reduction_to_isotropy}, it suffices to consider isotropic logconcave $\Pi$. Let $\dn$ denote the following mixed metric
\begin{equation}
    \dn(x,y)\defn \max\braces{(\log2)\vecnorm{x-y}{2},\dH_K(x,y)}.
\end{equation}
If $\dn(S_1,S_2)=0$, then the isoperimetric inequality is trivial. We can assume that $\dn(S_1,S_2)>0$. Let $\rho$ be defined as 
\begin{equation*}
    \rho(x)\defn \frac{\dn(x,S_1)}{\dn(x,S_2)+\dn(x,S_1)},
\end{equation*}
where we define $\dn(x,S)\defn \inf_{z\in S} \dn(x,z)$. We bound $\abss{\nabla \rho(x)}$, the modulus of the gradient of $\rho$, as follows
\begin{equation*}
\begin{aligned}
\frac{\abss{\rho(x)-\rho(y)}}{\dn(x,y)}&=\frac{1}{\dn(x,y)}
\frac{\abss{\dn(y,S_2)\brackets{\dn(x,S_1)-\dn(y,S_1)}+\dn(y,S_1)\brackets{\dn(y,S_2)-\dn(x,S_2)}}}
{\brackets{\dn(x,S_2)+\dn(x,S_1)}\brackets{\dn(y,S_2)+\dn(y,S_1)}}\\
&\overset{(i)}\leq\frac{1}{\dn(x,y)}\cdot
\frac{{\dn(y,S_2)\dn(x,y)+\dn(y,S_1)\dn(x,y)}}
{\brackets{\dn(x,S_2)+\dn(x,S_1)}\brackets{\dn(y,S_2)+\dn(y,S_1)}}
\overset{(ii)}\leq \frac{1}{{\dn(S_1,S_2)}}.
\end{aligned}
\end{equation*}
Inequality $(i)$ follows from the fact that $\abss{d(x,S)-d(y,S)}\leq d(x,y)$ for any set $S$, which is a consequence of the triangle inequality
    $\inf_{z\in S} d(x,z)\leq \inf_{z\in S} \brackets{d(x,y)+d(y,z)}$.
Inequality $(ii)$ follows from $\dn(S_1,S_2)\leq \dn(x,S_1)+\dn(x,S_2)$.
As a result, the modulus of the gradient of $\rho$ is bounded uniformly as follows
\begin{align}
\label{eq:modulus_gradient_bound}
\abss{\nabla \rho(x)}\defn \underset{d(x,y)\to 0^+}{\lim\sup}\frac{\abss{\rho(x)-\rho(y)}}{d(x,y)}\leq \frac{1}{\dn(S_1,S_2)}, \forall x \in K.  
\end{align}
Following an argument in the proof of Theorem 2.6 in~\cite{lovasz1993random}, we may assume that $S_1$ and $S_2$ are open without loss of generality. This avoids the necessity to deal to boundaries of $S_1$ and $S_2$. Applying the co-area formula in Lemma \ref{lem_coarea} over $(K,\dn,\Pi)$, we obtain:
\begin{equation*}
\begin{aligned}
\Pi(S_3)&\overset{(i)}\geq \dn(S_1,S_2)\int_{S_3}\abss{\nabla \rho(x)}\Pi(dx) \\
&\overset{(ii)}=\dn(S_1,S_2)\int_K\abss{\nabla\rho(x)}\Pi(dx)\\
&\overset{(iii)}\geq \dn(S_1,S_2)\int_{0}^{1}\Pi^+\braces{x\in X|\rho(x)>t}dt\\
&\overset{(iv)}\geq \dn(S_1,S_2)\int_{0}^1 \frac{1}{6\max\braces{1,\psi_n}}\Pi\braces{\rho(x)>t}\cdot \Pi\braces{\rho(x)\leq t}dt\\
&\overset{(v)}\geq \frac{\dn(S_1,S_2)}{6\max\braces{1,\psi_n}}\Pi(S_1)\Pi(S_2).
\end{aligned}
\end{equation*}
Inequality $(i)$ follows from Eq.~\eqref{eq:modulus_gradient_bound}. Equality $(ii)$ follows from $\abss{\nabla \rho (x)}=0$ on open $S_1,S_2$. Inequality $(iii)$ applies co-area formula and the fact that $\Pi^+\parenth{B}$ is $0$ for $B=\varnothing,K$. Inequality $(iv)$ applies the isoperimetric inequality with boundary measure in Lemma \ref{lem_isoperimetric_boundary}. Inequality $(v)$ holds since for any $t\in (0,1)$, we have
\begin{equation*}
    S_1\subseteq \braces{x\in K|\rho(x)\leq t} \quad\text{and}\quad S_2\subseteq \braces{x\in K|\rho(x)>t}.
\end{equation*}
\end{proof}


\section{Proofs of Main Results}\label{sec_conductance_proofs}
In this section, we prove our upper bounds of mixing times of \softdw. In Section \ref{sec_mixing_strongly}, we focus on the \softdw applied to $\alpha$-strongly logconcave and $\beta$-log-smooth distributions truncated on $K$, and we prove Theorem \ref{th_main} with its corollaries \ref{cor_logarithmic}, \ref{cor_m<n} and \ref{cor_Lewis}. In Section \ref{sec_mixing_weakly}, we remove the assumption of $\alpha$-strong logconcaveness, and prove Theorem \ref{th_weakly} with its corollaries \ref{cor_logarithmic_weakly} and \ref{cor_Lewis_weakly}. Finally, in Section \ref{sec_prob_ball_intersection}, we go beyond worst-case analysis and prove Theorem \ref{th_prob_ball_intersection}.

\subsection{Mixing for Strongly Logconcave Distributions}\label{sec_mixing_strongly}

We first prove Theorem \ref{th_main}, where we lower bound the conductance and apply Lemma \ref{lem_Lovas} to obtain the desired mixing time upper bound. In the proof we use the transition overlap (Lemma \ref{lem_TV_control}) and apply our combined isoperimetry on $\alpha$-strongly logconcave distributions. 

\begin{proof}[proof of Theorem \ref{th_main}]
Set the step-size $r$ indicated in Lemma \ref{lem_TV_control}, we try to bound the conductance of the Markov chain: For any partition $K=A_1\sqcup A_2$, we prove the following inequality
\begin{equation}\label{eq_final_partition}
    \int_{A_1}\cT_u(A_2)\pi(u)du\geq \frac{C}{\sqrt{n\parenth{\crossbd+\frac{\euclidbd}{\alpha}}}}\min\braces{\Pi(A_1),\Pi(A_2)},
\end{equation}
for some absolute constant $C$. To prove Eq.~\eqref{eq_final_partition}. We define two bad conductance subsets of $A_1,A_2$ to be:
\begin{equation*}
A_1'\defn\braces{u\in  A_1\bigg|\cT_u(A_2)<\frac{1}{10}},\quad A_2'\defn \braces{u\in  A_2\bigg|\cT_u(A_1)<\frac{1}{10}}.
\end{equation*}
We divide the proof into two scenarios. If $\Pi(A_1')\leq \frac{1}{2}\Pi(A_1)$ or $\Pi(A_2')\leq \frac{1}{2}\Pi(A_2)$. Since $\Pi$ is the stationary distribution of $\cT$, we have $\int_{A_1}\cT_x(A_2)\pi(x)dx=\int_{A_2}\cT_x(A_1)\pi(x)dx$. So we may assume $\Pi(A_1')\leq \frac{1}{2}\Pi(A_1)$ without loss of generality, and we have
\begin{equation*}
\int_{A_1}\cT_x(A_2)\pi(x)dx\geq \int_{A_1\setminus A_1'} \cT_x(A_2)\pi(x)dx
\geq \frac{1}{10}\Pi(A_1\setminus A_1')\geq \frac{1}{20}\min\braces{\Pi(A_1),\Pi(A_2)}.
\end{equation*}
Thus Eq.~\eqref{eq_final_partition} is verified in this scenario. 

Now we deal with the case $\Pi(A_1')>\frac12\Pi(A_1)$ and $\Pi(A_2')>\frac12\Pi(A_2)$. For any $u\in A_1'$, $v\in A_2'$, by definition of $A_1'$ and $A_2'$, we have
\begin{equation*}
\begin{aligned}
    \vecnorm{\cT_u-\cT_v}{TV}&\geq \cT_u(A_1)-\cT_v(A_1)=1-\cT_u(A_2)-\cT_v(A_1)\\
    &\geq 1-\frac{1}{10}-\frac{1}{10}>\frac45.
\end{aligned}
\end{equation*}
Thus, by Lemma \ref{lem_TV_control} and the symmetry between $x,y$, we have
\begin{equation*}
\begin{aligned}
    &\vecnorm{x-y}{G(x)}>\frac{r}{10\sqrt{n}},\quad \vecnorm{x-y}{G(y)}>\frac{r}{10\sqrt{n}}\\
   \Rightarrow&\frac{r^2}{100n}< \min\braces{\vecnorm{x-y}{G(x)}^2,\vecnorm{x-y}{G(y)}^2}\\
   &\frac{r^2}{100n}\overset{(i)}<C_K d_K(x,y)^2+C_E\vecnorm{x-y}2^2\overset{(ii)}\leq 
   \parenth{C_K+\frac{C_E}{(\log2)^2\alpha}}d'(x,y)^2,
\end{aligned}
\end{equation*}
where inequality $(i)$ holds due to our assumption on $G$, and inequality $(ii)$ holds since we define the new metric $d'(u,v)\defn \max\braces{d_K(u,v),\log(2)\sqrt{\alpha}\|u-v\|_2}$. We also have:
\begin{equation}\label{eq_log_final_ergodic}
\begin{aligned}
    \int_{A_1}\cT_u(A_2)\pi(u)du&=\frac12 \brackets{\int_{A_1}\cT_u(A_2)\pi(u)du+\int_{A_2}\cT_u(A_1)\pi(u)du}\\
    &\geq \frac12\brackets{\int_{A_1\setminus A_1'}\cT_u(A_2)\pi(u)du+\int_{A_2\setminus A_2'}\cT_u(A_1)\pi(u)du}\\
    &\geq \frac{1}{20}\Pi(K\setminus (A_1'\cup A_2')).
\end{aligned}
\end{equation}

Since we assumed that $f(x)$ is $\alpha$-strongly convex, $f(x)-\frac{\alpha}{2}\vecnorm{x-x^\star}2^2$ is a convex function where $x^\star$ is any point in $K$. As a result, for $x\in K$, the probability density function of $\Pi$ can be written as: $\pi(x)\propto \exp\parenth{-\frac{\alpha}{2}\vecnorm{x-x^\star}2^2}\phi(x)$, where $\phi(x)=\exp\parenth{-f(x)+\frac{\alpha}{2}\vecnorm{x-x^\star}2^2}$ is logconcave. Thus we can apply the isoperimetric inequality in Lemma \ref{lem_isoperimetric}, and we have 
\begin{equation}\label{eq_log_final_iso}
\Pi(K\setminus (A_1'\cap A_2'))\geq d'(A_1', A_2') \Pi(A_1')\Pi(A_2').
\end{equation}    
Now insert Eq.~\eqref{eq_log_final_iso} into Eq.~\eqref{eq_log_final_ergodic}, applying the lower bound of $d'(A_1',A_2')$, we deduce that: 
\begin{equation*}
\begin{aligned}
    \int_{A_1}\cT_u(A_2)\pi(u)du&\geq \frac{1}{20}d'(A_1', A_2') \Pi(A_1')\Pi(A_2')\\
    &\overset{(i)}\geq \frac{r}{200\times \sqrt{3n(\crossbd+\frac{\euclidbd}{\alpha})}}\cdot \frac12\Pi(A_1)\cdot\frac12\Pi(A_2)\\
    &\overset{(ii)}\geq \frac{r}{800\times\sqrt{3n(\crossbd+\frac{\euclidbd}{\alpha})}}\cdot\frac12 \min\braces{\Pi(A_1),\Pi(A_2)},
\end{aligned}
\end{equation*}
where inequality $(i)$ due to the assumption $\Pi(A_i')\geq \frac12 \Pi(A_i)$ for $i=1,2$. Inequality $(ii)$ holds because $\Pi(A_1)+\Pi(A_2)=1$. 

Combining the two situations, we proved that the conductance for the Markov chain $\TT$ satisfies $\Phi\geq c \sqrt{\crossbd+\frac{\euclidbd}{\alpha}}$ for some absolute constant $c>0$ since the step-size $r$ is a fixed constant. In order to use Lemma \ref{lem_Lovas}, we need to use the conductance $\barPhi$ for the Lazy version $\widebar\TT$, and this at most shrinks the conductance by a constant factor of $2$.

According to Lemma \ref{lem_Lovas}, there exists an absolute constant $C>0$ such that for all error tolerance $\epsilon>0$ and $M$-warm initial distribution $\mu_0$, the mixing time can be bounded as:
\begin{equation*}
T_{\mathrm{mix}} (\epsilon,\mu_0)\leq \frac{2}{\barPhi^2}\log\parenth{\frac{\sqrt{M}}\epsilon}
\leq \frac{8}{\Phi^2}\log\parenth{\frac{\sqrt{M}}{\epsilon}}\leq C\cdot   n\parenth{\crossbd+\frac{\euclidbd}{\alpha}}\log\parenth{\frac{\sqrt{M}}{\epsilon}}.
\end{equation*}
\end{proof}

Next, we derive its corollaries. For Corollary \ref{cor_logarithmic} and \ref{cor_m<n}, we notice that Corollary \ref{cor_m<n} is directly implied by Corollary \ref{cor_logarithmic} by letting $m=o(n)$. In order to prove Corollary \ref{cor_logarithmic}, we need to show that the \softlog $G$ in Definition \ref{def_logarithmic} setting the regularization size $\lambda\defn \beta$ satisfies all the conditions in Theorem \ref{th_main}. We divide these properties into the following two lemmas, and Corollary \ref{cor_logarithmic} is directly implied by the following two lemmas.

\begin{lemma}\label{lem_logarithmic_sc}
Given a (possibly unbounded) polytope $K=\braces{x|Ax>b}$ for $A\in\real^{m\times n}$ and $b\in\real^m$, the \softlog $G$ in Definition \ref{def_logarithmic} is SSC, ASC and LTSC.
\end{lemma}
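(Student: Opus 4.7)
The plan is to reduce everything about $G = H + \lambda I_n$ to the corresponding property of the unregularized logarithmic metric $H(x) = A_x^\top A_x$, for which the strong/average/lower-trace self-concordance bounds are already established in prior work (e.g., \cite{laddha2020strong, pmlr-v247-kook24b}). The key structural observation is that adding a constant $\lambda I_n$ does not affect derivatives, so $\deri G(x)[h] = \deri H(x)[h]$ and $\deri^2 G(x)[h,h] = \deri^2 H(x)[h,h]$; moreover $G(x) \succeq H(x)$ and $G(x) \succeq \lambda I_n$, which gives the operator inequalities $G(x)^{-1} \preceq H(x)^{-1}$ and $\|h\|_{H(x)}^2 \leq \|h\|_{G(x)}^2$ that will be used throughout.

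For SSC, I would invoke the elementary lemma that the map $A \mapsto \trace(AMAM)$ is monotone non-decreasing on PSD matrices for any symmetric $M$ (because its differential $X \mapsto 2\trace(XMAM)$ is non-negative on PSD $X$, since $MAM \succeq 0$ whenever $A\succeq 0$). Applying this with $A = G(x)^{-1} \preceq H(x)^{-1}$ and $M = \deri H(x)[h]$ gives
\begin{equation*}
\|G(x)^{-1/2}\deri G(x)[h]G(x)^{-1/2}\|_F \leq \|H(x)^{-1/2}\deri H(x)[h]H(x)^{-1/2}\|_F \leq 2\|h\|_{H(x)} \leq 2\|h\|_{G(x)},
\end{equation*}
where the middle inequality is SSC for the unregularized log-barrier. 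For LTSC, a direct computation yields $\deri^2 H(x)[h,h] = 6\,A_x^\top S_x^{-2}\Diag(Ah)^2 A_x \succeq 0$, so $\trace(G(x)^{-1}\deri^2 G(x)[h,h]) \geq 0 \geq -\|h\|_{G(x)}^2$ trivially; the strong version (SLTSC) against an auxiliary PSD matrix $\overline G$ follows by the same PSD argument.

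The main obstacle is ASC, since the proposal covariance is $\frac{r^2}{n}G(x)^{-1}$ rather than $\frac{r^2}{n}H(x)^{-1}$ but we still need two-sided concentration of $\vecnorm{z-x}{G(z)}^2 - \vecnorm{z-x}{G(x)}^2 = (z-x)^\top(H(z) - H(x))(z-x)$. My plan is to Taylor-expand $H(z) - H(x) = \deri H(x)[z-x] + \frac12 \deri^2 H(\xi)[z-x,z-x]$ along the segment $[x,z]$, bound the linear term via Hanson--Wright/Hutchinson-style concentration driven by the Frobenius bound from SSC, and control the quadratic remainder using the fact that $H$ is self-concordant (so that $H(\xi) \asymp H(x)$ along a short segment in the local norm). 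Because $G(x)^{-1} \preceq H(x)^{-1}$, a sample $z$ from $\Normal(x,\tfrac{r^2}{n}G(x)^{-1})$ satisfies $\vecnorm{z-x}{H(x)}^2 \leq \vecnorm{z-x}{G(x)}^2 = O(r^2)$ with high probability, so the same radius control that drives ASC for $H$ drives it for $G$. The argument adapts essentially verbatim from the ASC proof in \cite{pmlr-v247-kook24b}, and the strongly average version (SASC) follows by the same chain of inequalities once an auxiliary PSD $\overline G$ is absorbed into the covariance (since our use of $G(x)^{-1} \preceq H(x)^{-1}$ is preserved when adding any PSD perturbation).
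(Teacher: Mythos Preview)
Your SSC argument and the self-concordance step in your ASC sketch both write $H(x)^{-1}$ and invoke SSC for the unregularized log-barrier $H$. This is exactly where the lemma's generality bites: the paper explicitly allows the polytope to be unbounded and even $m<n$, so $H(x)=A_x^\top A_x$ can be singular, and the prior-work SSC/SLTSC statements you cite (e.g.\ Lemma~\ref{lem_log_SLTSC_SASC}) are stated only for invertible $H$. The paper repairs this with a limit argument: it augments $K$ by the artificial box constraints $e_j^\top x>\gamma$, so that the augmented log-barrier $H^{(\gamma)}$ is PD, establishes SSC/LTSC for $G^{(\gamma)}=H^{(\gamma)}+\lambda I$ via the same monotonicity lemma you use (Lemma~\ref{lem_matrix_inequality}), and then lets $\gamma\to\infty$. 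Your inequality chain is the right one once invertibility is secured; you are just missing this regularization/limit step.

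Your LTSC argument is actually cleaner than the paper's. Computing $\deri^2 H(x)[h,h]=6\sum_i\tfrac{(a_i^\top h)^2}{s_{x,i}^4}a_ia_i^\top\succeq 0$ and concluding $\trace(G^{-1}\deri^2 G[h,h])\ge 0$ bypasses invertibility entirely and gives SLTSC for free; the paper instead routes through SLTSC of $H^{(\gamma)}$ and the additivity fact (Fact~\ref{fact_addition}).

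For ASC, your plan is morally right but two details need fixing. First, the ``linear term'' $(z-x)^\top\deri H(x)[z-x](z-x)$ is a degree-$3$ Gaussian polynomial, so Hanson--Wright does not apply; the paper uses the hypercontractive tail bound for Gaussian polynomials (Lemma~\ref{lem_Gauss_concentration}) together with explicit second-moment bounds under $\sum_i\hat a_i\hat a_i^\top\preceq I$ (Lemma~\ref{lem_Gauss_Exs_bound}), where $\hat a_i=G(x)^{-1/2}a_i/s_{x,i}$ and this projection condition is precisely $G^{-1/2}HG^{-1/2}\preceq I$ from the regularization. Second, your remainder control via ``self-concordance of $H$'' again assumes $H$ is PD. The paper avoids this by expanding $(z-x)^\top(H(z)-H(x))(z-x)$ exactly as $\sum_i(\hat a_i^\top\xi)^2\bigl[(1+\tfrac{r}{\sqrt n}\hat a_i^\top\xi)^{-2}-1\bigr]$, splitting into degree-$3$ and degree-$4$ polynomial pieces plus a bounded multiplicative correction, and concentrating each piece directly---no $H^{-1}$ ever appears.
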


\begin{lemma}\label{lem_logarithmic_metric_conversion}
Given a (possibly unbounded) polytope $K=\braces{x|Ax>b}$ for $A\in\real^{m\times n}$ and $b\in\real^m$, the \softlog $G$ in Definition \ref{def_logarithmic} satisfies: for all $x,y\in K$
\begin{equation}
\min\braces{\vecnorm{y-x}{G(x)},\vecnorm{y-x}{G(y)}}\leq m d_K(x,y)^2+\beta \vecnorm{x-y}{2}^2.
\end{equation}
Moreover, we have $E(x,G(x),1)\subseteq K$.
\end{lemma}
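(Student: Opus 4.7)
My plan is to reduce the first inequality to a one-dimensional computation on the line through $x$ and $y$. Since $G(x) = A_x\tp A_x + \beta I_n$, expanding gives
\[
\vecnorm{y-x}{G(x)}^2 = \sum_{i=1}^m \parenth{\frac{a_i\tp(y-x)}{s_i(x)}}^2 + \beta\vecnorm{y-x}{2}^2,
\]
where $s_i(x) = a_i\tp x - b_i$. The Euclidean term already matches the right-hand side of the claim (interpreted with squared norms on the left, as in Theorem~\ref{th_main}), so it will suffice to show $\vecnorm{A_x(y-x)}{2}^2 \leq m\,d_K(x,y)^2$. In fact the minimum in the statement turns out to be unnecessary, and the stronger per-point bound holds.

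The core step is to parametrize the line by $z(t)=x+t(y-x)$ and associate to each $i$ with $a_i\tp(y-x)\neq 0$ the unique root $t_i = -s_i(x)/a_i\tp(y-x)$ of $a_i\tp z(t)=b_i$, which yields the identity $a_i\tp(y-x)/s_i(x) = -1/t_i$. Let $t_p\in[-\infty,0)$ and $t_q\in(1,+\infty]$ denote the exit times of $z(\cdot)$ from $K$, so that $p=z(t_p)$ and $q=z(t_q)$; unbounded directions of $K$ are encoded by $t_p=-\infty$ or $t_q=+\infty$. From Definition~\ref{def_cross_ratio_unbounded} I obtain
\[
d_K(x,y) \;=\; \frac{t_q - t_p}{(-t_p)(t_q-1)} \;\geq\; \max\braces{\frac{1}{-t_p},\;\frac{1}{t_q-1}},
\]
with each reciprocal read as $0$ in the unbounded case. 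Since $z(t)\in K$ for $t\in(t_p,t_q)$, every $t_i$ lies in $(-\infty,t_p]\cup[t_q,+\infty)$, which gives the per-constraint bound $1/t_i^2\leq d_K(x,y)^2$ by a short case split: if $t_i\leq t_p<0$ then $1/t_i^2\leq 1/t_p^2\leq d_K(x,y)^2$; if $t_i\geq t_q>1$ then $0<t_i-1\leq t_i$ yields $1/t_i^2\leq 1/(t_i-1)^2\leq 1/(t_q-1)^2\leq d_K(x,y)^2$. Summing over $i$ (with the convention that indices with $a_i\tp(y-x)=0$ contribute $0$) gives $\vecnorm{A_x(y-x)}{2}^2=\sum_i 1/t_i^2\leq m\,d_K(x,y)^2$, closing the first claim.

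For the inclusion $E(x,G(x),1)\subseteq K$, I will take any $z$ with $(z-x)\tp G(x)(z-x)\leq 1$ and use $A_x\tp A_x\preceq G(x)$ to deduce, for each $i$, that $\parenth{a_i\tp(z-x)/s_i(x)}^2\leq 1$, hence $a_i\tp z-b_i\geq 0$. The strict inequality required by $K$ being open will come from the regularizer: equality would force $|a_i\tp(z-x)|=s_i(x)>0$, so $z\neq x$, whereupon the extra term $\beta\vecnorm{z-x}{2}^2>0$ would push the quadratic form strictly above $1$, a contradiction. The only mildly delicate point in the whole proof is the bookkeeping for unbounded exit times; but the extended cross-ratio of Definition~\ref{def_cross_ratio_unbounded} absorbs those limits cleanly (if $t_p=-\infty$ then no index $i$ can satisfy $a_i\tp(y-x)>0$, for otherwise the line would exit $K$ at the corresponding $t_i<0$, and symmetrically for $t_q=+\infty$), so the case analysis above applies without change.
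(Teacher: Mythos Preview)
Your argument is correct and in fact proves more than the paper does. The paper proceeds by first establishing that the unregularized metric $H(x)=A_x\tp A_x$ is $\bar\nu$-symmetric with $\bar\nu=m$ (Definition~\ref{def_bar_nu}), and then invokes the general Fact~\ref{fact_cross_ratio_by_local} to obtain $\min\{\vecnorm{y-x}{H(x)}^2,\vecnorm{y-x}{H(y)}^2\}\leq m\,d_K(x,y)^2$. Your direct line-parametrization instead yields the stronger per-point bound $\vecnorm{y-x}{H(x)}^2\leq m\,d_K(x,y)^2$ with no minimum; notably, the paper explicitly remarks that it could not recover this one-sided inequality from the $\bar\nu$-symmetry route and had to settle for the min. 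The trade-off is that the paper's detour through $\bar\nu$-symmetry is reusable for other metrics (it is exactly what drives Corollary~\ref{cor_Lewis} for the Lewis metric), whereas your argument is tailored to the logarithmic barrier. Your treatment of the inclusion $E(x,G(x),1)\subseteq K$ is also slightly cleaner than the paper's: you use the regularizer $\beta I$ to force strict inequality, which the paper's $H$-based argument does not quite deliver when a single constraint is active on the boundary of the Dikin ellipsoid.
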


The proof of Lemma \ref{lem_logarithmic_sc} is left to Appendix \ref{appendix_logarithmic}.  If the polytope $K$ is bounded, these properties are already implied in \cite{pmlr-v247-kook24b} since the unregularized logarithmic metric $H(x)\defn A_x A_x\tp$ for $K$ is an invertible local metric. \cite{pmlr-v247-kook24b} proved that $H(x)$ is SSC, SASC and SLTSC, thus $G(x)\defn H(x)+\lambda I$ is SSC, ASC, LTSC by the additivity of these properties. However, we emphasize that in this paper, we do not require the polytope $K$ to be bounded, and we even allow $m<n$, thus $H(x)= A_xA_x\tp$ may not be invertible, and those self-concordance properties are not well-defined for non-invertible matrices, thus we can not use the additivity to prove those self-concordance properties for $G(x)=H(x)+\lambda I$.

To be rigorous, we provide the proof of SSC, LTSC, and ASC for the \softlog $G$ defined in Definition \ref{def_logarithmic} in Appendix \ref{appendix_logarithmic}. The SSC and LTSC of \softlog is proved by a limit argument: we add artificial constraints so that $H(x)\defn A_xA_x\tp$ is invertible and let these constraints vanish. We proved ASC of \softlog by concentration of Gaussian polynomials, which also appears in \cite{sachdeva2016mixing}, and the intuition is that adding a regularization term $\lambda I$ only makes the Gaussian concentration tighter. 

Now we prove Lemma \ref{lem_logarithmic_metric_conversion}, we introduce  the concept of $\bar\nu$-symmetry from \cite{laddha2020strong} for convenience. The reason we need Definition \ref{def_bar_nu} is that we can upper bound the $\bar\nu$-symmetric  local metric $H$  by $\sqrt{\bar\nu}d_K$, and this important property is summarized as Fact~\ref{fact_cross_ratio_by_local}.

\begin{definition}[from \cite{laddha2020strong}]\label{def_bar_nu}
For any convex set $K\subseteq \real^n$, we define a PSD matrix function $H:K\to \mathbb{S}_{+}^n$, and let $E(x,H(x),r)$ denote the following ellipsoid
\begin{equation*}
    E(x,H(x),r)\defn \{z|(z-x)\tp H(x) (z-x)\leq r^2\}. 
\end{equation*}
We define $H$ to be $\bar\nu$-symmetric if for any $x\in K$, we have 
\begin{equation*}
    E(x,H(x),1)\subseteq K\cap (2x-K)\subseteq E(x,H(x),\sqrt{\bar\nu})
\end{equation*}
\end{definition}

\begin{fact}[revised from Lemma 2.3 in \cite{laddha2020strong}]\label{fact_cross_ratio_by_local}
We use $d_K(x,y)$ to denote the extended cross-ratio distance (as defined in Definition \ref{def_cross_ratio_unbounded}) between $x$ and $y$ in the convex set $K$, assume $H$ is a $\bar\nu-$symmetric local metric, then we have 
\begin{equation}\label{eq_cross_ratio_by_local}
    d_K(x,y)\geq \frac1{\sqrt{\bar\nu}} \min\braces{\vecnorm{x-y}{H(x)},\vecnorm{x-y}{H(y)}}.
\end{equation}
\end{fact}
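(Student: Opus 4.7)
The plan is to follow the bounded-case argument of Laddha--Lee--Vempala and carefully extend it to the cross-ratio distance in the sense of Definition~\ref{def_cross_ratio_unbounded}. Set $u \defn y-x$ and parametrize the chord along $u$ by
\[
t_+ \defn \sup\{t \geq 0 : x+tu \in K\}, \qquad t_- \defn \sup\{t \geq 0 : x-tu \in K\};
\]
since $y \in K$ we have $t_+ \geq 1$, and either quantity may equal $+\infty$. When both are finite, the boundary intersections in Definition~\ref{def_cross_ratio_unbounded} are $q = x + t_+ u$ and $p = x - t_- u$, yielding the clean formula
\[
d_K(x,y) \;=\; \frac{(t_+ + t_-)\,\vecnorm{u}{2}\cdot \vecnorm{u}{2}}{t_-\vecnorm{u}{2}\cdot (t_+-1)\vecnorm{u}{2}} \;=\; \frac{t_+ + t_-}{t_-(t_+-1)}.
\]

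The key observation is that $\bar\nu$-symmetry of $H$ at $x$ implies the symmetrized body $K \cap (2x-K)$ contains the segment $\{x+su : |s| \leq \min(t_+,t_-)\}$ and is enclosed in the ellipsoid $E(x,H(x),\sqrt{\bar\nu})$, so $\min(t_+,t_-)\cdot\vecnorm{u}{H(x)} \leq \sqrt{\bar\nu}$. Reparametrizing the same line from $y$ gives forward/backward distances $s_+ = t_+-1$ and $s_- = t_-+1$, and $\bar\nu$-symmetry at $y$ analogously yields $\min(t_+-1,\,t_-+1)\cdot\vecnorm{u}{H(y)} \leq \sqrt{\bar\nu}$.

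A two-case check closes the argument. If $t_- \leq t_+$, the bound at $x$ combined with the elementary inequality $\frac{t_++t_-}{t_-(t_+-1)}\geq \frac{1}{t_-}$ (equivalent to $t_- \geq -1$) gives $\vecnorm{u}{H(x)} \leq \sqrt{\bar\nu}\,d_K(x,y)$; if $t_+ \leq t_-$, then $t_+ - 1 \leq t_- + 1$, and the bound at $y$ combined with $\frac{t_++t_-}{t_-(t_+-1)}\geq \frac{1}{t_+-1}$ (equivalent to $t_+ \geq 0$) gives $\vecnorm{u}{H(y)} \leq \sqrt{\bar\nu}\,d_K(x,y)$. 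For the extended cases of Definition~\ref{def_cross_ratio_unbounded}: if $t_+ = +\infty$ the definition degenerates to $d_K(x,y) = 1/t_-$, which matches the bound at $x$ exactly; $t_- = +\infty$ is symmetric; and if both are infinite the whole line lies in $K$, forcing the symmetrized ellipsoid to absorb arbitrarily long multiples of $u$ and hence $H(x)u = 0$, so both local norms vanish. The only real subtlety is this bookkeeping of degenerate limits; alternatively, one may apply the bounded case to $K \cap \ball(0,R)$ and pass $R \to \infty$, using the equivalent characterization $d_K(x,y) = \lim_{R\to\infty} d_{K\cap \ball(0,R)}(x,y)$ noted after Definition~\ref{def_cross_ratio_unbounded}.
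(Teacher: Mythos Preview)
Your proposal is correct and follows essentially the same approach as the paper's proof: both use $\bar\nu$-symmetry at whichever of $x,y$ has the shorter symmetrized chord segment to bound the corresponding local norm, then compare with the cross-ratio formula, and both handle the unbounded cases by the same degeneration/limit argument. Your parametrization via $t_+,t_-$ and case split $t_- \le t_+$ versus $t_+ \le t_-$ is slightly cleaner than the paper's split $\|p-x\|_2 \le \|y-q\|_2$ versus $\|y-q\|_2 < \|p-x\|_2$ (i.e., $t_- \le t_+-1$ versus $t_- > t_+-1$), but the two arguments are conceptually identical.
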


The proof of Fact~\ref{fact_cross_ratio_by_local} is postponed to Appendix \ref{sec_cross_ratio_by_local}, and it differs from the proof in \cite{laddha2020strong} in two important aspects: First, since we defined the cross-ratio distance in the extended sense, we no longer require $K$ to be a compact convex set, and we divide the proof into several cases. Second, we notice some computational problems in the symmetry argument of \cite{laddha2020strong} and cannot recover the inequality with only $\vecnorm{\cdot}{H(x)}$ on the RHS of Eq.~\eqref{eq_cross_ratio_by_local}, so we provide an weaker inequality with minimum of $\vecnorm{\cdot}{H}$ computed at both $x$ and $y$. For the sake of rigor and completeness, we include the proof in Appendix \ref{sec_cross_ratio_by_local}.

Now we are ready to give the proof of Lemma \ref{lem_logarithmic_metric_conversion} using the $\bar\nu$-symmetry and Fact \ref{fact_cross_ratio_by_local}. 

\begin{proof}[proof of Lemma \ref{lem_logarithmic_metric_conversion}]
We first prove that the unregularized logarithmic metric  $H(x)\defn A_xA_x\tp$ is $\bar\nu$-symmetric with $\bar\nu=m$. 
For all $z$ such that $z\in K\cap (2x-K)$, this is equivalent to for all $i\in[m]$, we have
\begin{equation*}
    a_i\tp z-b_i>0,\quad a_i\tp (2x-z)-b_i>0,
\end{equation*}
and this translates to $\underset{i\in[m]}\max\abss{s_{x,i}^{-1}\parenth{a_i\tp (z-x)}}\leq 1$. Then we have the desired bound:
\begin{equation*}
    \vecnorm{z-x}{H(x)}^2 \leq m\vecnorm{S_x^{-1}A(z-x)}{\infty}^2=m\cdot\underset{i\in[m]}\max\abss{s_{x,i}^{-1}\parenth{a_i\tp (z-x)}}\leq m,
\end{equation*}
On the other hand, if we assume $z\in E(x,H(x),1)$, then we have 
\begin{equation*}
  \underset{i\in[m]}\max\abss{s_{x,i}^{-1}\parenth{a_i\tp (z-x)}}=\vecnorm{S_x^{-1}A(z-x)}{\infty}\leq \vecnorm{S_x^{-1}A(z-x)}{2}\leq 1.
\end{equation*}
Thus, for all $i\in [m]$, $a_i\tp z-b_i>0$ and $a_i\tp (2x-z)-b_i>0$. 
We have $z\in K\cap (2x-K)$. In conclusion, $H$ is $\bar\nu$-symmetric with $\bar\nu=m$. As a result of Fact \ref{fact_cross_ratio_by_local}, for any $x,y\in K$, we have
\begin{equation*}
    \min\braces{\vecnorm{y-x}{H(x)}^2,\vecnorm{y-x}{H(y)}^2}\leq m d_K(x,y)^2.
\end{equation*}
By definition of $G$ and setting $\lambda\defn \beta$, we have 
\begin{equation*}
\begin{aligned}
\min\braces{\vecnorm{y-x}{G(x)}^2,\vecnorm{y-x}{G(y)}^2}&= \min\braces{\vecnorm{y-x}{H(x)}^2,\vecnorm{y-x}{H(y)}^2}+\beta \vecnorm{x-y}{2}^2\\
&\leq m\cdot d_K(x,y)^2+\beta \vecnorm{x-y}{2}^2.
\end{aligned}
\end{equation*}
Moreover, since we proved $H$ is $\bar\nu$-symmetric, so we have 
\begin{equation*}
    E(x,G(x),1)\subseteq E(x,H(x),1)\subseteq K\cap (2x-K)\subseteq K.
\end{equation*}
\end{proof}

The remaining task in this section is to prove Corollary \ref{cor_Lewis}. Similarly, we need to verify the conditions in Theorem \ref{th_main} for \reglewis $G$ in Definition \ref{def_Lewis}. These properties are nicely implied in \cite{pmlr-v247-kook24b}. Since the unregularized Lewis metric $H(x)\defn c_1\sqrt{n}(\log m)^{c_2}A_x\tp W_x A_x$ with Lewis weights $w_x$ defined in Eq.~\eqref{eq_Lewis_weights_def}. We list the properties here in Fact \ref{fact_unreg_Lewis_properties}. 

\begin{fact}[Lemma E.5, E.7 and E.12 \cite{pmlr-v247-kook24b}]\label{fact_unreg_Lewis_properties}
There exists positive constants $c_1$ and $c_2$ such that the unregularized Lewis-weights metric $H(x)\defn\sqrt{n} c_1(\log m)^{c_2} A_x\tp W_x A_x$ is SSC, SLTSC, SASC and $\bar\nu$-symmetric with $\bar\nu=O((\log m)^{c_2}n^{3/2})$.
\end{fact}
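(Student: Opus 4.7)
The plan is to establish the four properties by exploiting one common technical core: controlling how the $\ell_q$-Lewis weight matrix $W_x$ varies as $x$ moves inside $K$. Starting from the first-order optimality condition of the Lewis program~\eqref{eq_Lewis_weights_def}, which reads $w_i = \sigma_i(W_x^{c_q/2} A_x)$ with $c_q = 1 - 2/q$ (here $\sigma_i$ denotes the $i$-th leverage score of the reweighted matrix), I would apply implicit differentiation together with the elementary identity $DA_x[h] = -S_x^{-1}\diag(Ah)A_x$ to obtain a linear equation for $DW_x[h]$ whose associated operator has norm at most $c_q < 1$. This system is therefore solvable with norm estimates that are polylogarithmic in $m$ and independent of $m/n$, which is exactly where the choice $q = \Theta(\log m)$ and the $(\log m)^{c_2}$ prefactor enter the statement.

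Equipped with tight bounds on $DW_x[h]$ and $D^2 W_x[h,h]$, I would expand
\begin{equation*}
DH(x)[h] = c_1\sqrt{n}(\log m)^{c_2}\bigl(DA_x[h]^\top W_x A_x + A_x^\top DW_x[h] A_x + A_x^\top W_x DA_x[h]\bigr),
\end{equation*}
rewrite each summand in coordinates where $H(x)^{1/2}$ becomes the identity, and bound the resulting Frobenius norms by sums of squared leverage scores (which total to $n$). This establishes SSC, with the $\sqrt{n}$ prefactor providing the slack needed to absorb the rank. SLTSC follows by the same scheme applied to $D^2 H(x)[h,h]$, splitting it into its manifestly positive and negative parts and using the monotonicity $(H + \bar H)^{-1} \preceq H^{-1}$ with opposite signs on each part. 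SASC then drops out by Hanson--Wright / Gaussian chaos concentration applied to $\|z-x\|_{H(z)}^2 - \|z-x\|_{H(x)}^2$ after a Taylor expansion: when $z - x \sim \Normal(0,\frac{r^2}{n}(H(x) + \bar H(x))^{-1})$, the SLTSC estimate controls the mean and the SSC estimate controls the variance, and taking $r$ to be a small absolute constant shrinks both inside the $2\epsilon r^2/n$ window required by Definition~\ref{def_sc}.

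For $\bar\nu$-symmetry I would invoke the classical Lewis--John approximation: for $q = \Theta(\log m)$ the unit ellipsoid of $A_x^\top W_x A_x$ fits inside $K \cap (2x - K)$ and covers it up to a factor of $O(n^{1/2}\cdot \polylog m)$. Multiplying the metric by $c_1\sqrt{n}(\log m)^{c_2}$ shrinks the unit ellipsoid by $1/\sqrt{n}$, so the inner inclusion $E(x, H(x), 1) \subseteq K \cap (2x-K)$ holds automatically for any sufficiently large absolute constant $c_1$, while the outer symmetry radius picks up an extra $\sqrt{n}$, yielding exactly $\bar\nu = O((\log m)^{c_2} n^{3/2})$. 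The principal obstacle throughout is obtaining the \emph{strong} (as opposed to plain) versions of self- and average self-concordance: the auxiliary PSD matrix $\bar H$ destroys the cancellations that are available when one works with $H^{-1}$ alone, so the derivative estimates on $W_x$ must be quantitative enough that every signed contribution inside $\trace((\bar H + H)^{-1} D^2 H[h,h])$ can be controlled in isolation rather than through a cancellation inside the trace.
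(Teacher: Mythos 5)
You should first be aware that the paper does not prove this statement at all: Fact~\ref{fact_unreg_Lewis_properties} is imported verbatim from Lemmas E.5, E.7 and E.12 of the cited work \cite{pmlr-v247-kook24b}, and the surrounding text only invokes it. So the comparison is really with that reference, and your outline does follow the same general strategy one finds there (implicit differentiation of the Lewis fixed point $w_i=\sigma_i(W_x^{c_q/2}A_x)$ to control $\deri W_x[h]$, leverage-score bookkeeping for SSC, Gaussian-chaos concentration for the average self-concordance, and a Lewis/John rounding argument for symmetry). As a proof, however, it has a concrete gap exactly at the step that carries the difficulty.

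The $\bar\nu$-symmetry paragraph asserts that ``the unit ellipsoid of $A_x^\top W_x A_x$ fits inside $K\cap(2x-K)$,'' and everything downstream (the ``automatic'' inner inclusion after scaling, the clean $\sqrt n$ accounting) rests on it; but this claim is unjustified and is not the statement the Lewis fixed point gives you. For $z$ with $(z-x)^\top A_x^\top W_x A_x(z-x)\le 1$, the only generic bound is $|a_i^\top(z-x)|/s_{x,i}\le w_i^{-1/2}$ (since $A_x^\top W_xA_x\succeq w_i\,a_ia_i^\top/s_{x,i}^2$), and Lewis weights can be far smaller than $1$, so membership in $K$ does not follow. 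What the fixed point does give, via $(a_i/s_{x,i})^\top(A_x^\top W_x^{c_q}A_x)^{-1}(a_i/s_{x,i})=w_i^{2/q}\le 1$, is containment of the unit ellipsoid of $A_x^\top W_x^{c_q}A_x$ with exponent $c_q=1-2/q$; since $W_x\preceq W_x^{c_q}$, that ellipsoid is the \emph{smaller} one, so the containment does not transfer to your metric. Converting it into $E(x,H(x),1)\subseteq K\cap(2x-K)$ for $H(x)=c_1\sqrt n(\log m)^{c_2}A_x^\top W_xA_x$ is precisely where the $\sqrt n(\log m)^{c_2}$ scaling is consumed, and it requires a genuine quantitative comparison, e.g.\ a bound of the form $\max_i\,(a_i/s_{x,i})^\top(A_x^\top W_xA_x)^{-1}(a_i/s_{x,i})\le \sqrt n\,\mathrm{polylog}(m)$; nothing in your sketch supplies it, and your bookkeeping in fact uses the same scaling twice (once, implicitly, to make the inner inclusion true, and once as a mere $\sqrt n$ loss in the outer radius). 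The outer inclusion, by contrast, is the easy half: $\sum_i w_i^{c_q}\lesssim n$ for $q=\Theta(\log m)$ gives $K\cap(2x-K)\subseteq E(x,A_x^\top W_xA_x,O(\sqrt n))$, and scaling then yields $\bar\nu=O(n^{3/2}\mathrm{polylog}\,m)$. A secondary gap: for SLTSC and SASC the uniformity over an arbitrary PSD $\barG$ is only asserted; you correctly flag that $(\barG+H)^{-1}$ destroys the cancellations, but you do not exhibit the decomposition of $\deri^2H(x)[h,h]$ with a trace-controlled negative part, which is the actual content of Lemma E.7 in \cite{pmlr-v247-kook24b}. So the plan has the right shape, but the inner containment with the stated scaling and the $\barG$-uniform second-order bound are assumed rather than proved.
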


\begin{fact}[Lemma D.12 and D.14 \cite{pmlr-v247-kook24b}]\label{fact_addition}
Given PSD matrix functions $G_i$ on $K_i$ for $i=1,\ldots,l$, let $G\defn \sum_{i} G_i$ be PD on $\cap_i K_i$. We have
\begin{itemize}
    \item if $G_i$ is SLTSC on $K_i$, then $G$ is LTSC on $\cap_i K_i$;   
    \item if $l=O(1)$ and $g_i$ is SASC on $K_i$, then $G$ is ASC on $\cap_i K_i$.
\end{itemize}  
\end{fact}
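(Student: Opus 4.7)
The plan is to prove both parts by exploiting precisely why the \emph{strong} variants (SLTSC, SASC) are defined with an arbitrary auxiliary PSD matrix function $\barG$: this extra slack is exactly what is needed to close the additive closure property. The key observation throughout is that the operations involved (matrix sum, second directional derivative $\mathcal{D}^2 G[h,h]$, trace, and the squared local norm $\vecnorm{z-x}{G(\cdot)}^2$) are all linear in $G$, so the defining inequalities decompose neatly component by component.

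For the first bullet (LTSC for the sum), I would fix $x\in\cap_i K_i$ and $h\in\real^n$, write $G=\sum_i G_i$ and $\deri^2 G(x)[h,h]=\sum_i \deri^2 G_i(x)[h,h]$, and split the trace as $\trace\{G(x)^{-1}\deri^2 G(x)[h,h]\}=\sum_i \trace\{G(x)^{-1}\deri^2 G_i(x)[h,h]\}$. For each $i$, set $\barG_i(x)\defn \sum_{j\neq i}G_j(x)$, which is PSD as a sum of PSD matrices. Applying SLTSC of $G_i$ with this choice of $\barG_i$ (so that $G_i+\barG_i=G$) yields $\trace\{G(x)^{-1}\deri^2 G_i(x)[h,h]\}\geq -\vecnorm{h}{G_i(x)}^2$. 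Summing over $i$ and using $\sum_i \vecnorm{h}{G_i(x)}^2=\vecnorm{h}{G(x)}^2$ (again by linearity of $G$ in its summands) gives exactly the LTSC bound for $G$.

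For the second bullet (ASC for the sum), the plan is analogous, but now a union bound enters. Fix $\epsilon>0$. For each $i$, apply SASC of $G_i$ with $\barG_i\defn \sum_{j\neq i}G_j$ and with tolerance $\epsilon/l$ in place of $\epsilon$. Since $G_i+\barG_i=G$, the SASC-sampling distribution $\Normal(x,\frac{r^2}{n}(G_i(x)+\barG_i(x))^{-1})$ coincides with the ASC-sampling distribution $\Normal(x,\frac{r^2}{n}G(x)^{-1})$ for $G$. Taking $r\leq \min_i r^{(i)}_{\epsilon/l}$ (this is where $l=O(1)$ is used so that $\epsilon/l$ remains an absolute constant and the $r^{(i)}_{\epsilon/l}$'s are bounded away from zero uniformly), SASC gives
\begin{equation*}
\Prob_{z\sim\Normal(x,\frac{r^2}{n}G(x)^{-1})}\!\parenth{\abss{\vecnorm{z-x}{G_i(z)}^2-\vecnorm{z-x}{G_i(x)}^2}\leq \tfrac{2(\epsilon/l)r^2}{n}}\geq 1-\tfrac{\epsilon}{l}.
\end{equation*}
A union bound over $i=1,\ldots,l$ yields a single event of probability at least $1-\epsilon$ on which the triangle inequality gives
\begin{equation*}
\abss{\vecnorm{z-x}{G(z)}^2-\vecnorm{z-x}{G(x)}^2}\leq \sum_{i=1}^l \abss{\vecnorm{z-x}{G_i(z)}^2-\vecnorm{z-x}{G_i(x)}^2}\leq l\cdot\tfrac{2(\epsilon/l)r^2}{n}=\tfrac{2\epsilon r^2}{n},
\end{equation*}
which is the ASC inequality for $G$.

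The only delicate point, and thus the main obstacle, is ensuring the sampling distribution that appears in each application of SASC coincides with the one we want for $G$; this is why the definition of SASC must allow an \emph{arbitrary} PSD $\barG$ rather than fix it a priori, and why our proof chooses $\barG_i=\sum_{j\neq i}G_j$. The $l=O(1)$ hypothesis in the second bullet is used solely so that the union bound loses only a constant factor and so that $r_{\epsilon/l}$ remains uniformly bounded away from $0$; without it the step size $r$ could be forced to be vanishingly small.
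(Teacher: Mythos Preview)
Your proof is correct and is exactly the intended argument: use the auxiliary PSD function $\barG_i=\sum_{j\neq i}G_j$ to make the sampling distribution (resp.\ the inverse matrix in the trace) collapse to that of $G$, then sum (resp.\ union-bound) over $i$. The paper itself does not supply a proof of this fact but cites it directly from \cite{pmlr-v247-kook24b}; your argument matches the one given there.
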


With Fact \ref{fact_unreg_Lewis_properties} and \ref{fact_addition}, we provide the proof of Corollary \ref{cor_Lewis} by verifying all the conditions in Theorem \ref{th_main} for \reglewis $G$ with regularization size $\lambda\defn \beta$ in Definition \ref{def_Lewis}. 
\begin{proof}[proof of Corollary \ref{cor_Lewis}]
Define $H(x)\defn c_1\sqrt{n}(\log m)^{c_2}A_x\tp W_x A_x$ as in Fact \ref{fact_unreg_Lewis_properties}, then the \reglewis 
$G$ satisfies $G(x)\defn H(x)+\beta I$. 
We first show that  $G(x)\defn H(x)+\beta I$ is SSC, where $H(x)$ is  the unregularized Lewis metric. For any direction $h\in\real^n$, we have
\begin{equation*}
\begin{aligned}
&\vecnorm{G(x)^{-\frac12}\deri G(x)[h]G(x)^{-\frac12}}{F}= \vecnorm{G(x)^{-\frac12}\deri H(x)[h]G(x)^{-\frac12}}{F}\\
\overset{(i)}\leq& \vecnorm{H(x)^{-\frac12}\deri H(x)[h]H(x)^{-\frac12}}{F}\overset{(ii)}\leq 2\vecnorm{h}{H(x)}\leq 2\vecnorm{h}{G(x)},
\end{aligned}
\end{equation*}
where inequality $(i)$ holds since $G(x)\succeq H(x)$ (see more details in Lemma \ref{lem_matrix_inequality}), and inequality $(ii)$ follows from Fact \ref{fact_unreg_Lewis_properties} that $H$ is SSC. Moreover, since $\beta I$ is a constant matrix function defined over $\real^n$, thus $\beta I$ is trivially SASC and SLTSC by definition. By the additivity in Fact \ref{fact_addition} and $H$ is SASC and SLTSC in Fact \ref{fact_unreg_Lewis_properties}, thus  $G(x)\defn H(x)+\beta I$ is ASC and LTSC.  

Next we prove the metric inequality, it is obvious that $G(x)\succeq \beta I$. Since $H$ is $\bar\nu$-symmetric with $\bar\nu=\Ot(n^{3/2})$. Thus, by Fact \ref{fact_cross_ratio_by_local}, we have the following metric inequality
\begin{equation*}
\begin{aligned}
\min\braces{\vecnorm{y-x}{G(x)}^2,\vecnorm{y-x}{G(y)}^2}&=\min\braces{\vecnorm{y-x}{H(x)}^2,\vecnorm{y-x}{H(y)}^2}\beta \vecnorm{y-x}{2}^2 \\
&\leq  C\cdot (\log m)^{c_2} \cdot n^{3/2} d_K(x,y)^2 +\beta \vecnorm{y-x}{2}^2, 
\end{aligned}
\end{equation*}
where $C>0$ is some absolute constant. Moreover, since $H$ is $\bar\nu$-symmetric and $G(x)\succeq H(x)$, we have $E(x,G(x),1)\subseteq E(x,H(x),1)\subseteq K$.  Thus Corollary \ref{cor_Lewis} is proved by inserting $\euclidbd=\beta$ and $\crossbd=(\log m)^{c_2}n^{3/2}$.

\end{proof}

\subsection{Extension to Weakly Logconcave Distributions}\label{sec_mixing_weakly}
In this section, we prove Theorem \ref{th_weakly} and its corresponding Corollaries \ref{cor_logarithmic_weakly} and \ref{cor_Lewis_weakly}. Actually, the only task in Section \ref{sec_mixing_weakly} is to prove Theorem \ref{th_weakly}. This is because  the required properties of the local metric $G$ in Theorem \ref{th_weakly} are the same as in Theorem~\ref{th_main}. Those properties of both \softlog (Definition~\ref{def_logarithmic}) and \reglewis (Definition \ref{def_Lewis}) are already proved in Section \ref{sec_mixing_strongly}. Thus Corollaries \ref{cor_logarithmic_weakly} and \ref{cor_Lewis_weakly} can be derived directly from Theorem~\ref{th_weakly}. The proof follows similar conductance arguments from the proof of Theorem~\ref{th_main}  in Section \ref{sec_mixing_strongly}, except that we use the isoperimetric inequality over weakly logconcave measures. 

\begin{proof}[proof of Theorem \ref{th_weakly}]
Set the step-size $r$ indicated in Lemma \ref{lem_TV_control}, we try to bound the conductance of the Markov chain: For any partition $K=A_1\sqcup A_2$, we will prove the following inequality:
\begin{equation}\label{eq_final_partition_weakly}
    \int_{A_1}\cT_u(A_2)\pi(u)du\geq \frac{C}{\sqrt{n\parenth{\crossbd+\spectr{\euclidbd}}}}\min\braces{\Pi(A_1),\Pi(A_2)},
\end{equation}
where $C$ is some absolute constant. Let $K=A_1\sqcup A_2$ be any measurable partition, and we define two bad conductance subsets of $A_1,A_2$ to be:
\begin{equation*}
A_1'\defn \braces{x\in A_1\bigg|\cT_x(A_2)<\frac{1}{10}},\quad A_2'\defn \braces{x\in A_2\bigg|\cT_x(A_1)<\frac{1}{10}}.
\end{equation*}

We can assume $\Pi(A_1')>\frac12\Pi(A_1)$ and $\Pi(A_2')>\frac12\Pi(A_2)$,  otherwise Eq.~\eqref{eq_final_partition_weakly} can be proved the same way as in the proof of Theorem \ref{th_main}. Then for any $x\in A_1'$ and $y\in A_2'$, by definition of $A_1',A_2'$ we have
\begin{equation*}
\vecnorm{\cT_x-\cT_y}{TV}\geq \cT_x(A_1)-\cT_y(A_1)=1-\cT_x(A_2)-\cT_y(A_1)\geq 1-\frac{1}{10}-\frac{1}{10}>\frac45.
\end{equation*}
So by Lemma \ref{lem_TV_control} and the symmetry between $x$ and $y$ we have 
\begin{equation*}
    \vecnorm{y-x}{G(x)}>\frac{r}{10\sqrt{n}},\quad \vecnorm{y-x}{G(y)}>\frac{r}{10\sqrt{n}},
\end{equation*}
thus according to our assumption of $G$, we have 
\begin{equation}\label{eq_weakly_cross_ratio}
\begin{aligned}
\frac{r^2}{100n}&< \crossbd d_K(x,y)^2+\euclidbd \vecnorm{y-x}{2}^2\\
&=\crossbd\brackets{e^{\dH_K(x,y)}-1}^2+\euclidbd\vecnorm{x-y}{2}^2\\
&\leq \crossbd \brackets{e^{\dn(x,y)}-1}^2+\euclidbd \frac{{\spectr}}{(\log2)^2}\dn(x,y)^2.
\end{aligned}
\end{equation}
Recall that we use the Hilbert metric $\dH_K$ instead of the cross-ratio distance $d_K$ to apply the mixed isoperimetric inequality for weakly logconcave distributions (Lemma \ref{lem_isoperimetric_weakly}). The last inequality holds since we define $\dn(x,y)=\max\braces{\frac{\log2}{\sqrt{\spectr}}\vecnorm{y-x}{2},\dH_K(x,y)}$. Consider the case that $\dn(x,y)\leq 1$, then according to the fact $e^x\leq 1+2x$ for $x\in[0,1]$, we have 
\begin{equation}\label{eq_dn_lower_bound}
\begin{aligned}
\frac{r^2}{100n}< 4\crossbd \dn(x,y)^2+\euclidbd \frac{{\spectr}}{(\log2)^2}\dn(x,y)^2.
\end{aligned}
\end{equation}
As a result, for all $x\in A_1'$ and $y\in A_2'$ we have 
\begin{equation*}
\dn (x,y)\geq \min\braces{1,\frac{r}{20\sqrt{n\parenth{\crossbd+\euclidbd\spectr}}}}\geq \frac{r}{20\sqrt{n\parenth{\crossbd+\euclidbd\spectr}}},
\end{equation*}
where the last inequality holds since we assumed $C_K\geq 1$, $n\geq1$ and we can always set the step-size $r\leq 1$. 
Now we are ready to control the ergodic flow from $A_1$ to $A_2$, using the same argument in the proof of Theorem \ref{th_main} (Eq.~\eqref{eq_log_final_ergodic}) we have 
\begin{equation}\label{eq_ergodic_flow_weakly}
\begin{aligned}
\int_{A_1}\cT_x(A_2)\pi(x)dx&\geq \frac{1}{20}\Pi(K\setminus (A_1'\cup A_2')). 
\end{aligned}
\end{equation}
Now we apply Lemma \ref{lem_isoperimetric_weakly} over the target distribution $\Pi$, and we have 
\begin{equation}\label{eq_dn_isoperi}
\Pi(K\setminus (A_1'\cup A_2'))\geq \frac{1}{6\max\braces{1,\psi_n}}\dn(A_1',A_2')\Pi(A_1')\Pi(A_2')
\end{equation}
Now insert Eq.~\eqref{eq_dn_isoperi} into Eq.~\eqref{eq_ergodic_flow_weakly}, and we have 
\begin{equation*}
\begin{aligned}
\int_{A_1}\cT_x(A_2)\pi(x)dx&\overset{(i)}\geq\frac{\Pi(A_1')\Pi(A_2')}{120\max\braces{1,\psi_n}}\cdot\frac{r}{20\sqrt{n\parenth{\crossbd+\euclidbd\spectr}}}\\
&\overset{(ii)}\geq \frac{\min\braces{\Pi(A_1),\Pi(A_2)}}{8\cdot 120\max\braces{1,\psi_n}}\cdot\frac{r}{20\sqrt{n\parenth{\crossbd+\euclidbd\spectr}}},
\end{aligned}
\end{equation*}
where inequality $(i)$ follows by the lower bound for $\dn(A_1',A_2')$ in Equation \eqref{eq_dn_lower_bound}, and inequality $(ii)$ holds because we assumed $\Pi(A_i')\geq \frac12\Pi(A_i)$ for $i=1,2$ and  $\Pi(A_1)+\Pi(A_2)=1$.

Combining the situations, we have proved an lower bound on the conductance $\Phi$ of the transition kernel $\cT$. There exists a universal constant $c$, and since the step-size $r$ is also an universal constant, $n\geq 1$ and $C_K\geq 1$ , we have
\begin{equation}
\Phi \geq  \frac{c}{\max\braces{1,\psi_n}}\frac{1}{\sqrt{n\parenth{\crossbd+\euclidbd\spectr}}}\geq \frac{c}{\psi_n\sqrt{n(\crossbd+\euclidbd \spectr)}},
\end{equation}
where the last inequality holds since $\psi_n=\Omega(1)$. Since the conductance $\barPhi$ for the lazified transition kernel $\barT$ satisfies $\barPhi\geq \frac12\Phi$. According to Lemma \ref{lem_Lovas}, there exists a universal constant $C>0$, for any error tolerance $\epsilon>0$ and $M$-warm initial distribution $\mu_0$, the mixing time satisfies:
\begin{equation*}
    T_\text{mix}(\epsilon;\mu_0)\leq \frac{8}{\Phi^2}\log\parenth{\frac{\sqrt{M}}{\epsilon}}\leq  C \psi_n^2\cdot n\parenth{\crossbd+\euclidbd \spectr}\log\parenth{\frac{\sqrt{M}}{\epsilon}}.
\end{equation*}
\end{proof}

\subsection{Violated Constraints in a High Probability Ball}\label{sec_prob_ball_intersection}
In Section \ref{sec_prob_ball_intersection}, we prove Theorem \ref{th_prob_ball_intersection}. The idea is similar to prove the conductance lower bound for the Markov chain $\cT$ in Theorem \ref{th_main}. The difference is that we needed to cut off the corner of $K$ that is too close to the remaining $(m-\cM_{R}^\delta)$ linear constraints. Since this corner has a low probability mass, it does not impact the mixing time much. We need the concept of $s$-conductance $\Phi_s$(see Eq.~\eqref{eq_conductance_def}) to enable the cutoff.

We first give the statement and the proof of Lemma \ref{lem_s_conductance}, which controls the ergodic flows from $A_1$ to $A_2$. 

\begin{lemma}\label{lem_s_conductance}
Assume our target distribution $\Pi$ with density $\pi(x)\propto \mathbf{1}_K(x)e^{-f(x)}$ with twice differentiable $f$ to be both $\alpha$-convex and $\beta$-smooth as in Eq.~\eqref{eq_distri}. Let $\cT$ denotes the transition kernel in Eq.~\eqref{eq_Transition_kernel_def} defined using the \softlog in Definition \ref{def_logarithmic}. Then there exists a step-size $r>0$ such that for any partition $K=A_1\sqcup A_2$, for any $R>0$ and $\delta>0$, we have 
\begin{equation}\label{eq_s_conductance}
\int_{A_1}\cT_u(A_2)\pi(u)du\geq \frac{r}{32000\sqrt{\brackets{\kappa+\frac{m}{n\delta^2}+\cM_{R}^\delta}n}}\min\braces{\Pi(A_1\cap \cB_{R}),\Pi(A_2\cap\cB_{R})},
\end{equation}
where the definitions of  $\cB_R$, and $\cM_{R}^\delta$ can be found in Section \ref{sec_beyond_worst_result}. 
\end{lemma}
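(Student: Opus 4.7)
My plan is to adapt the conductance proof of Theorem~\ref{th_main} in two ways: localize the metric-conversion estimate so that it only pays for the constraints intersecting $\cB_R^\delta$, and apply Lemma~\ref{lem_isoperimetric} to a truncated, renormalized version of $\Pi$ so that the final bound is naturally phrased in terms of $\min\{\Pi(A_1\cap\cB_R),\Pi(A_2\cap\cB_R)\}$ rather than $\min\{\Pi(A_1),\Pi(A_2)\}$.

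Without loss of generality I normalize $\|a_i\|_2=1$; this rescaling leaves both $H$ and $\cM_R^\delta$ unchanged since $a_ia_i^\top/s_{x,i}^2$ is scale-invariant. I split the index set into a \emph{near} block (those $i$ counted by $\cM_R^\delta$) and a \emph{far} block, and decompose $H(x)=H_{\text{near}}(x)+H_{\text{far}}(x)$. For every far constraint $i$ and every $x\in \cB_R$, the hyperplane $\{a_i\tp y=b_i\}$ lies outside $\cB_R^\delta$, so by the triangle inequality $s_{x,i}\geq \delta\sqrt{n/\alpha}$; summing rank-one contributions yields
\begin{equation*}
H_{\text{far}}(x)\ \preceq\ \tfrac{m\alpha}{n\delta^2}\,I_n,\qquad \forall\,x\in \cB_R.
\end{equation*}
On the larger sub-polytope $K_{\text{near}}\defn\{x:a_i\tp x>b_i,\ i\text{ near}\}\supseteq K$ (so $d_{K_{\text{near}}}\leq d_K$), Lemma~\ref{lem_logarithmic_metric_conversion} applied with $\cM_R^\delta$ constraints gives $\min\{\|y-x\|_{H_{\text{near}}(x)}^2,\|y-x\|_{H_{\text{near}}(y)}^2\}\leq \cM_R^\delta\,d_K(x,y)^2$. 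Combining with the far-block bound and the $\beta I$ regularization gives the localized metric conversion
\begin{equation*}
\min\{\|y-x\|_{G(x)}^2,\|y-x\|_{G(y)}^2\}\ \leq\ \cM_R^\delta\,d_K(x,y)^2+\bigl(\beta+\tfrac{m\alpha}{n\delta^2}\bigr)\|y-x\|_2^2,\qquad \forall\,x,y\in \cB_R,
\end{equation*}
which plays the role of the metric-conversion hypothesis of Theorem~\ref{th_main} with $\crossbd=\cM_R^\delta$ and $\euclidbd=\beta+m\alpha/(n\delta^2)$, but only when both endpoints lie in $\cB_R$.

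With this in hand I run the standard conductance argument restricted to $\cB_R$. Put $B_i=A_i\cap\cB_R$ and define bad subsets $A_i'=\{x\in B_i:\cT_x(A_{3-i})<1/10\}$. If $\Pi(A_i')\leq \tfrac12\Pi(B_i)$ for some $i$, then reversibility immediately gives ergodic flow at least $\tfrac1{20}\min\{\Pi(B_1),\Pi(B_2)\}$. Otherwise both bad sets are large, and for any $x\in A_1'$, $y\in A_2'\subseteq \cB_R$ the transition overlap satisfies $\|\cT_x-\cT_y\|_{TV}>4/5$, so Lemma~\ref{lem_TV_control} together with the localized conversion above yields
\begin{equation*}
d'(x,y)\ \geq\ \frac{c\,r}{\sqrt{n\bigl(\cM_R^\delta+\kappa+m/(n\delta^2)\bigr)}},\qquad d'(x,y)\defn\max\bigl\{d_K(x,y),\log(2)\sqrt{\alpha}\|x-y\|_2\bigr\}.
\end{equation*}

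The main obstacle is now to convert the bound on $d'(A_1',A_2')$ into an ergodic-flow bound scaling like $\min\{\Pi(B_1),\Pi(B_2)\}$ rather than $\min\{\Pi(A_1),\Pi(A_2)\}$. My plan is to apply Lemma~\ref{lem_isoperimetric} not to $\Pi$ on $K$ but to the renormalized truncated measure $\Pi_{\cB_R}(\cdot)\defn\Pi(\cdot\cap\cB_R)/\Pi(\cB_R)$ on the open convex set $K\cap\mathrm{int}(\cB_R)$ (the boundary is $\Pi$-null). Truncation to a convex set preserves being more logconcave than $\Normal(0,I/\alpha)$, and the cross-ratio distance only grows under additional truncation ($d_{K\cap\cB_R}\geq d_K$), so the lower bound on $d'(A_1',A_2')$ carries over unchanged. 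Applying the isoperimetric inequality to the partition $A_1'\sqcup A_2'\sqcup((K\cap\cB_R)\setminus(A_1'\cup A_2'))$ and multiplying through by $\Pi(\cB_R)$ gives
\begin{equation*}
\Pi\bigl(\cB_R\setminus(A_1'\cup A_2')\bigr)\ \geq\ d'(A_1',A_2')\cdot\frac{\Pi(A_1')\Pi(A_2')}{\Pi(\cB_R)}.
\end{equation*}
Using $\Pi(A_i')\geq \tfrac12\Pi(B_i)$ together with the elementary identity $\Pi(B_1)+\Pi(B_2)=\Pi(\cB_R)$ (so $\max\{\Pi(B_1),\Pi(B_2)\}\geq \Pi(\cB_R)/2$ and hence $\Pi(B_1)\Pi(B_2)\geq \tfrac{\Pi(\cB_R)}{2}\min\{\Pi(B_1),\Pi(B_2)\}$), the $\Pi(\cB_R)$ factor cancels cleanly. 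Finally, the reversibility-and-averaging step of Theorem~\ref{th_main} gives ergodic flow $\geq \tfrac1{20}\Pi(\cB_R\setminus(A_1'\cup A_2'))$, and concatenating the bounds produces the asserted estimate up to absolute constants.
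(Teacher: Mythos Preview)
Your proposal is correct and follows essentially the same route as the paper: split the constraints into near and far relative to $\cB_R^\delta$, use the $\delta\sqrt{n/\alpha}$ slack to bound the far block by $\tfrac{m\alpha}{n\delta^2}I$ on $\cB_R$, combine with $\bar\nu$-symmetry of the near block (with $\bar\nu=\cM_R^\delta$) to get the localized metric conversion, run the bad-set conductance argument restricted to $\cB_R$, and apply Lemma~\ref{lem_isoperimetric} to the renormalized measure $\Pi(\cdot\cap\cB_R)/\Pi(\cB_R)$ so that the $\Pi(\cB_R)$ factor cancels against $\max\{\Pi(B_1),\Pi(B_2)\}$. The only cosmetic differences are that the paper does not normalize $\|a_i\|_2=1$ (it works with the ratio $|a_i^\top x-b_i|/\|a_i\|_2$ directly) and writes the metric-conversion bound for $\|y-x\|_{G(x)}^2$ rather than the min, but these are immaterial.
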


\begin{proof}
For any $R>0$ and $\delta>0$, recall the definition of $\cM_R^\delta$, we denote $[\cM_R^\delta]$ to be the following subset of $[m]\defn \braces{1,2,\dots,m}$, 
\begin{equation*}
[\cM_R^\delta]\defn \braces{i|\,\exists\, x\in \cB_R^{\delta}\text{ such that }a_i\tp x-b_i\leq 0}.
\end{equation*}

The distance of any point $x$ to the $i$-th face $\braces{z|a_i\tp z-b_i=0}$ equals $\frac{\abss{a_i\tp x-b_i}}{\vecnorm{a_i}2}$, thus for all $x\in\cB_R$,  we have 
\begin{equation*}
\frac{a_ia_i\tp }{(a_i\tp x-b_i)^2}\preceq \frac{\vecnorm{a_i}2^2}{(a_i\tp x-b_i)^2}I_n\overset{(\text{I})}\preceq\frac{\alpha}{n\delta^2}I_n \text{ for } i \in  [m]\setminus [\cM_R^\delta], 
\end{equation*}
where inequality (I) holds since for any $x\in \cB_R$, the distance of $x$ to any face  $i\in [m]\setminus [\cM_R^\delta]$ is greater than $\delta\sqrt{\frac{n}{\alpha}}$. Otherwise, there exists $z$ such that $\vecnorm{z-x}2\leq \delta\sqrt{\frac{n}{\alpha}}$ satisfying $a_i\tp z-b_i\leq  0$ for some $i\in [m]\setminus [\cM_R^\delta]$, which implies $z\notin \cB_R^{\delta}$. Meanwhile, we also  have $z\in \cB_{R}^\delta$ because $\vecnorm{z-x^\star}2\leq \vecnorm{x-x^\star}2+\vecnorm{x-z}2\leq (R+\delta)\sqrt{\frac{n}{\alpha}}$. This is a contradiction. 

Now we can control the \softlog $G(x)$ for all $x\in K\cap \cB_R$
\begin{equation*}
\begin{aligned}
     \Glog(x)&= \beta I +\sum_{i\in [\cM_R^\delta]}\frac{a_ia_i\tp }{(a_i\tp x-b_i)^2} +\sum_{i\in [m]\setminus [\cM_R^\delta]}\frac{a_ia_i\tp }{(a_i\tp x-b_i)^2}\\
     &\preceq \brackets{\beta+\frac{\alpha(m-\cM_R^\delta)}{n\delta^2}}I+\sum_{i\in[\cM_R^\delta]}\frac{a_ia_i\tp }{(a_i\tp x-b_i)^2}.
\end{aligned}
\end{equation*}
Then we can control the $\vecnorm{y-x}{\Glog(x)}$ for all $x,y\in \cB_R\cap K$: 
\begin{equation}\label{eq_s_local_by_new}
\begin{aligned}
\vecnorm{y-x}{\Glog(x)}^2&\overset{\text{(i)}}\leq \brackets{\beta+\frac{\alpha(m-\cMR)}{n\delta^2}}\vecnorm{x-y}2^2+\cMR d_K^2(x,y)\\
&\leq  \brackets{\cMR+\frac{m}{n\delta^2(\log2)^2}+\frac{\kappa}{(\log2)^2}}d'(x,y)^2\\
&\leq  4\brackets{\cMR+\frac{m}{n\delta^2}+{\kappa}}d'(x,y)^2,
\end{aligned}
\end{equation}
where inequality $(i)$ holds due to the $\cMR$-symmetry of the logarithmic metric $H(x)\defn \sum_{i\in [\cMR]}\frac{a_ia_i\tp}{(a_i\tp x-b_i)^2}$ (see the proof of Lemma \ref{lem_logarithmic_metric_conversion} in Section \ref{sec_mixing_strongly}). $d'(x,y)$ is the combination of Euclidean and cross-ratio distances defined in Lemma \ref{lem_isoperimetric}. 

As the routine for bounding conductance of a transition kernel, we consider the subsets of $A_1,A_2$ with bad conductance. We define:
\begin{equation*}
A_1'\defn \braces{u\in A_1\cap\cB_R\bigg|\cT_u(A_2)< \frac{1}{10}},\quad A_2'\defn \braces{u\in A_2\cap \cB_R\bigg| \cT_u(A_1)< \frac{1}{10}}.    
\end{equation*}
We first consider the case  $\Pi(A_i')\leq \frac12\Pi(A_i\cap\cB_R)$ for $i=1$ or $i=2$. Since $\Pi$ is the stationary distribution of the transition kernel $\cT$, we can assume $\Pi(A_1')\leq \frac12\Pi(A_1\cap\cB_R)$ without loss of generality. we have 
\begin{equation*}
\begin{aligned}
    \int_{A_1}\cT_u(A_2)\pi(u)du&\geq \int_{A_1\setminus A_1'}\cT_u(A_2)\pi(u)du\overset{(i)}\geq \frac{1}{10}\Pi(A_1\setminus A_1')
    \geq \frac{1}{20}\Pi(A_1\cap \cB_R)\\
    &\geq \frac{1}{20}\min\braces{\Pi(A_1\cap \cB_R),\Pi(A_2\cap  \cB_R)},
\end{aligned}
\end{equation*}
where inequality $(i)$ holds because we assumed $\Pi(A_1')\leq \frac12\Pi(A_1\cap \cB_R)$. So Eq.~\eqref{eq_s_conductance} is proved where we insert $r=10^{-5}$, $n\geq 1$ and $\kappa\geq 1$.

Next we deal with the case $\Pi(A_i')>\frac12 \Pi(A_i\cap \cB_R)$ for both $i=1$ and $i=2$. For any $x\in A_1'$ and $y\in A_2'$, we have 
\begin{equation*}
\begin{aligned}
\vecnorm{\cT_x-\cT_y}{TV}&\geq \cT_x(A_1)-\cT_y(A_1)\\
&=1-\cT_x(A_2)-\cT_y(A_1)> \frac45,
\end{aligned}
\end{equation*}
According to the transition overlap we proved in Lemma \ref{lem_TV_control}, and by symmetry between $x$ and $y$, we have 
\begin{equation*}
\vecnorm{y-x}{\Glog(x)}>\frac{r}{100\sqrt{n}} \text{ and } \vecnorm{y-x}{\Glog(y)}>\frac{r}{100\sqrt{n}}.
\end{equation*}
Since we assumed $x,y\in\cB_R\cap K$, Eq.~\eqref{eq_s_local_by_new} holds, thus we have 
\begin{equation}
\begin{aligned}
d'(x,y)&\geq \frac{\vecnorm{y-x}{\Glog(x)}}{2\sqrt{\cM_R^\delta+\frac{m}{n\delta^2}+\kappa}}\geq \frac{r}{200\sqrt{n}\sqrt{\cM_R^\delta+\frac{m}{n\delta^2}+\kappa}}.
\end{aligned}    
\end{equation}
Then the LHS of Eq.~\eqref{eq_s_conductance} can be bounded by:
\begin{equation*}
\begin{aligned}
\int_{A_1}\cT_u(A_2)\pi(u)du&=\frac12 \brackets{\int_{A_1}\cT_u(A_2)\pi(u)du+\int_{A_2}\cT_u(A_1)\pi(u)du}\\
&\geq \frac12 \brackets{\int_{A_1\setminus A_1'}\cT_u(A_2)\pi(u)du+\int_{A_2\setminus A_2'}\cT_u(A_1)\pi(u)du}\\
&\geq \frac{1}{20}\Pi(\cB_R\setminus (A_1'\cap A_2')).
\end{aligned}
\end{equation*}
Now we define $\Pi_R(\cdot)$ to be the probability measure $\Pi(\cdot)$ constrained on $\cB_R\cap K$. In other words, $\Pi_R(C)\defn \frac{\Pi(C)}{\Pi(\cB_R\cap K)}$ for all Borel sets $C\subseteq K\cap \cB_R$. And we define $d'_R(x,y)$ to be the mixed metric whose cross-ratio metric is (see definition in Lemma \ref{lem_isoperimetric}) restricted on $\cB_R\cap K$:
\begin{equation*}
\begin{aligned}
d'_R(x,y)\defn \max\braces{d_{K\cap \cB_R}(x,y),\log2\sqrt{\alpha}\vecnorm{x-y}2}.
\end{aligned}
\end{equation*}
where we changed the cross-ratio distance from over $K$ to over $K\cap \cB_R$. It is easy to verify that $d_{K\cap \cB_R}(x,y)\geq d_K(x,y)$ because $K\cap \cB_R\subseteq K $, so we also have: $d'_R(x,y)\geq d'(x,y)$. Countinuing to lower bound the LHS of Eq.~\eqref{eq_s_conductance}, we apply isoperimetric inequality (Lemma \ref{lem_isoperimetric}) under the metric $d'_R(x,y)$:
\begin{equation*}
\begin{aligned}
\int_{A_1}\cT_u(A_2)\pi(u)du&\geq  \frac{1}{20}\Pi(\cB_R\setminus (A_1'\cap A_2'))= \frac1{20}\Pi(\cB_R)\Pi_R(\cB_R\setminus (A_1'\cap A_2'))\\
&\overset{(i)}\geq \frac{1}{20}\Pi(\cB_R) d_R'(A_1',A_2')\Pi_R(A_1')\Pi_R(A_2')\\
&\overset{(ii)}\geq \frac{1}{20}\Pi(\cB_R) d'(A_1',A_2')\Pi_R(A_1')\Pi_R(A_2')\\
&\overset{(iii)}\geq \frac{1}{80 \Pi(\cB_R)}  d'(A_1',A_2') \Pi(A_1\cap \cB_R)\Pi(A_2\cap \cB_R)\\
&=\frac{1}{80} d'(A_1',A_2')\Pi(\cB_R)\Pi_R(A_1\cap \cB_R)\Pi_R(A_2\cap \cB_R)\\
&\overset{(iv)}\geq \frac{1}{160} d'(A_1',A_2')\Pi(\cB_R)\min\braces{\Pi_R(A_1\cap\cB_R),\Pi_R(A_2\cap \cB_R)}\\
&= \frac{1}{160} d'(A_1',A_2')\min\braces{\Pi(A_1\cap\cB_R),\Pi(A_2\cap \cB_R)},
\end{aligned}
\end{equation*}
where inequality $(i)$ is the application of our new isoperimetric inequality (Lemma \ref{lem_isoperimetric}) over the constrained metric $d_R'(x,y)$, inequality $(ii)$ is due to our argument that $d'(x,y)\leq d_R'(x,y)$, inequality $(iii)$ holds because of the assumptions $\Pi(A_i')>\frac12\Pi(A_i\cap \cB_R)$ for $i=1,2$, and inequality $(iv)$ is due to the fact that $\Pi_R(A_1\cap\cB_R)+\Pi_R(A_2\cap \cB_R)=1$.

Now insert the upper bound of $d'(x,y)$ in Eq.~\eqref{eq_s_local_by_new}, we have 
\begin{equation*}
\int_{A_1}\cT_u(A_2)\pi(u)du\geq  \frac{r}{32000\sqrt{n}\sqrt{\cMR+\frac{m}{n\delta^2}+\kappa}}\min\braces{\Pi(A_1\cap\cB_R),\Pi(A_2\cap\cB_R)}.
\end{equation*}
\end{proof}

Now we can give the proof of Theorem \ref{th_prob_ball_intersection}, where we apply Lemma \ref{lem_s_conductance} setting $R$ to be $\rinf\parenth{\frac{\epsilon}{2M}}$, and $\delta>0$ is still an arbitrary parameter that can be tuned by the user:

\begin{proof}[proof of Theorem \ref{th_prob_ball_intersection}]
We set $\Upsilon\defn \rinf\parenth{\frac{\epsilon}{2M}}$, then by definition of $\rinf$ (see Definition \ref{def_rinf}), we have  
\begin{equation*}
\Pi(\cB_\Upsilon)\geq 1-\frac{\epsilon}{2M}.
\end{equation*}
For any partition $K=A_1\sqcup A_2$, we have $\Pi(A_i\cap \cB_\Upsilon)\geq \Pi(A_i)-\Pi(\cB_\Upsilon)$ for $i=1,2$. Apply Lemma \ref{lem_s_conductance} where we set $R\defn \Upsilon$, we have the result as:
\begin{equation*}
\int_{A_1}\cT_u(A_2)\pi(u)du\geq  \frac{r}{32000\sqrt{n}\sqrt{\cM_\Upsilon^\delta+\frac{m}{n\delta^2}+\kappa}}\min\braces{\Pi(A_1)-\frac{\epsilon}{2M},\Pi(A_2)-\frac{\epsilon}{2M}}.
\end{equation*}
We set $s\defn \frac{\epsilon}{2M}$, then the $s$-conductance of our transition kernel $\cT$ before lazification satisfies:
\begin{equation}\label{eq_s_lower}
\Phi_{s}\geq \frac{r}{32000\sqrt{n}\sqrt{\cM_\Upsilon^\delta+\frac{m}{n\delta^2}+\kappa}}.
\end{equation}

After lazification, the $s$-conductance under our new transition kernel $\tcT_x(\cdot)\defn \frac12\delta_x(\cdot)+\frac12\cT_x(\cdot)$ satisfies: $\widetilde \Phi_s=\frac12\Phi_{s}$. Using the mixing time bound for lazy Markov chains in Eq.~\eqref{eq_s_Lovas}, we obtain
\begin{equation*}
\vecnorm{\tcT^k(\mu_0)-\Pi}{TV}\leq Ms+M\parenth{1-\frac{\widetilde\Phi_s^2}{2}}^k\leq \frac{\epsilon}{2}+M\exp\parenth{-\frac{k\widetilde\Phi_s^2}{2}}.
\end{equation*}

To ensure $\vecnorm{\tcT^k(\mu_0)-\Pi}{TV}\leq \epsilon$, we only need to ensure $M\exp\parenth{-\frac{k\widetilde \Phi_s^2}{2}}\leq \frac\epsilon2$. That is, $k\geq \frac{2}{\widetilde \Phi_s^2}\log\parenth{\frac{2M}{\epsilon}}$. Insert the lower bound of $\Phi_s$ as in Eq.~\eqref{eq_s_lower} and the fact that $\widetilde \Phi_s=\frac12\Phi_s$, we only need to ensure:
\begin{equation}\label{eq_mixing_high_prob}
    k\geq \frac{8\times 32000^2}{r^2} \log\parenth{\frac{2M}{\epsilon}}\parenth{\kappa+\frac{m}{n\delta^2}+\cM_\Upsilon^\delta}n,
\end{equation}
where the step-size $r$ is also an absolute constant, and the desired mixing time is proved. To prove that Eq.~\eqref{eq_mixing_high_prob} still suffices for $\vecnorm{\tcT^k(\mu_0)-\Pi}{TV}\leq \epsilon$ to hold  if we replace $\Upsilon$ with $\widehat\Upsilon \defn \rinfbd\parenth{\frac{\epsilon}{2M}}$, we only need to prove $\cM_\Upsilon^\delta\leq\cM_{\widehat\Upsilon}^\delta$. This is easy since $\cM_R^{\delta}$ is clearly an increasing function in $R$, and we have $\Upsilon\leq \widehat\Upsilon$ (see the concentration inequality in Eq.~\eqref{eq_rinf(s)_upper_bound}).  
\end{proof}

\section{Discussions}\label{sec_disc}

This work opens several avenues for future extensions. In terms of warm initial distributions, although we proposed the uniform distribution on a certain ball in Appendix~\ref{sec_disc_warm_start}, it still has one major limitation. The warmness parameter $M$ scales exponentially with the dimension $n$, thus introducing an extra $n$-factor in the mixing times. To mitigate this, Gaussian cooling~\cite{cousins_gaussian_2018} emerges as one promising direction. This method constructs a sequence of distributions to sample from, each serving as a warm start for the next. For truncated sampling, \cite{pmlr-v247-kook24b} explores a Gaussian cooling procedure under a different framework than ours. Adapting Gaussian cooling techniques to our framework could significantly reduce the warmness dependency.

Moreover, Theorem \ref{th_main} and \ref{th_weakly} may be generalized to sample logconcave distributions truncated on convex bodies defined by nonlinear constraints. For instance, in the case of quadratic constraints,  \cite{pmlr-v247-kook24b} constructs the barrier function, whose Hessian satisfies various self-concordance properties in Definition \ref{def_sc}. By leveraging this structure, our results may yield mixing time bounds for sampling log-concave distributions truncated on ellipsoids.

Finally, our regularized local metric could inspire new mixing time analyses for higher order algorithms such as \geodesicwalk and \RHMC. 

\section*{Acknowledgement}
Both authors are partially supported by NSF CAREER Award DMS-2237322, Sloan Research Fellowship and Ralph E. Powe Junior Faculty Enhancement Awards. 

\newpage
\appendix

\section{Bounding the Transition Overlap}
In this section, we prove the upper bound of the TV-distance between transitions (Lemma \ref{lem_TV_control}). In Appendix \ref{sec_ssc_col} we prove the properties for strongly self-concordant local metrics. The proof is similar to \cite{laddha2020strong}, while we can only recover the computation in \cite{laddha2020strong} up to constant $2$. Though this does not affect the usefulness, we list our proof to be rigorous. In Appendix \ref{appendix_transit_overlap}, we prove Lemma \ref{lem_TV_control}.

\subsection{Strong Self-Concordance Properties}\label{sec_ssc_col}
In Appendix \ref{sec_ssc_col} we prove the properties for strongly self-concordant local metrics (Fact~\ref{fact_ssc_col}).

\begin{fact}[adapted from \cite{laddha2020strong}]\label{fact_ssc_col}
Assume $G$ is SSC over $K$, then for any $x,y\in K$ with $\vecnorm{x-y}{G(x)}<1$, we have
\begin{equation}\label{eq_ssc_frob}
    \vecnorm{G(x)^{-\frac{1}{2}}(G(y)-G(x))G(x)^{-\frac12}}F\leq\frac{2\vecnorm{x-y}{G(x)}}{(1-\vecnorm{x-y}{G(x)})^2},
\end{equation}
for any $\vecnorm{x-y}{G(x)}\leq \frac12$, we have
\begin{equation}\label{eq_ssc_det_col}
    \det\brackets{G(x)^{-\frac12}G(y)G(x)^{-\frac12}}\leq \exp\parenth{8\sqrt{n}\vecnorm{x-y}{G(x)}}.
\end{equation}
\end{fact}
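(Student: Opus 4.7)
\textbf{Proof proposal for Fact~\ref{fact_ssc_col}.} The plan is to reduce both inequalities to ODE/integral bounds along the straight-line path $\gamma(t) \defn x + t(y-x)$ for $t\in[0,1]$. Let $v \defn y-x$, $r \defn \vecnorm{v}{G(x)}$, and introduce the key auxiliary matrix
\begin{equation*}
\Psi(t) \defn G(x)^{-\frac12}G(\gamma(t))G(x)^{-\frac12},\qquad \Psi(0)=I,\qquad \Psi'(t)=G(x)^{-\frac12}\deri G(\gamma(t))[v]\,G(x)^{-\frac12}.
\end{equation*}
For inequality~\eqref{eq_ssc_frob}, I will write $\Psi'(t)$ as a conjugation of the SSC-friendly matrix $G(\gamma(t))^{-\frac12}\deri G(\gamma(t))[v]G(\gamma(t))^{-\frac12}$ by $M \defn G(x)^{-\frac12}G(\gamma(t))^{\frac12}$, use the identity $\vecnorm{M}{op}^2 = \vecnorm{\Psi(t)}{op}$, and apply SSC together with $\vecnorm{v}{G(\gamma(t))}\leq \vecnorm{\Psi(t)}{op}^{\frac12}\vecnorm{v}{G(x)}$. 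This yields $\vecnorm{\Psi'(t)}{F}\leq 2r\vecnorm{\Psi(t)}{op}^{\frac32}$, and since $\vecnorm{\Psi(t)}{op}\leq 1+\vecnorm{\Psi(t)-I}{F}$, the scalar quantity $\alpha(t)\defn\vecnorm{\Psi(t)-I}{F}$ obeys $\alpha'(t)\leq 2r(1+\alpha(t))^{\frac32}$ with $\alpha(0)=0$.

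Solving this ODE by separation of variables gives $(1+\alpha(t))^{-\frac12}\geq 1-rt$, hence $\alpha(t)\leq\frac{rt(2-rt)}{(1-rt)^2}\leq\frac{2rt}{(1-rt)^2}$; evaluating at $t=1$ delivers~\eqref{eq_ssc_frob}. For inequality~\eqref{eq_ssc_det_col} I will compute
\begin{equation*}
\frac{d}{dt}\log\det\Psi(t) = \trace(\Psi(t)^{-1}\Psi'(t)) = \trace\bigl(G(\gamma(t))^{-1}\deri G(\gamma(t))[v]\bigr),
\end{equation*}
where the second equality uses the cyclicity of trace to eliminate the $G(x)^{\pm\frac12}$ factors. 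Cauchy--Schwarz against the identity combined with SSC yields
\begin{equation*}
\Bigl|\frac{d}{dt}\log\det\Psi(t)\Bigr| \leq \sqrt{n}\,\vecnorm{G(\gamma(t))^{-\frac12}\deri G(\gamma(t))[v]G(\gamma(t))^{-\frac12}}{F}\leq 2\sqrt{n}\,\vecnorm{v}{G(\gamma(t))}\leq \frac{2\sqrt{n}\,r}{1-rt},
\end{equation*}
where the last step reuses the operator bound $\vecnorm{\Psi(t)}{op}\leq(1-rt)^{-2}$ from the first part. Integrating over $t\in[0,1]$ gives $\log\det\Psi(1)\leq -2\sqrt{n}\log(1-r)$, and for $r\leq\tfrac12$ the elementary estimate $-\log(1-r)\leq \frac{r}{1-r}\leq 2r$ produces the bound $\log\det\Psi(1)\leq 4\sqrt{n}\,r$, which is stronger than the claimed $8\sqrt{n}\,r$.

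The main obstacle is the bootstrapping in the first part: the natural bound on $\vecnorm{\Psi'(t)}{F}$ involves $\vecnorm{v}{G(\gamma(t))}$ and $\vecnorm{\Psi(t)}{op}$, both unknown a priori, so one has to funnel everything into the single scalar $\alpha(t)$ and close a Gr\"onwall-type inequality. Once this nonlinear ODE is solved explicitly, the determinant bound follows almost mechanically from the resulting pointwise estimates, so the whole argument hinges on recognizing that $\alpha(t)$ satisfies the differential inequality $\alpha'\leq 2r(1+\alpha)^{3/2}$.
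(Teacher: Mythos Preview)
Your proposal is correct and takes a genuinely different route from the paper on both inequalities. For \eqref{eq_ssc_frob}, the paper does not bootstrap: it invokes the self-concordance sandwich $G(\gamma(t))\preceq (1-tr)^{-2}G(x)$ (Fact~\ref{fact_sc_property}) as a black box to bound the integrand $\vecnorm{G(x)^{-1/2}\deri G(\gamma(t))[v]G(x)^{-1/2}}{F}\le \tfrac{2r}{(1-tr)^{3}}$ directly, then integrates explicitly. Your Gr\"onwall argument for $\alpha(t)=\vecnorm{\Psi(t)-I}{F}$ is more self-contained---it rederives the operator bound $\vecnorm{\Psi(t)}{op}\le(1-rt)^{-2}$ as a byproduct rather than importing it---at the price of the ODE comparison step. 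For \eqref{eq_ssc_det_col}, the paper takes a shortcut: writing $\det\Psi(1)=\prod\lambda_i\le\exp\bigl(\sum(\lambda_i-1)\bigr)$ and bounding $\sum|\lambda_i-1|\le\sqrt{n}\,\vecnorm{\Psi(1)-I}{F}$ by Cauchy--Schwarz, it plugs in the already-proven~\eqref{eq_ssc_frob} and gets $8\sqrt{n}\,r$ in two lines, with no further integration. Your path-integral of $\tfrac{d}{dt}\log\det\Psi(t)$ is slightly longer but yields the sharper constant $4\sqrt{n}\,r$. Both arguments are clean; the paper's version trades self-containment for brevity by leaning on Fact~\ref{fact_sc_property}, while yours pays for independence with the nonlinear ODE.
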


The upper bound on Frobenius norm of $G(x)^{-\frac12}\brackets{G(y)-G(x)}G(x)^{-\frac12}$ was proved as Lemma 1.2 in \cite{laddha2020strong}. However, we are not able to recover this bound following the argument in \cite{laddha2020strong}. We followed the procedures in \cite{laddha2020strong} and found our bound (Eq.~\ref{eq_ssc_frob}) differs from  \cite{laddha2020strong} by a factor of 2. Even though this does not have a major impact over the applications of this lemma, we still list our proof for the sake of completeness. 

The upper bound of the determinant of $G(x)^{-\frac12}\brackets{G(y)-G(x)}G(x)^{-\frac12}$ (Eq.~\eqref{eq_ssc_det_col}) is an easy application of Cauchy-Schwarz inequality over Eq.~\eqref{eq_ssc_frob}. Same as in \cite{laddha2020strong}, we first introduce the weaker notion of self-concordance and a well-known lemma:

\begin{definition}[Self-Concordance]
    For convex set $K\subseteq \real^n$, we call a local metric $G:K\to\real^{n\times n}$ self-concordant if for any $x\in K$, any direction $h\in\real^n$, the derivative of $G$ along $h$ satisfies:
    \begin{equation*}
    -2\vecnorm{h}{G(x)}G(x)\preceq \deri G(x)[h]\preceq 2\vecnorm{h}{G(x)} G(x).
    \end{equation*}
\end{definition}

It is easy to see from the definition that all strongly self-concordant local metric $G$ is self-concordant. The proof of Fact~\ref{fact_ssc_col} requires the following properties of self-concordant local metrics. For the sake of completeness, we list it as Fact \ref{fact_sc_property} without proof.

\begin{fact}[Lemma 1.1 from \cite{laddha2020strong}]\label{fact_sc_property}
Given any self-concordant matrix function $G$ on $K\subseteq \real^n$,  for any $x,y\in K$ with $\vecnorm{y-x}{G(x)}<1$, we have 
\begin{equation*}
\brackets{1-\vecnorm{x-y}{G(x)}}^2G(x)\preceq G(y)\preceq {\brackets{1-\vecnorm{x-y}{G(x)}}}^{-2}G(x).
\end{equation*}
\end{fact}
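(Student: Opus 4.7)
The plan is to reduce the two-sided operator inequality on $G(y)$ to a one-dimensional ODE comparison along the segment from $x$ to $y$. Set $h = y - x$, parameterize $\gamma(t) = x + th$ for $t \in [0,1]$, and for each test vector $v \in \real^n$ define $\psi_v(t) \defn v^\top G(\gamma(t)) v$. By self-concordance applied at $\gamma(t)$ in the direction $h$,
\begin{equation*}
-2\vecnorm{h}{G(\gamma(t))}\, G(\gamma(t)) \preceq \deri G(\gamma(t))[h] \preceq 2\vecnorm{h}{G(\gamma(t))}\, G(\gamma(t)),
\end{equation*}
which, sandwiched with $v$, yields $|\psi_v'(t)| \leq 2 r(t)\,\psi_v(t)$ where $r(t) \defn \vecnorm{h}{G(\gamma(t))}$. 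Thus $|(\log \psi_v)'(t)| \leq 2 r(t)$, and integrating gives
\begin{equation*}
\exp\!\Big(-2\!\int_0^1\! r(t)\,dt\Big)\,\psi_v(0) \leq \psi_v(1) \leq \exp\!\Big(2\!\int_0^1\! r(t)\,dt\Big)\,\psi_v(0).
\end{equation*}

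The key step is to bound $\int_0^1 r(t)\,dt$ purely in terms of $r(0) = \vecnorm{y-x}{G(x)}$. Specialize the above argument to $v = h$, so $\psi_h(t) = r(t)^2$, to obtain $|r'(t)| \leq r(t)^2$. From $r'(t) \leq r(t)^2$, the function $1/r(t)$ satisfies $(1/r)'(t) \geq -1$, hence $1/r(t) \geq 1/r(0) - t$, which (using $r(0) < 1$ so that the denominator stays positive on $[0,1]$) gives the comparison $r(t) \leq r(0)/(1 - t\,r(0))$. Integrating,
\begin{equation*}
\int_0^1 r(t)\,dt \;\leq\; \int_0^1 \frac{r(0)}{1 - t\,r(0)}\,dt \;=\; -\log\!\bigl(1 - r(0)\bigr).
\end{equation*}
Plugging this into the exponential bounds yields $\psi_v(0)(1 - r(0))^2 \leq \psi_v(1) \leq \psi_v(0)(1 - r(0))^{-2}$. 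Since $v$ was arbitrary, this is precisely the claimed two-sided operator inequality $\bigl(1-\vecnorm{y-x}{G(x)}\bigr)^2 G(x) \preceq G(y) \preceq \bigl(1-\vecnorm{y-x}{G(x)}\bigr)^{-2} G(x)$.

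The main obstacle is the circularity in step two: the integral estimate for $\int_0^1 r(t)\,dt$ requires an a priori pointwise bound on $r(t)$, but $r(t)$ itself only obeys the differential inequality $|r'| \leq r^2$. The resolution is to note that the ODE comparison $r(t) \leq r(0)/(1 - t r(0))$ is valid on the maximal interval where the right-hand side is finite, which under $r(0) < 1$ contains all of $[0,1]$; a standard continuation/bootstrap argument (or appealing to continuity of $r(t)$ and the fact that $r$ cannot escape to $+\infty$ before $t = 1/r(0) > 1$) closes this gap. Everything else is bookkeeping.
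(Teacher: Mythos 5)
Your argument is correct. Note that the paper itself does not prove this fact: it is stated without proof and cited as Lemma 1.1 of \cite{laddha2020strong}, so there is no in-paper proof to compare against; what you wrote is the standard self-concordance argument (the same Gronwall-type integration along the segment used in the cited source and going back to Nesterov--Nemirovski-style Dikin-ellipsoid bounds). Concretely: $\psi_v'(t)=v\tp \deri G(\gamma(t))[h]\,v$ is sandwiched by $\pm 2r(t)\psi_v(t)$, the choice $v=h$ gives $|r'|\leq r^2$, and integrating the resulting comparison yields the factor $(1-r(0))^{\pm 2}$. Two small remarks. First, your closing worry about ``circularity''/blow-up is unnecessary: since $K$ is convex and $x,y\in K$, the whole segment $\gamma(t)$ lies in $K$, so $G(\gamma(t))$ is defined and finite there, $r$ is continuous on $[0,1]$, and the comparison $1/r(t)\geq 1/r(0)-t>0$ follows directly from $(1/r)'\geq -1$ with no continuation argument needed. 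Second, to take logarithms of $\psi_v$ and square roots of $\psi_h$ you should note that $G$ is positive definite on $K$ (as it is throughout the paper), so $\psi_v(t)>0$ for $v\neq 0$ and $r(t)>0$ unless $h=0$, in which case the claim is trivial. With those two observations made explicit, the proof is complete.
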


Now we are prepared to give the proof of Fact~\ref{fact_ssc_col}: 

\begin{proof}[proof of Fact~\ref{fact_ssc_col}]
First, we prove Eq.~\eqref{eq_ssc_frob}. Fix $x,y\in K$, let $h\defn y-x$. For $t\in [0,1]$, define $x_t\defn x+th$, then we have 
    \begin{equation*}
    \begin{aligned}
     \vecnorm{G(x)^{-\frac12}(G(y)-G(x))G(x)^{-\frac12}}{F}
    =&\vecnorm{G(x)^{-\frac12}\brackets{\int_0^1\frac{d}{dt}G(x_t)dt}G(x)^{-\frac12}}{F}\\
    \leq &\int_0^1 \vecnorm{G(x)^{-\frac12}{\frac{d}{dt}G(x_t)}G(x)^{-\frac12}}{F}dt.  
    \end{aligned}
    \end{equation*}
Then we can further bound the integrand:
\begin{equation*}
\begin{aligned}
\vecnorm{G(x)^{-\frac12}{\frac{d}{dt}G(x_t)}G(x)^{-\frac12}}{F}^2&= \trace\braces{G(x)^{-1}{\frac{d}{dt}G(x_t)}G(x)^{-1}{\frac{d}{dt}G(x_t)}}\\
&\overset{(i)}\leq \brackets{1-\vecnorm{x_t-x}{G(x)}}^{-4}\trace\braces{G(x_t)^{-1}{\frac{d}{dt}G(x_t)}G(x_t)^{-1}{\frac{d}{dt}G(x_t)}}\\
&=\brackets{1-\vecnorm{x_t-x}{G(x)}}^{-4}\vecnorm{G(x_t)^{-\frac12}{\deri G(x_t)[h]}G(x_t)^{-\frac12}}F^2\\
&\overset{(ii)}\leq  4\brackets{1-\vecnorm{x_t-x}{G(x)}}^{-4}h\tp G(x_t)h\\
&\overset{(iii)}{\leq}4\brackets{1-\vecnorm{x_t-x}{G(x)}}^{-4} \brackets{1-\vecnorm{x_t-x}{G(x)}}^{-2}h\tp G(x)h\\
&=4\brackets{1-t\vecnorm{h}{G(x)}}^{-6} h\tp G(x)h,
\end{aligned}
\end{equation*}
where inequality $(i)$ and $(iii)$ hold because we apply Lemma \ref{fact_sc_property} at $x_t$, and inequality $(ii)$ holds due to the assumption that $G$ is strongly self-concordant. Using the upper bound of the integrand, we can bound the integral:
\begin{equation*}
\begin{aligned}
\vecnorm{G(x)^{-\frac12}(G(y)-G(x))G(x)^{-\frac12}}{F}
&\leq \int_0^1 \frac{2\vecnorm{h}{G(x)}dt}{(1-t\vecnorm{h}{G(x)})^{3}}\overset{(i)}=\frac{2\vecnorm{h}{G(x)}-\vecnorm{h}{G(x)}^2}{(1-\vecnorm{h}{G(x)})^2},
\end{aligned}
\end{equation*}
where equality $(i)$ is a direct calculation of the definite integral. As a result, Eq.~\eqref{eq_ssc_frob} is proved since $\vecnorm{h}{G(x)}^2\geq 0$. However, we notice that the definite integral implies an extra factor of $2$ compared to Lemma 1.2 in \cite{laddha2020strong}. 

Next, to prove Eq.~\eqref{eq_ssc_det_col}, we the eigenvalues of $G(x)^{-\frac12}G(y)G(x)^{-\frac12}$ to be $\lambda_i$ for $i=1,\dots,n$. Then the determinant of $G(x)^{-\frac12}G(y)G(x)^{-\frac12}$ can be written as:
\begin{equation*}
\det\brackets{G(x)^{-\frac12}G(y)G(x)^{-\frac12}}=\prod_{i=1}^n\lambda_i\leq \exp\brackets{\sum_{i=1}^n(\lambda_i-1)}.
\end{equation*}

To prove Eq.~\eqref{eq_ssc_det_col}, we only need to prove $\sum\abss{\lambda_i-1}\leq 8\sqrt{n}\vecnorm{y-x}{G(x)}$. This bound can be obtained:
\begin{equation*}
\begin{aligned}
\sum_{i=1}^n\abss{\lambda_i-1}&\leq\sqrt{n}\brackets{\sum_{i=1}^n(\lambda_i-1)^2}^{\frac12}=\sqrt{n}\vecnorm{G(x)^{-\frac12}G(y)G(x)^{-\frac12}-I}{F}\\
&\overset{(i)}\leq \frac{2\sqrt{n}\vecnorm{y-x}{G(x)}}{(1-\vecnorm{y-x}{G(x)})^2}\overset{(ii)}\leq 8\sqrt{n}\vecnorm{y-x}{G(x)},
\end{aligned}
\end{equation*}
where inequality $(i)$ is due to Eq.~\eqref{eq_ssc_frob} for strongly self-concordant $G$, and inequality $(ii)$ holds since we assumed $\vecnorm{y-x}{G(x)}\leq\frac12$. 
\end{proof}

\subsection{Bounding Transition Overlap}\label{appendix_transit_overlap}
In this section, we are prove the transition overlap in Lemma \ref{lem_TV_control}. We separate this task into three parts: Lemma \ref{lem_accept} bounds the acceptance rate to be a little less than $1/2$ globally, Lemma \ref{lem_coupling} extends the close coupling argument in \cite{andrieu_explicit_2024} and circumvent the naive use of triangle inequality $\vecnorm{\cT_x-\cT_y}{TV}\leq \vecnorm{\cP_x-\cT_x}{TV}+\vecnorm{\cP_x-\cP_y}{TV}+\vecnorm{\cP_y-\cT_y}{TV}$, and Lemma \ref{lem_TV_proposal} controls the TV-distance between two proposal distributions using SSC and Pinsker's inequality. 
Throughout the proof, we need the following two facts. 
\begin{fact}[Gaussian Concentration]\label{fact_Gauss_concen}
Assume $\xi\sim \Normal(0,I_n)$, then for any $t>0$, we have 
\begin{equation*}
\prob\parenth{\vecnorm{\xi}{2}\geq t\sqrt{n}}\leq 2\exp\parenth{-\frac{t^2}{2}}        
\end{equation*}
\end{fact}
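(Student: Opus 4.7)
The plan is to prove the bound via the exponential Markov (Chernoff) method applied to $\vecnorm{\xi}{2}^2 = \sum_{i=1}^n \xi_i^2 \sim \chi_n^2$, whose moment generating function is $(1-2s)^{-n/2}$ for $s < 1/2$.

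I would split the argument into three regimes. First, when $t \leq \sqrt{2\log 2}$, the right-hand side $2e^{-t^2/2} \geq 1$, so the bound holds trivially. Second, for $t > \sqrt{2\log 2}$ and $n \geq 2$, I would apply Chernoff with the tuned choice $s_\ast = \frac{1}{2}(1 - 1/t^2) \in (0, 1/2)$, which gives
\begin{equation*}
\prob\bigl(\vecnorm{\xi}{2}^2 \geq t^2 n\bigr) \;\leq\; e^{-s_\ast t^2 n}(1 - 2s_\ast)^{-n/2} \;=\; \bigl(t\, e^{-(t^2-1)/2}\bigr)^n.
\end{equation*}
Since $t\, e^{-(t^2-1)/2} \leq 1$ for all $t > 0$ (with equality at $t = 1$, verified by differentiation), the base lies in $(0,1]$, so the right-hand side is nonincreasing in $n$; thus for $n \geq 2$ it is bounded by the $n = 2$ value $t^2 e^{1-t^2}$. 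The remaining inequality $t^2 e^{1-t^2} \leq 2 e^{-t^2/2}$ reduces, after substituting $u = t^2/2$, to $1 + \log u \leq u$, which holds for all $u > 0$. Third, for $n = 1$, the claim reduces to the classical one-dimensional Gaussian tail $\prob(|\xi_1| \geq t) \leq 2e^{-t^2/2}$, which follows by applying Chernoff to the Gaussian MGF $\Exs[e^{s\xi_1}] = e^{s^2/2}$ with optimum $s = t$, combined with symmetry.

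The only mild obstacle is that the chi-squared Chernoff bound computed above is not tight enough at $n = 1$ in the nontrivial regime, forcing the small case-split and a direct appeal to the scalar Gaussian tail. All remaining steps are elementary calculus, so the proof should be short.
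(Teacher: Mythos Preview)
Your argument is correct. The Chernoff bound on $\chi_n^2$ with the optimizer $s_\ast = \tfrac{1}{2}(1-1/t^2)$ yields the clean expression $(t\,e^{-(t^2-1)/2})^n$, and your monotonicity-in-$n$ reduction to $n=2$ together with the elementary inequality $\log u \leq u-1$ closes the case $n\geq 2$; the separate treatment of $n=1$ via the scalar Gaussian tail is indeed necessary, since the chi-squared Chernoff bound at $n=1$ gives only $t\,e^{-(t^2-1)/2}$, which exceeds $2e^{-t^2/2}$ once $t > 2/\sqrt{e}$.

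As for comparison: the paper does not prove this statement at all. It is recorded as a ``Fact'' and invoked as a standard Gaussian concentration estimate without justification. So there is no paper proof to compare against; your write-up simply supplies what the paper omits.
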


\begin{lemma}[Acceptance Rate Control]\label{lem_accept}
Let $G$ be a SSC, ASC, LTSC matrix function defined on $K$, and we also assume $G(x)\succeq \beta I$ for $x\in K$, then there exists an absolute constant $r_0>0$, such that for all step-size $r<r_0$, we have 
\begin{equation*}
\vecnorm{\cT_x-\cP_x}{TV}\leq 0.6
\end{equation*}
\end{lemma}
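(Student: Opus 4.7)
\emph{Proof plan.} Since $\cP_x$ is a non-atomic Gaussian and $\cT_x$ equals $\cP_x$ rescaled by $\alpha(x,\cdot)$ plus a Dirac mass at $x$ (see Eq.~\eqref{eq_Transition_kernel_def}), a direct total-variation computation gives $\vecnorm{\cT_x - \cP_x}{TV} = 1 - \Exs_{z \sim \cP_x}[\alpha(x,z)]$, so the task reduces to proving $\Exs_{z \sim \cP_x}[\alpha(x,z)] \ge 0.4$. Taking logarithms in Eq.~\eqref{eq_accept_rate_def} and Taylor-expanding $f$ at $x$, I would write
\begin{equation*}
\log\alpha(x,z) \;=\; -Y(z) + R(z), \qquad Y(z) \defn \nabla f(x)\tp (z-x),
\end{equation*}
where $R(z)$ gathers the second-order Taylor remainder of $f$ together with the log-determinant and quadratic-form contributions from the asymmetric Gaussian proposals $\Normal\parenth{z;x,\frac{r^2}{n}G(x)^{-1}}$ and $\Normal\parenth{x;z,\frac{r^2}{n}G(z)^{-1}}$. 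Under $\cP_x$, the linear term $Y(z)$ is Gaussian with mean $0$, but its variance $\frac{r^2}{n}\vecnorm{\nabla f(x)}{G(x)^{-1}}^2$ is \emph{not} controlled by our hypotheses. Extending the symmetric-proposal idea of \cite{andrieu_explicit_2024}, the key observation is that $\min(1, e^{-Y + R}) \ge e^{-c}$ whenever $Y \le 0$ and $R \ge -c$; combined with $\prob(Y(z) \le 0) = \frac12$ from the symmetry of $\cP_x$, this yields the decoupled bound $\Exs_{z \sim \cP_x}[\alpha(x,z)] \ge \frac12 e^{-c} - \prob(R(z) < -c)$, which no longer depends on $\vecnorm{\nabla f(x)}{2}$.

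The remaining task is to show $R(z) \ge -c$ with high probability for a suitably small constant $c$ once $r$ is small. The $f$-Taylor remainder is bounded below by $-\frac{\beta}{2}\vecnorm{z-x}{2}^2 \ge -\frac12\vecnorm{z-x}{G(x)}^2$ via $\beta$-smoothness of $f$ and $G(x) \succeq \beta I$; since $z - x = \frac{r}{\sqrt n}G(x)^{-1/2}\xi$ with $\xi \sim \Normal(0,I_n)$, Fact~\ref{fact_Gauss_concen} makes this $O(r^2)$ with high probability. Writing $M \defn G(x)^{-1/2}(G(z) - G(x))G(x)^{-1/2}$, the log-determinant and the quadratic-form difference combine into $\frac12\log\det(I+M) - \frac12\xi\tp M\xi$; the Taylor estimate $\log\det(I+M) \ge \trace(M) - \vecnorm{M}{F}^2$ (valid once $\vecnorm{M}{2} \le \frac12$) reduces this to $\frac12[\trace(M) - \xi\tp M\xi] - O(\vecnorm{M}{F}^2)$. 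Fact~\ref{fact_ssc_col} (SSC) gives $\vecnorm{M}{F} = O(\vecnorm{z-x}{G(x)}) = O(r)$ with high probability, ASC provides concentration of the centered Gaussian chaos $\trace(M) - \xi\tp M\xi$ at scale $O(r)$, and LTSC (applied to the second derivative of $t \mapsto \log\det G(x + t(z-x))$) supplies the two-sided control that turns the Taylor estimate into a workable cancellation. Combining everything yields $R(z) \ge -Cr$ with probability $1-\epsilon$, and choosing $r_0$ with $Cr_0 \le \log(5/4)$ and $\epsilon$ tiny gives $\Exs[\alpha] \ge \frac12 e^{-Cr_0} - \epsilon \ge 0.4$.

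The main obstacle is the delicate cancellation between the log-determinant and the quadratic-form terms in $R(z)$: individually each of $\frac12\log\det(I+M)$ and $\frac12\xi\tp M\xi$ has magnitude of order $\sqrt n \, r$, which is \emph{not} small for a constant step size, so one must exploit the mean-matching $\Exs_\xi[\xi\tp M\xi] = \trace(M)$ together with Fact~\ref{fact_ssc_col} to obtain the cancellation rather than attempting to bound each term in isolation. A secondary subtlety is that no hypothesis controls $\vecnorm{\nabla f(x)}{2}$ pointwise in $x$; the symmetry argument in the first paragraph is precisely what permits $Y(z)$ to have arbitrarily large variance without breaking the bound, which is the reason the lemma holds with an absolute constant $r_0$ uniformly over $x \in K$.
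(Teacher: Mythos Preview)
Your overall architecture matches the paper: split off the linear term $Y(z)=\nabla f(x)\tp(z-x)$, use the symmetry $\prob(Y\le 0)=\tfrac12$ to absorb the uncontrolled $\vecnorm{\nabla f(x)}{2}$, and show the remainder $R(z)\ge -c$ with high probability via SSC, LTSC, ASC and $G\succeq\beta I$. The paper does exactly this (see the decomposition into Terms~I--III and the symmetry step in Eq.~\eqref{eq_accept_stationary}).

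The gap is in your handling of the proposal-ratio piece $\tfrac12\log\det(I+M)-\tfrac12\xi\tp M\xi$. You propose a cancellation via ``mean-matching $\Exs_\xi[\xi\tp M\xi]=\trace(M)$'', but this identity is ill-posed: $M=G(x)^{-1/2}(G(z)-G(x))G(x)^{-1/2}$ depends on $z$ and hence on $\xi$, so $\xi\tp M\xi$ is not a Gaussian chaos with mean $\trace(M)$. In fact no cancellation is needed, and each piece is already small on its own. By the very definition of ASC (Definition~\ref{def_sc}), $\bigl|\vecnorm{z-x}{G(z)}^2-\vecnorm{z-x}{G(x)}^2\bigr|\le \frac{2\epsilon r^2}{n}$ with probability $\ge 1-\epsilon$, which is precisely $|\xi\tp M\xi|\le 2\epsilon$; so the quadratic-form term is $O(\epsilon)$, not $O(\sqrt n\,r)$. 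The log-determinant term equals $g(z)-g(x)$ with $g\defn\log\det G$, and the paper bounds it by Taylor expansion at $x$: the first-order term $\nabla g(x)\tp(z-x)$ is a mean-zero Gaussian with standard deviation $\frac{r}{\sqrt n}\vecnorm{G(x)^{-1/2}\nabla g(x)}{2}$, and SSC gives $\vecnorm{G(x)^{-1/2}\nabla g(x)}{2}\le 2\sqrt n$ (since $\trace\bigl(G^{-1}\deri G[h]\bigr)\le\sqrt n\,\vecnorm{G^{-1/2}\deri G[h]G^{-1/2}}{F}\le 2\sqrt n\,\vecnorm{h}{G}$), so this term is $O(r)$ with high probability; the second-order remainder is $\ge -O(r^2)$ by LTSC together with SSC. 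Your sketch never invokes this SSC bound on $\nabla g$, which is the step that makes $\log\det(I+M)$ of order $r$ rather than $\sqrt n\,r$; without it, your $\trace(M)$ is left uncontrolled.

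One minor omission: the indicator $\mathbf 1_K(z)$ in $\alpha(x,z)$ contributes an additional bad event, handled in the paper via $E(x,G(x),1)\subseteq K$ and Gaussian concentration (Eq.~\eqref{eq_accept_indicator}); this containment is a standing hypothesis in Theorem~\ref{th_main} though not restated in Lemma~\ref{lem_accept}.
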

\begin{proof}
Due to Markov inequality, fix any $\gamma\in (0,1)$, we have 
\begin{equation}\label{eq_accept_Markov}
\begin{aligned}
\vecnorm{\cT_x-\cP_x}{TV}&=1-\Exs_{z\sim\cP_x}\brackets{\alpha(x,z)}\\
&\leq 1-\gamma \prob_{z\sim \cP_x}\brackets{\alpha(x,z)\geq \gamma}\\
&\leq 1- \gamma \prob_{z\sim \cP_x}\brackets{\mathbf{1}_K(z) \frac{e^{-f(z)}p_z(x)}{e^{-f(x)}p_x(z)} \geq \gamma}\\
\end{aligned}
\end{equation}
For convenience, we can first assume that $r\leq 10^{-4}$. It is easy to bound the term $\mathbf{1}_K(x)$, using the fact that the ellipsoid $E(x,G(x),1)$ is contained in $K$, combined with Gaussian concentration properties. To be explicit, we have 
\begin{equation}\label{eq_accept_indicator}
\begin{aligned}
\prob\braces{\mathbf{1}_K(z)=1}&\geq \prob\braces{\vecnorm{z-x}{G(x)}<1}=   
\prob_{\xi\sim\Normal(0,I_n)}\braces{\xi\tp \xi<\frac{n}{r^2}}\\
&\overset{(i)}\geq 1-2\exp\parenth{\frac{-1}{2r^2}}\overset{(ii)}\geq 0.99
\end{aligned}
\end{equation}
where the inequality $(i)$ is a result of Fact \ref{fact_Gauss_concen}, and the inequality $(ii)$ holds since we set $r\leq 10^{-4}$. Now we bound the term $\frac{p_z(x)}{p_x(z)}$, for convenience we define $g(x)\defn \log\det G(x)$, and we have 
\begin{equation*}
\begin{aligned}
    \log\frac{p_z(x)}{p_x(z)}
    &=-\frac{n}{2r^2}(z-x)\tp (G(z)-G(x))(z-x)+\frac12(g(z)-g(x))\\
    &\overset{(i)}=\underbrace{-\frac{n}{2r^2}(z-x)\tp (G(z)-G(x))(z-x)}_{\RN{1}}+\frac12\underbrace{\nabla g(x)\tp (z-x)}_{\RN{2}}+\frac14\underbrace{ \deri^2g(x^*)[z-x,z-x]}_{\RN{3}}.
\end{aligned}
\end{equation*}
where  we used Taylor's theorem, and $x^*\in[x,z]$ denotes some midpoint for which equality $(i)$ holds. We first bound term $\RN{1}$ with high probability. By the ASC property in Definition \ref{def_sc}, there exists an $\rasc>0$ such that the following holds for any $r<\rasc$
\begin{equation}\label{eq_accept_term1}
\prob_{z\sim \Normal(x,\frac{r^2}nG(x))}\braces{\vecnorm{z-x}{G(z)}^2-\vecnorm{z-x}{G(x)}^2 \leq 2\cdot\frac{ 0.01 r^2}{n}}\geq 1-0.01
\end{equation}
Then we bound term $\RN{2}$ with high probability. We notice that term $\RN{2}$ can be bounded by the concentration of Gaussian variables. Using Fact \ref{fact_Gauss_concen} with $t=4$, then with probability $\geq 0.99$ in $z$, we have  
\begin{equation}\label{eq_accept_term2}
\begin{aligned}
\RN{2}&=\angles{G(x)^{-\frac12} \nabla g(x),G(x)^{\frac12}(z-x)}\geq -\frac{4r}{\sqrt{n}}\vecnorm{G(x)^{-\frac12}\nabla g(x)}{2}\\
&=\frac{-4r}{\sqrt{n}}\underset{\vecnorm{v}{2}=1}{\sup}\nabla g(x)\tp G(x)^{-\frac12}v\\
&=\frac{-4r}{\sqrt{n}}\underset{\vecnorm{v}{2}=1}{\sup} \trace\parenth{G(x)^{-1}\deri G(x)\brackets{G(x)^{-\frac12}v}}\\
&\geq -4r\underset{\vecnorm{v}{2}=1}{\sup} \vecnorm{G(x)^{-\frac12}\deri G(x)\brackets{G(x)^{-\frac12}v}G(x)^{-\frac12}}F\\
& \geq -8r\underset{\vecnorm{v}{2}=1}{\sup} \vecnorm{G(x)^{-\frac12}v}{G(x)}=-8r,
\end{aligned}
\end{equation}
where the last inequality follows from SSC of $G$. Now we continue to bound term $\RN{3}$. For convenience, we let $h\defn z-x$ and we have
\begin{equation*}
\begin{aligned}
    \deri^2g(x^*)[h,h]&=\trace\parenth{G(x^*)^{-1}\deri^2 G(x^*)[h,h]}-\trace\parenth{G(x^*)^{-1}\deri G(x^*)[h]G(x^*)^{-1}\deri G(x^*)[h]}\\
    &\overset{(i)}\geq -\vecnorm{h}{G(x^*)}^2-\vecnorm{G(x^*)^{-\frac12}\deri G(x^*)G(x^*)^{-\frac12}}{F}^2\\
    &\overset{(ii)}\geq -\vecnorm{h}{G(x^*)}^2-4\vecnorm{h}{G(x^*)}^2
    \overset{(iii)}\geq  -5{({1-\vecnorm{h}{G(x)}})^{-2}}\vecnorm{h}{G(x)}^2,
\end{aligned}
\end{equation*}
where inequality $(i)$ follows from LTSC of $G$, inequality $(ii)$ follows from ASC of $G$, and inequality $(iii)$ follows from self-concordant properties (see Lemma \ref{fact_sc_property}). Using Fact \ref{fact_Gauss_concen} with $t=4$, then we have $\prob_{z}\parenth{\vecnorm{h}{G(x)}>4r}\leq 0.01$, and this implies with probability $\geq 0.99$ we have 
\begin{equation}\label{eq_accept_term3}
   \RN{3}=\deri^2 g(x^*)[h,h]\overset{(i)}\geq -\cdot 5\cdot 4 \cdot 16r^2=-320r^2
\end{equation}
where inequality $(i)$ holds due to we assumed $r\leq \frac12$ and we insert $\vecnorm{h}{G(x)}>4r$. Now insert the bound of $\RN{1},\RN{2},\RN{3}$ in Eq.~\eqref{eq_accept_term1},\eqref{eq_accept_term2} and \eqref{eq_accept_term3}, together with the fact that we set $r\leq 10^{-4}$, then we have
\begin{equation}\label{eq_accept_proposal}
    \prob_{z\sim\cP_x}\braces{\log\frac{p_z(x)}{p_x(z)}\geq -0.01-4r-80r^2\geq -0.01}\geq 0.97.
\end{equation}
Now we continue to bound the term of stationary distribution
\begin{equation*}
\begin{aligned}
    \frac{e^{-f(z)}}{e^{-f(x)}}&\overset{(i)}\geq \exp\parenth{\nabla f(z)\tp (x-z)}\\
    &=\exp\brackets{(\nabla f(z)-\nabla f(x))\tp (x-z)+\nabla f(x)\tp (x-z)}\\
    &\overset{(ii)}\geq \exp\parenth{-\beta\vecnorm{z-x}{2}^2}\exp\brackets{\nabla f(x)\tp (x-z)}\\
    &\overset{(iii)}\geq \exp\parenth{-\vecnorm{z-x}{G(x)}^2}\exp\brackets{\nabla f(x)\tp (x-z)}\\
\end{aligned}
\end{equation*}
where the inequality $(i)$ follows from the convexity of $f$, the inequality $(ii)$ follows from the $\beta$-smoothness of $f$, and the inequality $(iii)$ follows from our assumption that $G(x)\succeq \beta I$. In order to control the term $\nabla f(x)\tp (z-x)$, we notice that $\nabla f(x)\tp (z-x)$ is a one dimensional Gaussian variable centered at $0$.  By the symmetric property of Gaussian distributions, we have 
\begin{equation}\label{eq_accept_stationary}
    \prob_{z\sim\cP_x}\braces{\exp\brackets{\nabla f(x)\tp (x-z)}\geq 1}\geq 0.5
\end{equation}
Now combining Eq.~\eqref{eq_accept_indicator}, \eqref{eq_accept_proposal} and \eqref{eq_accept_stationary}, we have 
\begin{equation}\label{eq_accept_final1}
    \prob_{z\sim\cP_x}\brackets{\mathbf{1}_K(z)\frac{e^{-f(z)}p_z(x)}{e^{-f(x)}p_x(z)}\geq \exp\parenth{-\vecnorm{z-x}{G(x)}^2-0.01}}\geq 0.5-0.04
\end{equation}
Using the fact that of Gaussian concentration
\begin{equation*}
\prob\braces{\vecnorm{z-x}{G(x)}^2\geq 16r^2}=\prob_{\xi\sim\Normal(0,I_n)}\braces{\vecnorm{\xi}{2}\geq 4n}\leq 2\exp\parenth{-8}\leq 0.01
\end{equation*}
This translate Eq.~\eqref{eq_accept_final1} into:
\begin{equation}
\prob_{z\sim\cP_x}\brackets{\mathbf{1}_K(z)\frac{e^{-f(z)}p_z(x)}{e^{-f(x)}p_x(z)}\geq \exp\parenth{-16r^2-0.01}}\geq 0.46-0.01
\end{equation}

As a result of Markov inequality (Eq.~\eqref{eq_accept_Markov}), we have 
\begin{equation*}
    \vecnorm{\cT_x-\cP_x}{TV}\leq 1-0.45\cdot\exp(-16r^2-0.01)\leq 0.6,
\end{equation*}
and it holds for all $r\leq \min\braces{\rasc,10^{-4}}$. 
\end{proof}

\begin{lemma}[close coupling]\label{lem_coupling}
Let $G$ be a SSC, ASC matrix function defined on $K$. Fix any $\couplingprob\in (0,1)$, there exists $r_\couplingprob>0$ such that for all $r\leq r_\couplingprob$, $x,y\in K$ and 
\begin{equation}\label{eq_close_x_y}
\max\braces{\vecnorm{y-x}{G(x)},\vecnorm{y-x}{G(y)}}\leq \frac{r}{\sqrt{n}},
\end{equation} 
we have the following bound on the TV-distance between transition distributions:
\begin{equation}\label{eq_coupling}
\vecnorm{\cT_y-\cT_x}{TV}\leq \couplingprob+\vecnorm{\cT_x-\cP_x}{TV}+\vecnorm{\cP_x-\cP_y}{TV}
\end{equation}
\end{lemma}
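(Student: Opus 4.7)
The plan is to construct an explicit coupling of $\cT_x$ and $\cT_y$ and to bound the decoupling probability. Draw $(Z, W)$ from an optimal coupling of $\cP_x$ and $\cP_y$, so that $\Pr(Z\neq W) = \vecnorm{\cP_x - \cP_y}{TV}$, and independently sample $U \sim \mathrm{Unif}(0,1)$. Define $X' = Z$ if $U \leq \alpha(x, Z)$ and $X' = x$ otherwise, and $Y'$ analogously from $W$ and $\alpha(y, W)$. Then $X' \sim \cT_x$, $Y' \sim \cT_y$, so $\vecnorm{\cT_x - \cT_y}{TV} \leq \Pr(X' \neq Y')$. This coupling is the natural analogue of the symmetric-proposal coupling in~\cite{andrieu_explicit_2024}, adapted to the Dikin-walk setting where $\cP_x \neq \cP_y$.

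Decompose $\Pr(X' \neq Y') \leq \Pr(Z\neq W) + \Pr(X'\neq Y',\,Z=W)$. The first term is $\vecnorm{\cP_x - \cP_y}{TV}$. Conditional on $Z = W = z$, the event $X' \neq Y'$ occurs exactly when $U > \min(\alpha(x,z), \alpha(y,z))$: joint acceptance places both chains at $z$, while any other outcome sends them to the distinct points $x, y$. Using $\max(1-a, 1-b) \leq (1-a) + |a-b|$ and the bound $\min(p_x(z), p_y(z)) \leq p_x(z)$ on the joint density of $\{Z = W\}$ under the optimal coupling,
\[
\Pr(X'\neq Y',\,Z=W)\;\leq\; \vecnorm{\cT_x - \cP_x}{TV} + \Exs_{Z\sim \cP_x}\bigl[|\alpha(x,Z) - \alpha(y,Z)|\bigr].
\]
It therefore suffices to show the expectation on the right is at most $\gamma$ whenever $r \leq r_\gamma$.

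To bound $\Exs_{Z\sim\cP_x}[|\alpha(x,Z)-\alpha(y,Z)|]$, set $A(x,z) \defn \pi(z) p_z(x)/(\pi(x) p_x(z))$ and $\Delta(z) \defn \log[A(x,z)/A(y,z)]$. A short case analysis on whether $A(x,z), A(y,z)$ lie above or below $1$ yields the multiplicative comparison $|\alpha(x,z) - \alpha(y,z)| \leq \min(\alpha(x,z), \alpha(y,z))\,(e^{|\Delta(z)|} - 1)$. Expanding $\Delta(z)$ with $h = y-x$ and $v = z-x$ gives
\[
\Delta(z) = \bigl(f(x)-f(y)\bigr) + \tfrac{n}{2r^2}\bigl[h^\top(G(z)-G(y))h + 2h^\top(G(y)-G(z))v - v^\top(G(y)-G(x))v\bigr] + \tfrac12 \log\det(G(y)/G(x)).
\]
The crucial structural point is that the linear-in-$h$ parts of the two proposal log-densities combine into the single term $\tfrac{n}{r^2} h^\top(G(y)-G(z))v$, which SSC (Fact~\ref{fact_ssc_col}) bounds by $\|h\|_{G(x)}\|v\|_{G(x)}\|y-z\|_{G(x)}$; after Gaussian concentration of $\|v\|_{G(x)}$ this is of order $r$. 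Similarly, $\tfrac{n}{2r^2} h^\top(G(z)-G(y))h$ is small by SSC, the quadratic $\tfrac{n}{2r^2} v^\top(G(y)-G(x))v$ is controlled by combining SSC with the ASC concentration event, and the log-determinant remainder is $O(\sqrt{n}\|h\|_{G(x)})$ by the determinant estimate in Fact~\ref{fact_ssc_col}.

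The principal obstacle is the $f(x)-f(y)$ contribution: convexity and $\beta$-smoothness alone give only $|f(x)-f(y)| \leq |\nabla f(x)^\top h| + \tfrac{\beta}{2}\|h\|_2^2$, and $\|\nabla f(x)\|$ is not uniformly bounded on $K$. The resolution is to keep this term inside the multiplicative bound $\min(\alpha(x,z),\alpha(y,z))(e^{|\Delta|}-1)$ rather than pulling it outside the expectation, and to exploit the Gaussian symmetry of $\nabla f(x)^\top v$, paralleling the argument behind Eq.~\eqref{eq_accept_stationary} in the proof of Lemma~\ref{lem_accept}: the unbounded gradient factor multiplies $\alpha(x,z)$, whose integral against $p_x$ is controlled by the acceptance probability, while the symmetric part averages out. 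Choosing $r_\gamma$ small enough then yields $\Exs_{Z\sim\cP_x}[|\alpha(x,Z)-\alpha(y,Z)|] \leq \gamma$ and completes the proof.
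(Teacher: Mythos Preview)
Your coupling and initial decomposition are correct and match the paper's. The gap is the claim that $\Exs_{Z\sim\cP_x}\bigl[\,|\alpha(x,Z)-\alpha(y,Z)|\,\bigr]\le\gamma$ for small enough $r$, uniformly in $x,y$: this is false. Take $G\equiv\beta I$ constant, so every proposal log-ratio vanishes and your $\Delta(z)=f(x)-f(y)$ is a constant in $z$. With $\vecnorm{G(x)^{-1/2}\nabla f(x)}{2}$ large and $h$ aligned with $G(x)^{-1}\nabla f(x)$ and of maximal allowed length, both $|f(x)-f(y)|$ and the standard deviation of $f(Z)-f(x)\approx\nabla f(x)^\top(Z-x)$ under $\cP_x$ are of order $\tfrac{r}{\sqrt n}\vecnorm{G(x)^{-1/2}\nabla f(x)}{2}$; their ratio is $O(1)$ independently of $r$, and a direct computation of $\Exs\bigl[\,|\min(1,e^{-t})-\min(1,e^{C-t})|\,\bigr]$ with $t\approx\Normal(0,\sigma^2)$ and $C\approx\sigma\to\infty$ gives a limit of $\Pr(0<\Normal(0,1)<1)\approx0.34$, which cannot be made $\le\gamma$. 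Your proposed fix does not help here: since $\Delta$ does not depend on $z$, the ``Gaussian symmetry of $\nabla f(x)^\top v$'' has nothing to act on, and on the event $\{f(Z)\le f(x)\}$ (probability $\approx\tfrac12$) both acceptance rates equal $1$, so $\min(\alpha)\cdot(e^{|\Delta|}-1)=e^{|f(x)-f(y)|}-1$ is unbounded while the true difference is zero.

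The paper's missing idea is twofold. First, sharpen $\max(1-a,1-b)\le(1-a)+|a-b|$ to the identity $(1-a)+(a-b)^+$, so that only the one-sided quantity $(\alpha(x,z)-\alpha(y,z))^+$ has to be controlled (this is the paper's $\Pr(\cE_1)$). Second --- and this is the crucial step you are missing --- assume without loss of generality $e^{-f(x)}\ge e^{-f(y)}$. On the event $\{\alpha(y,z)<\alpha(x,z)\}\cap\{z\in K\}$ one then has $\alpha(y,z)=\Lambda(y,z)<1$, and the WLOG inequality $e^{-f(z)}/e^{-f(x)}\le e^{-f(z)}/e^{-f(y)}$ yields
\[
\Lambda(x,z)-\Lambda(y,z)\;\le\;\Lambda(y,z)\,\Bigl|\tfrac{p_z(x)\,p_y(z)}{p_x(z)\,p_z(y)}-1\Bigr|\;\le\;\Bigl|\tfrac{p_z(x)\,p_y(z)}{p_x(z)\,p_z(y)}-1\Bigr|.
\]
The $f$-dependence has cancelled completely; only a pure proposal-density ratio remains, which SSC and ASC control uniformly. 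The sign convention on $f$ and the restriction to the positive part are arranged precisely so that the unbounded gradient never needs to be estimated.
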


\begin{proof}
This lemma extends the close-coupling technique in \cite{andrieu_explicit_2024} to our asymmetric proposals. For convenience, we use $p_x(\cdot)$ to denote the probability density function of $\cP_x$. In other words, we define $p_x(z)\defn \Normal(z|x,\frac{r^2}{n}G(x)^{-1})$.

Without loss of generality, we assume $e^{-f(x)}\geq e^{-f(y)}$. 

Consider the maximal coupling of $(\cP_x,\cP_y)$, there exists a pair of random vectors $(V_x,V_y)$ such that $V_x\sim \cP_x$ and $V_y\sim \cP_y$ and $\prob\braces{V_x=V_y}=1-\vecnorm{\cP_x-\cP_y}{TV}$. Then we draw $U\sim \text{unif}(0,1)$ independent of $(V_x,V_y)$, and we define random vectors $X',Y'$ according to the following rule:
\begin{equation*}
\begin{aligned}
    X'&\defn V_x \Ind{U\leq \alpha(x,V_x)}+x\Ind{U>\alpha(x,V_x)},\\
    Y'&\defn V_y\Ind{U\leq \alpha(y,V_y)}+y\Ind{U>\alpha(y,V_y)},
\end{aligned}    
\end{equation*}
where $\alpha(x,z)$ denotes the acceptance rate $\alpha(x,z)\defn \min\braces{1,\frac{e^{-f(z)}p_z(x)}{e^{-f(x)}p_x(z)}\mathbf{1}_K(z)}$. For convenience of discussion, we define two events $\cE_1$ and $\cE_2$,
\begin{equation*}
    \cE_1\defn \braces{V_x=V_y,X'=V_x,Y'\neq V_y},\quad 
    \cE_2\defn \braces{V_x=V_y,\alpha(y,V_y)<\alpha(x,V_x)}.
\end{equation*}
It is straightforward to verify that $X'\sim \cT_x$ and $Y'\sim \cT_y$, so we have 
\begin{equation*}
\begin{aligned}
    \prob\braces{X'=Y'}&\geq \prob\braces{V_x=V_y,X'=V_x,Y'=V_y}\\
    &=\prob\braces{V_x=V_y,X'=V_x}-\prob(\cE_1)\\
    &\geq \prob\braces{V_x=V_y}+\prob\braces{V_x=X'}-1-\prob\parenth{\cE_1},
\end{aligned}
\end{equation*}
where the last inequality follows from the fact that $1\geq \prob(B_1\cup B_2)=\prob(B_1)+\prob(B_2)-\prob(B_1\cap B_2)$ for any event $B_1,B_2$. Then we insert the coupling equation $\prob\braces{V_x=V_y}=1-\vecnorm{\cP_x-\cP_y}{TV}$, and the easy-to-verify fact $\prob\braces{V_x=X'}=1-\vecnorm{\cP_x-\cT_x}{TV}$, we have
\begin{equation*}
\begin{aligned}
    1-\prob\braces{X'=Y'}&\leq \vecnorm{\cP_x-\cP_y}{TV}+\vecnorm{\cP_x-\cT_x}{TV}+\prob(\cE_1)\\
    \vecnorm{\cT_x-\cT_y}{TV}&\leq \vecnorm{\cP_x-\cP_y}{TV}+\vecnorm{\cP_x-\cT_x}{TV}+\prob(\cE_1),
\end{aligned}
\end{equation*}
where the last inequality follows by the coupling inequality $\vecnorm{\cT_x-\cT_y}{TV}\leq \prob\braces{X'\neq Y'}$ when $X'\sim \cT_x$ and $Y'\sim\cT_y$. In order to prove Eq.~\eqref{eq_coupling}, we only need to control $\prob(\cE_1)$ to be smaller than $\couplingprob$,
\begin{equation*}
\begin{aligned}
    \prob(\cE_1)&\overset{(i)}\leq\prob\braces{V_x=V_y,\alpha(y,V_y)<U\leq \alpha(x,V_x)} \\
    &=\Exs_{(V_x,V_y,U)}\brackets{\Ind{V_x=V_y}\cdot \Ind{\alpha(y,V_y)<U\leq \alpha(x,V_x)}}\\
    &\overset{(ii)}=\Exs_{(V_x,V_y)}\braces{\Ind{V_x=V_y,\alpha(y,V_y)<\alpha(x,V_x)}\times \brackets{\alpha(x,V_x)-\alpha(y,V_y)}}
\end{aligned}
\end{equation*}
where inequality $(i)$ holds since $Y'\neq V_y\Rightarrow U>\alpha(y,V_y)$, and $X'=V_x\Rightarrow U\leq \alpha(x,V_x)$ except on a set with probability $0$ (the proposal satisfies $V_x=x$), because our proposal has no point mass. Equality $(ii)$ follows from our assumption that $U$ is independent of $(V_x,V_y)$. Then we continue to control the difference in acceptance rates at $x$ and $y$. For convenience, we denote $\acceptbeforemin{x}{z}$ to be the following ratio,
\begin{equation*}
    \acceptbeforemin{x}{z}\defn \frac{e^{-f(z)}p_z(x)}{e^{-f(x)}p_x(z)}.
\end{equation*}
Then we have
\begin{equation*}
\begin{aligned}
    \prob(\cE_1)&\leq \Exs_{(V_x,V_y)}\brackets{\mathbb{I}_{\cE_2}\cdot \parenth{\alpha(x,V_x)-\alpha(y,V_y)}}\\
    &\overset{(i)}= \Exs_{(V_x,V_y)}\brackets{\mathbb{I}_{\cE_2}\cdot \parenth{\min\braces{1,\acceptbeforemin{x}{V_x}\mathbf{1}_K(V_x)}-{\acceptbeforemin{y}{V_y}\mathbf{1}_K(V_y)}}}\\
    &\overset{(ii)}\leq \Exs_{(V_x,V_y)}\brackets{\mathbb{I}_{\cE_2} \min\braces{1,{\acceptbeforemin{x}{V_x}\mathbf{1}_K(V_x)-\acceptbeforemin{y}{V_y}\mathbf{1}_K(V_y)}}}\\
    &\leq \underbrace{\Exs_{(V_x,V_y)}\brackets{\mathbb{I}_{\cE_2\cap \braces{V_x,V_y\in K}} \min\braces{1,{\acceptbeforemin{x}{V_x}-\acceptbeforemin{y}{V_y}}}}}_{\text{I}}+\underbrace{\prob\braces{V_x\notin K}+\prob\braces{V_y\notin K}}_{\text{II}},
\end{aligned}    
\end{equation*}
where equality $(i)$ follows since in the event $\cE_2$ we have $\alpha(y,V_y)<\alpha(x,V_x)\leq 1$. Inequality $(ii)$ holds since for all $a,b\geq 0$ we have $\min\braces{1,b}-a\leq \min(1,b-a)$. We proceed by controlling term I and term II separately. Term II can be controlled easily by Gaussian concentration as in Eq.~\eqref{eq_accept_indicator} in Lemma \ref{lem_accept}:
\begin{equation*}
\RN{2}\leq 2\prob_{\xi\sim\Normal(0,I_n)}\braces{\xi\tp\xi>\frac{n}{r^2}}\leq 4\exp\parenth{-\frac{1}{2r^2}}\overset{(i)}\leq \frac{\gamma}{2},
\end{equation*}
where we only need to set $r\leq \parenth{{2\log(\frac{8}{\gamma})}}^{-1/2}$ to ensure inequality $(i)$ holds. Now we proceed to control term $\RN{1}$:

\begin{equation}\label{eq_coupling1}
\begin{aligned}
\text{I}&\overset{(i)}\leq   \Exs_{(V_x,V_y)}\brackets{\mathbb{I}_{\cE_2\cap \braces{V_x,V_y\in K}} \min\braces{1,{\frac{e^{-f(V_y)}}{e^{-f(y)}} \parenth{\frac{p_{V_x}(x)}{p_x(V_x)}-\frac{p_{V_y}(y)}{p_y(V_y)}}}}}\\
&\leq \Exs_{(V_x,V_y)}\brackets{\mathbb{I}_{\cE_2\cap \braces{V_x,V_y\in K}} \min\braces{1,{\frac{e^{-f(V_y)}p_{V_y}(y)}{e^{-f(y)}p_y(V_y)} \abss{\frac{p_{V_x}(x)p_y(V_y)}{p_x(V_x)p_{V_y}(y)}-1}}}}\\
&\overset{(ii)}\leq \Exs_{(V_x,V_y)}\brackets{\mathbb{I}_{\cE_2\cap \braces{V_x,V_y\in K}} \min\braces{1, \abss{\frac{p_{V_x}(x)p_y(V_y)}{p_x(V_x)p_{V_y}(y)}-1}}}\\
\end{aligned}
\end{equation}
where inequality $(i)$ holds since $V_x=V_y$ in the event $\cE_2$ and we assumed $e^{-f(x)}\geq e^{-f(y)}$. Inequality $(ii)$ holds because on the event $\cE_2\cap \braces{V_x,V_y\in K}$ we have
\begin{equation*}
    \acceptbeforemin{y}{V_y}=\acceptbeforemin{y}{V_y}\mathbf{1}_K(y)=\alpha(y,V_y)<1.
\end{equation*}
For convenience, we let the notation $\Gamma(x,z)$ denote the following expression:
\begin{equation}
\distdiff{x}{z}=-\frac{n}{2r^2}\parenth{z-x}\tp\parenth{G(z)-G(x)}\parenth{z-x}.
\end{equation}
Now we continue the inequality in Eq.~\eqref{eq_coupling1}: 
\begin{equation*}
\begin{aligned}
    \text{I}&\leq \Exs_{(V_x,V_y)}\brackets{\mathbb{I}_{\cE_2\cap \braces{V_x,V_y\in K}} \min\braces{\abss{\frac{p_{V_x}(x)p_y(V_y)}{p_x(V_x)p_{V_y}(y)}-1},1}}\\
    &\overset{(i)}=\Exs_{(V_x,V_y)}\brackets{\mathbb{I}_{\cE_2\cap \braces{V_x,V_y\in K}} \min\braces{\abss{\sqrt{\frac{\det{ G(y)}}{\det G(x)}} 
    \frac{\exp\brackets{\distdiff{x}{V_x}}}{\exp\brackets{\distdiff{y}{V_y}}}
    -1},1}}\\
    &\leq \Exs_{(V_x,V_y)}\min\braces{\abss{\sqrt{\frac{\det{ G(y)}}{\det G(x)}} 
    \frac{\exp\brackets{\distdiff{x}{V_x}}}{\exp\brackets{\distdiff{y}{V_y}}}
    -1},1}\\
    &\overset{(ii)}\leq \prob\braces{\abss{\sqrt\frac{\det{ G(y)}}{\det G(x)} 
    \frac{\exp\brackets{\distdiff{x}{V_x}}}{\exp\brackets{\distdiff{y}{V_y}}}-1}\geq e^{3c}-1}+(e^{3c}-1)\\
    &\overset{(iii)}\leq \prob\braces{\abss{\distdiff{y}{V_y}}\geq c}+\prob\braces{\abss{\distdiff{x}{V_x}}\geq c}+(e^{3c}-1)
\end{aligned} 
\end{equation*}
where inequality $(i)$ holds since we insert the proposal densities and we notice $V_x=V_y$ in the event $\cE_2$. Inequality $(ii)$ holds for any constant $c\geq 0$, and we would determine $c$ later. Since we have $\sqrt{\det (G(y)G(x)^{-1})}\in [e^{-4r},e^{4r}]$ due to SSC property in Fact \ref{fact_Gauss_concen}  and the closeness of $x$ and $y$ in Eq.~\eqref{eq_close_x_y}, and we can always set $r\leq c/4$ so that inequality $(iii)$ holds. 

Now we set $c\defn \min\braces{\frac13\log(1+\frac{\gamma}{6}),\frac{\gamma}{6}}$, using ASC property in Definition \ref{def_sc}, then there exists $r_c>0$ such that for any step-size $r\leq r_c$, we have 
\begin{equation}
\RN{1}\leq c+c+e^{3c}-1\leq \frac{\gamma}{6}+\frac{\gamma}{6}+\frac{\gamma}{6}= \frac{\gamma}{2}
\end{equation}
The result is proved if we combine term $\RN{1}$ and $\RN{2}$ for all $r<\min\braces{\frac{c}{4},r_c,\parenth{{2\log(\frac{8}{\gamma})}}^{-1/2}}$. 
\end{proof}

\begin{lemma}[Proposal Overlap]\label{lem_TV_proposal}
Assume $r\leq \frac1{16}$ to be the step-size, and $x,y\in K$ such that $\vecnorm{y-x}{G(x)}\leq \frac{r}{10\sqrt{n}}$.
Recall that we use $\cP_x$, $\cP_y$ to denote the Gaussian proposal distributions at $x,y\in K$. In other words, $\cP_x\defn \Normal(x,\frac{r^2}{n}G(x)^{-1})$, $\cP_y\defn \Normal(x,\frac{r^2}{n}G(y)^{-1})$, then we have the following bound:
\begin{equation*}
    \vecnorm{\cP_x-\cP_y}{TV}\leq \sqrt{\frac{1}{100}+24r}
\end{equation*}
\end{lemma}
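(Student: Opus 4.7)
My approach is via Pinsker's inequality combined with the closed-form Kullback--Leibler divergence between two Gaussians, using the strong self-concordance consequences established in Fact~\ref{fact_ssc_col} and the self-concordance bound in Fact~\ref{fact_sc_property}. Concretely, I start from
\begin{equation*}
\vecnorm{\cP_x-\cP_y}{TV}^2 \;\leq\; \tfrac{1}{2}\,\text{KL}(\cP_x\|\cP_y),
\end{equation*}
and expand the Gaussian KL in terms of $M\defn G(x)^{-1/2}G(y)G(x)^{-1/2}$ and the centers:
\begin{equation*}
2\,\text{KL}(\cP_x\|\cP_y) \;=\; \trace(M) - n \;-\; \log\det(M) \;+\; \tfrac{n}{r^2}\vecnorm{y-x}{G(y)}^2.
\end{equation*}
Set $\epsilon\defn \vecnorm{y-x}{G(x)} \leq r/(10\sqrt{n})$, which under $r\leq 1/16$ gives $\epsilon\leq 1/160$.

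For the matrix part, I would write $\trace(M)-n-\log\det(M)=\sum_i(\lambda_i-1-\log \lambda_i)$ and use the elementary fact that $\lambda-1-\log\lambda\leq(\lambda-1)^2$ whenever $\lambda\geq 1/2$. The self-concordance bound in Fact~\ref{fact_sc_property} guarantees $\lambda_i\in[(1-\epsilon)^2,(1-\epsilon)^{-2}]$, which for $\epsilon\leq 1/160$ is comfortably inside $[1/2,2]$. Hence
\begin{equation*}
\trace(M)-n-\log\det(M) \;\leq\; \vecnorm{M-I}_F^2 \;\leq\; \frac{4\epsilon^2}{(1-\epsilon)^4}
\end{equation*}
by the SSC Frobenius bound in Fact~\ref{fact_ssc_col}. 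Plugging in $\epsilon\leq r/(10\sqrt n)$ makes this term $O(r^2/n)$, i.e.\ negligible.

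For the mean term, self-concordance also gives $\vecnorm{y-x}{G(y)}^2\leq \vecnorm{y-x}{G(x)}^2/(1-\epsilon)^2\leq \epsilon^2/(1-\epsilon)^2$, so
\begin{equation*}
\tfrac{n}{r^2}\vecnorm{y-x}{G(y)}^2 \;\leq\; \frac{1}{100(1-\epsilon)^2}.
\end{equation*}
Adding the two contributions, one obtains
\begin{equation*}
2\,\text{KL}(\cP_x\|\cP_y) \;\leq\; \frac{1}{100(1-\epsilon)^2} + \frac{4\epsilon^2}{(1-\epsilon)^4},
\end{equation*}
and Pinsker's inequality then yields the claimed bound. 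The main obstacle is not mathematical but cosmetic: keeping the $1/(1-\epsilon)^k$ correction factors tame and justifying the constant $24$ on the linear-in-$r$ term. The actual tight bound is much smaller than $\sqrt{\tfrac{1}{100}+24r}$; the lemma is stated with substantial slack to make it easily quotable when controlling the one-step coupling (Lemma~\ref{lem_coupling}).
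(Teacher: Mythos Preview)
Your argument is correct. Both your proof and the paper's use Pinsker's inequality and the closed-form Gaussian KL, but you decompose differently. The paper computes $\text{KL}(\cP_y\|\cP_x)$ and bounds the $\log\det$ and trace pieces \emph{separately} as first-order quantities: $\log\det(G(y)G(x)^{-1})\leq 8r$ via Eq.~\eqref{eq_ssc_det_col} in Fact~\ref{fact_ssc_col}, and $\trace(G(x)G(y)^{-1})-n=\sum_i(\lambda_i^{-1}-1)\leq 2\sum_i|\lambda_i-1|\leq 16r$ via Eq.~\eqref{eq_ssc_frob}; together with the mean term $\tfrac{n}{r^2}\|y-x\|_{G(x)}^2\leq\tfrac{1}{100}$ this yields exactly the constant $24r$ appearing in the statement. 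Your route instead takes $\text{KL}(\cP_x\|\cP_y)$ and merges the two matrix contributions via $\sum_i(\lambda_i-1-\log\lambda_i)\leq \|M-I\|_F^2$, which is \emph{second-order} (of size $O(r^2/n)$ and essentially negligible), at the cost of the mean term acquiring a harmless $(1-\epsilon)^{-2}$ factor from converting $\|y-x\|_{G(y)}$ back to $\|y-x\|_{G(x)}$ through Fact~\ref{fact_sc_property}. Your bound is therefore genuinely sharper---as you note, well under the stated $\sqrt{\tfrac{1}{100}+24r}$---while the paper's decomposition is looser but makes the provenance of the numerical constant $24$ completely transparent.
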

\begin{proof}
By Pinsker's inequality $\vecnorm{\cP_x-\cP_y}{TV}\leq \sqrt{2\kldiv{\cP_y}{\cP_x}}$. The KL-divergence between two Gaussian distributions $P=\Normal(\mu_1,\Sigma_1)$ and $Q=\Normal(\mu_2,\Sigma_2)$ can be computed analytically:
\begin{equation}
    \kldiv{Q}{P}=\frac12\braces{\underbrace{\trace(\Sigma_1^{-1}\Sigma_2)-n}_{\text{I}}+\underbrace{\log\det \parenth{\Sigma_1 \Sigma_2^{-1}}}_{\text{II}}+\underbrace{(\mu_1-\mu_2)\tp \Sigma_1^{-1}(\mu_1-\mu_2)}_{\text{III}}}.
\end{equation}
We plug in $P\defn \Normal(x,\frac{r^2}{n}G(x)^{-1})$ and $Q\defn \Normal(y,\frac{r^2}{n}G(y)^{-1})$, then bound the terms I, II, and III separately. We first bound the determinant ratio I:
\begin{equation*}
\text{I}=\log\det\parenth{G(y)G(x)^{-1}}\overset{(i)}\leq  8\sqrt{n}\vecnorm{y-x}{G(x)}\leq 8r,
\end{equation*}
where inequality $(i)$ follows from the properties of SSC in Fact \ref{fact_ssc_col}. Then we bound the trace term II, we denote the eigenvalues of $G(x)^{-\frac12}G(y)G(x)^{-\frac12}$ to be $\lambda_1,\ldots,\lambda_n$, so according to Fact \ref{fact_ssc_col} we have  
\begin{equation}\label{eq_eigen_control}
    \sum_{i=1}^n \abss{\lambda_i-1}\leq \sqrt{n}\vecnorm{G(x)^{-\frac12}G(y)G(x)^{-\frac12}-I}{F}\overset{(i)}\leq \sqrt{n}\cdot \frac{2r/\sqrt{n}}{\parenth{1-\frac12}^2}=8r,
\end{equation}
where inequality $(i)$ holds because we assumed $r\leq \frac12$. Since we further assumed $r<\frac{1}{16}$, thus Eq.~\eqref{eq_eigen_control} further implies that $\lambda_i\geq \frac12$ for all index $i\in [n]$. Thus we have
\begin{equation*}
\begin{aligned}
    \text{II}=-n+\trace\parenth{G(x)G(y)^{-1}}&=\sum_{i=1}^n \parenth{\frac{1}{\lambda_i}-1}
    \leq \sum_{i=1}^n\abss{\frac{\lambda_i-1}{\lambda_i}}\leq 2\sum_{i=1}^n\abss{\lambda_i-1}\leq 16r.
\end{aligned}
\end{equation*}
Finally, term $\RN{3}$ can be bounded as
\begin{equation*}
    \RN{3}=\frac{n}{r^2}(y-x)\tp {G(x)}(y-x)=\frac{n}{r^2}\vecnorm{y-x}{G(x)}^2 \leq \frac{n}{r^2}\cdot \frac{r^2}{100n}=\frac{1}{100}.
\end{equation*}
\end{proof}

With the preparation of Lemma \ref{lem_accept}, \ref{lem_coupling} and \ref{lem_TV_proposal}, we now give the proof of Lemma \ref{lem_TV_control}. 
\begin{proof}[proof of Lemma \ref{lem_TV_control}]
We first notice that $\vecnorm{y-x}{G(y)}\leq \frac{\vecnorm{y-x}{G(x)}}{{1-\vecnorm{y-x}{G(x)}}}$ by Fact \ref{fact_sc_property}, and since we can set $r\leq \frac12$, we have 
\begin{equation*}
    \max\braces{\vecnorm{y-x}{G(x)},\vecnorm{y-x}{G(y)}}\leq 2\vecnorm{y-x}{G(x)}\leq \frac{r}{\sqrt{n}},
\end{equation*}
so we can use Lemma \ref{lem_accept} and Lemma \ref{lem_coupling} with $\gamma=0.01$, then there exists $r_0, r_\gamma>0$ such that for all $r<\min\braces{10^{-4},r_0, r_\gamma}$ we have 
\begin{equation*}
\begin{aligned}
\vecnorm{\cT_x-\cT_y}{TV}&\leq 0.01+\vecnorm{\cT_x-\cP_x}{TV}+\vecnorm{\cP_x-\cP_y}{TV}\\
&\leq 0.01+ 0.6+\vecnorm{\cP_x-\cP_y}{TV}\\
&\leq 0.61+ \sqrt{\frac{1}{100}+ 24r}\leq \frac{4}{5}. 
\end{aligned}
\end{equation*}

\end{proof}

\section{Properties of Soft-Threshold  Metric}\label{appendix_logarithmic}

In this section, we briefly verify the SSC, LTSC and ASC of \softlog as defined in Definition \ref{def_logarithmic}, and proving the correctness of Lemma \ref{lem_logarithmic_sc}. 

For the SSC, it is well-known that the \Hlogmetric $H(x)=A_x\tp A_x$ is SSC when $H(x)$ is invertible. Adding a regularization term $\lambda I_n$ is only helping us: 
\begin{equation*}
\vecnorm{G(x)^{-\frac12}\deri G(x)[h]G(x)^{-\frac12}}{F}
\end{equation*} is becoming smaller since $G(x)\succeq H(x)$ and $\deri H(x)[h]=\deri G(x)[h]$. The rigorous version of this instinct is shown in Lemma \ref{lem_matrix_inequality}. 

Our proof used the fact that the \Hlogmetric $H(x)=A_x\tp A_x$ is strongly self-concordant if $H(x)$ is invertible  \cite{laddha2020strong}, and also the fact that $H(x)$ is SLTSC when $H(x)$ is invertible \cite{pmlr-v247-kook24b}. This is summarized in Lemma \ref{lem_log_SLTSC_SASC}. However, in our case, we do not restrict $H(x)$ to be invertible since we allow $K$ to be unbounded and $m<n$. In order to prove $G(x)=H(x)+\lambda I$ is SSC and LTSC, we would use a limit argument: we add artificial constraints so that $H(x)\defn A_xA_x\tp$ is invertible so we can apply Lemma \ref{lem_log_SLTSC_SASC} and the additivity of LTSC as in Fact \ref{fact_addition}, thus $G(x)=H(x)+\lambda I$ is also SSC and LTSC, then we take the limit and remove all artificial constraints.

\begin{lemma}\label{lem_matrix_inequality}
    If $A, B, C\in\real^{n\times n}$ are a symmetric matrices, and if $0\preceq A\preceq C$, then we have 
    \begin{equation*}
    \vecnorm{A^{\frac12}BA^{\frac12}}{F}\leq \vecnorm{C^{\frac12}BC^{\frac12}}{F}.
    \end{equation*}
\end{lemma}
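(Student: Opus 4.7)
The plan is to reduce the Frobenius-norm inequality to a trace inequality and then exhibit a simple telescoping decomposition whose summands are manifestly non-negative. Concretely, using the cyclic property of the trace, I would first observe that
\begin{equation*}
\vecnorm{A^{\frac12}BA^{\frac12}}{F}^2 = \trace\parenth{A^{\frac12}BA^{\frac12}A^{\frac12}BA^{\frac12}} = \trace(ABAB),
\end{equation*}
and analogously $\vecnorm{C^{\frac12}BC^{\frac12}}{F}^2 = \trace(CBCB)$. Thus it suffices to prove the scalar inequality $\trace(ABAB) \leq \trace(CBCB)$ whenever $0 \preceq A \preceq C$ and $B = B\tp$.

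Next, I would introduce $\Delta \defn C - A \succeq 0$ and telescope:
\begin{equation*}
CBCB - ABAB = (C-A)BCB + AB(C-A)B = \Delta B C B + A B \Delta B.
\end{equation*}
Taking traces, it remains to verify that both $\trace(\Delta B C B)$ and $\trace(A B \Delta B)$ are non-negative. For the first, note that since $B$ is symmetric and $C \succeq 0$, the matrix $BCB = (C^{\frac12}B)\tp(C^{\frac12}B)$ is PSD, hence
\begin{equation*}
\trace(\Delta B C B) = \trace\parenth{\Delta^{\frac12}(BCB)\Delta^{\frac12}} \geq 0.
\end{equation*}
The same argument applied with the roles interchanged, using $B\Delta B \succeq 0$ and $A \succeq 0$, gives $\trace(AB\Delta B) \geq 0$. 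Summing yields $\trace(CBCB) \geq \trace(ABAB)$, which is the desired inequality.

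There is no real obstacle here; the only subtlety is remembering that $B$ is only assumed symmetric (not PSD), so one must be careful to place the PSD factors in the right order when invoking the trace-of-product-of-PSD-matrices-is-non-negative fact. The two-line telescoping $CBCB - ABAB = \Delta B C B + AB \Delta B$ handles this cleanly because each summand sandwiches a PSD matrix between two copies of another PSD factor, making non-negativity immediate.
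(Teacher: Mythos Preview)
Your proof is correct and essentially identical to the paper's. The paper also reduces to the trace inequality $\trace(ABAB)\leq\trace(CBCB)$ and then replaces one $A$ by $C$ at a time via the chain $\trace(A^{\frac12}BABA^{\frac12})\leq\trace(A^{\frac12}BCBA^{\frac12})=\trace(C^{\frac12}BABC^{\frac12})\leq\trace(C^{\frac12}BCBC^{\frac12})$; the two gaps in that chain are exactly your $\trace(AB\Delta B)$ and $\trace(\Delta BCB)$, so the arguments coincide up to presentation.
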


\begin{proof}
Note that  $\trace\braces{P^{\frac12}BPBP^{\frac12}}=\vecnorm{P^{\frac12}BP^{\frac12}}{F}^2$ for any symmetric PSD matrix $P$. So we only need to prove $\trace\braces{A^{\frac12}BABA^{\frac12}}\leq \trace\braces{C^{\frac12}BCBC^{\frac12}}$. This is easy to verify:
\begin{equation*}
\begin{aligned}
\trace\braces{A^{\frac12}BABA^{\frac12}}\leq\trace\braces{A^{\frac12}BCBA^{\frac12}}
=\trace\braces{C^{\frac12}BABC^{\frac12}}
\leq \trace\braces{C^{\frac12}BCBC^{\frac12}}.
\end{aligned}
\end{equation*}
\end{proof}

\begin{lemma}[Lemma $4.1$ \cite{laddha2020strong}, Lemma E.1 \cite{pmlr-v247-kook24b}]\label{lem_log_SLTSC_SASC}
Let $K=\braces{x|Ax>b}$ be a convex polytope in $\real^n$ and the logarithmic metric $H(x)\defn A_x\tp A_x$ is invertible for all $x\in K$. Then we have $H$ is SSC and SLTSC on $K$.
\end{lemma}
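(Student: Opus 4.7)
}
The plan is to compute the first two directional derivatives of $H(x) = A_x^\top A_x$ explicitly and then verify SSC and SLTSC directly from the formulas. Write $s_x = Ax - b$, $S_x = \Diag(s_x)$, $A_x = S_x^{-1}A$, and for a direction $h \in \real^n$ set $\sigma := S_x^{-1} A h \in \real^m$ and $\Sigma_h := \Diag(\sigma)$. A short calculation using $\mathcal{D}(S_x^{-1})[h] = -S_x^{-1}\Diag(Ah)S_x^{-1}$ gives $\mathcal{D} A_x[h] = -\Sigma_h A_x$, hence
\begin{equation*}
\mathcal{D} H(x)[h] = -2 A_x^\top \Sigma_h A_x,
\qquad
\mathcal{D}^2 H(x)[h,h] = 6\, A_x^\top \Sigma_h^2 A_x.
\end{equation*}
The second derivative is PSD, so SLTSC is \emph{immediate}: for any PSD $\bar G$, $\trace\{(\bar G(x) + H(x))^{-1}\mathcal{D}^2 H(x)[h,h]\} \geq 0 \geq -\vecnorm{h}{H(x)}^2$.

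For SSC, introduce $Q := A_x H(x)^{-1/2}$, which is $m\times n$ with $Q^\top Q = I_n$ (using $A_x^\top A_x = H(x)$, which is invertible by assumption), and set $P := QQ^\top = A_x H(x)^{-1} A_x^\top$. Then $P$ is the orthogonal projector onto $\mathrm{col}(A_x)$; in particular $P^2 = P$, so the idempotency identity $(P^2)_{kk} = \sum_l P_{kl}^2 = P_{kk}$ holds with $P_{kk}\in[0,1]$. The SSC quantity becomes
\begin{equation*}
\bigl\|H(x)^{-1/2}\mathcal{D} H(x)[h]H(x)^{-1/2}\bigr\|_F^2 = 4\,\trace\!\bigl(Q^\top \Sigma_h Q Q^\top \Sigma_h Q\bigr) = 4\,\trace(P\Sigma_h P \Sigma_h).
\end{equation*}
Expanding the trace using the diagonal structure of $\Sigma_h$ yields $\trace(P\Sigma_h P\Sigma_h) = \sum_{k,l} \sigma_k \sigma_l P_{kl}^2$, and applying $2\sigma_k\sigma_l \le \sigma_k^2 + \sigma_l^2$ together with the idempotency identity gives
\begin{equation*}
\sum_{k,l}\sigma_k\sigma_l P_{kl}^2 \;\le\; \sum_k \sigma_k^2 P_{kk} \;\le\; \vecnorm{\sigma}{2}^2 = \vecnorm{A_x h}{2}^2 = \vecnorm{h}{H(x)}^2.
\end{equation*}
Combining the last two displays gives $\|H(x)^{-1/2}\mathcal{D} H(x)[h]H(x)^{-1/2}\|_F \le 2\vecnorm{h}{H(x)}$, which is SSC.

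The main obstacle is the SSC bound, and the crucial ingredient is the Schur-type inequality $\sum_{k,l}\sigma_k\sigma_l P_{kl}^2 \le \sum_k \sigma_k^2 P_{kk}$, which uses nothing beyond $P$ being an orthogonal projection; once the projector viewpoint $P = A_x H(x)^{-1}A_x^\top$ is in place (and this is precisely where invertibility of $H(x)$ enters), everything else is bookkeeping. The SLTSC part needs no structural work beyond computing the second derivative, since $\mathcal{D}^2 H[h,h]$ comes out manifestly PSD.
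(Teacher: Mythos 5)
Your proof is correct: the derivative formulas $\deri H(x)[h]=-2A_x\tp\Sigma_h A_x$ and $\deri^2 H(x)[h,h]=6A_x\tp\Sigma_h^2A_x$ are right, PSD-ness of the second derivative does give SLTSC immediately, and the projector argument with $P=A_xH(x)^{-1}A_x\tp$ and $P_{kk}\in[0,1]$ yields exactly the factor-$2$ SSC bound. Note that the paper does not prove this lemma itself — it imports it from \cite{laddha2020strong} (Lemma 4.1) and \cite{pmlr-v247-kook24b} (Lemma E.1) — and your argument is essentially the standard one from those references; the only cosmetic difference is that you bound $\trace(P\Sigma_hP\Sigma_h)$ via the Schur-type inequality $2\sigma_k\sigma_l\le\sigma_k^2+\sigma_l^2$ rather than the usual contraction step $\vecnorm{Q\tp\Sigma_hQ}{F}\le\vecnorm{\Sigma_hQ}{F}$, which gives the same bound $\sum_k\sigma_k^2P_{kk}\le\vecnorm{h}{H(x)}^2$.
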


The remaining task in this section is to verify \softlog $G$ defined in Definition \ref{def_logarithmic} is ASC. This is proved by concentration of Gaussian polynomials, which also appears in \cite{sachdeva2016mixing}, and the intuition is that adding a regularization term $\lambda I$ only makes the Gaussian concentration tighter. Lemma \ref{lem_Gauss_concentration} is a general bound for Gaussian polynomials, and Lemma \ref{lem_Gauss_Exs_bound} is upper bounding the expectation of specific Gaussian polynomials that appears in our proof. The condition $\sum_{i=1}^m b_ib_i\tp=I$ in the original Lemma \ref{lem_Gauss_Exs_bound} is changed to $\sum_{i=1}^m b_ib_i\tp\preceq I$ due to our regularization, and the proof also changes. To be rigorous, we list the proof of Lemma \ref{lem_Gauss_Exs_bound} here.

\begin{lemma}[Theorem 6.7 from \cite{janson_gaussian_1997}]\label{lem_Gauss_concentration}
Let $P$ be a degree $q$ polynomial over $\real^n$, and $\xi\sim \Normal(0,I_n)$. Then for any $t\geq \parenth{2e}^{\frac{q}{2}}$ we have
\begin{equation}
\prob\brackets{\abss{P(\xi)}\geq t\parenth{\Exs P(\xi)^2}^{1/2}}\leq \exp\parenth{-\frac{q}{2e}t^{2/q}}        
\end{equation}
\end{lemma}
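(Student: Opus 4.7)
The plan is to derive the inequality from the Gaussian hypercontractivity principle, since the target bound is essentially the canonical moment-to-tail translation of the hypercontractive moment inequality for Wiener chaoses.

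First, I would reduce the tail statement to a moment bound. Decompose $P$ into its Wiener/Hermite-chaos components $P = \sum_{k=0}^{q} P_k$, where $P_k$ lives in the $k$-th Hermite chaos. The Ornstein–Uhlenbeck semigroup $(T_t)_{t\ge 0}$ acts on $P_k$ by the scalar $e^{-kt}$, and Nelson's hypercontractivity theorem guarantees $\|T_t P\|_p \le \|P\|_2$ whenever $e^{-2t} \le 1/(p-1)$. Combining these two facts termwise over the chaos decomposition yields the key moment inequality
\begin{equation*}
\bigl(\Exs |P(\xi)|^p\bigr)^{1/p} \;\le\; (p-1)^{q/2}\bigl(\Exs P(\xi)^2\bigr)^{1/2}
\qquad \text{for every } p \ge 2.
\end{equation*}
This is the substantive input; it is exactly where the degree $q$ of the polynomial enters.

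Second, with $\sigma \defn (\Exs P(\xi)^2)^{1/2}$, apply Markov's inequality to $|P(\xi)|^p$:
\begin{equation*}
\prob\bigl(|P(\xi)| \ge t\sigma\bigr) \;\le\; \frac{\Exs|P(\xi)|^p}{(t\sigma)^p}
\;\le\; \left(\frac{(p-1)^{q/2}}{t}\right)^{p}.
\end{equation*}
Then optimize in $p$. Setting $u = p-1$ and differentiating $(u+1)(\tfrac{q}{2}\log u - \log t)$ gives the near-optimal choice $u = t^{2/q}/e$, i.e.\ $p = 1 + t^{2/q}/e$. Plugging this back collapses the expression to $-(u+1)q/2$ in the exponent, producing $\exp(-\tfrac{q}{2e} t^{2/q} - \tfrac{q}{2})$, which is bounded above by $\exp(-\tfrac{q}{2e}t^{2/q})$ as desired. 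The hypothesis $t \ge (2e)^{q/2}$ is exactly what ensures the chosen $p$ satisfies $p \ge 2$, so the hypercontractivity bound is applicable at that $p$.

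The main obstacle is Step 1: proving the polynomial hypercontractive moment inequality. This requires either Nelson's hypercontractivity theorem for the Gaussian heat semigroup or an equivalent Gaussian log-Sobolev argument, both of which are non-trivial but standard. Given that the lemma is quoted as Theorem 6.7 of Janson's \emph{Gaussian Hilbert Spaces}, in the paper itself one simply cites that reference; the remaining Markov-plus-optimization steps are routine and can be written out in a few lines.
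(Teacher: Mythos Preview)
The paper does not prove this lemma; it is stated as a citation of Theorem~6.7 in Janson's \emph{Gaussian Hilbert Spaces} and used as a black box. Your proposed argument --- hypercontractivity of the Ornstein--Uhlenbeck semigroup to get the moment bound $(\Exs|P|^p)^{1/p}\le (p-1)^{q/2}(\Exs P^2)^{1/2}$, followed by Markov's inequality and optimizing $p=1+t^{2/q}/e$ --- is correct and is exactly the standard derivation one finds in Janson, so there is nothing to compare.
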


\begin{lemma}[Adapted from \textbf{Fact 10} in \cite{sachdeva2016mixing}]\label{lem_Gauss_Exs_bound}
Suppose $\xi\sim\Normal(0,I_n)$ and $b_i\in\real^n$ are vectors for $i\in [m]$ such that $\sum_{i=1}^m b_ib_i\tp \preceq I_n$, then we have the following bounds:
\begin{equation*}
\Exs\braces{\bigg[{\sum_{i=1}^m(b_i\tp \xi)^3}\bigg]^2}\leq 15n, \quad 
\Exs\braces{\bigg[\sum_{i=1}^m(b_i\tp \xi)^4\bigg]^2}\leq 105n^2,
\end{equation*}    
\end{lemma}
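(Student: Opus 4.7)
\smallskip
\noindent\textbf{Proof plan for Lemma~\ref{lem_Gauss_Exs_bound}.}
The plan is to reduce both expectations to sums of entries of the Gram matrix $M\defn BB\tp$, where $B\in\real^{m\times n}$ has rows $b_1\tp,\dots,b_m\tp$, and then bound those sums using the spectral hypothesis. First I would introduce the jointly Gaussian vector $Y\in\real^m$ with $Y_i\defn b_i\tp\xi$, so that its covariance has entries $\Exs[Y_iY_j]=b_i\tp b_j=M_{ij}$. The hypothesis $\sum_i b_ib_i\tp=B\tp B\preceq I_n$ is equivalent to $M\preceq I_m$ with $\rank M\le n$, which immediately gives two key facts that will be used repeatedly: $M_{ii}\le 1$ for every $i$ (hence $|M_{ij}|\le 1$ by PSDness $|M_{ij}|\le\sqrt{M_{ii}M_{jj}}$), and $\trace(M^k)\le\trace(M)\le n$ for every integer $k\ge 1$ (since $0\le\lambda_i(M)\le 1$).

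Next I would expand the two target expectations by Isserlis' theorem, writing
\[
\Exs\Big[\Big(\sum_i Y_i^3\Big)^2\Big]=\sum_{i,j}\Exs[Y_i^3Y_j^3],\qquad \Exs\Big[\Big(\sum_i Y_i^4\Big)^2\Big]=\sum_{i,j}\Exs[Y_i^4Y_j^4],
\]
and using Wick's formula, which sums products of covariances over all perfect matchings. Counting matchings by the number $k$ of cross-pairs between the $Y_i$-copies and the $Y_j$-copies yields
\[
\Exs[Y_i^3Y_j^3]=6M_{ij}^{3}+9M_{ii}M_{jj}M_{ij},\qquad \Exs[Y_i^4Y_j^4]=24M_{ij}^{4}+72M_{ii}M_{jj}M_{ij}^{2}+9M_{ii}^{2}M_{jj}^{2},
\]
whose coefficients sum to $6+9=15$ and $24+72+9=105$ (checking against the total matching counts $5!!=15$ and $7!!=105$).

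Finally I would bound each resulting sum using the facts from the first step. For the cubic case, $\sum_{i,j}|M_{ij}|^3\le\sum_{i,j}M_{ij}^2=\trace(M^2)\le n$ by $|M_{ij}|\le 1$, while $\sum_{i,j}M_{ii}M_{jj}M_{ij}=v\tp M v$ with $v_i\defn M_{ii}$, which is at most $\|v\|_2^2=\sum_i M_{ii}^2\le\sum_i M_{ii}\le n$. Combining gives the bound $6n+9n=15n$. For the quartic case, the same reasoning yields $\sum M_{ij}^4\le\sum M_{ij}^2\le n\le n^2$, $\sum M_{ii}M_{jj}M_{ij}^2\le\sum M_{ij}^2\le n\le n^2$, and $\sum M_{ii}^2M_{jj}^2=(\sum M_{ii}^2)^2\le(\trace M)^2\le n^2$, whence the total is at most $(24+72+9)n^2=105n^2$.

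The only delicate step is the Isserlis counting that produces the exact constants $15$ and $105$; the matrix inequalities afterwards are straightforward given $M\preceq I$ and $M_{ii}\le 1$. The bookkeeping of partitioning matchings by cross-pair parity ($k\in\{1,3\}$ for the cubic and $k\in\{0,2,4\}$ for the quartic) is where one must be careful to avoid dropping or double-counting terms, but this is a finite combinatorial check rather than a conceptual obstacle.
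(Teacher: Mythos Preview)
Your proposal is correct and takes essentially the same approach as the paper: expand via Isserlis/Wick to obtain the identities with coefficients $(6,9)$ and $(24,72,9)$, then bound each sum using $M=BB\tp\preceq I_m$ (in particular $v\tp Mv\le\|v\|_2^2$ and $\trace M^k\le n$). The paper highlights only the one place where the relaxed hypothesis $\sum_i b_ib_i\tp\preceq I_n$ matters---namely $w\tp BB\tp w\le w\tp w$---and defers the rest to \cite{sachdeva2016mixing}; your write-up spells out both cases in full, which is fine.
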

\begin{proof}
Same as in \textbf{Fact 10} in \cite{sachdeva2016mixing}, we get 
\begin{equation*}
\Exs\braces{\bigg[{\sum_{i=1}^m(b_i\tp \xi)^3}\bigg]^2}=9\sum_{i,j=1}^m(b_i\tp b_i)(b_j\tp b_j)(b_i\tp b_j)
+6\sum_{i,j=1}^m (b_i\tp b_j)^3,
\end{equation*}
Following the notations in \cite{sachdeva2016mixing}, we set $B$ to be the $m\times n$ matrix with its $i$-th row being $b_i\tp$, and $w\in\real^m$ be such that $w_i=b_i\tp b_i$. The first term is simplified to:
\begin{equation*}
\sum_{i,j=1}^m(b_i\tp b_i)(b_j\tp b_j)(b_i\tp b_j)=w\tp BB\tp w,    
\end{equation*}
since we assumed $\sum_{i\in[m]} b_ib_i\tp \preceq I_n$, thus $B\tp B\preceq I_n$. Since all non-zero eigenvalues of $BB\tp$ are the same with $B\tp B$, thus $BB\tp\preceq I_m$. As a result, $w\tp BB\tp w\leq w\tp w$. The remaining arguments in \cite{sachdeva2016mixing} go on smoothly, which we omit here.  
\end{proof}

With all the preparations, we now give the proof of Lemma \ref{lem_logarithmic_sc}.  

\begin{proof}[Proof of Lemma \ref{lem_logarithmic_sc}]

We define the two following local metrics for any $\gamma>0$. Fix $x\in K$, we define:
\begin{equation*}
\begin{aligned}
    H^{(\gamma)}(x)&\defn \sum_{i=1}^m \frac{a_ia_i\tp}{(a_i\tp x-b_i)^2}+\sum_{j=1}^n \frac{e_je_j\tp}{(e_j\tp x-\gamma)^2},\\
    G^{(\gamma)}(x)&\defn H^{(\gamma)}(x) +\lambda I_n,
\end{aligned}
\end{equation*}
where $e_j$ is the unit vector in the $j$-th direction for $j\in [n]$. We add the term $\sum_{j}\frac{e_je_j\tp}{(e_j\tp x-\gamma)^2}$ to ensure that $H^{(\gamma)}(x)$ is invertible, so we can use the strong self-concordance of \Hlogmetric. It is also clear that fix $x\in K$,  $\underset{\gamma\to \infty}{\lim} H^{(\gamma)}(x)=H(x)$ and $\underset{\gamma\to \infty}{\lim} 
 G^{(\gamma)}(x)=G(x)$. So we have 
\begin{equation*}
\begin{aligned}
   \vecnorm{\Ggamma^{-\frac12} \deri\Ggamma [h]\Ggamma^{-\frac12}}{F}&=\vecnorm{\Ggamma^{-\frac12} \deri\Hgamma [h]\Ggamma^{-\frac12}}{F}\\
   &\overset{(i)}\leq \vecnorm{\Hgamma^{-\frac12} \deri\Hgamma [h]\Hgamma^{-\frac12}}{F}\\
   &\overset{(ii)}{\leq} 2\vecnorm{h}{\Hgamma}\leq 2\vecnorm{h}{\Ggamma},
\end{aligned}
\end{equation*}
where inequality $(i)$ holds due to $\Ggamma \succeq \Hgamma$ and Lemma \ref{lem_matrix_inequality}, and inequality $(ii)$ holds because $\Hgamma$ is a \Hlogmetric, so is strongly self-concordant. It is clear that for any $x\in K$, $\Ggamma\to G(x)$ as $\gamma \to \infty$, so we take limits on both sides, and we have 
\begin{equation*}
 \vecnorm{G(x)^{-\frac12} \deri G(x) [h]G(x)^{-\frac12}}{F}\leq 2\vecnorm{h}{G(x)},
\end{equation*}
so we just proved SSC of $G$. Using the same technique, we can prove $G$ is LTSC. We know from Lemma \ref{lem_log_SLTSC_SASC} that $H^{(\gamma)}$ is SLTSC for every $\gamma>0$, then according to Fact \ref{fact_addition}, $G^{(\gamma)}\defn H^{(\gamma)}+\lambda I$ is LTSC since $\lambda I$ is clearly SLTSC. So fix any $x\in K$ and $h\in\real^n$ we have 
\begin{equation*}
    \trace\braces{\Ggamma\deri^2 \Ggamma[h,h]}\geq -\vecnorm{h}{\Ggamma}^2,
\end{equation*}
and take the limit $\gamma\to 0$ on both sides, we conclude that $G$ is LTSC. 

The proof follows similar steps in \cite{sachdeva2016mixing}, except that we define the \softlog in Definition  \ref{def_logarithmic}, so we allow the polytope $K$ to be unbounded, thus $A$ is not necessarily full-rank and $m$ could be smaller than $n$. 

We set $\hat{a}_i=\frac{G(x)^{-\frac12}a_i}{a_i\tp x- b_i}$, and noticing that $z=x+\frac{r}{\sqrt{n}}G(x)^{-\frac12} \xi$, we define term I to be:
\begin{equation*}
\begin{aligned}
\RN{1}&\defn \frac{n}{r^2}\parenth{\vecnorm{z-x}{G(z)}^2- \vecnorm{z-x}{G(x)}^2}\\
&={\xi\tp G(x)^{-\frac12}\brackets{G(z)-G(x)}G(x)^{-\frac12}\xi}\\
&=\xi\tp G(x)^{-\frac12}\braces{\sum_{i=1}^m\frac{a_ia_i\tp}{(a_i\tp x-b_i)^2}\brackets{\frac{1}{(1+\frac{r}{\sqrt{n}}\hat{a}_i\tp \xi)^2}-1}}G(x)^{-\frac12}\xi\\
&=\sum_{i=1}^m (\hat{a}_i\tp \xi)^2 \brackets{\frac{1}{(1+\frac{r}{\sqrt{n}}\hat{a}_i\tp \xi)^2}-1}\\
&=\frac{r^2}{n}\sum_{i=1}^m (\hat{a}_i\tp \xi)^4 \brackets{\frac{2}{1+\frac
r{\sqrt{n}}\hat{a_i}\tp \xi}+\frac{1}{(1+\frac
r{\sqrt{n}}\hat{a_i}\tp \xi)^2}}-2\frac{r}{\sqrt{n}}\sum_{i=1}^m (\hat{a}_i\tp \xi)^3. 
\end{aligned}
\end{equation*}
We notice that $\sum_{i=1}^m \hat{a}_i\hat{a}_i\tp =G(x)^{-\frac12}H(x)G(x)^{-\frac12}\preceq I_n$, thus we can apply Lemma \ref{lem_Gauss_Exs_bound}. Let $P_1(\xi)=\sum_{i=1}^m (\hat{a}_i\tp \xi)^3$, and $P_2(\xi)=\sum_{i=1}^m(\hat{a}_i\tp \xi)^4$, then we define the following event:
\begin{equation*}
    \cE_0\defn\braces{\xi\bigg|\abss{P_1(\xi)}\geq t\sqrt{15n},\quad \abss{P_2(\xi)}\geq t\sqrt{105n^2}}.
\end{equation*}
Then we can apply the concentration inequality of Gaussian polynomials Lemma \ref{lem_Gauss_concentration}, for any $t>\parenth{2e}^2$, we have 
\begin{equation}
\begin{aligned}
\prob_{\xi}(\cE_0)\leq \exp\parenth{-\frac{3}{2e}t^{\frac23}}+\exp\parenth{-\frac{4}{2e}t^{\frac12}}\leq \epsilon,
\end{aligned}
\end{equation}
where the inequality follows since we set $t=\max\braces{(2e)^2,\parenth{\frac{e}{2}\log(\frac{2}{\epsilon})}^2, \parenth{\frac{2e}{3}\log\parenth{\frac2\epsilon}}^{\frac32}}$. Moreover, for any $\xi\in\cE_0$, we have for any integers $i\in[m]$:
\begin{equation*}
    \abss{\frac{r}{\sqrt{n}}\hat{a}_i\tp \xi}\leq \frac{r}{\sqrt{n}}\brackets{\sum_{j=1}^m(\hat{a}_j\tp \xi)^4}^{1/4}\leq 
    \frac{r}{\sqrt{n}}\parenth{t\sqrt{105n^2}}^{\frac{1}{4}}\leq rt^{\frac14}\cdot (105)^{\frac18}<\frac12
\end{equation*}
The last inequality holds since we can set $r\leq \frac1{2t^{\frac14}(105)^{\frac18}}$.
In conclusion, with probability greater than $1-\epsilon$, we have 
\begin{equation*}
\begin{aligned}
\abss{\RN{1}}&\leq \frac{r^2}{n}\sum_{i=1}^m (\hat{a}_i\tp \xi)^4 \abss{\frac{2}{1+\frac
r{\sqrt{n}}\hat{a_i}\tp \xi}+\frac{1}{(1+\frac
r{\sqrt{n}}\hat{a_i}\tp \xi)^2}}+2\frac{r}{\sqrt{n}}\abss{\sum_{i=1}^m (\hat{a}_i\tp \xi)^3}\\
&\leq \frac{r^2}n \sum_{i=1}^m 8\cdot (\hat{a}_i\tp \xi)^4+2\frac{r}{\sqrt{n}}\abss{\sum_{i=1}^m(\hat{a}_i\tp\xi)^3}\\
&\leq 8\frac{r^2}n t\sqrt{105n^2}+2\frac{r}{\sqrt n}t\sqrt{15n}\leq 8\sqrt{105}tr^2+2\sqrt{15}tr \leq 2\epsilon,
\end{aligned}
\end{equation*}
where the last inequality holds as long as we set $r\leq \min\braces{1,\frac{1\epsilon}{4\sqrt{105}+\sqrt{15}}\cdot\frac1t}$.
\end{proof}

\section{Warm Initialization \& Per-step Complexity}
In this section, we talk about the computational complexity for each step of Markov transition, and we also design a feasible warm start.

\subsection{Algebraic Complexity of Each Iteration}\label{appendix_per_step_complexity}
For per-step complexity, we are mainly interested in the algebraic complexity of each step. In other words, we assume we can do exact addition, subtraction, multiplication, division over $\real$. Since we need to compute the decomposition $G(x)=Q\tp Q$, we also assume we can compute the exact square root $\sqrt{x}$ for all $x\in\real^+$. In our model, each of the five arithmetic operations $\{+,-,\times,\div,\sqrt{\quad}\}$ has a unit cost $O(1)$.

We first list a simple lemma, arguing that drawing from a uniform ellipsoid $E(x,G(x),r)$ can be reduced to drawing from the unit ball and computing the decomposition $G(x)=Q\tp Q$.

\begin{lemma}\label{lem_transformed_density}
Assume $\xi\sim \Normal(0,I_n)$ is drawn from the standard Gaussian distribution, fix any invertible matrix $Q\in\real^{n\times n}$ satisfying $G(x)\defn Q\tp Q$, the new random vector $Z$ defined by $Z\defn x+\frac{r}{\sqrt{n}}Q^{-1}\xi$ satisfies $Z\sim \Normal(x,\frac{r^2}{n}G(x)^{-1})$. 
\end{lemma}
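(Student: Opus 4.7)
The plan is to verify the claim by a direct affine change of variables for Gaussian random vectors. Since $\xi \sim \Normal(0, I_n)$ and the map $\xi \mapsto x + \frac{r}{\sqrt{n}} Q^{-1}\xi$ is affine with invertible linear part $\frac{r}{\sqrt{n}} Q^{-1}$, the resulting vector $Z$ is again Gaussian, and its distribution is fully determined by its mean and covariance.

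First I would compute the mean: $\Exs[Z] = x + \frac{r}{\sqrt{n}} Q^{-1}\Exs[\xi] = x$. Next I would compute the covariance using $\Cov(AY) = A\Cov(Y)A^\top$ for deterministic $A$:
\begin{equation*}
\Cov(Z) = \frac{r^2}{n} Q^{-1} I_n (Q^{-1})^\top = \frac{r^2}{n} Q^{-1}(Q^\top)^{-1} = \frac{r^2}{n}(Q^\top Q)^{-1} = \frac{r^2}{n} G(x)^{-1},
\end{equation*}
where the last step uses the hypothesis $G(x) = Q^\top Q$ and the identity $(Q^\top Q)^{-1} = Q^{-1}(Q^\top)^{-1}$, valid since $Q$ is invertible. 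Combining mean and covariance yields $Z \sim \Normal\bigl(x, \tfrac{r^2}{n} G(x)^{-1}\bigr)$.

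There is essentially no obstacle here; the only subtlety worth noting is that $Q^\top Q$ being equal to $G(x)$ does not require $Q$ to be the symmetric square root — any invertible $Q$ satisfying the factorization works, which is exactly what makes a Cholesky factor (cheaper to compute than the matrix square root) a valid choice for sampling in practice. This is why the lemma is stated for an arbitrary such $Q$, and it is the point the subsequent per-iteration complexity discussion will exploit.
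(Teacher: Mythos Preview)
Your proof is correct and is exactly the standard affine-transformation argument one would give; the paper in fact states this lemma as elementary and does not supply a proof at all, so there is nothing to compare against. Your closing remark about why an arbitrary factorization $Q$ (e.g.\ Cholesky) suffices is also on point and anticipates precisely the use the paper makes of the lemma in the per-iteration complexity discussion.
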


For each iteration in Algorithm \ref{algo_main}, given current state $x$, using the result in Lemma \ref{lem_transformed_density}, we need to do the two following steps:
\begin{enumerate}
    \item draw $\xi\sim \Normal(0,I_n) $, compute $z=x+\frac{r}{\sqrt{n}}Q^{-1}\xi$, where $Q\in\real^{n\times n}$ is any invertible matrix satisfying $G(x)=Q\tp Q$.
    \item given $z$, compute $\det G(z)$.
\end{enumerate}

It costs $O(n)$ to draw $\xi$ since each component can be drawn i.i.d from one-dimensional standard Gaussian distribution. It is worth mentioning that when analyzing the mixing times of our Markov Chains, we just set $Q\defn G(x)^{\frac12}$, this is a legitimate assignment due to the uniqueness of the square root of $G(x)$. i.e., for any PSD \& symmetric matrix $C$, there exists a unique PSD \& symmetric matrix $B$ such that $B^2=C$.  However, using only $\{+,-,\times,\div,\sqrt{\quad}\}$ over (nonnegative) real numbers, we may not compute $G(x)^{\frac12}$ exactly. Because this involves computing the eigenvalues of $G(x)$ and the corresponding eigenvectors, and further needs us to exactly solve a polynomial equation of order $n$ over $\real$, which can not be done using these basic arithmetic operations.

We may write $G(x)\defn c A_x\tp W_x A_x+\lambda I$, where $c>0$ is some scalar irrelevant to $x$ and $W_x\in\real^{m\times m}$ is a diagonal matrix changing with $x$. For the \softlog $G$ as defined in Definition \ref{def_logarithmic}, $c\equiv 1$ and $W_x\equiv I_m$. For the \reglewis $G$ in Definition \ref{def_Lewis}, $c\defn c_1\sqrt{n}(\log m)^{c_2}$ and $W_x$ is the ridge-Lewis weights as defined in Eq.~\eqref{eq_Lewis_weights_def}. For now we assume that $w_x$ is known exactly for each $x$, thus $c$ and $W_x$ are known exactly (Later we would discuss a high-precision solver of $w_x$). 

We first discuss how to compute $G(x)$ efficiently. We can compute $S_x\defn \Diag(Ax-b)$ in $O(mn)$ arithmetic operations, then we can compute $A_x\defn S_x^{-1}A$ in $O(mn)$ arithmetic operations since $S_x$ is diagonal. We then attempt to compute $A_x\tp A_x$ using fast matrix multiplication, if $m\leq n$, $A_x\tp A_x$ can be computed in $O(n^\omega)$ operations by filling $A_x$ into a $n\times n$ matrix with $0$-entries. Otherwise $A_x\tp A_x$ can be computed in $O(mn^{\omega-1})$ by partitioning $A_x$ into $\floors{\frac{m}{n}}$ square matrices. Now we get $G(x)$, and we came across $O(\max\braces{m,n}n^{\omega-1})$ arithmetic operations. 

Given $G(x)$ as an $n\times n$ matrix, we need to compute an invertible matrix $Q$ such that $G(x)=Q\tp Q$. Since $G(x)$ is symmetric,  we first compute an invertible matrix $V\in\real^{n\times n}$ such that $\Lambda=VG(x)V\tp$ for some diagonal matrix $\Lambda$, and this can be done in $O(n^\omega)$ (see Chapter 16.8 about \textbf{orthogonal basis transform} in \cite{burgisser_algebraic_1997}). Then $Q\defn \sqrt{\Lambda}\parenth{V\tp}^{-1}$ is the desired invertible matrix, where $Q$ is invertible since $G(x)$ is positive definite, thus $\Lambda\succ 0$. Finally, $Q$ can be computed in $O(n^\omega)$, because computing the square root $\sqrt{\Lambda}$ takes $O(n)$ square-root operations, and computing the inverse of $V\tp \in\real^{n\times n}$ takes $O(n^\omega)$ operations (see Chapter 16.4 about \textbf{matrix inversion} in \cite{burgisser_algebraic_1997}). 

Finally, given $G(x)\in\real^{n\times n}$, we can compute its determinant $\det G(x)$ in $O(n^\omega)$ operations (see Chapter 16.4 about \textbf{determinant} in \cite{burgisser_algebraic_1997}). Combining all these arithmetic operations together, given that $W_x$ and $E_x$ is known exactly, which is the case for \softlog $G$, the per-step arithmetic cost is $O\parenth{\max\braces{m,n}n^{\omega-1}}$. 

For the \reglewis, \cite{2022Fazel-high-precision} proposed a quasi-Newton algorithm to compute an $\epsilon$-approximation of the Lewis weights $w_x$ in Eq.\eqref{eq_Lewis_weights_def} using $\textbf{polylog}\parenth{\frac{1}{\epsilon}}$ steps of gradient descent, where each descent involves computing leverage scores that costs $O(mn^{\omega-1})$. So the per-step arithmetic cost for regularized Dikin walk using \reglewis is $\Ot(\max\braces{m,n}n^{\omega-1})$ if we ignore logarithmic factors.

\subsection{Uniform Ball as Warm Initialization}\label{sec_disc_warm_start}
In this section, we discuss how to compute a ball $\ball(x_0,r_0)$ such that its uniform distribution is a suitable warm start for the truncated  distribution defined in Eq.~\eqref{eq_distri}. In addition to the assumptions in Eq.~\eqref{eq_distri}, we further assume that $K$ is bounded in a ball of radius $\outr$, and $K$ contains a ball of radius $\inr$. We list the warmness bound as Lemma \ref{lem_warmness}.

\begin{lemma}\label{lem_warmness}
Let $\Pi$ be a  distribution on $\real^n$ with density $\pi(x)\propto \mathbf{1}_K(x) e^{-f(x)}$, where $K$ is a polytope with $m$ linear constraints, and $f$ is $\alpha$-convex and $\beta$-smooth with condition number $\kappa = \beta/\alpha$, as in Eq.~\eqref{eq_distri}.

If we further assume that there exist two balls  such that $\ball(x_1,\inr)\subseteq K \subseteq \ball(x_2,\outr)$, then there exists a ball $\ball(x_0,r_0)$ such that its uniform distribution is a $M$-warm with respect to $\Pi$, where $M$ satisfies
\begin{equation*}
\log M \leq 1 +n\log\frac{3\outr}{\inr}+n\cdot\max\braces{\frac12\log\parenth{{\beta\outr^2}},\log\parenth{{2\beta\outr\vecnorm{x^\dag-x^\star}2}}}.
\end{equation*}
The radius $r_0$ and center $x_0$ of the ball can be computed by 
\begin{equation*}
    r_0\defn \frac{r_1\inr}{\vecnorm{x_1-x^\dag}2+\inr}\,,\quad x_0\defn x^\dag+ \frac{r_1}{\inr+\vecnorm{x_1-x^\dag}2}(x_1-x^\dag),
\end{equation*}
where $x^\dag\defn \arg\min_{K} f(x)$, $x^\star\defn \arg\min_{\real^n}f(x)$, and $r_1\defn \min\braces{\frac1{\sqrt{\beta}},\frac{1}{2\beta\vecnorm{x^\dag-x^\star}{2}}}$.
\end{lemma}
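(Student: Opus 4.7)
The plan is to construct $\ball(x_0,r_0)$ so that (i) it is contained in the polytope $K$, and (ii) the negative log-density $f$ oscillates by only $O(1)$ on it. Once these are secured, $d\mu_0/d\Pi$ is supported on $\ball(x_0,r_0)\subseteq K$ and equals $Z\,e^{f(x)}/\vol(\ball(x_0,r_0))$, where $Z=\int_K e^{-f(y)}\,dy\le\vol(K)\,e^{-f(x^\dag)}\le c_n\outr^n e^{-f(x^\dag)}$ (with $c_n$ the volume of the unit $n$-ball) and $\vol(\ball(x_0,r_0))=c_n r_0^n$. Taking suprema and logs, the proof reduces to bounding
\[
\log M \;\le\; \sup_{x\in\ball(x_0,r_0)} \bigl[f(x)-f(x^\dag)\bigr] \;+\; n\log(\outr/r_0),
\]
where I need the first term to be an absolute constant and the second to be at most $n\log(3\outr/\inr)+n\max\{\tfrac12\log(\beta\outr^2),\log(2\beta\outr\|x^\dag-x^\star\|_2)\}$.

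The containment $\ball(x_0,r_0)\subseteq K$ will be pure convexity: the definitions unpack to $x_0=(1-t)x^\dag+t\,x_1$ with $t=r_1/(\inr+\|x_1-x^\dag\|_2)$ and $r_0=t\inr$. Hence every point of $\ball(x_0,r_0)$ can be written as $(1-t)x^\dag+t\,z$ for some $z\in\ball(x_1,\inr)\subseteq K$, which belongs to $K$ by convexity.

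The oscillation bound is where the specific choice of $r_1$ enters, and will be the main (mild) subtlety. For $x\in\ball(x_0,r_0)$ we have $\|x-x^\dag\|_2\le r_0+\|x_0-x^\dag\|_2\le 2r_1$. Combining $\beta$-smoothness with $\|\nabla f(x^\dag)\|_2=\|\nabla f(x^\dag)-\nabla f(x^\star)\|_2\le\beta\|x^\dag-x^\star\|_2$ yields
\[
f(x)-f(x^\dag)\;\le\;2\beta r_1\|x^\dag-x^\star\|_2+2\beta r_1^2.
\]
A short case split on whether the minimum defining $r_1$ is attained at $\beta^{-1/2}$ (in which case $\sqrt\beta\|x^\dag-x^\star\|_2\le 1/2$) or at $(2\beta\|x^\dag-x^\star\|_2)^{-1}$ then shows the right-hand side is bounded by an absolute constant, explaining why $r_1$ is defined by this particular minimum.

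To finish, $x_1,x^\dag\in K\subseteq\ball(x_2,\outr)$ gives $\|x_1-x^\dag\|_2\le 2\outr$, hence $r_0\ge r_1\inr/(3\outr)$. Substituting $1/r_1=\max\{\sqrt\beta,\,2\beta\|x^\dag-x^\star\|_2\}$,
\[
n\log(\outr/r_0)\;\le\;n\log(3\outr/\inr)+n\log(\outr/r_1)\;=\;n\log(3\outr/\inr)+n\max\bigl\{\tfrac12\log(\beta\outr^2),\,\log(2\beta\outr\|x^\dag-x^\star\|_2)\bigr\},
\]
which combined with the constant oscillation bound gives the claimed estimate on $\log M$.
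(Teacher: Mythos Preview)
Your approach is essentially identical to the paper's: show $\ball(x_0,r_0)\subseteq K$ by convexity (the paper phrases it as inclusion in the convex hull of $\{x^\dag\}\cup\ball(x_1,\inr)$, which is exactly your $(1-t)x^\dag+tz$ parametrization), bound the oscillation of $f$ on the ball via $\beta$-smoothness and $\|\nabla f(x^\dag)\|_2\le\beta\|x^\dag-x^\star\|_2$, and finish with $\|x_1-x^\dag\|_2\le 2\outr$.

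The only slack is your bound $\|x-x^\dag\|_2\le r_0+\|x_0-x^\dag\|_2\le 2r_1$: in fact, from your own identities $r_0=t\inr$ and $\|x_0-x^\dag\|_2=t\|x_1-x^\dag\|_2$ you get $r_0+\|x_0-x^\dag\|_2=t(\inr+\|x_1-x^\dag\|_2)=r_1$ exactly. With $\|x-x^\dag\|_2\le r_1$ the oscillation bound becomes $\beta r_1\|x^\dag-x^\star\|_2+\tfrac{\beta}{2}r_1^2\le \tfrac12+\tfrac12=1$ in both cases of the minimum, which recovers the constant $1$ in the lemma statement rather than a larger absolute constant.
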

\begin{proof}
 For convenience, we use $B$ to denote $\ball(x_0,r_0)$ in this section. Our idea is to make $B$ close to the mode $x^\dag={\arg\min}_K f(x)$ within the polytope. We first determine a radius $r_1$, such that inside $\ball(x^\dag,r_1)$, the function $e^{-f}$ shrinks no more than a constant factor $C\geq 1$. In other words, for any $x\in \ball(x^\dag,r_1)$, we require:

\begin{equation*}
\exp\brackets{-f(x)}\geq\frac1C \exp\brackets{-f(x^\dag)},  
\end{equation*}
for ease of computation, we choose $C$ to be $e$ here. We then require $B$ to be contained in $\ball(x^\dag,r_1)\cap K$. Under this requirement, the $M$-warmness of the uniform distribution $\mu_0$ over $B\defn\ball(x_0,r_0)$ can be bounded by:
\begin{equation}\label{eq_warmness_bound1}
\begin{aligned}
\frac{\mu_0(U)}{\Pi(U)}&\leq \frac{\mu_0(U)}{\Pi(U\cap B)}=\frac{\vol(U\cap B)}{\vol(B)}\cdot \frac{\int_K e^{-f(z)}dz}{\int_{K\cap U\cap B}e^{-f(z)}dz}\\
&\overset{(i)}=\frac{\vol(U\cap B)}{\vol(B)}\cdot \frac{\int_K e^{-f(z)}dz}{\int_{U\cap B}e^{-f(z)}dz}\\
&\overset{(ii)}\leq\frac{\vol(U\cap B)}{\vol(B)}\cdot \frac{e\vol(K)}{\int_{U\cap B}e^{-f(z)}dz}\frac{\int_{U\cap B}e^{-f(z)}dz}{\vol(U\cap B)}\\
&\leq \frac{e\vol(K)}{\vol(B)}\leq \frac{e\outr^n}{r_0^n},
\end{aligned}
\end{equation}
where equality $(i)$ holds since we assumed $B\subseteq K$ and inequality $(ii)$ results from our assumption $e^{-f(x)}\geq \frac{1}{e} \cdot e^{-f(x^\dag)}$ for $x\in B \subseteq \ball(x^\dag,r_1)$, thus the relationship also holds for the average over $U\cap B$:
\begin{equation*}
\frac{\int_{U\cap B}e^{-f(z)dz}}{\vol(U\cap B)}\geq \frac{1}{e} \cdot e^{-f(x^\dag)}\geq \frac1e\frac{\int_K e^{-f(z)}dz}{\vol(K)}.
\end{equation*}
We now try to determine $r_1$ so that inside the ball $\ball(x^\dag,r_1)$, the function $e^{-f}$ shrinks less than a factor of $e$, and this translates to $f$ increases less than $1$. Due to the $\beta$-smoothness of the function $f$, the change in $f$ can be controlled by:
\begin{equation}\label{eq_increase_of_f}
f(y)-f(x^\dag)\leq \vecnorm{\nabla f(x^\dag)}2\vecnorm{y-x^\dag}2+\frac{\beta}{2}\vecnorm{y-x^\dag}2^2,
\end{equation}
thus we further control the gradient at $x^\dag$ by mean value theorem, for some $t\in (0,1)$, we have 

\begin{equation*}
    \begin{aligned}
    \vecnorm{\nabla f(x^\dag)}2&\overset{(i)}=\vecnorm{\nabla f(x^\dag)-\nabla f(x^\star)}2\leq \vecnorm{\nabla^2 f(x^\star+t(x^\dag-x^\star))}2\vecnorm{x^\star-x^\dag}2\leq \beta \vecnorm{x^\star-x^\dag}2,
    \end{aligned}
\end{equation*}
where equality $(i)$ follows from $\nabla f=0$ at the global mode $x^\star$. As a result, we can set $r_1$  to be:
\begin{equation}
r_1\defn \min\braces{\sqrt{\frac{1}{\beta}},\frac{1}{2\beta\vecnorm{x^\dag-x^\star}2}},    
\end{equation}
insert $r_1$ into Eq.~\eqref{eq_increase_of_f}, we easily check for any $y\in \ball(x^\dag,r_1)$, we have 
\begin{equation*}
  f(y)-f(x^\dag)\leq \beta \vecnorm{x^\star-x^\dag}2r_1+\frac{\beta}{2}r_1^2\leq 1,  
\end{equation*}
It is worth noting when the mode within the polytope $K$ coincides with the global mode, we have $\vecnorm{x^\dag-x^\star}2=0$, thus we have  $\frac{1}{2\beta\vecnorm{x^\dag-x^\star}2}=\infty$ and $r_1=\sqrt{\frac{1}{\beta}}$.  $f(y)-f(x^\dag)\leq 1$ for $y\in \ball(x^\dag,r_1)$ still holds because $\nabla f(x^\dag)=0$ in Eq.~\eqref{eq_increase_of_f}. 

After getting the ball $\ball(x^\dag,r_1)$, we require our  initial distribution $B=\ball(x_0,r_0)$ to be inside $\ball(x^\dag,r_1)$ so that the function $f$ only increases a constant. Moreover, we also want to make sure $\ball(x_0,r_0)$ is inside the polytope. This can be ensured by our assumption that $K$ contains a ball of radius $\inr$. 

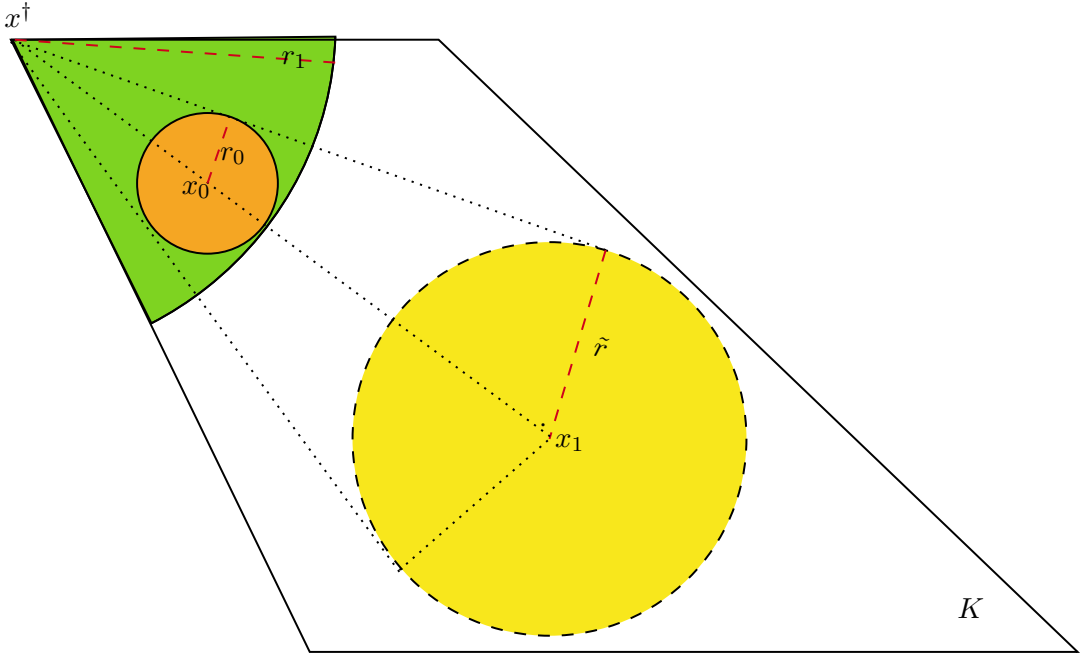
\begin{figure}
\tikzset{every picture/.style={line width=0.75pt}} 
\begin{tikzpicture}[x=0.75pt,y=0.75pt,yscale=-1,xscale=1]
\draw  [draw opacity=0.2][fill={rgb, 255:red, 126; green, 211; blue, 33 }  ,fill opacity=0.24 ] (181.66,57.46) .. controls (180.11,104.8) and (157.73,151.92) .. (116.56,184.3) .. controls (107.75,191.23) and (98.49,197.16) .. (88.92,202.12) -- (18,59) -- cycle ; \draw   (181.66,57.46) .. controls (180.11,104.8) and (157.73,151.92) .. (116.56,184.3) .. controls (107.75,191.23) and (98.49,197.16) .. (88.92,202.12) ;  
\draw  [fill={rgb, 255:red, 245; green, 166; blue, 35 }  ,fill opacity=1 ] (81.67,131.5) .. controls (81.67,111.89) and (97.56,96) .. (117.17,96) .. controls (136.77,96) and (152.67,111.89) .. (152.67,131.5) .. controls (152.67,151.11) and (136.77,167) .. (117.17,167) .. controls (97.56,167) and (81.67,151.11) .. (81.67,131.5) -- cycle ;
\draw   (168.74,368) -- (18.84,59) -- (233.75,59) -- (556,368) -- cycle ;
\draw  [fill={rgb, 255:red, 248; green, 231; blue, 28 }  ,fill opacity=0.66 ][dash pattern={on 4.5pt off 4.5pt}] (190.33,260.51) .. controls (190.33,205.65) and (234.8,161.17) .. (289.66,161.17) .. controls (344.53,161.17) and (389,205.65) .. (389,260.51) .. controls (389,315.37) and (344.53,359.84) .. (289.66,359.84) .. controls (234.8,359.84) and (190.33,315.37) .. (190.33,260.51) -- cycle ;
\draw  [dash pattern={on 0.84pt off 2.51pt}]  (18,60) -- (214,327.67) ;
\draw  [dash pattern={on 0.84pt off 2.51pt}]  (18,59) -- (318,165.17) ;
\draw [color={rgb, 255:red, 208; green, 2; blue, 27 }  ,draw opacity=1 ] [dash pattern={on 4.5pt off 4.5pt}]  (318,165.17) -- (289.66,260.51) ;
\draw  [dash pattern={on 0.84pt off 2.51pt}]  (214,326.67) -- (289.66,260.51) ;
\draw  [dash pattern={on 0.84pt off 2.51pt}]  (20,59) -- (291.66,260.51) ;
\draw [color={rgb, 255:red, 208; green, 2; blue, 27 }  ,draw opacity=1 ] [dash pattern={on 4.5pt off 4.5pt}]  (117.17,131.5) -- (128.2,99.4) ;
\draw [color={rgb, 255:red, 208; green, 2; blue, 27 }  ,draw opacity=1 ] [dash pattern={on 4.5pt off 4.5pt}]  (20,59) -- (182.2,70.6) ;

\draw (283,248.9) node [anchor=north west][inner sep=0.75pt]    {$\cdot $};
\draw (312.3,206.64) node [anchor=north west][inner sep=0.75pt]  [rotate=-12.91]  {$\tilde{r}$};
\draw (13.33,38.4) node [anchor=north west][inner sep=0.75pt]    {$x^{\dagger}$};
\draw (291,256.73) node [anchor=north west][inner sep=0.75pt]    {$x_{1}$};
\draw (102.67,129.07) node [anchor=north west][inner sep=0.75pt]    {$x_{0}$};
\draw (122,111.73) node [anchor=north west][inner sep=0.75pt]    {$r_{0}$};
\draw (153,63.4) node [anchor=north west][inner sep=0.75pt]    {$r_{1}$};
\draw (494,339.4) node [anchor=north west][inner sep=0.75pt]    {$K$};

\end{tikzpicture}
    \caption{An example of warm-start $\ball(x_0,r_0)$ for $K\subseteq \real^2$: Here the mode within the polytope $x^\dag\defn \arg\min_K f(x)$ coincides the upper-left vertex of $K$. We need to ensure $\ball(x_0,r_0)\subseteq \ball(x^\dag,r_1)$ and $\ball(x_0,r_0)\subseteq K$, where the first condition reduces to $\vecnorm{x_0-x^\dag}2+r_0\leq r_1$, and the later is ensured by $\ball(x_0,r_0)$ being inside the convex hull of $\{x^\dag\}\cup\ball(x_1,\inr)$, and $r_0$ can be easily computed by similarity of cones.
    }
    \label{fig_warm_start}
\end{figure}

These two conditions can be ensured by the following two requirements by simple geometry relations (also see Figure \ref{fig_warm_start} for illustration):
\begin{equation*}
\begin{aligned}
&\vecnorm{x_0-x^\dag}2+r_0\leq r_1 \text{ , and }\frac{r_0}{\vecnorm{x^\dag-x_0}2}=\frac{\inr}{\vecnorm{x_1-x^\dag}2}.
\end{aligned}
\end{equation*}
where the first inequality ensures that $\ball(x_0,r_0)\subseteq \ball (x^\dag,r_1)$, and the second inequality ensures that $\ball(x_0,r_0)$ is included in the convex hull of ${x^\dag}\cup \ball(x_1,\inr)$, thus $\ball(x_0,r_0)\subseteq K$. From these two inequalities, it is easy to get the following largest $r_0$ and the corresponding center $x_0$:
\begin{equation*}
r_0\defn \frac{r_1\inr}{\vecnorm{x_1-x^\dag}2+\inr}\,,\quad x_0\defn x^\dag+ \frac{r_1}{\inr+\vecnorm{x_1-x^\dag}2}(x_1-x^\dag).
\end{equation*}
The $M$-warmness of our initial distribution (over $\ball(x_0,r_0)$) can be bounded by inserting $r_1$ into Eq.~\eqref{eq_warmness_bound1}:
\begin{equation*}
\begin{aligned}
M&\leq \frac{e\outr^n}{r_0^n}= \frac{e\outr^n\parenth{\vecnorm{x_1-x^\dag}2+\inr}^n}{r_1^n\inr^n}
\overset{(i)}\leq  \frac{e\outr^n\parenth{3\outr}^n}{r_1^n\inr^n}\\
&=  e\parenth{\frac{{3\outr}}{\inr}}^n\max\braces{\parenth{\beta\outr^2}^{n/2},\parenth{{2\beta\vecnorm{x^\dag-x^\star}2\outr}}^n},
\end{aligned}
\end{equation*}
where inequality $(i)$ holds since we $x_1,x^\dag\in K$ and $\text{diam}(K)\leq 2\outr$. Since our sampling algorithm has high accuracy, the mixing time depends on $\log M$:
\begin{equation*}
\log M = 1 +n\log\frac{3\outr}{\inr}+n\cdot\max\braces{\frac12\log\parenth{{\beta\outr^2}},\log\parenth{{2\beta\outr\vecnorm{x^\dag-x^\star}2}}}.
\end{equation*}
\end{proof}

\section{Bounding \texorpdfstring{$\bar\nu$}{nu}-symmetric metrics by cross-ratio distance}\label{sec_cross_ratio_by_local}
In this section, we provide the proof of Fact \ref{fact_ssc_col} for the sake of rigor and completeness. 
\begin{proof}[Proof of Fact~\ref{fact_cross_ratio_by_local}]

The proof is similar to steps in \cite{laddha2020strong}, but we notice that the assumption $\vecnorm{p-x}2\leq\vecnorm{y-q}2$ in \cite{laddha2020strong} may not hold, and when the other side hold, we can only derive the conclusion of local metric at $y$ instead of at $x$. Moreover, since we are dealing with possibly unbounded polytopes, we also consider the case when $p$ or $q$ is at infinity. 

Throughout the proof, we can assume that both $\vecnorm{x-y}{H(x)}$ and $\vecnorm{x-y}{H(y)}$ are positive. Otherwise, we have at least one of them to be $0$, then the RHS of Eq.~\eqref{eq_cross_ratio_by_local} is $0$ so the inequality becomes trivial.

Assume we have the chord $[p,q]$ induced by the line $\widebar{xy}$ in the order $p,x,y,q$.

\begin{itemize}
\item First, we consider the case when both $p,q$ are bounded. 
When $\vecnorm{p-x}2\leq \vecnorm{y-q}2$, then we have $\vecnorm{p-x}2\leq \vecnorm{x-q}2$, thus $x+x-p\in K$, so we have  $p\in K\cap (2x-K)$. According to weak $\bar\nu-$symmetry of $H$, we have 
\begin{equation*}
\vecnorm{p-x}{H(x)}\leq \sqrt{\bar\nu}.
\end{equation*}
So we have  
\begin{equation*}
\begin{aligned}
d_K(x,y)&=\frac{\vecnorm{x-y}2\vecnorm{p-q}2}{\vecnorm{p-x}2\vecnorm{q-y}2}\geq \frac{\vecnorm{x-y}2}{\vecnorm{p-x}2}\\
&=\frac{\vecnorm{x-y}{H(x)}}{\vecnorm{p-x}{H(x)}}\geq \frac{\vecnorm{x-y}{H(x)}}{\sqrt{\bar\nu}}.
\end{aligned}
\end{equation*}
Next we consider the other case $\vecnorm{y-q}2<\vecnorm{p-x}2$, then following the same derivation, we have 

\begin{equation*}
\begin{aligned}
d_K(x,y)&=\frac{\vecnorm{x-y}2\vecnorm{p-q}2}{\vecnorm{p-x}2\vecnorm{q-y}2}\geq \frac{\vecnorm{x-y}2}{\vecnorm{q-y}2}\\
&=\frac{\vecnorm{x-y}{H(y)}}{\vecnorm{q-y}{H(y)}}\geq \frac{\vecnorm{x-y}{H(y)}}{\sqrt{\bar\nu}}.
\end{aligned}
\end{equation*}
Combining the two results, and we come to the conclusion that in any cases, we have 
\begin{equation*}
d_K(x,y)\geq \frac1{\sqrt{\bar\nu}}{\min\braces{\vecnorm{x-y}{H(x)},\vecnorm{x-y}{H(y)}}}.
\end{equation*}
\item
Second, we consider the case when one and only one of $\{p,q\}$ is at infinity. Without loss of generality, we assume $q=\infty$, then we deduce that $x+t(y-x)\in K$ for all $t\geq 0$. Since $x-p$ and $y-x$ are in the same direction, thus $x+(x-p)\in K$, so we have $p\in K\cap (2x-K)$.

According to the definition of $\bar\nu-$symmetry, $\vecnorm{p-x}{H(x)}\leq \sqrt{\bar\nu}$, then we have 
\begin{equation*}
\begin{aligned}
 d_K(x,y)&=\frac{\vecnorm{x-y}2}{\vecnorm{p-x}2}=\frac{\vecnorm{x-y}{H(x)}}{\vecnorm{p-x}{H(x)}}\geq \frac{\vecnorm{x-y}{H(x)}}{\sqrt{\bar\nu}}\\
&\geq \frac{1}{\sqrt{\bar\nu}}\min\braces{\vecnorm{x-y}{H(x)},\vecnorm{x-y}{H(y)}}.
\end{aligned}
\end{equation*}
\item 
Finally, we consider the case that both $p,q$ are at infinity, then $d_K(x,y)=0$ by the definition of cross-ratio distance (Definition \ref{def_cross_ratio_unbounded}). In order to prove Eq.~\eqref{eq_cross_ratio_by_local}, it is adequate to prove $\vecnorm{x-y}{H(x)}=0$. 

Since both $p,q$ are at infinity, it implies that for all $t\in\real$,  $tx+(1-t)y\in K$. By inserting $t\defn 2-l$ for $l\in\real$, it is easy to check that $tx+(1-t)y\in K\cap(2x-K)$ for $t\in\real$. Then due to our assumption of $\bar\nu$-symmetry of $H$, for all $t\in\real$ we have 
\begin{equation*}
   \abss{t-1}\cdot\vecnorm{y-x}{H(x)}= \vecnorm{tx+(1-t)y-x}{H(x)}\leq \sqrt{\bar\nu}.
\end{equation*}
Due to the arbitrariness of $t$, we let $\abss{t-1}\to +\infty$, since $\bar\nu$ is finite, so we have  $\vecnorm{y-x}{H(x)}=0$. 
\end{itemize}
\end{proof}

\section{Isoperimetry for Weakly Logconcave Measures}

In this section, we complete the omitted pieces needed for extending the new isoperimetry to weakly logconcave measures in Section \ref{sec_new_iso_weakly}. In Appendix \ref{appendix_reduction_to_isotropy}, we show why we can assume the distribution is isotropic by an affine transformation. In Appendix \ref{appendix_homeomorphism}, we show that the combined metric of Euclidean and Hilbert metric actually induces Euclidean topology (Lemma \ref{lem_homeomorphism}). 

\subsection{Reduction to Isotropic Distributions}\label{appendix_reduction_to_isotropy}
Let $\mu_\pi,\Sigma_\pi$ denote the mean and the covariance matrix of the log-concave probability distribution $\Pi$ defined in Equation \eqref{eq_distri}. Now suppose the isoperimetric inequality in Lemma \ref{lem_isoperimetric_weakly} holds for isotropic log-concave distributions ($\Sigma_\pi=I_n$), we show that this implies Lemma \ref{lem_isoperimetric_weakly} for general covariance matrix $\Sigma_\pi$. 

Since we assume $K$ is open, thus $\Sigma_\pi$ is an invertible matrix. We define the following bijective affine map $\cA: \real^n\to\real^n$ by $\cA(x)\defn \Sigma_\pi^{-\frac12}(x-\mu_\pi)$. Now we let $\Pi^{\cA}$ denotes induced measure of $\Pi$ under the mapping $\cA$. In other words, for any Borel set $B$ in $K$, we define
\begin{equation}
    \Pi^{\cA}(B)\defn \Pi(\cA^{-1}(B)).
\end{equation}
It is easy to verify that the induced measure is  logconcave with covariance matrix $I_n$. So for any measurable decomposition $K=S_1\sqcup S_2\sqcup S_3$, use the isoperimetric inequality for isotropic and logconcave distributions, we have 
\begin{equation}\label{eq_log_concave_weakly_isotropic}
    \Pi^{\cA}\parenth{\cA(S_3)}\geq  c_n \cdot\dn_\cA(\cA(S_1),\cA(S_2))\min\braces{ \Pi^{\cA}(\cA(S_1)), \Pi^{\cA}(\cA(S_2))}\\
\end{equation}
where $c_n$ denotes the constant $c_n\defn \parenth{6\max\braces{1,\psi_n}}^{-1}$, and $\dn_\cA$ is the mixed metric defined by:
\begin{equation*}
    \dn_\cA(x,y)\defn \max\braces{\parenth{\log{2}}\vecnorm{y-x}2, \dH_{\cA(K)}(x,y)}.
\end{equation*}
Thus Eq.~\eqref{eq_log_concave_weakly_isotropic} translates to
\begin{equation}
\begin{aligned}
\Pi(S_3)&\geq \underset{(x,y)\in S_1\times S_2}\inf c_n\cdot\dn_\cA\parenth{\cA x,\cA y}\min\braces{\Pi(S_1),\Pi(S_2)}\\
&\overset{(i)}= \underset{(x,y)\in S_1\times S_2}\inf c_n\cdot\max\braces{(\log2)\vecnorm{y-x}{\Sigma_{\pi}^{-1}},\dH_{\cA(K)}(\cA x,\cA y)}\min\braces{\Pi(S_1),\Pi(S_2)}\\
&\overset{(ii)}=\underset{(x,y)\in S_1\times S_2} \inf c_n\cdot\max\braces{(\log2)\vecnorm{y-x}{\Sigma_{\pi}^{-1}},\dH_K(x,y)}\min\braces{\Pi(S_1),\Pi(S_2)}\\
&\overset{(iii)}\geq  \underset{(x,y)\in S_1\times S_2}\inf c_n\cdot\max\braces{(\log2)\frac{\vecnorm{y-x}{2}}{\sqrt{\spectr}},\dH_K(x,y)}\min\braces{\Pi(S_1),\Pi(S_2)},
\end{aligned}
\end{equation}
where inequality $(i)$ holds since we have $\vecnorm{\cA x-\cA y}{2}=\vecnorm{y-x}{\Sigma_\pi^{-1}}$ by definition of $\cA$, and inequality $(ii)$ holds due to the affine invariance of Hilbert metric, and inequality $(iii)$ holds since $\Sigma_{\pi}\preceq \spectr I_n$.

\subsection{The Mixed Metric induces Euclidean Topology}\label{appendix_homeomorphism}
\begin{proof}[Proof of Lemma \ref{lem_homeomorphism}]

Fix any $x,y\in K$, it is clear that  $\dn(x,y)\geq 0$. If $\dn(x,y)=0$, then $\vecnorm{y-x}{2}=0$, so we have $x=y$. It is also clear that $\dn(x,y)=\dn(y,x)$ by definition. The remaining condition to ensure $\dn$ to be a metric is 
the triangle inequality. Fix $x,y,z\in K$, we first prove the triangle inequality for Hilbert metric $\dH_K$, which is well-known for bounded open set $K$, see for example \cite{niblo_hiberts_1993}. For unbounded $K$, let $\ball_r$ denotes the ball in $\real^n$ centered around $0$ with radius $r$, then we can prove it by a simple limit argument:
\begin{equation*}
\begin{aligned}
\dH_K(x,y)&=\log\parenth{1+d_K(x,y)}\overset{(i)}=\lim_{r\to+\infty} \log\parenth{1+d_{K\cap\ball_r}(x,y)}\\
&= \lim_{r\to+\infty} \dH_{K\cap \ball_r}(x,y)
\overset{(ii)}\leq \lim_{r\to+\infty} \brackets{\dH_{K\cap \ball_r}(x,z)+ \dH_{K\cap \ball_r}(z,y)}\\
&=\dH_K(x,z)+\dH_K(z,y)
\end{aligned}
\end{equation*}
where equality $(i)$ holds due to our definition of cross-ratio distance for unbounded convex sets (see Definition \ref{def_cross_ratio_unbounded}), inequality $(ii)$ holds by applying the triangle inequality for bounded open sets. Thus the triangle inequality for $\dn$ can be proved:
\begin{equation*}
\begin{aligned}
   \dn(x,y)&\leq \max\braces{\vecnorm{x-z}{2}+\vecnorm{z-y}{2},\dH_K(x,z)+\dH_K(y,z)}\\
   &\leq \max\braces{\vecnorm{x-z}{2},\dH_K(x,z)}+\max\braces{\vecnorm{y-z}{2},\dH_K(y,z)}\\
   &=\dn(x,z)+\dn(z,y).
\end{aligned}
\end{equation*}
To prove that $\tilde{d}$ induces the same topology as the Euclidean distance $\vecnorm{\cdot}{2}$, we notice that the identity map $x\mapsto x$ from 
$(K,\dn)$ to $(K,\vecnorm{\cdot}{2})$ is clearly continuous, and we proceed to prove its inverse is also continuous.

Fix $x\in K$, then there exists $r_x>0$ such that $\ball(x,r_x)\subseteq K$ by the openness of $K$. Without loss of generality, we can always assume $y$ such that $\vecnorm{y-x}{2}<r_x$ Assume $\widebar{xy}$ intersects $\partial K$ in the order $p,x,y,q$. Then we have $\vecnorm{p-x}{2}\geq r_x$ and $\vecnorm{y-q}{2}\geq r_x-\vecnorm{y-x}{2}$. So we have 
\begin{equation}\label{eq_homeomorphism}
\begin{aligned}
    \dn(x,y)\leq \vecnorm{y-x}{2}+\log\brackets{1+{\frac{\vecnorm{y-x}{2}}{r_x}+\frac{\vecnorm{y-x}{2}}{r_x-\vecnorm{y-x}{2}}+\frac{\vecnorm{y-x}{2}^2}{r_x(r_x-\vecnorm{y-x}{2})}}}.
\end{aligned}
\end{equation}
Treat $r_x>0$ as a fixed constant, then it is clear that when $\vecnorm{y-x}{2}\to 0$, the RHS of Eq.~\eqref{eq_homeomorphism} converges to $0$.
As a result, we proved the identity map is continuous in both direction, and we established homeomorphism between $(K,\dn)$ and $(K,\vecnorm{\cdot}{2})$. 
\end{proof}


\bibliographystyle{alpha}
\bibliography{ref}
\end{document}